\documentclass[11pt, a4paper]{imsart}
 
\usepackage{epigraph}
\usepackage[margin=3.5cm]{geometry}
\RequirePackage{amsthm,amsmath,amsfonts,amssymb}
\RequirePackage[numbers]{natbib}

\startlocaldefs
\theoremstyle{plain}
\newtheorem{axiom}{Axiom}
\newtheorem{claim}[axiom]{Claim}
\newtheorem{theorem}{Theorem}[section]
\newtheorem{lemma}[theorem]{Lemma}
\newtheorem{conjecture}[theorem]{Conjecture}
\newtheorem{corollary}[theorem]{Corollary}
\newtheorem{proposition}[theorem]{Proposition}
\newtheorem{definition}[theorem]{Definition}
\newtheorem{fact}[theorem]{Fact}
\theoremstyle{remark}

\newtheorem{remark}[theorem]{Remark}

\input{insbox}
\usepackage{hyperref}
\usepackage{pgfplots}
\usepackage{tikz}
\usepackage{graphicx}
\usepackage{subcaption}
\usetikzlibrary{shapes,decorations,arrows,calc,arrows.meta,fit,positioning}
\usepgfplotslibrary{fillbetween}
\usepackage{graphicx}
\usetikzlibrary{decorations.pathreplacing, matrix}
\usepackage{tabularx}
\usepackage[figuresright]{rotating}
\usepackage{makecell}
\usepackage{mathrsfs}
\usepackage{dsfont}
\usepackage[boxed, linesnumbered, noend, noline]{algorithm2e}
\SetKwInput{KwData}{Input}
\SetKwInput{KwResult}{Output}
\usepackage{enumerate}
\usepackage{tikz}
\usetikzlibrary{shapes,decorations,arrows,calc,arrows.meta,fit,positioning}
\usetikzlibrary{shadows,shadings,shapes.symbols, patterns}
\tikzset{
    double color fill/.code 2 args={
        \pgfdeclareverticalshading[%
            tikz@axis@top,tikz@axis@middle,tikz@axis@bottom%
        ]{diagonalfill}{100bp}{%
            color(0bp)=(tikz@axis@bottom);
            color(50bp)=(tikz@axis@bottom);
            color(50bp)=(tikz@axis@middle);
            color(50bp)=(tikz@axis@top);
            color(100bp)=(tikz@axis@top)
        }
        \tikzset{shade, left color=#1, right color=#2, shading=diagonalfill}
    }
}

\usepackage[colorinlistoftodos,prependcaption,textsize=small]{todonotes}
\usepackage{xcolor}

\usepackage{bbm}

\newcommand{\e}{\mathrm{e}}

\newcommand{\vA}{\vec A}

\newcommand{\vf}{\vec f}
\newcommand{\vg}{\vec g}

\newcommand{\vD}{\vec D}

\newcommand{\vX}{\vec X}

\newcommand{\vL}{\vec L}

\renewcommand{\epsilon}{\eps}

\newcommand\vY{\vec Y}
\newcommand\vB{\vec B}

\newcommand\vm{\vec m}

\newcommand\vU{\vec U}

\newcommand\vV{\vec V}

\newcommand\vR{\vec R}

\renewcommand{\vec}[1]{\boldsymbol{#1}}

\newcommand\KL[2]{D_{\mathrm{KL}}\bc{{{#1}\|{#2}}}}

\newcommand\SIGMA{\vec\sigma}

\newcommand\cA{\mathcal{A}}
\newcommand\cB{\mathcal{B}}

\newcommand\cF{\mathcal{F}}
\newcommand\G{\mathcal{G}}
\newcommand\cE{\mathcal{E}}

\newcommand\cN{\mathcal{N}}

\newcommand\cS{\mathcal{S}}
\newcommand\cT{\mathcal{T}}

\newcommand\cM{\mathcal{M}}

\newcommand\cP{\mathcal{P}}

\def\cR{{\mathcal R}}
\def\cE{{\mathcal E}}

 \newtheorem{assumption}{Assumption}

\newcommand\vE{\vec E}

\newcommand\vZ{\vec Z}
\newcommand\vM{\vec M}

\newcommand\eul{\mathrm{e}}
\newcommand\eps{\varepsilon}

\newcommand\NN{\mathbb{N}}

\newcommand\Var{\mathrm{Var}}
\newcommand\Erw{\mathbb{E}}
\newcommand{\vecone}{\vec{1}}

\newcommand{\Po}{{\rm Po}}
\newcommand{\Bin}{{\rm Bin}}
\newcommand{\Mult}{{\rm Mult}}
\newcommand{\Be}{{\rm Be}}

\newcommand\bc[1]{\left({#1}\right)}
\newcommand\cbc[1]{\left\{{#1}\right\}}
\newcommand\bcfr[2]{\bc{\frac{#1}{#2}}}

\newcommand\brk[1]{\left\lbrack{#1}\right\rbrack}

\newcommand\abs[1]{\left|{#1}\right|}

\newcommand\RR{\mathbb{R}}

\newcommand{\whp}{w.h.p.}

\newcommand\pr{\mathbb{P}} 
\renewcommand\Pr{\pr}

\newcommand\Thm{Theorem}

\newcommand\Cor{Corollary}

\newcommand{\Eig[1]}{\mathrm{Eig}^{\ast} (#1)}

 \def\G{{\vec G}}

\def\pr{{\mathbb P}}

\def\bfm{{\vec m}}

\def\vGamma{\vec{\Gamma}}

\newcommand{\remove}[1]{}

\newcommand\minf{m_{\mathrm{inf}}}
\newcommand\cinf{c_{\mathrm{inf}}^{\mathrm{TGT}}}

\pgfplotsset{compat=1.14}

\usepackage{float}

\newcommand{\invisible}[1]{}

\endlocaldefs

\begin{document}

\begin{frontmatter}
\title{An Information-Theoretic Analysis of Threshold Group Testing}

\begin{aug}



\author[A]{\fnms{Remco}~\snm{van der Hofstad}\ead[label=e2a]{r.w.v.d.hofstad@tue.nl}\orcid{0000-0003-1331-9697}}
\author[A]{\fnms{Noela}~\snm{Müller}\ead[label=e2b]{n.s.muller@tue.nl}\orcid{0009-0004-8182-311X}}
\and
\author[A]{\fnms{Connor}~\snm{Riddlesden}\ead[label=e2c]{c.d.riddlesden@tue.nl}\orcid{0009-0003-4516-7349}}

\address[A]{Eindhoven University of Technology, Department of Mathematics and Computer Science\printead[presep={,\ }]{e2a,e2b,e2c}}
\end{aug}

\begin{abstract}
We study the Threshold Group Testing (TGT) problem in the noiseless and non-adaptive setting, where the objective is to exactly recover a sparse binary vector from pooled tests, using as few tests as possible. In TGT, each test applied to a subset of items returns a positive outcome if the number of $1$'s (`\emph{defective items}') in that subset meets or exceeds a specified threshold, and has a negative outcome otherwise. We investigate how the complexity of TGT compares to that of Classical Group Testing (CGT), corresponding to the special case of the threshold equal to one, and analyse the impact of increasing the threshold on the required number of tests. 

Our main contribution is the derivation of a sharp information-theoretic phase transition at $c_{\mathrm{inf}}^{\mathrm{TGT}}k\log(n/k)$ (non-adaptive) tests for TGT within the constant-column test design. The threshold constant $c_{\mathrm{inf}}^{\mathrm{TGT}}$ is expressed as a function of the prevalence of defectives and the threshold value. 
Our upper bound is derived under an analytic assumption, and we verify that this assumption is satisfied for a threshold value of $2$.

The value of $c_{\mathrm{inf}}^{\mathrm{TGT}}$ reveals that TGT on the constant-column design has the same information-theoretic behaviour as CGT in the low-prevalence regime. Yet, strikingly, at higher prevalences, the threshold leads to a significant \emph{reduction} in the number of tests.  

On the other hand, we provide evidence that when the asymptotic proportion of defective items is positive, TGT actually becomes strictly harder than CGT (excluding trivial reductions).
\end{abstract}

\begin{keyword}[class=MSC]
\kwd[Primary ]{68W20} 
\kwd{62B10} 
\kwd[; Secondary ]{68P30} 
\end{keyword}

\begin{keyword}
\kwd{Threshold Group Testing}
\kwd{Statistical Inference}
\end{keyword}

\end{frontmatter}


\tableofcontents

\section{Introduction}\label{Sec:Intro}
In this article, we consider the \emph{threshold group testing} (TGT) problem. More precisely, we are given a vector $\SIGMA \in \{0,1\}^n$ of Hamming weight $k$ \footnote{Throughout the paper, we use boldface notation to denote random quantities.}. When taking a subset of the coordinates of $\SIGMA$ and conducting a measurement, then, for a given threshold $t\geq 1$, the output will be $1$ if the number of $1$-coordinates of the subset is at least $t$, and $0$ otherwise. This specific case of TGT is commonly referred to as \emph{threshold group testing without a gap}, since there is no in between test result for which the test is \emph{indeterminate}. The aim is to exactly reconstruct $\SIGMA$  using the minimum number of measurements.

The concept of testing through pooled measurements was first proposed by Dorfman in 1943 to tackle the problem of syphilis in the US army \citep{dorfman_1943}. Dorfman's problem corresponds to setting $t=1$, and is commonly referred to as \emph{classical} or \emph{binary group testing}. Classical group testing (CGT) is well understood both information theoretically and algorithmically \cite{AJS_book, coja_spiv}. The natural generalisation of the problem for $t>1$ was first considered by Damaschke \citep{damaschke2006threshold}.

\subsection{Model} \label{SSec:Model}

We now give the details of our model. The number of items will be denoted by $n$. Each of the $n$ items $x_1, \ldots, x_n$ is assigned a label $\SIGMA_i:=\SIGMA(x_i) \in \cbc{0, 1}$, where items with label $1$ are referred to as \emph{defectives}. The vector $\SIGMA$ constitutes the unknown \emph{ground-truth}, and $n$ will tend to infinity below. 
 We are generally interested in an \emph{average-case analysis}, where $\SIGMA$ is randomly chosen. The choice of the distribution of $\SIGMA$ depends on the specific model parameters, and will be made precise in the appropriate places.

A \emph{pooling scheme} for $x_1, \ldots, x_n$  is defined by a collection of $m$ pools $a_1, \ldots, a_m$, where each $a_i$ is a subset of $\{x_1, \ldots, x_n\}$ of arbitrary cardinality. 
We visualise pooling schemes $G$ through bipartite graphs (see Figure \ref{fig:example}). In this graphical representation, the $n$ items constitute one set of vertices, and the $m$ pools $a_1, \ldots, a_m$ constitute the second set of vertices. An edge between an item node $x_i$ and a pool node $a_j$ indicates that item $x_i$ is part of pool $a_j$. 
Additionally, every pool node $a_j$ comes with a label vector $\hat{\SIGMA}_j$, which will be $1$ if the number of item nodes connected to $a_j$ with label $1$ is greater than or equal to $t$, and $0$ otherwise (see Figure \ref{fig:example}). 
The vector of measurement results of a given pooling scheme $G$ will be denoted by $\hat \SIGMA_{G}$.

\begin{figure}[ht]
\centering
\begin{tikzpicture}[scale=0.65]
\node[circle, draw, minimum width=0.66cm] (x0) at (0, 0) {$1$};
\node[circle, draw, minimum width=0.66cm] (x1) at (2,0) {$1$};
\node[circle, draw, minimum width=0.66cm] (x2) at (4, 0) {$0$};
\node[circle, draw, minimum width=0.66cm] (x3) at (6, 0) {$0$};
\node[circle, draw, minimum width=0.66cm] (x4) at (8, 0) {$1$}; 
\node[circle, draw, minimum width=0.66cm] (x5) at (10, 0) {$0$};
\node[circle, draw, minimum width=0.66cm] (x6) at (12, 0) {$0$};

\node[rectangle, draw, minimum width=0.5cm, minimum height=0.5cm] (a1) at (0, -2.5) {$0$};
\node[rectangle, draw, minimum width=0.5cm, minimum height=0.5cm] (a2) at (3,-2.5) {$1$};
\node[rectangle, draw, minimum width=0.5cm, minimum height=0.5cm] (a3) at (6, -2.5) {$1$};
\node[rectangle, draw, minimum width=0.5cm, minimum height=0.5cm] (a4) at (9, -2.5) {$0$};
\node[rectangle, draw, minimum width=0.5cm, minimum height=0.5cm] (a5) at (12, -2.5) {$0$};

\path[draw] (x0) -- (a1);
\path[draw] (x0) -- (a2);
\path[draw] (x0) -- (a3);
\path[draw] (x3) -- (a1);
\path[draw] (x1) -- (a3);
\path[draw] (x2) -- (a1);
\path[draw] (x2) -- (a3);
\path[draw] (x2) -- (a2);
\path[draw] (x3) -- (a3);
\path[draw] (x3) -- (a4);
\path[draw] (x3) -- (a5);
\path[draw] (x4) -- (a2);
\path[draw] (x4) -- (a5);
\path[draw] (x4) -- (a4);
\path[draw] (x5) -- (a2);
\path[draw] (x5)-- (a4);
\path[draw] (x6) -- (a3);
\path[draw] (x6) -- (a5);
\path[draw] (x6) -- (a4);
\end{tikzpicture}
\caption{Graphical representation of a pooling scheme for $t=2$: Here, the $n=7$ items are represented by circles, while the $m=5$ pools are represented by rectangles. Edges between items and pools are present whenever an item is an element of the corresponding pool. The label of an item represents its actual label, while the label of a pool represents the output of its measurement.}
\label{fig:example}
\end{figure}

\subsection{Main Result and Motivating Question} \label{SSec:Quest}
Explicit results for both information theoretic bounds, and algorithms to match, have been found for classical group testing ($t=1$), see for example the comprehensive survey \cite{AJS_book}.  
Indeed, there are several ways to \textit{reduce} TGT for $t\geq 2$ to CGT, demonstrating that in principle, TGT is not harder than CGT. 
We mention two such reductions here.

\paragraph{Reduction via Dummy Items.}

This idea assumes that a small set of $t-1$ known defectives, denoted $V_{\mathrm{dummy}}$, is available. Given these, we may use any pooling scheme $\G$ that guarantees with high probability (w.h.p.) exact recovery of $k$ defectives using binary tests also for TGT: By 
 inserting the $t - 1$ known defectives into every test of $\G$, we effectively convert the threshold tests into binary tests, showing that when $\G$ allows w.h.p. recovery of $\SIGMA$ using binary tests, then so does TGT.
 Here, and below, we say that a sequence of events $(\cE_n)_n$ occurs \emph{with high probability} (\whp) if $\pr(\cE_n) \to 1$ as $n \rightarrow \infty$.

From a more practical perspective, this approach requires the dummy variables to be added prior to conducting the measurements, as tests cannot be updated deterministically given the test outcomes: 
Since the exact number of defectives in each test is unknown, there is no reliable way to determine when a test outcome would change from negative to positive. In CGT 
on the other hand, adding dummy items with known status is safe because the test outcome can be updated deterministically. As a result, dummy items can only be used under restrictive conditions.
Table~\ref{tab:model-comparison} summarises how each model behaves when a single defective is added.

\begin{table}[h!]
\centering
\begin{tabular}{|c|c|c|p{6cm}|}
\hline
\textbf{Model} & \textbf{Initial Outcome} & \textbf{New Result} \\
\hline
CGT & $0$ or $1$ & 1  \\
\hline
\shortstack{TGT\\ \phantom{a}} & \shortstack{$0$ \\$1$} & \shortstack{Unknown \\ $1$} \\
\hline
\end{tabular}
\caption{Effect of adding one known defective to a test in different group testing models.}
\label{tab:model-comparison}
\end{table}

\paragraph{Reduction via Duplication.}

This idea assumes that each item may be included multiple times in each test.
If this is the case, again, we may use any pooling scheme $\G$ that guarantees exact recovery w.h.p. of $k$ defectives using binary tests also for TGT: Obtain $\tilde\G$ from $\G$ by duplicating each item exactly $t$ times in each test in which it appears in $\G$. Again, we effectively convert the threshold tests into binary tests, and we are done, since $\G$ allowed w.h.p. recovery of $\SIGMA$ using binary tests in the first place.

These constructions show that under certain design modifications or assumptions, threshold group testing can inherit both the algorithmic machinery and the performance guarantees of the classical setting.
Therefore it remains for us to question:
\renewcommand{\epigraphflush}{center}
\setlength\epigraphwidth{.9\textwidth}
\setlength\epigraphrule{0pt}

\epigraph{\textit{In \textbf{threshold group testing}, are reduction strategies optimal? Or are there cases where TGT becomes strictly easier? Does the threshold $t$ have any impact on information-theoretic bounds?}}{}

\subsubsection{The Sublinear Regime} \label{SSSec:Contrib}
We first address this question in the sublinear regime in which the label $0$ is asymptotically dominant. More specifically, if $k$ denotes the number of items with label $1$, then we assume that $k = \lfloor n^{\theta}\rfloor$ for $\theta \in (0,1)$.
We also first assume that $k$ is known and that $\SIGMA$ is chosen uniformly at random among all vectors having $k$ defectives. In practical settings, the distributional assumption of uniformity might be realised by a random permutation of the items prior to inference.

This model has been extensively studied for CGT. 
Let 
\begin{align*}
m_{\mathrm{inf}}^{\mathrm{CGT}} := \max\cbc{\frac{1}{\log 2}, \frac{\theta}{(1-\theta)\log^22}} n^{\theta} \log \frac{n}{k}.
\end{align*}
Then, CGT is known to exhibit the following information-theoretic threshold behaviour:

\begin{theorem}[\cite{coja_spiv}] For any $\vartheta \in (0,1)$ and  $\varepsilon >0$ there exists $n_0 = n_0(\theta, \varepsilon)$ such that for all $n>n_0$, all pool designs $G$ with $m \leq (1-\eps)m_{\mathrm{inf}}^{\mathrm{CGT}}$ tests and any inference procedure $\cA_G:\{0,1\}^m \to \{0,1\}^n$, 
\begin{align*}
\pr\bc{\cA_G(\hat\SIGMA_G) = \SIGMA} < \varepsilon.
\end{align*}
On the other hand, there exist a randomised pool design $\vec G$ on $m \leq (1+\eps) m_{\mathrm{inf}}^{\mathrm{BGT}}$  pools and a polynomial time algorithm that, given $\vec G$ and the test results $\hat\SIGMA$, outputs $\SIGMA$ w.h.p.
\end{theorem}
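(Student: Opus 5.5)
The statement has two halves. The converse bound follows from a counting argument combined with an analysis of ``disguised'' items. The achievability bound uses a random constant-column design together with the SPIV-type decoder.

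\emph{Lower bound.} I would first give the simple counting bound. There are $\binom nk$ equally likely configurations and only $2^m$ possible outcome vectors. Any decoder therefore succeeds with probability at most $2^m/\binom nk$. Since $\log\binom nk=(1+o(1))k\log(n/k)$, this probability tends to zero whenever
\[
m\le (1-\eps)\frac{k\log(n/k)}{\log 2}.
\]
This gives the first term in the maximum.

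For the second term $\theta/((1-\theta)\log^2 2)$, I would use a ``totally disguised'' argument. Fix any design with $m$ tests. An infected item $x$ is \emph{disguised} if every test containing it also contains another infected item. A healthy item $y$ is \emph{disguised} if every test containing it is positive.

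If there is at least one disguised healthy item and one disguised infected item, then swapping their labels leaves $\hat\SIGMA_G$ unchanged. Conditional on the outcome, the posterior is uniform over consistent configurations, so success probability is at most about one half. Iterating this over many disjoint pairs drives it below $\eps$.

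One then shows that for $m\le(1-\eps)m_{\mathrm{inf}}^{\mathrm{CGT}}$ there are w.h.p.\ many such pairs. First reduce to designs where tests have size about $n/k\cdot\log 2$ and item degrees are balanced; tests that are too large or too small carry little information, and a convexity (Jensen) argument shows that irregular degrees only increase the number of disguised items. Then run a first/second moment computation on the number of disguised items of each type. An item of degree $\Delta$ is disguised with probability roughly $2^{-\Delta}$. With $\Delta\approx m\log2/k$, the expected number of disguised healthy items is $n2^{-\Delta}\to\infty$ exactly when $m<\theta k\log(n/k)/((1-\theta)\log^22)\cdot(1-\eps)$. The second moment controls dependencies through shared tests.

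\emph{Upper bound.} For achievability, I would take the spatially coupled design with $\ell$ compartments arranged in a ring and seed tests, as in \cite{coja_spiv}. The decoder proceeds in three steps:
\begin{enumerate}
\item Use DD on the seed compartments to classify their items.
\item Propagate compartment by compartment. Each item receives a score counting ``unexplained'' positive tests, i.e.\ positive tests with no already-identified infected item from earlier compartments. Thresholding this score classifies the item.
\item Run a final cleanup step that corrects the remaining $o(k)$ errors by local search.
\end{enumerate}
The analysis uses large-deviation estimates on the score distributions for infected versus healthy items. Concentration holds because the designs are locally tree-like. Choosing the threshold optimally yields success w.h.p.\ at $m=(1+\eps)m_{\mathrm{inf}}^{\mathrm{CGT}}$. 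All steps run in polynomial time.

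\emph{Main obstacle.} I expect the hardest step to be the converse for \emph{arbitrary} designs, specifically reducing general designs to near-regular ones without losing constants. The achievability large-deviation calculation is technical but standard by comparison. Since this theorem is cited from \cite{coja_spiv}, in the paper I would simply invoke it; the sketch above records how its proof goes.
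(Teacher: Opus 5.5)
The paper gives no proof of this statement (it is quoted from \cite{coja_spiv}), so invoking it, as you ultimately propose, is exactly what the paper does. Your counting half is fine. The sketch you offer as a record of the cited proof, however, has genuine gaps.

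\textbf{Converse.} For an arbitrary $G$ the events $\{a\text{ contains an infected item other than }x\}$, $a\ni x$, are correlated through shared items. So ``an item of degree $\Delta$ is disguised with probability roughly $2^{-\Delta}$'' is unjustified. Nor can an arbitrary design be reduced to a near-regular one with tests of size $\approx(n/k)\log 2$.

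The missing tool is the Harris/FKG inequality. These events are increasing in the infection vector, so under a product prior (one must pass from the uniform $k$-set prior to i.i.d.\ Bernoulli$(k/n)$ statuses), $x$ is disguised, whatever its own status, with conditional probability at least $\prod_{a\ni x}\bigl(1-(1-k/n)^{|\partial a|-1}\bigr)$. Set aside the at most $m=o(n)$ items lying in singleton tests, and put $\Lambda=-\frac1n\sum_{a:\,|\partial a|\ge 2}|\partial a|\log\bigl(1-(1-k/n)^{|\partial a|-1}\bigr)$. Convexity of $\exp$ over items then gives expected numbers of disguised healthy and infected items of at least $(1-o(1))n\mathrm{e}^{-\Lambda}$ and $(1-o(1))k\mathrm{e}^{-\Lambda}$. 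Since $\Gamma\log\bigl(1-(1-k/n)^{\Gamma-1}\bigr)\ge-(1+o(1))(n/k)\log^2 2$ for every $\Gamma\ge 2$, we get $\Lambda\le(1+o(1))m\log^2 2/k$. The optimal test size enters through this per-test minimisation, not through a reduction.

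This is the FKG step of \cite[Claim~3.11]{coja_spiv} that Section~\ref{SSec:Discussion} identifies as the obstruction for TGT. Section~\ref{Sec:LinearK} uses the same template: FKG, averaging over items, and a distant set as in Claim~\ref{claim_distant}, which needs bounded degrees. Your ``second moment controls dependencies through shared tests'' needs such a concrete decorrelation device, since $G$ need not be locally tree-like.

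You also attach the constant to the wrong count. With $\Delta=m\log 2/k$, $n2^{-\Delta}=n^{1-c(1-\theta)\log^2 2}$ diverges for all $c<1/((1-\theta)\log^2 2)$. The threshold $\theta/((1-\theta)\log^2 2)$ comes from the infected count $k2^{-\Delta}=n^{\theta-c(1-\theta)\log^2 2}$.

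\textbf{Achievability.} Thresholding a plain count of unexplained positive tests does not reach $m_{\mathrm{inf}}^{\mathrm{CGT}}$ in general. SPIV scores with weights $\sum_j w_jW_{x,j}$ that depend on the position $j$ of the test's compartment in the window, and these weights are essential.

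To see why, take the optimal density (mean $\log 2$ infected per test). A positive test with a fraction $y$ of its members in not-yet-diagnosed compartments is unexplained with probability $2^{y-1}$ for an infected item. For a healthy candidate (an item in no negative test) the probability is $2^{y}-1$. Hence the log-likelihood weight of an unexplained test, relative to an explained one, is $-\log(1-2^{-y})$, which ranges from $\log 2$ to $\infty$.

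With these weights, the per-test exponent for a candidate to reach the infected typical score is $\int_0^1\KL{2^{y-1}}{2^y-1}\,dy=1-\log 2$. So about $n2^{-\Delta}\mathrm{e}^{-(1-\log 2)\Delta}=n\mathrm{e}^{-\Delta}$ candidates are misclassified, which is $o(k)$ precisely when $m\ge(1+\eps)k\log(n/k)/\log 2$. Any linear score whose weights are not proportional to these, in particular your unweighted count, has a strictly smaller exponent. It therefore leaves more than the $o(k)$ errors your cleanup step is meant to handle, and your decoder falls short of $m_{\mathrm{inf}}^{\mathrm{CGT}}$ at least where the counting term is the binding one.
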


 Notably, the same information-theoretic threshold can already be observed by restricting to randomised \emph{constant-column} pooling schemes $\vec{G}=\vec{G}(n,m,\Delta)$, where each item $x_1,\dots,x_n$ \emph{independently chooses $\Delta$ of the $m$ pools uniformly at random with replacement} \cite{aco_2019}. Therefore, this test design provides an important test case also for TGT. We restrict to such test designs in the sublinear regime. Since items draw their tests with replacement, there is ambiguity in the definition of a test outcomes. For an item $x$ and a test $a$, let $n_x(a)$ denote the number of times that $x$ appears in test $a$. We then choose the convention that the outcome of test $a$ is
$$\hat{\SIGMA}_a = \mathds{1}\cbc{\sum_{x \in \partial a} n_x(a)\SIGMA_y \geq t}.$$

Let $H: [0,1] \to \RR, H(x) = -x\log x - (1-x)\log(1-x)$ denote the binary entropy function with base $\eul$. Our main result in the sublinear regime then comes in terms of the following quantities:  For $r>0$, set
\begin{align*}
    c_{1}(r) := \frac{1}{H(\pr\bc{\Po(r)\leq t-1}) }, \qquad c_{2}(r,\theta) := \frac{\theta}{-(1-\theta) r \log\bc{1- \pr\bc{\Po(r)=t-1}}},
\end{align*} and
\begin{align} \label{def_minf}
c_{\mathrm{inf}}^{\mathrm{TGT}} = c_{\mathrm{inf}}^{\mathrm{TGT}} (\theta, t)&= \inf_{r>0}\max\cbc{c_{1}(r),  c_{2}(r,\theta) }.
\end{align}
Let $r^*>0$ denote the unique minimiser of $r \mapsto \max\cbc{c_{1}(r),  c_{2}(r,\theta)}$ (see Lemma~\ref{claim_unique_r}).

For $t \geq 3$, our main result is conditional, and relies on the following analytic conjecture, in which we write $\vX\sim \Po(\lambda)$ when $\vX$ has a Poisson distribution with parameter $\lambda$:

\begin{conjecture} \label{Conj}
For any integer $t \geq 3$ and $\alpha \in [0,1]$, 
let the random variables $\vE^*$, $\vB^*$, $\vX_1$, and $\vX_2$ be independent with distributions
\begin{equation*}
    \vE^* \sim \Po(r^*), \qquad \vB^* \sim \Po(\alpha r^*), \qquad \vX_1, \vX_2 \sim \Po((1-\alpha)r^*).
\end{equation*}
Define the maximum of $\vX_1$ and $\vX_2$ to be $\vM^* = \max\{\vX_1, \vX_2\}$. Let the probabilities $q(r^*)$ and $p(r^*,\alpha)$ be given by
\begin{equation*}
    q(r^*) = \Pr(\vE^* \leq t-1) \quad \text{and} \quad p(r^*,\alpha) = \Pr(\vB^* + \vM^* \leq t-1).
\end{equation*}
Then, for all $c>c_{\mathrm{inf}}^{\mathrm{TGT}}$, the function
\begin{align}
    \alpha \mapsto 1 - \alpha + c \bc{q(r^*) \log\bcfr{p(r^*,\alpha)}{q(r^*)} + (1-q(r^*)) \log\bcfr{1-2q(r^*)+p(r^*,\alpha)}{1-q(r^*)}} 
\end{align}
is maximised in $\alpha=0$.
\end{conjecture}

\begin{theorem}\label{Thm_inf}
Suppose that $0<\theta<1$, $k = \lfloor n^{\theta}\rfloor$, $t \in \mathbb{N}^+$ and $\eps>0$, and assume that $k$ is known. 
\begin{enumerate}[\textup(a\textup)]
\item If $m<(1-\eps)c_{\mathrm{inf}}^{\mathrm{TGT}}k\log(n/k)$, then for any constant-column test design $\G=\G(n,m,\Delta)$, there does not exist an algorithm that, given $\G,\hat\SIGMA,k$, outputs $\SIGMA$ with a non-vanishing probability.
\item  Assume that either $t=2$, or that $t \geq 3$ and Conjecture~\ref{Conj} holds. If $m>(1+\eps) c_{\mathrm{inf}}^{\mathrm{TGT}}k\log(n/k)$, then there exist a constant-column test design $\G=\G(n,m,\Delta)$ and an algorithm that, given $\G,\hat\SIGMA, k$, outputs $\SIGMA$ w.h.p.
\end{enumerate}
\end{theorem}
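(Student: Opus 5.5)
We follow the two-sided strategy that is standard for the constant-column design in CGT, adapted to threshold tests. Parametrise $\Delta = r m/k$, so that the number of defective edge-endpoints in a fixed test is asymptotically $\Po(r)$. A test is then negative with probability $q(r)=\pr(\Po(r)\le t-1)$. The obstacles to recovery have two sources, and they match the two terms in \eqref{def_minf}.

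\textbf{Part (a).} First, a counting/entropy bound. The posterior of $\SIGMA$ given $\G,\hat\SIGMA$ is uniform on the set of $k$-sets consistent with the outcomes. Success with non-vanishing probability requires $H(\hat\SIGMA\mid\G)\ge(1-o(1))\log\binom nk\sim k\log(n/k)$. On the other hand, $H(\hat\SIGMA\mid \G)\le \sum_a H(\hat\SIGMA_a)\le m\,H(q(r))+o(m)$, using concentration of test degrees. This yields $m\ge (1-o(1))c_1(r)k\log(n/k)$.

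Second, a local-swap argument. Call a defective $x$ \emph{disguised} if in each of its $\Delta$ tests the remaining defectives number different from $t-1$. Then removing $x$ does not change that test's outcome. Since each such event has probability about $1-\pr(\Po(r)=t-1)$, the expected number of disguised defectives is $k(1-\pr(\Po(r)=t-1))^{\Delta}$. Symmetrically, a non-defective $y$ is disguised if, in each of its tests, adding it does not flip the outcome. Swapping a disguised defective with a disguised healthy item sharing no test gives another consistent configuration. A second-moment computation shows that, when $m<(1-\eps)c_2(r,\theta)k\log(n/k)$, the number of disguised defectives is $\ge k^{\Omega(1)}$ w.h.p. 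Indeed, $k(1-\pr(\Po(r)=t-1))^{rm/k}=k^{1-(1-\eps')\cdot\text{stuff}}\to\infty$, where the exponent calculation uses $\log(n/k)=\frac{1-\theta}{\theta}\log k$. The number of disguised healthy items diverges as well. Hence the posterior has at least two comparable points, and the success probability is at most $1/2+o(1)$. Then boost to $o(1)$ via many disjoint swaps.

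Taking the worse of the two bounds for the given $r$ and minimising over $r$ gives $c^{\mathrm{TGT}}_{\mathrm{inf}}$. Since $\Delta$ is arbitrary, we must also handle $r\to0$ or $r\to\infty$, where $c_1$ blows up.

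\textbf{Part (b).} Choose $r=r^*$ and use the maximum-likelihood/first-moment approach. Let $Z_\ell$ be the number of $k$-sets $\sigma'$ consistent with $\hat\SIGMA$ that overlap $\SIGMA$ in $(1-\alpha)k$ items, where $\alpha=\ell/k$. We show $\sum_{\ell\ge1}\Erw Z_\ell\to0$, conditional on typical outcome counts (planted model). For a fixed such $\sigma'$, the probability that every test agrees is computed through a local limit. Per test, the shared defectives contribute $\vB^*\sim\Po(\alpha r^*)$... more precisely, the shared part contributes $\Po((1-\alpha) r^*)$ and the two private parts contribute independent Poissons. Both configurations must give the same outcome, which leads to the joint probabilities $p(r^*,\alpha)$ and $1-2q+p$ of the conjecture after relabelling $\alpha\leftrightarrow1-\alpha$. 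Conditioning on the empirical fraction $q(r^*)$ of negative tests, a large-deviation (KL) estimate gives
\[
\frac1{k\log(n/k)}\log\Erw Z_{\alpha k}\le \alpha-1+\ldots,
\]
namely the entropy $\binom{k}{\alpha k}\binom{n-k}{\alpha k}\approx (n/k)^{\alpha k}$ times $\exp\!\big(m\,[q\log\frac pq+(1-q)\log\frac{1-2q+p}{1-q}]\big)$. Negativity for all $\alpha\in(0,1]$ is exactly Conjecture~\ref{Conj}, maximised at the value where the expression equals $0$. The regime of small overlap, $\alpha\to0$ (i.e.\ $\ell=o(k)$), has to be treated separately by a direct union bound over swaps, and that analysis is governed by $c_2$. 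The regime $\alpha$ bounded away from $0$ gives strict negativity. For $t=2$ we verify the conjecture analytically: $p$ has explicit Poisson expressions, and we check concavity/monotonicity of the function in $\alpha$ via its derivative.

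The main obstacle is this last analytic step, namely showing the maximum is at the endpoint for all $c>c_{\mathrm{inf}}$. It is also where uniformity between the small-$\ell$ union bound and the large-$\alpha$ KL bound must be stitched together. Careful error control is needed for $\ell$ around $k/\log k$.
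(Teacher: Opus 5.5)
Your architecture matches the paper's. In (a), a counting bound gives $c_1$ and swapping a disguised defective with a disguised non-defective gives $c_2$, optimised over the degree parameter. In (b), you take $\Delta=r^*m/k$, run a first moment over overlaps controlled by Conjecture~\ref{Conj}, and treat the near-$\SIGMA$ regime separately via $c_2$. Your parametrisation by $r=\Delta k/m$ is slightly cleaner than the paper's $c_1^*,c_2^*$ and Lemma~\ref{claim_opt_eq}, since a design with $c<(1-\eps)c_{\mathrm{inf}}^{\mathrm{TGT}}$ already has $c<(1-\eps)\max\{c_1(r),c_2(r,\theta)\}$ for its own $r$.

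Two steps fail as written. The first is the counting bound in (a). Subadditivity plus Fano only gives $\pr(\text{success})\le (H(\hat\SIGMA\mid\G)+\log 2)/\log\binom nk\le 1-\eps+o(1)$, a weak converse, whereas the theorem asserts success probability $o(1)$. Your claim that non-vanishing success forces $H(\hat\SIGMA\mid\G)\ge(1-o(1))\log\binom nk$ is false in general. An outcome map that is injective on half of the configurations and constant on the rest succeeds with probability about $1/2$ at entropy about $\frac12\log\binom nk$. You need a typical-set count as in Lemma~\ref{Lem:Counting}. The simplest version uses Lemma~\ref{Lemma_m0}: w.h.p.\ the number of negative tests is $(q\pm\eps')m$, so $\hat\SIGMA$ lies in a fixed set $A$ with $\log|A|\le m(H(q)+o_{\eps'}(1))$. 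Each outcome vector is decoded correctly for at most one $\SIGMA$, so the success probability is at most $|A|/\binom nk+\pr(\hat\SIGMA\notin A)=o(1)$.

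The second, more serious, gap concerns competitors differing from $\SIGMA$ in $s=k-\ell\le k/\log n$ items. A direct union bound over swaps fails because of the lottery phenomenon. Already for $s=1$, $\Erw[Z_{k,k-1}]=k(n-k)\bigl(1-\pr(\Po(r^*)=t-1)\bigr)^{2\Delta}n^{o(1)}$. When $c_2$ is the binding term and $c=(1+\eps)c_2(r^*,\theta)$, one has $\bigl(1-\pr(\Po(r^*)=t-1)\bigr)^{\Delta}=n^{-(1+\eps)\theta}$. Hence $\Erw[Z_{k,k-1}]=n^{1-\theta-2\eps\theta+o(1)}\to\infty$ for small $\eps$, even though w.h.p.\ there is no disguised defective at all. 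The expectation is carried by rare designs in which some defective has atypically few pivotal tests.

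The missing idea is rigidity (Proposition~\ref{Lemma_rigid}). First show that w.h.p.\ every defective lies in at least $\delta\Delta$ tests containing exactly $t$ defectives. By a Chernoff bound the expected number of violators is $k\exp\bigl(-\Delta D_{\mathrm{KL}}(\delta\,\|\,\pr(\Po(r^*)=t-1))\bigr)=o(1)$ for small $\delta$, and this is exactly where $c\ge(1+\eps)c_2(r^*,\theta)$ is used. Then run the first moment conditionally on this event $\cR$. On $\cR$, the $s$ removed defectives lie in at least $s\delta\Delta/t$ such tests, and each must receive one of the $s$ added items. This costs $\exp(-\Omega(s\log n\log\log n))$, which beats the $\exp(O(s\log n))$ entropy of choosing the swap (Lemma~\ref{Lemma_big_overlap}).

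For the seam you flag on the small-overlap side, note that the first-moment exponent tends to $0$ with $s/k$. You therefore need it bounded by $-\Omega(s\log(n/k))$, together with $\binom{k}{s}\le(\mathrm{e}k/s)^s$ rather than $2^k$. That bound is what the inequalities $q^{2-\alpha}\ge p(\alpha)$ and $(1-q)^{2-\alpha}\ge 1-2q+p(\alpha)$ from Lemma~\ref{Lem:T2} give, with $\alpha$ the overlap fraction.
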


\begin{figure}[H]
    \centering
    \includegraphics[scale=0.37]{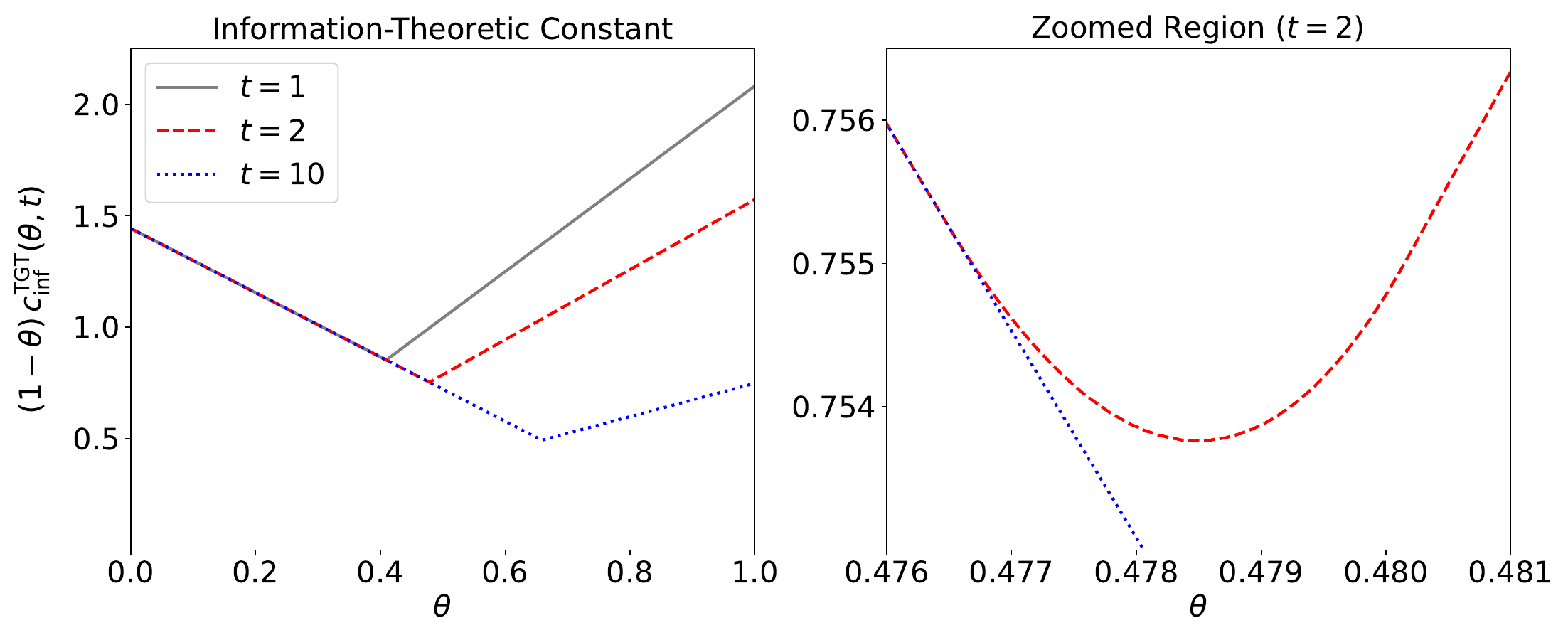}
    \caption{ Left: The difference between the information-theoretically optimal constant for binary group testing (grey, $t=1$) \cite{coja_spiv}, the constant from Theorem \ref{Thm_inf} for $t=2$ (red) and the constant from Theorem \ref{Thm_inf} for $t=10$ (blue). In all cases, the number of tests is of the form $m=(1-\theta) c_{\mathrm{inf}}^{\mathrm{TGT}} (\theta, t) k\log n$, where we used $(1-\theta)^{-1}\log(n/k)=\log{n}$. Right: A zoomed picture of the change-over point when $t=2$.}
    \label{fig:CthetaT}
\end{figure}

Figure \ref{fig:CthetaT} shows the threshold constant $c_{\mathrm{inf}}^{\mathrm{TGT}}$ for $t=1, 2$ and $10$ as functions of $\theta \in (0,1)$.  In particular, the figure displays that for small values of $\theta$, classical and threshold group testing are equally hard in the constant-column design, 
while for larger values of $\theta$, TGT can in fact succeed with \emph{fewer} tests than CGT, since the curve for $t \geq 2$ falls below the classical bound. In this sense, with an eye on our motivating question, Theorem~\ref{Thm_inf} states that in the sublinear regime, for large prevalences $\theta$, reduction strategies to CGT are {\em not} optimal. Moreover, the larger the threshold $t$, the more pronounced the improvement becomes.

Furthermore, Figure \ref{fig:CthetaT}
shows a hidden intricacy of threshold group testing on the constant column design when compared to classical group testing. While to the naked eye it appears that $(1-\theta)c_{\mathrm{inf}}^{\mathrm{TGT}}$ is piecewise linear, in fact, there is an intermediate region where is it not. This is shown on the right of Figure \ref{fig:CthetaT} for $t = 2$.

The underlying reason for this qualitative difference is as follows. Denote by $r_1$ and $r_2$ the minimisers of the two functions in \eqref{def_minf}:
\begin{align}
    r_1 &= r_1(t)=\arg \min_{r > 0} \frac{1}{H(\mathbb{P}(\Po(r) \le t-1))}, \label{Eq:r1} \\
    r_2 &=r_2(t)= \arg \min_{r > 0} \frac{\theta}{-(1-\theta)r \log(1 - \mathbb{P}(\Po(r) = t-1))}. \label{Eq:r2}
\end{align}
Taken individually, $r_1$ and $r_2$ depend only on the threshold $t$ and are independent of $\theta$.  Notably, $r_1=r_2=\log 2$ for $t=1$.
In contrast, $r_1 < r_2$ (see Lemma~\ref{lem_b}) for $t>1$, and there appears to be no closed-form expression for $r_1$ nor $r_2$. 
Approximations of $r_1, r_2$ up to $t=10$ are detailed in the following table:
\begin{table}[H]
\begin{tabular}{|ccccccccccc|}
\hline
$t$ & $1$      & $2$      & $3$      & $4$      & $5$      & $6$      & $7$      & $8$      & $9$      & $10$     \\ \hline
$r_1$ & $\log 2$ & $1.6783$ & $2.6741$ & $3.6721$ & $4.6709$ & $5.6702$ & $6.6696$ & $7.6692$ & $8.6690$ & $9.6687$ \\ \hline
$r_2$ & $\log 2$ & $1.8369$ & $2.8730$ & $3.8923$ & $4.9048$ & $5.9137$ & $6.9206$ & $7.9260$ & $8.9305$ & $9.9342$ \\ \hline
\end{tabular}
\caption{The values of $r_1$ and $r_2$ (rounded to four decimal points) for $t \in\{1,\ldots,10\}$.}
\label{tab:values}
\end{table}

For intervals of $\theta$ where the maximiser $r^*$ of \eqref{def_minf} coincides with one of $r_1, r_2$, the overall curve is linear. However, there is a small `tradeoff' region where the two constraints compete. In Section \ref{SSec:Prelim}, we will establish that $r^* \in [r_1 \wedge r_2, r_1 \vee r_2]$ (numerical results consistently indicate that $r_1 < r_2$ for $t > 1$). In this region, as $r$ increases, $c_1$ begins to increase while $c_2$ is still decreasing, which results in the curvature demonstrated in Figure \ref{fig:CthetaT} for $t=2$. The same phenomenon appears in the noisy version of classical group testing \cite{coja2025noisy}.

\subsubsection{The Linear Regime}

We now turn our attention to TGT in the \emph{linear} regime, where the number of defectives $k$ scales linearly in the population size $n$. 
In this section, we assume an i.i.d. prior on $\SIGMA$, where for all $i \in [n]$, $\SIGMA(x_i)=1$ independently with fixed and known probability $\alpha\in(0,1)$. Indeed, in classical non-adaptive group testing, this regime is decidedly different from the sparse regime.
Here, Aldridge \cite{aldridge2018individual} showed that individual testing is optimal:

\begin{theorem}[\cite{aldridge2018individual}] \label{thm_ald_opt}
    Consider non-adaptive binary group testing with an i.i.d. prior where each of the $n$ items is independently defective with a given probability $\alpha \in (0,1)$ which is independent of $n$. Then there exists a constant $\eps > 0$, independent of $n$, such that $\pr\bc{\cA_G(\hat\SIGMA) = \SIGMA} \leq 1-\eps$ for any test design $G$ on $m<n$ tests and any algorithm $\cA_G\colon \{0,1\}^m \to \{0,1\}^n$.
\end{theorem}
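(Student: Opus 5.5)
The statement to prove is Aldridge's theorem (Theorem~\ref{thm_ald_opt}): with an i.i.d.\ Bernoulli($\alpha$) prior and any non-adaptive design on $m<n$ tests, the success probability is bounded away from $1$. My plan is a counting/entropy argument refined to handle designs with $m$ close to $n$ (e.g.\ $m=n-1$). A naive entropy bound, $H(\SIGMA)=nH(\alpha)$ against at most $m\log 2$ bits of output, only works when $H(\alpha)>\log 2\cdot m/n$. It therefore fails for $m=n-1$ when $\alpha\neq 1/2$. A structural argument is needed instead.

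\textbf{Step 1 (reduction to items in few tests).} Fix a design $G$ with $m<n$ tests. The key observation is that some item must be ``hard to decode.'' If there is an item $x$ that appears in no test, its label is independent of $\hat\SIGMA$. Any estimator then errs on $x$ with probability at least $\min\{\alpha,1-\alpha\}$, and we take $\eps=\min\{\alpha,1-\alpha\}$. Otherwise every item is in at least one test.

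\textbf{Step 2 (disguising event).} The idea is that an item $x$ is \emph{disguised} if every test containing $x$ also contains another defective. In that case $\SIGMA$ and $\SIGMA$ with $x$ flipped produce identical outcomes. Under the i.i.d.\ prior, the posterior ratio of the two configurations is then $\alpha/(1-\alpha)$, a constant. Hence any estimator errs with probability at least $\min\{\alpha,1-\alpha\}\cdot\pr(\text{some item disguised with flip-pair structure})$. More carefully, I will bound
\[
\pr(\text{error})\;\geq\;\min\{\alpha,1-\alpha\}\cdot\pr(\exists x:\ x \text{ disguised}),
\]
by pairing configurations $\sigma\leftrightarrow\sigma\oplus e_x$ for a canonically chosen disguised $x$, for instance the smallest index. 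The choice must be made so that the pairing is an involution; this can be achieved by choosing $x$ as a function of the other coordinates in a way invariant under flipping $x$.

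\textbf{Step 3 (existence of a disguisable item).} With $m<n$, every item lies in some test. I split into two cases.
\begin{enumerate}[(i)]
\item Some test $a$ contains at least two items, say $x,y$, and $x$ lies only in tests that all contain $y$. Then $\pr(y \text{ defective})=\alpha$ disguises $x$ with constant probability.
\item Otherwise, I will argue by a combinatorial lemma that since $m<n$, some item $x$ has the following property: every test containing $x$ also contains some other item. This holds because at most $m<n$ items can have a ``private'' singleton test.
\end{enumerate}
For such an $x$, with tests $a_1,\dots,a_d$, the probability that each $a_j$ contains another defective is bounded below, but only by $\alpha^{d}$. This is not uniform if $d$ is unbounded.

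\textbf{Main obstacle.} The main obstacle is Step~3 when degrees are large. Large tests make the outcome nearly always positive, which in turn makes the test uninformative. I would handle this by using that items in many large tests are disguised with high probability, and items in small tests have a bounded number of neighbours. Concretely, a test of size $s$ contains another defective with probability at least $1-(1-\alpha)^{s-1}$. The target is to show that either some item $x$ has $\prod_{a\ni x}(1-(1-\alpha)^{|a|-1})\geq \delta(\alpha)>0$, or the number of items without private tests is large enough that the Step~1 entropy bound applies to them. Balancing these two alternatives is where I expect the genuine work, essentially following Aldridge's argument.
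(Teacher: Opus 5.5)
\textbf{There is a genuine gap.} The proposal stops exactly at the step that carries the theorem. Step~3 ends with the non-uniform bound $\alpha^{d}$, and the ``Main obstacle'' paragraph only states a target and defers it to Aldridge. The second branch of your dichotomy cannot work: as you note yourself, the entropy bound fails for $m$ close to $n$ when $\alpha\neq 1/2$, and applying it to a subset of items does not repair this.

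\textbf{The missing idea: double counting.} No dichotomy is needed, because the first alternative always holds. Write $|a|$ for the number of distinct items in test $a$. Let $S$ be the set of items occurring in some singleton test and $s$ the number of singleton tests. Then $|S|\le s$, and every $x\notin S$ lies only in tests with $|a|\ge 2$. Set $w(a)=-\log\bigl(1-(1-\alpha)^{|a|-1}\bigr)$ and $W(x)=\sum_{a\ni x}w(a)$. The total weight of a test over its members is bounded independently of its size:
\[
|a|\,w(a)\le C(\alpha):=\sup_{j\ge 2}\Bigl(-j\log\bigl(1-(1-\alpha)^{j-1}\bigr)\Bigr)<\infty ,
\]
since $j(1-\alpha)^{j-1}\to 0$. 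Large tests are cheap because they contain another defective with probability exponentially close to one. Hence
\[
\sum_{x\notin S}W(x)\le\sum_{a:\,|a|\ge 2}|a|\,w(a)\le C(\alpha)(m-s),
\]
while $n-|S|\ge n-s>m-s\ge 0$. This is the only place where $m<n$ enters, and it yields an item $x\notin S$ with $W(x)\le C(\alpha)$.

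\textbf{From the good item to the bound.} For $a\ni x$, the events ``$a$ contains a defective other than $x$'' are increasing in $\SIGMA_{-x}$. The Harris/FKG inequality, which your product formula implicitly needs, then gives $\pr(D_x)\ge \mathrm{e}^{-W(x)}\ge \mathrm{e}^{-C(\alpha)}$ for the disguise event $D_x$.

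\textbf{Step 2 is not needed.} Your canonical-choice involution is not justified as written: flipping $x$ can change whether items of smaller index are disguised, so ``smallest disguised index'' is not flip-invariant. A single fixed item suffices. $D_x$ is determined by $\SIGMA_{-x}$, and on $D_x$ the outcome vector is a deterministic function of $\SIGMA_{-x}$. So any estimate of $\SIGMA(x)$ is conditionally independent of $\SIGMA(x)\sim\Be(\alpha)$ and is wrong with probability at least $\min\{\alpha,1-\alpha\}$. This proves the statement with $\eps=\min\{\alpha,1-\alpha\}\,\mathrm{e}^{-C(\alpha)}$, uniformly in $n$ and $G$, and makes your case split (i)/(ii) superfluous.

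\textbf{Relation to the paper.} The paper does not prove this statement; it quotes it from Aldridge. Its proof of Lemma~\ref{lem:main_lower} for TGT runs on the same mechanism: an FKG lower bound $L_i$, then averaging $\frac1n\sum_j L_j$ through $\sum_a|\partial a|\log\pr(\cdot)$. There the per-test weight grows with the test size, which is why a test-degree cap $M$ is imposed. In the classical case above, no such cap is needed.
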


Using any of the reduction techniques from TGT to CGT shows that also TGT can also, at least theoretically, be solved with $n$ tests. While we have seen that in the sublinear case, such reductions are not optimal for large values of $\theta$, and the simple constant-column design 
achieves better performance without any reduction approach, the following theorem establishes that in the linear regime, for a large class of reasonable test designs, reduction to the optimal test design of classical group testing yields a considerable improvement:
\begin{theorem}\label{Thm:LinMain}
    Consider non-adaptive threshold group testing with a threshold of $t\geq 2$ and an i.i.d. prior where each of the $n$ items is independently defective with a given probability $\alpha \in (0,1)$ which is independent of $n$. 
    Then, for any constant $M>0$, there exist constants $C>0$ and $\epsilon = \epsilon(\alpha) > 0$, independent of $n$, such that the following is true: Any inference procedure using a non-adaptive test design without multi-edges on at most $Cn\log n$ tests, in which each item participates in at most $\log^4 n$ tests and each test has degree at most $M$, has an average error probability of at least $1-\epsilon$.
\end{theorem}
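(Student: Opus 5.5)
The plan is to exhibit, for a typical ground truth $\SIGMA$, a large set of \emph{hidden} items: items whose labels can be flipped without changing any test outcome. If such a set $\vec S$ has $|\vec S|\to\infty$ w.h.p. and its items can be flipped independently, then any inference procedure succeeds with probability at most $\Erw[\max\{\alpha,1-\alpha\}^{|\vec S|}]+o(1)=o(1)$. That is (much) stronger than the claimed bound on the error probability.

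\textbf{Step 1: sensitivity of a single test.} For an item $x$ and a test $a\ni x$ (no multi-edges, $|\partial a|=d\le M$), call $a$ \emph{sensitive to $x$} if exactly $t-1$ of the other $d-1$ items of $a$ are defective. Only then does $\SIGMA_x$ influence $\hat\SIGMA_a$, since for any other count the outcome is the same for both values of $\SIGMA_x$. Under the i.i.d. prior,
\[
\pr(a\text{ sensitive to }x)=\pr(\Bin(d-1,\alpha)=t-1)\le p^*:=\max_{1\le d\le M}\pr(\Bin(d-1,\alpha)=t-1)<1 .
\]
Here $p^*<1$ because $\alpha\in(0,1)$ and $t\ge 2$; in particular tests with $d-1<t-1$ are never sensitive. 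Call $x$ hidden if no test containing $x$ is sensitive to $x$.

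\textbf{Step 2: many low-degree items are hidden.} The number of edges is at most $M\cdot Cn\log n$. By Markov's inequality, at least $n/2$ items have degree $D_x\le 2MC\log n$. For such $x$, the sets $\partial a\setminus\{x\}$ for $a\ni x$ are disjoint because there are no multi-edges and $x$ lies in each test once. Hence the sensitivity events are independent, and
\[
\pr(x\text{ hidden})\ge (1-p^*)^{2MC\log n}=n^{-2MC\log(1/(1-p^*))}\ge n^{-1/2},
\]
provided $C$ is chosen small depending on $M,\alpha,t$. So the expected number of hidden low-degree items is at least $n^{1/2}/2$.

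To get separation, I first extract deterministically (greedily) a set $L$ of low-degree items that pairwise have graph distance at least $6$ in the bipartite design. Each item's radius-$5$ neighbourhood has size at most $(M\log^4 n)^{3}=\mathrm{polylog}(n)$, which is where the degree bounds $M$ and $\log^4 n$ enter. Thus $|L|\ge n/\mathrm{polylog}(n)$. For distinct $x,y\in L$ the hidden events depend on disjoint sets of labels, so they are independent. The number $\vec S$ of hidden items in $L$ is then a sum of independent indicators with mean at least $n^{1/2-o(1)}$, and Chernoff gives $|\vec S|\ge n^{1/3}$ w.h.p.

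\textbf{Step 3: from hidden items to failure.} Because items of $L$ are at distance $\ge 6$, flipping any subset $U\subseteq\vec S$ leaves the following unchanged: every test outcome, the label counts seen by tests adjacent to the other items of $L$, and therefore the hidden status of every item of $L$. So $\vec S$ is a function of $\hat\SIGMA$ together with $\SIGMA|_{V\setminus\vec S}$, and all $2^{|\vec S|}$ flips are consistent with the observations. Conditionally on $\hat\SIGMA$ and $\SIGMA|_{V\setminus \vec S}$, the labels on $\vec S$ are i.i.d. $\Be(\alpha)$. Any estimator therefore succeeds with conditional probability at most $\max\{\alpha,1-\alpha\}^{|\vec S|}$. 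Averaging and using $|\vec S|\ge n^{1/3}$ w.h.p. gives success probability $o(1)$, hence error at least $1-\epsilon$ for large $n$.

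\textbf{Main obstacle.} The delicate point is Step 3's conditioning. I must make sure that hiddenness is determined by data invariant under the flips, and that the posterior on $\vec S$ genuinely factorises. This is exactly why I insist on distance $\ge 6$ within $L$, rather than merely having no common test: flipping $x$ must not change the ``other-items count'' of any test adjacent to another $y\in L$. I would verify this invariance carefully before relying on the product bound.
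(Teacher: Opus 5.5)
Your strategy works and differs from the paper's, but one justification in Step 2 is false. The sets $\partial a\setminus\{x\}$, $a\ni x$, need not be disjoint. Excluding multi-edges only forbids an item occurring twice in the \emph{same} test; pools such as $\{x,y,z\}$ and $\{x,y,w\}$ (or two copies of one pool) are allowed. Hence the sensitivity events of $x$ are in general dependent. Since $\{\vX_a\neq t-1\}$ is not monotone in the labels, FKG does not rescue the product either. For example, with $t=2$, $a=\{x,y\}$, $b=\{x,y,z,w\}$ and $\alpha>1/3$, the two non-sensitivity events are negatively correlated. So $\pr(x\text{ hidden})\ge(1-p^*)^{D_x}$ is unjustified as argued.

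The repair is one line. Let $N_2(x)$ be the set of items sharing a test with $x$. On the event that every item of $N_2(x)$ is non-defective, each test of $x$ contains $0\neq t-1$ other defectives (this uses $t\ge 2$), so $x$ is hidden. This event has probability $(1-\alpha)^{|N_2(x)|}\ge(1-\alpha)^{(M-1)D_x}\ge n^{-2M(M-1)C\log(1/(1-\alpha))}$, which is at least $n^{-1/2}$ once $C$ is small. This is the elementary analogue of the paper's FKG step with the decreasing events $\{\vX_a\le t-2\}$. With this fix the rest of your argument holds.

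For Step 3, the clean formalisation is to condition on $\SIGMA|_{V\setminus L}$ and on $\SIGMA|_{L\setminus\vec S}$. The first already determines $\vec S$, because the hidden status of $y\in L$ depends only on labels in $N_2(y)\subseteq V\setminus L$. Under this conditioning $\hat\SIGMA$ is fixed, while $\SIGMA|_{\vec S}$ is i.i.d.\ $\Be(\alpha)$ and independent of the estimator's output. Incidentally, for this invariance it suffices that no two items of $L$ share a test. Distance $\ge 6$ is what you actually need for the independence of the hidden indicators in Step 2.

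As for the route: the paper passes from an arbitrary estimator to MAP and then to the genie estimator of Hintze et al. It notes that the genie errs on every disguised defective when $\alpha\le 1/2$ (symmetrically for $\alpha>1/2$), and lower-bounds the disguise probability by FKG. It copes with heterogeneous item degrees by choosing, via Claim D.1 of Hintze et al., a distant set whose members have log-probability at least $(1+\delta)$ times the average. It then concludes with Chebyshev and their Lemma 4.7.

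You instead discard high-degree items by Markov's inequality, which removes the need for that averaging claim. You also replace the genie/MAP machinery by a direct posterior factorisation on $\vec S$. Your argument is shorter and self-contained, and treats $\alpha\le 1/2$ and $\alpha>1/2$ simultaneously. It also yields the explicit bound $\max\{\alpha,1-\alpha\}^{n^{1/3}}+o(1)$ on the success probability. The paper's genie-based framework, borrowed from noisy group testing, is the one that would survive in settings where flipping an item changes the likelihood rather than leaving the observations exactly invariant.
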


\begin{remark}
Let $M,C>0$ be fixed. If $\vec G$ is a randomised test design on $m \leq Cn\log n$ tests where each test has degree at most $M$ and chooses its items uniformly without replacement, then \whp, each item participates in at most $\log^4 n$ tests and $\vec G$ satisfies the assumptions of Theorem~\ref{Thm:LinMain} (see Proposition~\ref{prop:bounded_item_degrees}). Thus, Theorem~\ref{Thm:LinMain} covers a natural class of test designs on $Cn\log n$ tests.
\end{remark}

Taken together, our comparison of TGT to CGT reveals the following qualitative difference between the sublinear and the linear regime:
In the {\em sparse} setting, TGT initially matches the performance of CGT and, as the prevalence increases, can even outperform it. In contrast, in the linear regime, based on our results, TGT appears to be more challenging than CGT, and varying the threshold does not lead to improvements (without resorting to the obvious reductions). 

\subsection{Previous Work} \label{SSec:Prev}

The worst-case version of the threshold group testing problem, in which the goal is to identify the ground truth with probability exactly equal to one, has been studied since the work of Damaschke from 2006\,\cite{damaschke2006threshold}. For 
 constant $t$, Damaschke gives an initial information-theoretic upper bound on the number of tests as $O(k \log n)$ for the adaptive two-round setting \citep{damaschke2006threshold} in the sublinear regime and of $O(n)$ tests for the adaptive two-round setting in the linear regime. 
 In the sublinear setting, De Marco, Jurdzi\'nski, Kowalski, R\'o\.za\'nski and Stachowiak  \citep{demarco2020subquadratic} provide an upper bound of order $O(k^2\log(n/k))$
for the non-adaptive case and constant $t$. Neither Damaschke nor De Marco et al. consider the computational complexity. Most recently, Bui et al.~\citep{bui2024efficient} provide the first known efficient algorithms for the problem, covering a range of thresholds and numbers of stages.

In this paper, we focus on the average-case setting where the number of defective items is assumed to be known. Key results include \citep{chan2013stochastic, reisizadeh2018sub}. The first results by Chan et al.~address an error probability of less than~$\epsilon$ \citep{chan2013stochastic}. They propose both a one-round and two-round procedure, using $(4\mathrm{e}^8 \log 2 /\pi^2) \log (1/\epsilon) k \sqrt{t} \log n + O\left( \log (1/\epsilon) k \sqrt{t} \right)$ and $16\mathrm{e}^2 k \log n + O\left( \log (1/\epsilon) k \right)$ tests respectively, with decoding complexity of $O\left( n \log n + n \log (1/\epsilon) \right)$. The results of Reisizadeh et al.~\citep{reisizadeh2018sub} provide a single-round, non-adaptive algorithm requiring $O\left( k \sqrt{t} \log^3 n \right)$ tests, but the decoder requires a precomputed look-up matrix of size $O(t \log n) \times \tbinom{n}{t}$ look-up matrix indexed by all $\tbinom{n}{t}$ possible defective sets.

\paragraph{\bf Applications of Threshold Group Testing.}
The applications of TGT pertain to the case in which an output only occurs when a threshold of the number of items displaying a certain characteristic is met. One such example is learning interpretable scorecards, which can be used in clinical prediction rules such as estimating the risk of a stroke \cite{gage2001validation, malioutov2017learning}. 
Chen and Fu \cite{chen2009nonadaptive} discuss the equivalence between TGT and graph search problems, where finding a hidden graph $H$ involves taking a collection of edges and determining whether the associated subgraph contains at least $t$ edges of $H$. TGT can also be used to identify triggers for reactive jamming attacks in multi-radio wireless sensors networks \cite{shin2009reactive}. This concept of multiple-packet receptions extends the use of TGT into engineering and practice in energy detectors \cite{chan2012carrier,georgiadis1982collision,ghez1989optimal}.

There exist further generalisations of TGT, the first again being proposed by Damaschke, in which the concentration of the defective items in each test plays an important role \cite{damaschke1997algorithmic}. This problem of \emph{concentration group testing} seems to be very natural, for instance, the standard PCR tests for COVID required a lower limit of 100 copies of viral RNA per millilitre of the sample to provide a positive test result \cite{arnaout2021limit}. Elsewhere, Tsybakov \cite{tsybakov1980resolution} produced a model more closely related to \emph{quantitative group testing} (QGT), in which again a threshold $t$ is imposed, but if we let the exact number of defectives in the pool be $r$, then the output of a test will be $r$ if $0\leq r \leq t$, and $t$ otherwise. This problem is commonly known as \emph{additive multiple channel access} and has been studied previously \cite{censor2015bounded, chan2012carrier,ghez1989optimal}.

\subsection{Discussion and Outlook} \label{SSec:Discussion}

While classical group testing ($t=1$) relies on the fact that any negative test conclusively identifies all items as non-defective, threshold group testing introduces \emph{ambiguity}: an item may consistently be hidden among other defectives in tests. Our results demonstrate that increasing the threshold $t$ fundamentally alters the identifiability landscape. Importantly, Theorem~\ref{Thm_inf} shows that, at least from an information-theoretic perspective, threshold group testing can actually require fewer tests than classical group testing in certain regimes. Here, the achievability result in Theorem~\ref{Thm_inf}(b) holds unconditionally for $t=2$, and for $t \geq 3$ under Conjecture~\ref{Conj}, which ensures that the relevant objective function does not exhibit irregularities that would break the first-moment analysis. The conjecture is strongly supported by numerical evidence. However, as the number of defectives transitions to the linear regime, Theorem~\ref{Thm:LinMain} demonstrates that, in the absence of individual testing, the opposite is true, and more tests seem to be required for threshold group testing.

We emphasise that our conclusions are mainly based upon the restriction to test designs that do not employ multi-edges (even though we theoretically allow multi-edges in the sublinear constant-column design, in the relevant scaling of $m$ and $\Delta$, there will be few of them). This restriction is crucial since allowing \emph{structured} multi-edges can fundamentally alter the problem. For example, as explained in Section~\ref{SSec:Quest}, a simple reduction via duplication can reduce TGT to CGT in all regimes. Similar uses of structured multi-edges also fundamentally change other group testing problems: An example is quantitative group testing, where using such a method can technically allow for a solution \emph{using a single test} (see \cite[Remark 1.3]{hahnklimroth_QGT}).

\subsubsection{Sublinear Regime}

We derive a sharp information-theoretic phase transition at $c_{\mathrm{inf}}^{\mathrm{TGT}}k\log(n/k)$ (non-adaptive) tests 
for TGT within the constant-column test design.
Our analysis quantifies how the required number of tests depends on the population size $n$, the prevalence $\theta$, and the threshold parameter $t$. In order to establish a sharp information-theoretic phase transition in the classical group testing problem, Coja-Oghlan et al.~\cite{aco_2019} drew inspiration from random constraint satisfaction problems. We employ similar techniques to account for the phenomenon of \emph{disguised} defectives, which represents a central challenge unique to the threshold setting.

\paragraph{Proof Strategy.}

The proof strategy of Theorem \ref{Thm_inf} for the large values of $\theta$ follows the approach developed by Coja-Oghlan et al. \cite{aco_2019} for the classical group testing problem. Following careful updates and adaptations of the proof strategy, we have generalised it for any constant $t$. For every result we adapt from \cite{aco_2019}, we cite the corresponding one at the beginning. As \cite{aco_2019}, our approach adapts moment bounds and expansion arguments from the random CSP literature regarding the rigidity of the solutions to manage these subtleties, ensuring uniqueness of the solution with high probability. A crucial element of the analysis is handling the so-called \emph{lottery phenomenon}, where rare configurations allow certain defectives to remain insufficiently constrained \cite{Barriers,Molloy}.

 Theorem~\ref{Thm_inf} is derived for constant-column test designs $\vec G$.
 This limitation parallels the classical group testing setting, where the result for fixed designs was first established in~\cite{aco_2019} and later extended to arbitrary designs in~\cite{coja_spiv}. We aimed to replicate this extension in our threshold setting, and for the most part the necessary adaptations were feasible. However, a key obstacle arises in \cite[Claim~3.11 (equation (3.14))]{coja_spiv}, which relies on the FKG inequality. In TGT, the presence of both increasing and decreasing events relating to disguised items violates the conditions required for applying the FKG inequality. Note that in Section \ref{Sec:LinearK}, when dealing with the linear regime, the issue with the FKG inequality arises again. However, in this setting, we were not aiming for the exact constant and work with restrictions on the item and test degrees, so that a suitable weakening of the appropriate events is sufficient to obtain Theorem~\ref{Thm:LinMain}. 

Nevertheless, we conjecture that our bound in Theorem~\ref{Thm_inf} represents the correct information-theoretic threshold even for arbitrary test designs, since the underlying heuristic argument is that enough tests are needed to eliminate disguised items. Establishing this rigorously remains an open problem.

\paragraph{\textbf{Algorithmic Achievability.}} In subsequent work \cite{2026TGTalg}, which includes the authors of the present paper, we show that the information-theoretic threshold identified here can also be achieved by a polynomial-time algorithm. More precisely, the \texttt{SPOT} algorithm achieves exact recovery w.h.p. using $(1+\eps) c_{\mathrm{inf}}^{\mathrm{TGT}}k\log(n/k)$ tests under a spatially coupled test design, thereby matching our information-theoretic threshold. This result is unconditional for $t$ as it does not require the analytic assumption used in our achievability result. Thus, together, our work and \cite{2026TGTalg} establish a sharp information-theoretic and algorithmic picture for threshold group testing, at least for the corresponding classes of test designs. The open question is whether the lower bound remains tight for arbitrary test designs.

\paragraph{\textbf{Concurrent Results.}} Following the completion of this work, another closely related paper by Tran, McMorrow and Scarlett appeared \cite{tran2026group}, that considers the more general setting of group testing with selectable thresholds, in which the threshold $t_a$ may be chosen separately for each test $a$. Their results include both converse bounds and algorithmic guarantees. When restricted to the fixed-threshold setting considered here, their converse applies to a broader class of test designs than our constant-column construction. In particular, their lower bound agrees with the bound established here, providing further evidence that $c_{\mathrm{inf}}^{\mathrm{TGT}}$ is the correct information-theoretic threshold beyond the constant-column setting. While their results do not fully establish this conjecture for arbitrary designs and all values of $\theta$, they substantially strengthen the evidence that the threshold identified in Theorem \ref{Thm_inf} is intrinsic to the problem rather than an artefact of the constant-column design.

\subsubsection{Linear Regime}
Theorem \ref{Thm:LinMain} establishes that when the number of defectives scales linearly in $n$, $t \geq 2$, and under the restrictions that items are not contained in too many tests, and tests do not contain too many items, the information-theoretic lower bound on tests is at least of order $n \log n$. We conjecture that this is, in fact, the correct result for arbitrary test designs without multi-edges, but have unfortunately been unable to show this.
This highlights a fundamental difference from the sublinear case, where substantial savings over classical group testing are achievable. 

Moreover, in TGT without multi-edges, there is no immediate analogue of individual testing (a test that gives definite information with probability one). Consequently the change from the optimality of group to individual testing that appears in classical group testing does not arise in the same manner for $t\geq2$. In classical group testing, the change naturally occurs when the number of tests required to ensure that each defective item is contained in at least one pivotal test exceeds the cost of individual testing. It is therefore natural to conjecture that, if a phase transition occurs in the linear regime for TGT from impossibility to achievability, it
is governed by the requirement that all defectives be covered by pivotal tests. However, due to the previously noted limitations in applying the FKG inequality, we have been unable to rigorously establish such a characterisation.

\subsubsection{Future Directions}
Several avenues remain open for further research, the most immediate being the verification of Conjecture~\ref{Conj} and the proof of a matching information-theoretic lower bound for arbitrary test designs in the sublinear regime to establish a sharp threshold for arbitrary test designs. Extending our bounds to \emph{noisy} threshold group testing models would be of significant practical interest, while developing efficient non-adaptive algorithms that approach the theoretical limits described here is another promising direction. 

A further challenge is to establish the existence of a similar phase transition in the linear regime, if mulit-edges are excluded.
Our results point towards the following compelling picture: in sparse regimes, threshold group testing can even outperform the classical setting, whereas in dense regimes, no asymptotic savings are possible. Clarifying the nature of this transition and constructing matching designs remain key directions for future work.

\subsection{Outline} \label{SSec:Outline}

The remainder of the paper is organised as follows. Section~\ref{Sec:Start} introduces recurring concepts underlying our analysis. We begin by discussing the issue of multiple label vectors consistent with the same test outcomes, which we refer to as \emph{competing solutions}, and formalise the notion of uniqueness. We then define \emph{disguised items}, whose labels can be interchanged without changing the test results, and describe the structure of the random pooling scheme used throughout.

The proof of our first main result, Theorem~\ref{Thm_inf}, is presented in two parts, (a) and (b), each supported by two auxiliary results. Part (a), developed in Section~\ref{Sec:Lower}, establishes the information-theoretic lower bound, showing that recovery fails with high probability when the number of tests falls below $c_{\mathrm{inf}}^{\mathrm{TGT}}k\log(n/k)$. 
Section~\ref{Sec:Lower} uses an adaptation of ideas from Coja-Oghlan et al. \cite[Section IV]{aco_2019}.

Part (b), in Section~\ref{Sec:Upper}, establishes our (conditional) information-theoretic upper bound, showing that an item-regular test design on more than $c_{\mathrm{inf}}^{\mathrm{TGT}}k\log(n/k)$ tests guarantees identifiability with high probability. The analysis is divided into a small-overlap and a large-overlap regime, using moment methods (Lemma~\ref{Lem:IntTc}) and expansion arguments (Proposition~\ref{Prop:LrgU}).
The distinction between $t=2$ and $t \geq 3$ arises in the optimisation step within the proof of Lemma~\ref{Lem:IntTc}: the case $t=2$ can be handled directly, whereas we require Conjecture~\ref{Conj} for $t \geq 3$.
Section~\ref{SSec:UpperLarge} specifically contains the adaptation of Section III from Coja-Oghlan et al. \cite{aco_2019}.

Finally, in Section~\ref{Sec:LinearK}, we prove Theorem~\ref{Thm:LinMain} on the linear regime. 
Our proof relies on a genie-based estimator argument similar to that used in~\cite{hintze2024noisy}.

\section{Getting Started} \label{Sec:Start}

The main arguments in the paper revolve around the existence of competing solutions that correspond to the same set of test results. Since our main results are expressed as high-probability statements, we begin in Section \ref{SSec:Prelim} by introducing the necessary asymptotic notation.
We then introduce the notion of item-regular designs in Section \ref{SSec:Pools}. These designs introduce slight dependencies between tests, which we show in Section \ref{SSec:Decoupling} can be decoupled through the use of auxiliary variables.
Next, in Section \ref{SSec:Uniq}, we develop the foundation of our proof strategy, focusing on how any potential algorithm handles competing solutions. To analyse when such competing solutions arise, we first study in Section \ref{SSec:ScalingDelta} the appropriate scaling of the item degree and its impact on the local structure of the test design.
Building on this scaling regime, Section \ref{SSec:Disg} introduces the notion of disguised items. In particular, we show how disguised items give rise to competing solutions, including the simplest case in which two distinct labels may be interchanged without affecting the test outcomes. This allows us to establish the conditions under which disguised items occur, which are formally characterised in Section \ref{SSec:DisguisedConditions}.

\subsection{Preliminaries} \label{SSec:Prelim}
Throughout this work, asymptotic notation is interpreted in the limit as $n \rightarrow \infty$. Specifically, $o(1)$ denotes any term tending to zero, while $\omega(1)$ denotes any term diverging to $\infty$. We write $f(n) = O(g(n))$ to denote that there exist a constant $b > 0$ and an integer $n_0 \in \mathbb{N}$ such that $f(n) \leq b \cdot g(n)$ for all $n \geq n_0$. Similarly, $f(n) = \Omega(g(n))$ indicates that there exist $b > 0$ and $n_0 \in \mathbb{N}$ such that $f(n) \geq b \cdot g(n)$ for all $n \geq n_0$. Finally, $f(n) = \Theta(g(n))$ indicates that $f(n) = O(g(n))$ and $f(n) = \Omega(g(n))$ simultaneously hold. Unless otherwise stated, all functions are assumed to be non-negative. We also use the falling factorial notation $(n)_k = n (n-1) \cdots (n-k+1)$. The Kullback-Leibler divergence of $\delta, p \in (0,1)$ is denoted by
$$D_{\mathrm{KL}}(\delta \,\|\, p) := \delta \log \bc{\frac{\delta}{p}} + (1 - \delta) \log \bc{\frac{1 - \delta}{1 - p}}.$$

We first remark that $r_1, r_2$ from \eqref{Eq:r1}, \eqref{Eq:r2} are well-defined for all $t \in \mathbb{N}$:

\begin{lemma}\label{lem_a0}
    For every $t \in \mathbb{N}$ and $\theta \in (0,1)$, the function $r \mapsto c_1(r)$ has a unique minimum $r_1$ on $(0,\infty)$. Moreover, $r_1 \in \brk{t-2/3, t-1/4}$, $c_1$ is strictly decreasing on $(0,r_1)$ and strictly increasing on $(r_1,\infty)$. 
\end{lemma}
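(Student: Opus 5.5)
Minimising $c_1(r)=1/H(F(r))$ is the same as maximising $H(F(r))$, where $F(r):=\pr(\Po(r)\le t-1)$. I will reduce everything to properties of $F$. Since $\frac{d}{dr}\pr(\Po(r)\le t-1)=-\pr(\Po(r)=t-1)$, we have $F'(r)=-\eul^{-r}r^{t-1}/(t-1)!<0$ on $(0,\infty)$. Moreover $F(0^+)=1$ and $F(\infty)=0$, so $F$ is a strictly decreasing bijection $(0,\infty)\to(0,1)$. The binary entropy $H$ is strictly increasing on $(0,1/2)$, strictly decreasing on $(1/2,1)$, and uniquely maximised at $1/2$. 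Hence $H\circ F$ is strictly increasing on $(0,r_1)$ and strictly decreasing on $(r_1,\infty)$, where $r_1$ is the unique solution of $F(r_1)=1/2$, i.e.\ $r_1$ is the median-defining parameter with $\pr(\Po(r_1)\le t-1)=1/2$. Because $H(F(r))>0$ on $(0,\infty)$, the monotonicity transfers to $c_1=1/(H\circ F)$ with directions reversed. This gives uniqueness of the minimiser and the strict monotonicity on both sides. The statement does not involve $\theta$ at all.

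It remains to locate $r_1$, i.e.\ to show $F(t-2/3)\ge 1/2\ge F(t-1/4)$. By monotonicity of $F$ this is exactly what is needed. Equivalently, the median of $\Po(t-2/3)$ is at most $t-1$ and the median of $\Po(t-1/4)$ is at least $t$ (in the sense $\pr(\Po(\cdot)\le t-1)\le 1/2$). I would use the classical Poisson median bounds, due to Choi (1994) and refined by Adell--Jodrá: for integer $n\ge0$, the unique $\lambda_n$ with $\pr(\Po(\lambda_n)\le n)=1/2$ satisfies
\begin{align*}
n+\tfrac{2}{3}-\tfrac{8}{405(n+1)}\;\le\;\lambda_n\;<\;n+\log 2 .
\end{align*}
Here $r_1=\lambda_{t-1}$. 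The upper bound gives $r_1<t-1+\log2<t-1/4$, since $\log 2<3/4$. The lower bound gives $r_1\ge t-1/3-\tfrac{8}{405t}\ge t-2/3$, since $\tfrac{8}{405t}\le \tfrac13$. For $t=1$ the claim can also be read off directly: $F(r)=\eul^{-r}$, so $r_1=\log2\in[1/3,3/4]$.

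The main obstacle is this localisation step, if the citation is not to be taken as given. A self-contained route uses the Gamma representation $F(r)=\pr(\Gamma(t,1)>r)$, which reduces the claim to bounds on the median of a $\Gamma(t,1)$ variable, namely $t-2/3\le\mathrm{med}\le t-1/4$. The lower bound $\mathrm{med}>t-2/3$ is classical (Chen--Rubin). The upper bound follows from $\mathrm{med}<t-1+\log 2$ via the comparison $\Gamma(t)\le\Gamma(t-1)+\mathrm{Exp}(1)$ and a monotone-in-$t$ argument for the mean-minus-median. I would cite these results rather than reprove them, and verify small $t$ numerically against Table~\ref{tab:values} ($r_1(2)=1.6783\in[4/3,7/4]$, and so on) as a sanity check.
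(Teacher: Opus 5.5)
Your proof is correct and is essentially the paper's. The paper gets the monotonicity from the sign of $c_1'(r)$, namely the sign of $\log\pr(\Po(r)>t-1)-\log\pr(\Po(r)\le t-1)$, which is the same as your composition of the strictly decreasing $F$ with the unimodal $H$, and it locates $r_1$ via the same Choi Poisson-median bounds.

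Your $\lambda_n$-form of Choi's result is in fact the cleaner way to get the lower bound; the paper's rearrangement of $t-1\le r_1+1/3$ only yields $r_1\ge t-4/3$ as written. One small correction: $\lambda_0=\log 2$ exactly, so the upper bound should read $\lambda_n\le n+\log 2$. This is harmless, since you treat $t=1$ separately.
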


\begin{proof}
Observe that
\begin{align}
    c_1'(r) = \frac{\pr\bc{\Po(r)= t-1}}{H(\pr\bc{\Po(r)\leq t-1})^2}\cdot \bc{\log\pr\bc{\Po(r)>t-1} - \log \pr\bc{\Po(r)\leq t-1}}.
\end{align}
In particular, the sign of $c_1'(r)$ only depends on the sign of $\log\pr\bc{\Po(r)>t-1} - \log \pr\bc{\Po(r)\leq t-1}$. Clearly the latter has a unique zero in $r_1 \in (0,\infty)$, is negative on $(0,r_1)$ and positive on $(r_1,\infty)$. Moreover, $r_1$ satisfies $\pr\bc{\Po(r_1)\leq t-1} = \frac 1 2$ and it is known that the median $m$ of a $\Po(\lambda)$ random variable satisfies $\lambda-\log 2 \leq m \leq \lambda +\frac 1 3$ see \cite[Theorem 2]{choi1994medians}. Thus $r_1$ satisfies $r_1-\log 2 \leq t-1 \leq r_1+\frac 1 3$. Rearranging for $r_1$, we obtain $r_1 \in \brk{t-1+\frac 1 3, t-1 + \log 2}$. 
\end{proof}

\begin{lemma}\label{lem_a}
    For every $t \in \mathbb{N}$ and $\theta \in (0,1)$, the function $r \mapsto c_2(r,\theta)$ has a unique minimum $r_2$ on $(0,\infty)$. Moreover, $r_2 \in (t-1, t)$, $r \mapsto c_2(r,\theta)$  is strictly decreasing on $(0,r_2)$ and strictly increasing on $(r_2,\infty)$. 
\end{lemma}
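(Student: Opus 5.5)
The plan is to trade minimising $c_2$ for maximising a simpler function. Since $\theta/(1-\theta)>0$ is a constant, minimising $r\mapsto c_2(r,\theta)$ on $(0,\infty)$ is the same as maximising
\begin{align*}
g(r):=-r\log\bc{1-p(r)},\qquad p(r):=\pr\bc{\Po(r)=t-1}=\e^{-r}\frac{r^{t-1}}{(t-1)!}\in(0,1).
\end{align*}
Moreover $c_2$ is strictly decreasing or increasing exactly where $g$ is strictly increasing or decreasing. In particular $\theta$ plays no role, so $r_2$ depends only on $t$. Using $p'(r)=p(r)\bigl((t-1)/r-1\bigr)$, we get
\begin{align*}
g'(r)=-\log(1-p)+\frac{p\,(t-1-r)}{1-p}=\frac{p}{1-p}\,\varphi(r),\qquad \varphi(r):=\psi(p(r))-(r-t+1),
\end{align*}
where $\psi(p):=-(1-p)\log(1-p)/p$. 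Hence the sign of $g'$ equals the sign of $\varphi$, and it suffices to show that $\varphi$ has exactly one zero $r_2$, with $\varphi>0$ before it and $\varphi<0$ after it.

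\emph{Localisation.} For $r\le t-1$ both summands of $g'(r)$ are non-negative and the first is strictly positive, so $g'>0$ on $(0,t-1]$. For $r\ge t$ we have $r-t+1\ge 1$. From the series $-\log(1-p)=\sum_{k\ge1}p^k/k<\sum_{k\ge1}p^k=p/(1-p)$ we get $\psi(p)<1$, and therefore $\varphi(r)<0$ on $[t,\infty)$. By continuity $\varphi$ has at least one zero in $(t-1,t)$, and every zero lies there. This already gives the claimed localisation $r_2\in(t-1,t)$.

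\emph{Uniqueness.} This is the step I expect to be the main obstacle. I would show that $\varphi$ is strictly decreasing at each of its zeros, so that it crosses $0$ only once, downward. A short computation gives $\psi'(p)=(p+\log(1-p))/p^2$, so
\begin{align*}
\frac{d}{dr}\psi(p(r))=\psi'(p)\,p'(r)=\Bigl(-\frac{\log(1-p)}{p}-1\Bigr)\frac{r-t+1}{r}\ \ge 0 \quad\text{on }(t-1,t).
\end{align*}
At a zero $r$ we have $r-t+1=\psi(p)$, so there
\begin{align*}
\varphi'(r)=\frac1r\Bigl(-\frac{\log(1-p)}{p}-1\Bigr)\psi(p)-1 .
\end{align*}
The remaining task is to check that the first term is $<1$.
\begin{itemize}
\item For $t\ge2$: on $(t-1,t)$ we have $r>1$, and $p(r)\le \max_s\pr\bc{\Po(s)=t-1}\le \e^{-1}$. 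Then $\psi\le1$ and $-\log(1-p)/p-1\le -\e\log(1-\e^{-1})-1<1/4$ give the bound.
\item For $t=1$: $p=\e^{-r}$ and $g(r)=-r\log(1-\e^{-r})$ can be treated explicitly. By the symmetry $r\leftrightarrow -\log(1-\e^{-r})$ and concavity in $\log$-coordinates, the maximiser is the classical value $r_2=\log2\in(0,1)$. Alternatively, the same computation applies, using $\psi(p)\cdot(-\log(1-p)/p-1)=-(1-p)\log(1-p)\bigl(-\log(1-p)-p\bigr)/p^2$. At the zero this equals $r\,(-\log(1-p)/p-1)$, so the condition $\varphi'<0$ reduces to $-\log(1-p)<2p$, and I would verify this directly at $p=1/2$, where the zero sits.
\end{itemize}

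\emph{Conclusion.} With a unique sign change of $g'$ from $+$ to $-$ at $r_2\in(t-1,t)$, the function $g$ is strictly increasing on $(0,r_2)$ and strictly decreasing on $(r_2,\infty)$. Hence $c_2(\cdot,\theta)$ is strictly decreasing, then strictly increasing, with unique minimiser $r_2$.
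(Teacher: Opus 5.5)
Your proof is correct for $t\ge 2$; for $t=1$ one of your two routes is circular, so use the other (second paragraph). The skeleton is the paper's: the same reduction (your $g$ is the paper's $-h$), the same derivative, and the same equation $r-(t-1)=\psi(p(r))$ (the paper's $h_1=h_2$). The localisation rests on the same inequality $\psi<1$, which is the paper's $\log(1-p)+p/(1-p)>0$.

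The genuine difference is the uniqueness step, and there your version is the one that works. The paper asserts that $h_2=\psi\circ p$ is strictly decreasing on $(t-1,\infty)$. Its formula for $h_2'$, however, drops the sign of $p'/p=-(1-(t-1)/r)$. The correct derivative is
\[
h_2'(r)=\Bigl(1-\frac{t-1}{r}\Bigr)\Bigl(-\frac{\log(1-p)}{p}-1\Bigr)>0 ,
\]
as you computed; for instance, for $t=2$ one has $h_2(1)\approx 0.79<h_2(2)\approx 0.85$. So the paper's "increasing meets decreasing" crossing argument is not available, and what is needed is precisely your slope comparison. For $t\ge 2$ your bound holds on all of $(t-1,\infty)$: since $0<(r-t+1)/r<1$ and $p\le \mathrm{e}^{-1}$, you get $h_2'<-\mathrm{e}\log(1-\mathrm{e}^{-1})-1<1/4$. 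Hence $\varphi'<0$ throughout, and you need not evaluate at zeros at all.

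The place to tighten is $t=1$, where the paper's argument also breaks down (e.g.\ $h_2(0^+)=0$ rather than $h_2(t-1)>0$). Your second route is circular. Checking $-\log(1-p)<2p$ only at $p=1/2$ presupposes that $p=1/2$ is the only zero. That inequality fails for $p$ above roughly $0.8$, i.e.\ for small $r$, so the "decreasing at every zero" argument cannot run until zeros there are excluded.

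Your first (symmetry) route is the right one and is quick to make rigorous. With $p=\mathrm{e}^{-r}$ one has $\varphi(r)=-F(p)/p$, where $F(p)=(1-p)\log(1-p)-p\log p$. This $F$ satisfies $F(1-p)=-F(p)$ and $F(0^+)=F(1/2)=0$. Its derivative $F'(p)=-\log(p(1-p))-2$ decreases on $(0,1/2)$ from $+\infty$ to $2\log 2-2<0$. Hence $F>0$ on $(0,1/2)$ and $F<0$ on $(1/2,1)$. So $\varphi$ changes sign exactly once, from $+$ to $-$, at $r_2=\log 2\in(0,1)$.
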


\begin{proof}
Let $h: [0,\infty) \to \RR$ be given by
$$h(r) = r \log\!\big(1 - \exp(-r) r^{t-1}/(t-1)!\big).$$
Then the claim follows if we show that $h$ has a unique minimum $r_2$ on $(0,\infty)$, is strictly decreasing on $(0,r_2)$ and strictly increasing on $(r_2,\infty)$. Observe that 
\begin{align} \label{eq:derivative_c2}
    h'(r) = \log\bc{1- \exp(-r) \frac{r^{t-1}}{(t-1)!}} + \frac{\exp(-r) \frac{r^{t-1}}{(t-1)!}}{1-\exp(-r) \frac{r^{t-1}}{(t-1)!}}\bc{r-(t-1)}.
\end{align}
From this, it is straightforward to see that $h'$ is negative on the interval $(0,t-1]$. Now assume that $r \in [t-1,\infty)$. The equation $h'(r) = 0$ is equivalent to
\begin{align}
   h_1(r):= r-(t-1) = -  \frac{1-\exp(-r) \frac{r^{t-1}}{(t-1)!}}{\exp(-r) \frac{r^{t-1}}{(t-1)!}} \log\bc{1- \exp(-r) \frac{r^{t-1}}{(t-1)!}} =: h_2(r).
\end{align}
Clearly, $h_1$ is strictly increasing on $[t-1, \infty)$ with $h_1(t-1)=0$ and $h_1(r) \uparrow \infty$, $r \uparrow \infty$. On the other hand, $h_2(t-1) >0$, and we will show $h_2$ is strictly decreasing on $[t-1,\infty)$. This will give the existence of a unique $z$ with $h_1(z)=h_2(z)$, or equivalently, $h'(z) =0$. For this, we compute
\begin{align}
    h_2'(r) = 
   \bc{1 - \frac{t-1}{r}} \bc{\frac{1}{\exp(-r) \frac{r^{t-1}}{(t-1)!}}\log\bc{1- \exp(-r) \frac{r^{t-1}}{(t-1)!}} +1 }
\end{align}
For $r>t-1$, the first factor is positive, while using $\log(1+r) \leq r$ for $r> -1$, we see that the second factor is negative. Thus $h_2'(r) <0$ on $(t-1,\infty)$.

Thus $h$ has a unique local minimum, say $r_2$. The previous argument also shows that $h'$ is negative on $(0,r_2)$ and positive on $(r_2,\infty)$.

Finally, observe that $h'(t) =  \log\bc{1- \pr\bc{\Po(t)=t-1}} + \frac{\pr\bc{\Po(t)=t-1} }{1-\pr\bc{\Po(t)=t-1}}$. However the function $p \mapsto \log(1-p) + p/(1-p)$ is non-negative on $(0,1)$. Therefore $h'(t)>0$ and by the previous argument, $r_2 < t$.
\qedhere

\end{proof}

\begin{lemma}\label{lem_b}
    For every $t \in \NN_{\geq 2}$ and $\theta \in (0,1)$, the unique minima $r_1$ and $r_2$ of the functions $r \mapsto c_1(r)$ and $r \mapsto c_2(r,\theta)$, respectively, are such that $r_1<r_2$. 
\end{lemma}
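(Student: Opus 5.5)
The plan is to compare $r_1$ with the sign structure of $h'$ from the proof of Lemma~\ref{lem_a}, where $h(r)=r\log(1-\pr(\Po(r)=t-1))$. That proof shows $h'<0$ on $(0,r_2)$ and $h'>0$ on $(r_2,\infty)$. Hence $r_1<r_2$ follows once we show
\[
h'(r_1)<0 .
\]

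\textbf{Step 1 (the trivial case).} If $r_1\le t-1$, then $h'(r_1)<0$ immediately, since the proof of Lemma~\ref{lem_a} gives $h'<0$ on $(0,t-1]$.

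\textbf{Step 2 (reduction to an inequality in $p$).} Suppose $r_1>t-1$. The proof of Lemma~\ref{lem_a0} shows that $r_1$ is the $r$ with $\pr(\Po(r)\le t-1)=1/2$. By the median bound cited there, $r_1\le t-1+\log 2$, so $0<r_1-(t-1)\le \log 2$. Write $p:=\pr(\Po(r_1)=t-1)\in(0,1)$. By \eqref{eq:derivative_c2},
\[
h'(r_1)=\log(1-p)+\frac{p}{1-p}\bigl(r_1-(t-1)\bigr)\le \log(1-p)+\frac{p\log 2}{1-p}.
\]
Using $-\log(1-p)\ge p+p^2/2$, it suffices to prove $\log 2<(1-p)(1+p/2)=1-\tfrac p2-\tfrac{p^2}{2}$. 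The right-hand side is decreasing in $p$, so an upper bound on $p$ is all we need. Numerically, $p\le 0.45$ is enough.

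\textbf{Step 3 (bounding $p$).} The map $r\mapsto \e^{-r}r^{t-1}/(t-1)!$ is maximised at $r=t-1$, so
\[
p\le \e^{-(t-1)}(t-1)^{t-1}/(t-1)! .
\]
The sequence $s\mapsto \e^{-s}s^{s}/s!$ is decreasing in $s\ge1$: the ratio of consecutive terms is $\e^{-1}(1+1/s)^s<1$. Since $t\ge2$, this gives $p\le \e^{-1}\approx 0.368$. At $p=\e^{-1}$ we get $1-p/2-p^2/2\approx 0.748>\log 2\approx 0.693$. Hence $h'(r_1)<0$ and $r_1<r_2$.

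\textbf{Main obstacle.} The delicate point is Step 2: the slack between $0.748$ and $0.693$ is modest. I will therefore make sure the inequality $-\log(1-p)\ge p+p^2/2$ is used in the correct direction. I will also use the precise median bound $r_1-(t-1)\le\log 2$ (from \cite{choi1994medians}) rather than the coarser interval stated in Lemma~\ref{lem_a0}.

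If more room is needed, the fallback is the sharper bound $r_1\le t-1/4$ stated in Lemma~\ref{lem_a0}. This would replace $\log 2$ by $3/4$ and shrink the slack, so it is only usable together with a finer estimate of $p$ at $r_1$. Alternatively, for large $t$ one can use the Stirling bound $p\lesssim(2\pi(t-1))^{-1/2}$, which makes the inequality trivially true, and check $t=2$ by hand.
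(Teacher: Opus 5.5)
Your proof is correct. Its skeleton is the paper's: show $h'(r_1)<0$ and invoke the sign pattern of $h'$ from Lemma~\ref{lem_a}. Both arguments use Choi's bound $r_1-(t-1)\le\log 2$ and the fact that $\e^{-r}r^{t-1}/(t-1)!$ peaks at $r=t-1$.

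The difference lies in how tightly the logarithm is estimated. The paper uses only $\log(1-p)<-p$, which reduces the claim to $t-r_1-p>0$, i.e.\ roughly to $p<1-\log 2\approx 0.307$. That is too tight for the crude bound $p\le \e^{-1}$. The paper therefore uses Stirling ($p\le(4\pi)^{-1/2}\approx 0.282$) for $t\ge 3$. For $t=2$ it falls back on a numerical evaluation at the tabulated $r_1\approx 1.6783$, where the margin is below $0.01$.

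Your second-order bound $-\log(1-p)\ge p+p^2/2$ relaxes the requirement to $p\lesssim 0.429$. Then the elementary estimate $p\le \e^{-1}$, obtained from the monotonicity of $\e^{-s}s^s/s!$, handles every $t\ge 2$ uniformly. This includes $t=2$ and needs no numerics, so your route buys a fully analytic proof.

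Three minor remarks. First, the aside that $p\le 0.45$ suffices is false for your reduced inequality: at $p=0.45$ one gets $1-p/2-p^2/2\approx 0.674<\log 2$, and the true threshold is $p\approx 0.429$. This is harmless, since you verify the inequality directly at $p=\e^{-1}$. Second, Step~1 is vacuous, because $r_1>t-1$ always. Third, in your closing paragraph $r_1\le t-1/4$ is the coarser bound, not the sharper one.
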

\begin{proof}
Firstly, Lemma \ref{lem_a0} states the value of $r_1$ will be unique. Recall the definition of $h(r)$ from Lemma \ref{lem_a} and its derivative \eqref{eq:derivative_c2}
\begin{align*}
    h'(r) = \log\bc{1- \exp(-r) \frac{r^{t-1}}{(t-1)!}} + \frac{\exp(-r) \frac{r^{t-1}}{(t-1)!}}{1-\exp(-r) \frac{r^{t-1}}{(t-1)!}}\bc{r-(t-1)}. 
\end{align*}
We would like to show at $r_1$ the value of $h(r)$ is decreasing. Using $\log(1-x)< -x$ for $x \in (0,1)$
\begin{align*}
    h'(r) &\leq - \exp(-r) \frac{r^{t-1}}{(t-1)!} + \frac{\exp(-r) \frac{r^{t-1}}{(t-1)!}}{1-\exp(-r) \frac{r^{t-1}}{(t-1)!}}\bc{r-(t-1)} \\ 
    &\leq - \frac{\exp(-r) \frac{r^{t-1}}{(t-1)!}}{1-\exp(-r) \frac{r^{t-1}}{(t-1)!}}\bc{t-r-\exp(-r) \frac{r^{t-1}}{(t-1)!}}.
\end{align*}
Therefore, what remains to show is
$$ t-r_1-\exp(-r_1) \frac{r_1^{t-1}}{(t-1)!}>0. $$

For the case $t>2$, recall $t-r_1 \geq 1-\log 2 = 0.306$, as in the proof of Lemma \ref{lem_a0}. Therefore, it suffices to bound the final term. Consider
$$f(x)=\exp(-x)x^{t-1} \qquad \text{whose derivative is} \qquad f'(x)=\exp(-x)x^{t-2}\bc{(t-1)-x}.$$
The function $f$ attains its maximum at $x=t-1$. Hence
$$\exp(-r_1)\frac{r_1^{t-1}}{(t-1)!} \leq \exp(-(t-1))\frac{(t-1)^{t-1}}{(t-1)!}.$$
Using Stirling's lower bound, and since $t\geq 3$, we find
$$\exp(-r_1) \frac{r_1^{t-1}}{(t-1)!} \leq \exp(-(t-1)) \frac{(t-1)^{t-1}}{(t-1)!} \leq \frac{1}{\sqrt{2 \pi (t-1)}} \leq \frac{1}{\sqrt{4 \pi}} =0.2821. $$
Combining the two bounds then yields the result.

For $t=2$ the bounds are not strong enough, but if we directly evaluate the function using $r_1=1.6783$ found in Table \ref{tab:values} then the relation holds.
\end{proof}

\begin{lemma}\label{claim_unique_r}
    For each $\theta \in (0,1)$, the infimum in \eqref{def_minf} is achieved in exactly one $r^* \in (0,\infty)$.
\end{lemma}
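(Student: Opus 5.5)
We argue via the unimodality established in Lemma~\ref{lem_a0} and Lemma~\ref{lem_a}. Set $F(r) := \max\cbc{c_1(r), c_2(r,\theta)}$; both $c_1$ and $c_2(\cdot,\theta)$ are continuous and positive on $(0,\infty)$.

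\emph{Existence of a minimiser.} We first check that $F(r)\to\infty$ as $r\downarrow 0$ and as $r\uparrow\infty$. As $r\downarrow0$ we have $\pr(\Po(r)\le t-1)\to 1$, so $H\to 0$ and $c_1(r)\to\infty$. As $r\to\infty$ we have $\pr(\Po(r)\le t-1)\to0$, so again $c_1(r)\to\infty$. Hence $F$ is continuous and coercive on $(0,\infty)$, and the infimum is attained on some compact interval. Moreover, Lemmas~\ref{lem_a0} and \ref{lem_a} show that both functions are strictly decreasing on $(0, r_1\wedge r_2)$, so $F$ is strictly decreasing there. Likewise both are strictly increasing on $(r_1\vee r_2,\infty)$, so $F$ is strictly increasing there. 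Every minimiser therefore lies in $I:=[r_1\wedge r_2, r_1\vee r_2]$, and by Lemma~\ref{lem_b} this is $[r_1,r_2]$ for $t\ge2$. For $t=1$ we have $r_1=r_2=\log 2$, so $I$ is a single point and uniqueness is immediate.

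\emph{Uniqueness for $t\ge 2$.} On $[r_1,r_2]$, Lemma~\ref{lem_a0} says $c_1$ is strictly increasing, since $r_1$ is its minimum and it increases afterwards. Lemma~\ref{lem_a} says $c_2$ is strictly decreasing, since we are to the left of $r_2$. Thus $g(r):=c_1(r)-c_2(r,\theta)$ is continuous and strictly increasing on $I$, so it has at most one zero. We split into cases according to the signs of $g$ at the endpoints.
\begin{itemize}
\item If $g(r_1)\ge 0$, then $F=c_1$ on $I$ and is strictly increasing there, so the unique minimiser is $r_1$.
\item If $g(r_2)\le0$, then $F=c_2$ on $I$ and is strictly decreasing there, so the unique minimiser is $r_2$.
\item Otherwise there is a unique crossing point $r_0\in(r_1,r_2)$. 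On $[r_1,r_0]$ we have $F=c_2$, which is strictly decreasing; on $[r_0,r_2]$ we have $F=c_1$, which is strictly increasing. Hence $r_0$ is the unique minimiser.
\end{itemize}
Combining this with the strict monotonicity of $F$ outside $I$ gives exactly one minimiser $r^*\in(0,\infty)$, and also confirms $r^*\in[r_1\wedge r_2,r_1\vee r_2]$, as claimed earlier in the text.

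\emph{Main obstacle.} The only delicate point is the boundary behaviour of $c_2$, which is not needed anyway: $c_2(r,\theta)\to\infty$ at both ends, because $r\log(1-\pr(\Po(r)=t-1))\to0$. In any case the coercivity of $c_1$ alone suffices, since $F\ge c_1$. The monotonicity bookkeeping is routine once the ordering $r_1<r_2$ from Lemma~\ref{lem_b} is in hand.
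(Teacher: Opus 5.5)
Your proof is correct and follows essentially the same route as the paper: strict monotonicity of $\max\{c_1,c_2\}$ outside $[r_1,r_2]$ via Lemmas~\ref{lem_a0}, \ref{lem_a} and \ref{lem_b}, then strict monotonicity of $c_1-c_2$ on $[r_1,r_2]$, which gives at most one crossing and the same case split. Your coercivity check and your separate treatment of $t=1$ are harmless extras; the paper instead gets existence directly from continuity on the compact interval $[r_1,r_2]$.
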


\begin{proof}
 Fix $\theta \in (0,1)$ and  let $C(r) = \max\{c_1(r), c_2(r,\theta)\}$. By Lemmas \ref{lem_a0} and \ref{lem_a}, $C$ is strictly decreasing on $(0, r_1]$ and strictly increasing on $[r_2,\infty)$. By continuity, $C$ attains a global minimum in the interval $[r_1, r_2]$. 

 On $(r_1,r_2]$, the function $c_1$ is strictly increasing, whereas $c_2$ is strictly decreasing. In particular, $c_1-c_2$ is continuous and strictly increasing on $[r_1,r_2]$. Therefore, there exists at most one $r\in[r_1,r_2]$ such that $c_1(r)=c_2(r,\theta).$
 If there exists no such $r$, one of the two functions, say $c_i$, dominates the other on the interval, and we have $r^*=r_i$. If there is exactly one such $r$, we have $C(s) \geq c_1(s) > c_1(r)$ for all $s>r$ and $C(s) \geq c_2(s) > c_2(r)$ for all $s<r$. Thus, $r=r^*$ is the unique global minimiser of $C$.
\end{proof}

\begin{lemma} \label{claim_opt_eq}
   Let $\theta \in (0,1)$, $d >0$ and set $c_1^*(d) = \inf\{c>0: c \geq c_1(d/c)\}$ and $c_2^*(d,\theta) = \inf\{c>0: c \geq c_2(d/c,\theta)\}$. Then
   \begin{align}
      c_{\mathrm{inf}}^{\mathrm{TGT}} = \inf_{r>0}\max\cbc{c_{1}(r),  c_{2}(r,\theta) } =  \inf_{d>0}\max\cbc{c_{1}^*(d),  c_{2}^*(d,\theta) } =: c_{\mathrm{inf}}^{\mathrm{TGT}, *}.
   \end{align}
\end{lemma}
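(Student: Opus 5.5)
The plan is to prove the two inequalities $c_{\mathrm{inf}}^{\mathrm{TGT},*}\le c_{\mathrm{inf}}^{\mathrm{TGT}}$ and $c_{\mathrm{inf}}^{\mathrm{TGT},*}\ge c_{\mathrm{inf}}^{\mathrm{TGT}}$ separately, using the change of variables $r=d/c$. This identifies a pair $(c,d)$ with $c\ge c_i(d/c)$ with a pair $(c,r)$ with $c\ge c_i(r)$ and $d=cr$.

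\emph{Easy direction.} Fix $r>0$ and put $c:=\max\{c_1(r),c_2(r,\theta)\}$ and $d:=cr$. Then $d/c=r$ and $c\ge c_i(d/c)$ for $i=1,2$. Hence $c_i^*(d)\le c$ for both $i$, so $\max\{c_1^*(d),c_2^*(d,\theta)\}\le \max\{c_1(r),c_2(r,\theta)\}$. Taking the infimum over $r$ gives $c_{\mathrm{inf}}^{\mathrm{TGT},*}\le c_{\mathrm{inf}}^{\mathrm{TGT}}$.

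\emph{Structure of the sets defining $c_i^*(d)$.} I would write $c\ge c_i(d/c)$ as $d\ge \varphi_i(r):=r\,c_i(r)$ with $r=d/c$. I would check that the infimum defining $c_i^*(d)$ is positive and attained. As $c\downarrow 0$ we have $r\to\infty$, and $c_i(r)\to\infty$, so small $c$ are infeasible. By continuity of $c_i$ the feasible set is closed in $(0,\infty)$. Next I would show that each $\varphi_i$ is unimodal, so that $\{r:\varphi_i(r)\le d\}$ is an interval $[a_i(d),b_i(d)]$ whenever it is nonempty. Then $c_i^*(d)=d/b_i(d)$ and $c_i(b_i)=d/b_i$. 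For $i=2$, $\varphi_2(r)=\theta/\bigl((1-\theta)(-\log(1-\pr(\Po(r)=t-1)))\bigr)$ is monotone in $\pr(\Po(r)=t-1)$, which is unimodal in $r$ with peak at $t-1$. For $i=1$, I would compute the sign of the derivative of $r/H(\pr(\Po(r)\le t-1))$ in the style of Lemmas~\ref{lem_a0} and~\ref{lem_a}.

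\emph{Hard direction.} Fix $d$ with both feasible sets nonempty, and suppose without loss of generality that $b_1\le b_2$. Then $\max\{c_1^*,c_2^*\}=d/b_1=c_1(b_1)$. It suffices to show $c_2(b_1,\theta)\le d/b_1$, i.e.\ $b_1\in[a_2,b_2]$. If so, $r=b_1$ witnesses $c_{\mathrm{inf}}^{\mathrm{TGT}}\le\max\{c_1^*(d),c_2^*(d,\theta)\}$. The bad case is $b_1<a_2$, where the two feasible intervals are disjoint. Here I would not try to beat $d/b_1$ for this particular $d$. Instead I would use the fact that only the outer infimum over $d$ matters. I would show that such $d$ are never near-optimal. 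Using Lemmas~\ref{lem_a0}, \ref{lem_a}, \ref{lem_b} and \ref{claim_unique_r} (monotonicity of $c_1,c_2$ away from $[r_1,r_2]$ and $r^*\in[r_1,r_2]$), I would argue that for $d^*:=r^*C(r^*)$ the intervals overlap and contain $r^*$. I would also show that for the minimising $d$ the largest feasible $r$ of the binding constraint lies in the other interval, so that the bad case contributes values at least $C(r^*)$. This case analysis, in particular controlling the disjoint-interval situation, is the step I expect to be the main obstacle. The first thing I would try is to prove directly that $b_1(d)<a_2(d)$ forces $d/b_1(d)\ge C(r^*)$, using that $\varphi_2$ is decreasing on $(0,t-1)$ and Lemma~\ref{lem_b}.
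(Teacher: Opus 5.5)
Your easy direction ($c_{\mathrm{inf}}^{\mathrm{TGT},*}\le c_{\mathrm{inf}}^{\mathrm{TGT}}$, taking $d=r\max\{c_1(r),c_2(r,\theta)\}$) is correct. Rewriting admissibility as $d\ge\varphi_i(d/c)$ with $\varphi_i(r)=r\,c_i(r)$ is also the right reformulation. The gap is the reverse inequality. Your plan stops exactly at the disjoint-interval case, which you flag as the main obstacle but do not resolve.

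Write $C^*=\max\{c_1(r^*),c_2(r^*,\theta)\}$. The reduction ``without loss of generality $b_1\le b_2$'' is not a genuine symmetry. The sub-case you propose to attack, $b_1<a_2$, is actually easy:
\begin{itemize}
\item since $a_2\le t-1<r^*\le r_2$, you get $d/b_1>d/a_2=c_2(a_2,\theta)\ge c_2(r^*,\theta)$;
\item you also have $d/b_1=c_1(b_1)\ge\min_r c_1(r)$;
\item together these give $d/b_1\ge C^*$, because $c_1(r^*)>c_2(r^*,\theta)$ forces $r^*=r_1$.
\end{itemize}
The omitted sub-case $b_2<b_1$, $b_2<a_1$, $b_2>r^*$ is different. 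It needs the \emph{increasing} branch of $\varphi_2$ on $(t-1,\infty)$, which your plan never invokes. The unimodality of $\varphi_1$ is also left unproved, although it turns out not to be needed.

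The following argument closes the gap and makes the interval bookkeeping unnecessary.
\begin{itemize}
\item Every $c$ admissible for constraint $i$ satisfies $c\ge c_i(d/c)\ge\min_r c_i(r)$, so $c_i^*(d)\ge\min c_i$ for all $d$.
\item By the proof of Lemma~\ref{claim_unique_r}, this settles every case except the interior crossing $r_1<r^*<r_2$ with $c_1(r^*)=c_2(r^*,\theta)=C^*$. Otherwise $C^*$ equals $c_1(r_1)$ or $c_2(r_2,\theta)$.
\item In the crossing case, suppose $\max\{c_1^*(d),c_2^*(d,\theta)\}<C^*$, and take admissible $c,c'<C^*$ for constraints $1$ and $2$ respectively.
\item From $c_1(d/c)\le c<c_1(r^*)$ and monotonicity of $c_1$ on $[r_1,\infty)$, you get $d/c<r^*$. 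Hence $d<cr^*<C^*r^*=\varphi_2(r^*)$.
\item From $c_2(d/c',\theta)\le c'<c_2(r^*,\theta)$ and monotonicity of $c_2$ on $(0,r_2]$, you get $d/c'>r^*$.
\item Now $\varphi_2(d/c')\le d<\varphi_2(r^*)$, while $t-1<r_1<r^*<d/c'$ by Lemma~\ref{lem_a0}. This contradicts the strict increase of $\varphi_2$ on $(t-1,\infty)$. That increase follows from your own observation that $\varphi_2$ is a decreasing function of $\pr(\Po(r)=t-1)$.
\end{itemize}

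The paper treats the non-crossing cases by the same trivial bound $c_i^*(d)\ge\min c_i$. In the crossing case it derives $c_1(\bar d/c_1^*(\bar d))<C^*$ and $c_2(\bar d/c_2^*(\bar d),\theta)<C^*$ and declares a contradiction with the minimality of $r^*$. These are values at two generally different points, so the obstacle you isolated is real. The monotonicity of $r\,c_2(r,\theta)$ to the right of $t-1$, together with $r_1>t-1$, is exactly what overcomes it.
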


\begin{proof}
We make a case distinction as to whether $r^* \in \{r_1,r_2\}$ or not. First, assume that 
$c_1(r^*) > c_2(r^*)$.
Let $d^*=r^*c_1(r^*)$. Then $c_1(d^*/c_1(r^*))=c_1(r^*)$ and $c_2(d^*/c_1(r^*), \theta)=c_2(r_1, \theta) < c_1(r^*)$ by assumption. This implies that 
\begin{align*}
    c_1^*(d^*) \leq c_1(r^*) \qquad \text{and} \qquad c_2^*(d^*, \theta) \leq c_1(r^*),
\end{align*}
so that $c_{\mathrm{inf}}^{\mathrm{TGT}} \geq \max\cbc{c_{1}^*(d^*),  c_{2}^*(d^*,\theta) }$.
Monotonicity of $c_1$ on $[r_1,\infty)$ implies that actually, $c_1^*(d^*) = c_1(r^*)$, so that
$$c_{\mathrm{inf}}^{\mathrm{TGT}} = c_1(r_1) = \max\cbc{c_{1}(r^*),  c_{2}(r^*,\theta) } = \max\cbc{c_{1}^*(d^*),  c_{2}^*(d^*,\theta) }.$$
Now, suppose that there exists $d \not= d^*$ such that 
\begin{align*}
    \max\cbc{c_{1}^*(d),  c_{2}^*(d,\theta) } < \max\cbc{c_{1}^*(d^*),  c_{2}^*(d^*,\theta) } = c_1(r^*).
\end{align*}
By definition of $c_1^*$, we then have
\begin{align*}
    c_1(d/c_1^*(d)) \leq c_1^*(d) < c_1(r^*) 
\end{align*}
which gives the desired contradiction, as $r^*=r_1$ is the minimiser of $c_1$. In the second case where $c_1(r^*)<c_2(r^*,\theta)$, swapping indices in the previous argument gives the same conclusion. 

Finally, assume that $c_1(r^*)=c_2(r^*,\theta)$. Suppose that there exists $\bar d \not= d^*$ such that 
\begin{align*}
    \max\cbc{c_{1}^*(\bar d),  c_{2}^*(\bar d,\theta) } <  c_1(r^*).
\end{align*}
Then by continuity and the definition of $c_1^*(\bar d), c_2^*(\bar d, \theta)$,
\begin{align}
    c_1(\bar d / c_1^*(\bar d)) = c_1^*(\bar d) < c_1(r^*) \quad \text{and} \quad   c_2(\bar d / c_2^*(\bar d, \theta),\theta) = c_2^*(\bar d, \theta) < c_2(r^*,\theta).
\end{align}
Thus, $\max\{c_1(\bar d / c_1^*(\bar d)),  c_2(\bar d / c_2^*(\bar d, \theta),\theta) \} < c_1(r^*)$, contradicting the definition of $r^*$. 
\end{proof}

\subsection{Constant Column Test Design} \label{SSec:Pools}

We next define our pooling scheme $\vec{G}$, which can be represented as a random bipartite graph (cf. Section \ref{SSSec:Contrib}). Recall that we denote the constant column design on $n$ items and $m$ tests, where every item independently chooses $\Delta$ tests uniformly at random \textit{with replacement}, by $\vec{G}=\vec{G}(n,m,\Delta)$. Note that $\vec{G}$ may contain multi-edges.

For $x_i \in V$, we define $\partial x_i$ as the multiset of tests to which $x_i$ is assigned, so that $\abs{\partial x_i} = \Delta$.
Similarly, for each test $a_j$, let $\partial a_j$ denote the multiset of items included in that test, setting $\vGamma_j := \abs{\partial a_j}$. 
We represent the vector of test degrees by $\vGamma = \bc{\vGamma_j}_{j \in [m]}$, where $\vGamma_j\sim \Bin(n\Delta, 1/m)$ and  $\sum_{j=1}^m \vGamma_j = n \Delta$.

We introduce positive constants $c, d > 0$ defined by
\begin{align}\label{Eq:Param}
m = \lceil c k \log(n/k)\rceil, \quad \quad \quad \quad \quad \quad \Delta = \lceil d \log(n/k)\rceil.
\end{align}
While $m = c k \log(n/k)$ is the natural scaling of the information-theoretically optimal number of tests, in Lemma~\ref{Lemma_Delta_too_small_big}, we will justify that  $\Delta = \Theta(\log n)$ is the corresponding appropriate scaling of the item degrees for successful inference. 

Finally, let $\vGamma_{\min} = \min_{j \in [m]} \vGamma_j$ and $\vGamma_{\max} = \max_{j \in [m]} \vGamma_j$ capture the range of test degrees. The Chernoff bound for the Binomial distribution implies the following:

\begin{lemma}[{\cite[Lemma II.4]{aco_2019}}] \label{Lemma_GammaMinMax}
    Assume $m = ck \log(n/k)$ and $\Delta = d \log(n/k)$ for constant $c,d>0$. With probability at least $1-o(n^{-2})$, as $n \rightarrow \infty$, 
    \begin{align}\label{eq_minmax_G}
        \frac{\Delta n}{m}-\sqrt{\frac{\Delta n}{m}}\log n \leq \vGamma_{\min} \leq \vGamma_{\max} \leq \frac{\Delta n}{m}+\sqrt{\frac{\Delta n}{m}}\log n.
    \end{align} 
\end{lemma}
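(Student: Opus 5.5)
Each $\vGamma_j$ is exactly $\Bin(n\Delta,1/m)$, since each of the $n\Delta$ item-slots picks test $a_j$ independently with probability $1/m$. So the plan is a Chernoff bound for a single test followed by a union bound over the $m$ tests. Write $\mu := \Erw[\vGamma_j] = n\Delta/m$ and $s := \sqrt{\mu}\log n$. It suffices to show that for each fixed $j$,
\begin{align*}
\pr\bc{\abs{\vGamma_j - \mu} > s} = o(n^{-3}),
\end{align*}
because then the union bound over $m = O(k\log n) = O(n)$ tests gives total failure probability $o(n^{-2})$.

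\textbf{Size of $\mu$.} With $k=\lfloor n^\theta\rfloor$, $m=\lceil ck\log(n/k)\rceil$ and $\Delta=\lceil d\log(n/k)\rceil$, we get $\mu = \Theta(n/k) = \Theta(n^{1-\theta})$, which grows polynomially. Consequently $s/\mu = \log n/\sqrt{\mu} = o(1)$, so we are in the moderate-deviation range where the Gaussian-type Chernoff bound is tight.

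\textbf{Chernoff step.} Use the standard bound
\begin{align*}
\pr\bc{\abs{\vGamma_j-\mu}>\delta\mu}\le 2\exp\bc{-\delta^2\mu/3}, \qquad 0<\delta\le 1.
\end{align*}
Take $\delta = s/\mu = \log n/\sqrt{\mu}$, which is at most $1$ for large $n$. Then $\delta^2\mu = \log^2 n$, so the right-hand side is $2\exp(-\log^2 n/3) = n^{-\omega(1)}$, which is far smaller than the required $n^{-3}$.

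\textbf{Main obstacle.} The only thing to check is that $\delta\le 1$, so the quoted Chernoff form applies. This holds because $\mu$ grows polynomially while $\log n$ does not. The ceilings in $m$ and $\Delta$ only change $\mu$ by a factor $1+o(1)$, so they are harmless. If one wants to avoid the $\delta\le1$ condition entirely, the alternative bound $\exp\bc{-t^2/(2(\mu+t/3))}$ for deviation $t$ gives the same conclusion directly.
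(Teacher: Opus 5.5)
Your proposal is correct and matches the paper's approach: the paper only remarks that the lemma follows from the Chernoff bound for $\vGamma_j\sim\Bin(n\Delta,1/m)$, and your single-test Chernoff estimate with deviation $\sqrt{\mu}\log n$ (giving $\exp(-\Theta(\log^2 n))$), followed by a union bound over the $m=O(n)$ tests, is exactly that argument written out. Your observation that $\mu=\Theta(n^{1-\theta})\to\infty$ ensures the relative deviation $\log n/\sqrt{\mu}$ is $o(1)$, which is the only point that needs checking.
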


\begin{remark} \label{rmk:Gamma_omega}
    Since they only rely on the Chernoff bound for the Binomial distribution, the bounds in \eqref{eq_minmax_G} in fact hold true for more general choices of $m,\Delta$.
\end{remark}

\subsubsection{Repeated Inclusions} \label{SSec:repeated}


Our analysis is based upon designs with non-repeated inclusions, both for the sublinear and the linear case, even though we allow for repetitions in the sublinear case. However, in our settings, there will be few tests containing the same item twice, and drawing with replacement somewhat simplifies the proof, even though the same statements could be formulated in the setting of drawing without replacement. Whenever an argument relies on the fact that each item is contained in a certain number of tests, and multiplicities may create complications, we will explicitly argue that these multiplicities do not affect our arguments.

\subsection{Decoupling Test Dependencies} \label{SSec:Decoupling}

In the constant column design, the numbers of defective items across different tests are not independent.
To deal with the resulting correlations, we follow the conditioning approach of  \cite[Appendix B]{aco_2019}. 
More precisely, let $\vY_j$ denote the number of edges in test $a_j$ that are incident to defective items. Given the test degree sequence  $\vGamma = \bc{\vGamma_j}_{j \in [m]}$, we define $(\vX_j)_{j \in [m]}$ to be a sequence of \emph{independent} random variables, where $\vX_j\sim\Bin(\vGamma_j, k/n)$. In addition, introduce the event
\begin{align*}
	\cE&=\cbc{\sum_{j\in[m]}\vX_j=k\Delta}.
\end{align*}
Independence of the $\vX_j$'s and the local limit theorem for the binomial distribution imply that for any valid choice of $\vGamma$,
\begin{align}\label{eqEprob}
	\pr\bc{\cE \vert \vGamma} = \pr\bc{\Bin\bc{n\Delta,k/n}=k\Delta} =\Omega(1/\sqrt{\Delta k}).
\end{align}
The following lemma shows that, conditionally on $\vGamma$ and $\cE$, we can work with the $\vX_j$'s instead of the $\vY_j$'s:

\begin{lemma}[{\cite[Lemma B.2]{aco_2019}}]\label{Lemma_Elemma}
    Let $(\vY_1, \dots, \vY_m)$ and $(\vX_1, \dots, \vX_m)$ be as defined above. Then, for any integer sequence $(y_j)_{j\in[m]}$ with $0\leq y_j \leq \vGamma_j$ and $\sum_{j\in[m]}y_j=k\Delta$, 
$$
\Pr\Big(\forall j \in [m]: \vY_j = y_j \;\big|\; \vGamma\Big)
=
\Pr\Big(\forall j \in [m]: \vX_j = y_j \;\big|\; \vGamma, \mathcal{E}\Big).
$$
\end{lemma}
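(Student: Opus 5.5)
The plan is to compute both sides explicitly and compare them. Throughout we condition on $\vGamma$. In the constant-column design, the $n\Delta$ edge slots are matched to the $n\Delta$ test slots, and $\vGamma_j$ of them go to test $a_j$. Given $\vGamma$, the assignment of items to these slots is exchangeable. Precisely, conditionally on $\vGamma$, the multiset structure is obtained by distributing the $n\Delta$ item half-edges (each item contributing $\Delta$) uniformly among the test slots, consistent with the degrees. Since $\SIGMA$ is uniform with exactly $k$ defectives, exactly $k\Delta$ of the $n\Delta$ half-edges are defective. The vector $(\vY_j)_j$ therefore records how many of the $k\Delta$ defective half-edges land in each test.

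\textbf{Step 1 (left-hand side).} Given $\vGamma$, a uniformly random placement of the $n\Delta$ half-edges into the slots $\vGamma_1,\dots,\vGamma_m$ makes the set of slots occupied by defective half-edges a uniformly random $k\Delta$-subset of the $n\Delta$ slots. This is the step to justify carefully. Items choose tests i.i.d. uniformly with replacement, so conditioning on the degree vector gives a uniform random allocation of labelled half-edges compatible with $\vGamma$. Exchangeability of half-edges together with the uniform choice of $\SIGMA$ then gives uniformity of the defective slot set. Consequently we obtain the multivariate hypergeometric law
$$\Pr\bc{\forall j: \vY_j=y_j \mid \vGamma}=\binom{n\Delta}{k\Delta}^{-1}\prod_{j\in[m]}\binom{\vGamma_j}{y_j}.$$

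\textbf{Step 2 (right-hand side).} By independence of the $\vX_j\sim\Bin(\vGamma_j,k/n)$,
$$\Pr\bc{\forall j:\vX_j=y_j\mid\vGamma}=\prod_j\binom{\vGamma_j}{y_j}(k/n)^{y_j}(1-k/n)^{\vGamma_j-y_j}=(k/n)^{k\Delta}(1-k/n)^{n\Delta-k\Delta}\prod_j\binom{\vGamma_j}{y_j},$$
where we used $\sum_j y_j=k\Delta$ and $\sum_j\vGamma_j=n\Delta$. The event $\{\forall j:\vX_j=y_j\}$ is contained in $\cE$. Dividing by $\Pr(\cE\mid\vGamma)=\Pr(\Bin(n\Delta,k/n)=k\Delta)=\binom{n\Delta}{k\Delta}(k/n)^{k\Delta}(1-k/n)^{n\Delta-k\Delta}$ cancels the powers of $k/n$ and $1-k/n$. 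This leaves exactly the hypergeometric expression of Step 1, which proves the identity.

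The only real obstacle is Step 1: making rigorous that conditioning on $\vGamma$ in a with-replacement design yields a uniform $k\Delta$-subset of slots. I would handle it by writing the design as a uniformly random sequence in $[m]^{n\Delta}$ of test choices for the half-edges. Conditioning on the counts $\vGamma$ makes this sequence uniform among all sequences with those counts. Such a sequence is equivalent to a uniformly random bijection between half-edges and slots, modulo within-test permutations. Applying the independent uniform choice of defectives and averaging then gives the counting formula.
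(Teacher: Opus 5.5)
Your argument is correct and is essentially the standard proof behind the cited \cite[Lemma B.2]{aco_2019}; the paper itself gives no proof and only cites it. In both arguments, each side reduces to the multivariate hypergeometric probability $\binom{n\Delta}{k\Delta}^{-1}\prod_{j}\binom{\vGamma_j}{y_j}$, with the right-hand side obtained by cancelling $(k/n)^{k\Delta}(1-k/n)^{(n-k)\Delta}$ against $\Pr(\cE\mid\vGamma)$; your Step 1 justification via exchangeability of the conditioned sequence in $[m]^{n\Delta}$ is sound, and it only needs $\SIGMA$ to be independent of the design, not uniform.
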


Finally, using the random variables $(\vY_j)_{j \in [m]}$, for $\ell \in [k]$, we let
\begin{align}\label{def_ms}
		\vm_{\ell}&=\sum_{i=1}^m\vecone\cbc{\vY_i=\ell}, \quad \bfm_{<t}=\sum_{i=1}^m\vecone\cbc{\vY_i< t}, \quad \bfm_{\geq t}=\sum_{i=1}^m\vecone\cbc{\vY_i\geq t}
	\end{align}
be the numbers of tests with exactly $\ell$, less than $t$ and more than $t-1$ defective items, respectively.
Observe that $m=\bfm_{<t} + \bfm_{\geq t}$.
 Using the shorthand $[(1 \pm \varepsilon)\mu] := [(1-\varepsilon)\mu,\,(1+\varepsilon)\mu]$, the next lemma pins down the likely ranges of various types of tests in our test design. 

\begin{lemma}[{\cite[Lemma II.5]{aco_2019}}]\label{Lemma_m0}
    Assume $m = ck \log(n/k)$ and $\Delta = d \log(n/k)$ for some constants $c,d>0$. With probability at least $1-o(n^{-7})$,
    \begin{enumerate}
        \item $\vm_{t-1}\in \brk{\bc{1\pm n^{-\Omega(1)}}m \frac{\bc{d/c}^{t-1}}{(t-1)!}\exp(-d/c)}$;
        \item$\vm_{t}\in \brk{\bc{1\pm n^{-\Omega(1)}}m \frac{\bc{d/c}^{t}}{t!}\exp(-d/c)}$;
        \item$\vm_{<t}\in \brk{\bc{1\pm n^{-\Omega(1)}}m \bc{\sum_{j=0}^{t-1}\frac{\bc{d/c}^{j}}{j!}\exp(-d/c)}}$; and
        \item$\vm_{\geq t}\in \brk{\bc{1\pm n^{-\Omega(1)}}m \bc{1-\sum_{j=0}^{t-1} \frac{\bc{d/c}^{j}}{j!}\exp(-d/c)}}$.
    \end{enumerate}
\end{lemma}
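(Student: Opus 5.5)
The plan is to reduce everything to independent binomials via Lemma~\ref{Lemma_Elemma} and then pay for the conditioning on $\cE$ with the polynomial factor from \eqref{eqEprob}. First I will condition on $\vGamma$ and restrict to the event of Lemma~\ref{Lemma_GammaMinMax}, which fails with probability $o(n^{-2})$. This is not quite $o(n^{-7})$. The randomness of the Binomial-type concentration is in fact much stronger: the Chernoff bound with deviation $\sqrt{\Delta n/m}\log n$ gives tails $\exp(-\Omega(\log^2 n))$ for each test. So I will rerun that Chernoff bound to obtain failure probability $n^{-\omega(1)}$. On that event every $\vGamma_j=(1+O(n^{-\Omega(1)}))\Delta n/m$, because $\Delta n/m=\Theta(n/k)=n^{\Omega(1)}$ and hence $\sqrt{\Delta n/m}\log n/(\Delta n/m)=n^{-\Omega(1)}$.

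Next I will compute the expectation. Given $\vGamma$, each $\vX_j\sim\Bin(\vGamma_j,k/n)$ has mean $\vGamma_j k/n=(1+n^{-\Omega(1)})\Delta k/m=(1+n^{-\Omega(1)})d/c$. Since $k/n\to0$, the Poisson approximation of the binomial gives
\[
\pr(\vX_j=\ell)=(1+n^{-\Omega(1)})\frac{(d/c)^\ell}{\ell!}\exp(-d/c)
\]
for each fixed $\ell\le t$. The error here is $O(\vGamma_j(k/n)^2+1/\vGamma_j)$ in relative terms, which is $n^{-\Omega(1)}$. The same holds for the finite sums $\pr(\vX_j<t)$ and $\pr(\vX_j\ge t)$. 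Thus $\Erw\sum_j\vecone\{\vX_j=\ell\}$ equals $m$ times the displayed probability, up to a factor $1\pm n^{-\Omega(1)}$. The same applies to the counts for $<t$ and $\ge t$.

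Then I will prove concentration and transfer it. The indicators $\vecone\{\vX_j=\ell\}$ are independent given $\vGamma$ and have mean $\Theta(m)$, with $m=n^{\Omega(1)}$. Chernoff therefore gives a deviation beyond $m^{-1/3}\cdot\text{mean}$ with probability $\exp(-\Omega(m^{1/3}))=n^{-\omega(1)}$. By Lemma~\ref{Lemma_Elemma}, the law of $(\vY_j)$ given $\vGamma$ equals the law of $(\vX_j)$ given $\vGamma,\cE$. Hence for any bad event $B$,
\[
\pr(B\mid\vGamma)\le\frac{\pr(B_{\vX}\mid\vGamma)}{\pr(\cE\mid\vGamma)}\le O(\sqrt{\Delta k})\,n^{-\omega(1)}=n^{-\omega(1)},
\]
using \eqref{eqEprob}. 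A union bound over the four items and the $\vGamma$ event then yields probability at least $1-o(n^{-7})$.

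The main obstacle is bookkeeping that the relative error is genuinely $n^{-\Omega(1)}$ and not merely $o(1)$. This requires $\Delta n/m$ to be polynomially large, which holds since $k=n^\theta$ with $\theta<1$. It also requires the Poisson approximation error and the fluctuations in the degrees $\vGamma_j$ to be polynomially small. I would check these first.
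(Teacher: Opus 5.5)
Your proposal is correct and follows the paper's own route: condition on a typical $\vGamma$, compute the mean count in the independent binomial model via Poisson approximation, concentrate it with a Chernoff bound, and transfer to $\vm_\ell$ through Lemma~\ref{Lemma_Elemma} at the cost of the factor $\pr(\cE\mid\vGamma)^{-1}=O(\sqrt{\Delta k})$. You also note that the degree-regularity event \eqref{eq_minmax_G} actually fails with probability $n^{-\omega(1)}$, which the Chernoff bound behind Lemma~\ref{Lemma_GammaMinMax} does give, rather than the stated $o(n^{-2})$; this is a genuine tightening, since the paper's proof only establishes the $1-o(n^{-7})$ bound conditionally on \eqref{eq_minmax_G}.
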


\begin{proof}


    For $\ell \in \{t-1, t\}$, let $\vm_{\ell}'$ be an auxiliary random variable estimating $\vm_{t-1}$ based on $\bc{\vX_i}$ rather than on $\bc{\vY_i}$, i.e.,
	\begin{align*}
		\vm_{\ell}'&=\sum_{i=1}^m\vecone\cbc{\vX_i=\ell}.
	\end{align*}
    Conditioning on $\vGamma$, we can bound the expected value of $\vm_{\ell}'$ in terms of $\vGamma_{\min}$ and $\vGamma_{\max}$. 
    For $\vGamma$ satisfying \eqref{eq_minmax_G}, 
	\begin{align}
		\Erw[\vm_{\ell}'\mid \vGamma]&=\sum_{i=1}^m \binom{\vGamma_i}{\ell} \bcfr{k}{n}^{\ell}\bc{1-k/n}^{\vGamma_i-\ell} \leq m\binom{\vGamma_{\max}}{\ell}\bcfr{k}{n}^{\ell}\bc{1-k/n}^{\vGamma_{\min}-\ell} \nonumber\\
        &= \bc{1+n^{-\Omega(1)} } m \frac{\bc{d/c}^{\ell}}{\ell!}\exp(-d/c). \label{Eq:ExpW1U}	\end{align}
	Analogously,
	\begin{align}
		\Erw[\vm_{\ell}'\mid \vGamma]&\geq m\binom{\vGamma_{\min}}{\ell}\bcfr{k}{n}^{\ell}\bc{1-k/n}^{\vGamma_{\max}} =\bc{1+n^{-\Omega(1)}} m \frac{\bc{d/c}^{\ell}}{\ell!}\exp(-d/c). \label{Eq:ExpW1L}
	\end{align}
    We conclude from \eqref{Eq:ExpW1U} and \eqref{Eq:ExpW1L} that for all $\vGamma$ that satisfy \eqref{eq_minmax_G},
$$
\Erw[\vm_{\ell}'\mid\vGamma]
= \bc{1+n^{-\Omega(1)}} m \frac{\bc{d/c}^{\ell}}{\ell!}\exp(-d/c).
$$
        The Chernoff bound on the Poisson-binomial random variable found in Theorem \ref{lem_chernoff_pb} for $\vm_{\ell}'$ then gives, again for $\vGamma$ satisfying \eqref{eq_minmax_G},
	\begin{align}\label{Eq:PrW1}
	    \pr\bc{\left.\vm_{\ell}'\in \brk{\bc{1\pm n^{-\Omega(1)}} m \frac{\bc{d/c}^{\ell}}{\ell!} \exp(-d/c)}\right|\vGamma}=1-o(n^{-9}).
	\end{align}
    We now use the fact that $\vm_{\ell}' $ given $ \vGamma$ and $\cE$ is equal in distribution to $\vm_{\ell}$ given $\vGamma$, to study the probability that $\vm_{\ell}$ is close to its expectation:
    \begin{align} 
        \pr&\bc{\vm_{\ell}\in \left. \brk{\bc{1\pm n^{-\Omega(1)}}m \frac{\bc{d/c}^{\ell}}{\ell!}\exp(-d/c)}\right|\vGamma} \label{Eq:CloseExp} \\
        &= \pr\bc{\vm_{\ell}'\in \left. \brk{\bc{1\pm n^{-\Omega(1)}}m \frac{\bc{d/c}^{\ell}}{\ell!}\exp(-d/c)}\right|\vGamma, \cE} \notag \\
        &= 1- \frac{\pr\bc{\vm_{\ell}'\not\in \left.\brk{\bc{1 \pm n^{-\Omega(1)}}m \frac{\bc{d/c}^{\ell}}{\ell!}\exp(-d/c)}, \cE\right| \vGamma}}{\Pr\bc{\cE \mid \vGamma}} \notag \\
        & \geq 1- \frac{\pr\bc{\vm_{\ell}'\not\in \left. \brk{\bc{1 \pm n^{-\Omega(1)}}m \frac{\bc{d/c}^{\ell}}{\ell!}\exp(-d/c)}\right|\vGamma}}{\Pr\bc{\cE \mid \vGamma}} = 1 -o\bc{n^{-7}}, \notag
    \end{align}
    which follows from \eqref{Eq:PrW1} and \eqref{eqEprob}.  The proofs for $\bfm_{<t}$ and $\bfm_{\geq t}$ are analogous.
\end{proof}

\subsection{Uniqueness and Recovery} \label{SSec:Uniq}
We study exact recovery of $\SIGMA$ from the measurement $\hat{\SIGMA}$ and the pooling scheme $\mathbf G$.  The key quantity governing recovery is the number of vectors consistent with the observations: 
Let $S_k(\mathbf{G},\hat{\SIGMA})$ be the set of all vectors $\sigma$ with Hamming weight $k$ that give rise to the test result $\hat{\SIGMA}$ on the pooling scheme $\mathbf{G}$. Further, denote $Z_k(\mathbf{G},\hat{\SIGMA}) = \abs{S_k(\mathbf{G},\hat{\SIGMA})}$. Proposition \ref{Prop:Posterior} shows that the posterior distribution of $\SIGMA$ given $\mathbf{G}$ and $\hat{\SIGMA}$ is the uniform distribution on $S_k(\mathbf{G},\hat{\SIGMA})$. 

\begin{proposition}[{\cite[Proposition II.1]{aco_2019}}] \label{Prop:Posterior}
    For all $\tau \in \{0,1\}^{\{x_1, \dots, x_n\}}$,
    \[ \Pr\bc{\SIGMA = \tau \mid \mathbf{G},\hat{\SIGMA}} = \frac{\vecone\{ \tau \in  S_k (\mathbf{G},\hat{\SIGMA})\}}{Z_k (\mathbf{G},\hat{\SIGMA})}.\]
\end{proposition}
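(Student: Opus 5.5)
The argument is a direct application of Bayes' rule. It uses two facts: the prior on $\SIGMA$ is uniform over all vectors of Hamming weight $k$, and the test design $\mathbf{G}$ is chosen independently of $\SIGMA$.

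\textbf{Setting up the joint law.} Fix a realisation $G$ of the pooling scheme and an outcome vector $\hat\sigma\in\{0,1\}^m$ with $\Pr(\mathbf G=G,\hat\SIGMA=\hat\sigma)>0$. The constant-column design is drawn by letting each item choose its $\Delta$ tests independently of the labels. Hence for any $\tau\in\{0,1\}^n$,
\[
\Pr(\SIGMA=\tau,\mathbf G=G)=\Pr(\SIGMA=\tau)\Pr(\mathbf G=G).
\]
Given $\mathbf G=G$ and $\SIGMA=\tau$, the outcome vector is the deterministic function $\hat\SIGMA_G(\tau)$. This function is defined through the threshold rule $\vecone\{\sum_{x\in\partial a}n_x(a)\tau_x\geq t\}$, with multiplicities counted. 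Therefore
\[
\Pr(\hat\SIGMA=\hat\sigma\mid \SIGMA=\tau,\mathbf G=G)=\vecone\{\hat\SIGMA_G(\tau)=\hat\sigma\}.
\]

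\textbf{Conditioning.} The prior satisfies $\Pr(\SIGMA=\tau)=\binom nk^{-1}\vecone\{|\tau|=k\}$. Multiplying the three factors above gives
\[
\Pr(\SIGMA=\tau,\mathbf G=G,\hat\SIGMA=\hat\sigma)=\binom nk^{-1}\Pr(\mathbf G=G)\,\vecone\{|\tau|=k,\ \hat\SIGMA_G(\tau)=\hat\sigma\}.
\]
By definition of $S_k(G,\hat\sigma)$, the indicator on the right equals $\vecone\{\tau\in S_k(G,\hat\sigma)\}$. Summing over all $\tau$ gives the normaliser $\binom nk^{-1}\Pr(\mathbf G=G)\,Z_k(G,\hat\sigma)$. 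This is positive because the true $\SIGMA$ lies in $S_k$, so $Z_k\geq1$ on the support. Dividing the joint probability by this normaliser yields
\[
\Pr(\SIGMA=\tau\mid\mathbf G=G,\hat\SIGMA=\hat\sigma)=\frac{\vecone\{\tau\in S_k(G,\hat\sigma)\}}{Z_k(G,\hat\sigma)},
\]
which is the claim.

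\textbf{Main point to check.} There is no real obstacle here. The one thing to verify is the independence of $\mathbf G$ and $\SIGMA$ in the design, which holds by construction. Multi-edges need no special care, because the outcome map remains a deterministic function of $(G,\tau)$.
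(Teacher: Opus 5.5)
Your proof is correct. It is the standard Bayes'-rule argument: a uniform prior on weight-$k$ vectors, a design independent of $\SIGMA$, and test outcomes that are a deterministic function of $(G,\tau)$. The paper does not reprove this proposition but imports it from Coja-Oghlan et al.\ (Proposition II.1 there), where it rests on these same three facts, so your approach is essentially theirs.
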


As a consequence, the information-theoretically optimal inference algorithm cannot do better than outputting a uniform sample of $S_k (\mathbf{G},\hat{\SIGMA})$, and we obtain the following corollary.

\begin{corollary}[{\cite[Corollary II.2]{aco_2019}}]\label{Cor_Nishi}
    \begin{enumerate}[(1)]
        \item If $Z_k (\mathbf{G},\hat{\SIGMA})= \omega(1) $ w.h.p., then for any algorithm $\cA$,
        \[ \pr\bc{\cA(\mathbf{G},\hat{\SIGMA},k)=\SIGMA}=o(1). \]
        \item If $Z_k (\mathbf{G},\hat{\SIGMA})= 1 $ w.h.p., then there is an algorithm $\cA$ such that
        \[ \pr\bc{\cA(\mathbf{G},\hat{\SIGMA},k)=\SIGMA}=1-o(1). \]
    \end{enumerate}
\end{corollary}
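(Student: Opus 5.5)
Both parts follow from Proposition~\ref{Prop:Posterior}, which says that, conditionally on $(\mathbf G,\hat\SIGMA)$, the ground truth $\SIGMA$ is uniform on $S_k(\mathbf G,\hat\SIGMA)$. Note also that $\SIGMA\in S_k(\mathbf G,\hat\SIGMA)$ always holds, so $Z_k\ge 1$ deterministically.

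\emph{Part (1).} Fix an arbitrary, possibly randomised, algorithm $\cA$. For any realisation of $(\mathbf G,\hat\SIGMA)$, its output is a vector $\tau$ that is independent of $\SIGMA$ given $(\mathbf G,\hat\SIGMA)$; internal randomness can be averaged over. By the proposition,
\[
\pr\bc{\cA(\mathbf G,\hat\SIGMA,k)=\SIGMA\mid \mathbf G,\hat\SIGMA}=\frac{\vecone\{\tau\in S_k\}}{Z_k}\le \frac{1}{Z_k(\mathbf G,\hat\SIGMA)}.
\]
Given any $L>0$, take expectations and split on the event $\{Z_k\ge L\}$. This gives
\[
\pr\bc{\cA=\SIGMA}\le \frac 1L+\pr\bc{Z_k<L}.
\]
The hypothesis $Z_k=\omega(1)$ w.h.p.\ means $\pr(Z_k<L)\to0$ for every fixed $L$. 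Hence $\limsup_n\pr(\cA=\SIGMA)\le 1/L$ for all $L$, so the success probability is $o(1)$.

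\emph{Part (2).} Let $\cA$ be the algorithm that outputs an element of $S_k(\mathbf G,\hat\SIGMA)$; it can be found by exhaustive search, since only existence is claimed and not efficiency. On the event $\{Z_k=1\}$, this set is exactly $\{\SIGMA\}$, so $\cA$ returns $\SIGMA$. Therefore $\pr(\cA=\SIGMA)\ge\pr(Z_k=1)=1-o(1)$.

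The only slightly delicate point is the diagonal/limiting argument in Part (1), namely passing from ``$\omega(1)$ w.h.p.'' to a bound uniform in $n$ via an arbitrary fixed $L$. Everything else is direct from the posterior formula.
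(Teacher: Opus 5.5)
Your proof is correct and follows the same route as the paper, which states the corollary as an immediate consequence of Proposition~\ref{Prop:Posterior}: any algorithm succeeds with probability at most $\Erw\brk{1/Z_k(\mathbf{G},\hat{\SIGMA})}$, which is $o(1)$ when $Z_k=\omega(1)$ w.h.p., and on $\{Z_k=1\}$ the unique element of $S_k(\mathbf{G},\hat{\SIGMA})$ is $\SIGMA$. Your treatment of randomised algorithms, and your passage from ``$\omega(1)$ w.h.p.'' to $o(1)$ via a fixed $L$, spell out details the paper leaves implicit.
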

\subsection{Scaling of $\Delta$} \label{SSec:ScalingDelta}

In this section, we analyse how the choice of the design parameter $\Delta$ affects the balance between positive and negative tests.
Heuristically, having items appearing in too few tests yields mostly negative outcomes, and having items appearing in too many tests yields mostly positive outcomes; in either scenario, the observed test results are uninformative, preventing recovery of the ground truth.  Recall the random variable $Z_k(\mathbf{G},\hat{\SIGMA})$ from the previous section.
We now justify that for $m = ck \log \frac{n}{k}$, $c>0$, only constant-order choices of $d$ may lead to $Z_k(\mathbf{G},\hat{\SIGMA}) \not= \omega(1)$ w.h.p..
The proof of the following lemma is similar to the proof of \cite[Lemma II.6]{aco_2019}.

\begin{lemma}
\label{Lemma_Delta_too_small_big}
    \begin{enumerate}
        \item If $m=c k \log(n/k)$ and  $\Delta = o\bc{\log(n/k)}$, then $Z_k(\mathbf{G},\hat{\SIGMA}) = \omega(1)$ w.h.p.
        \item If $m=ck \log(n/k)$ and $\Delta = \omega\bc{\log(n/k)}$, then $Z_k(\mathbf{G},\hat{\SIGMA}) = \omega(1)$ w.h.p.
    \end{enumerate}
\end{lemma}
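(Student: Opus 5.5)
In both regimes I would show that w.h.p. there are $\omega(1)$ simple swaps: pairs (or small families) of items whose labels can be exchanged without changing $\hat\SIGMA$. Then $Z_k(\mathbf G,\hat\SIGMA)\geq$ the number of such swaps, and it suffices to show that the number of suitable defectives times the number of suitable non-defectives tends to infinity. Following the structure of the proof of \cite[Lemma II.6]{aco_2019}, the natural objects are these.
\begin{itemize}
\item \emph{Disguised defectives:} defectives $x$ all of whose tests $a\in\partial x$ still satisfy $\hat\SIGMA_a$ after removing $x$. That is, each test of $x$ is either negative, or has at least $t+1$ defective edges.
\item \emph{Disguised non-defectives:} non-defectives $y$ all of whose tests would keep their outcome after adding $y$. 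That is, each test of $y$ is positive, or contains at most $t-2$ defectives.
\end{itemize}
Swapping a disguised defective $x$ with a disguised non-defective $y$ that shares no test with $x$ preserves all outcomes. The item degree is $\Delta$ and test degrees concentrate by Lemma~\ref{Lemma_GammaMinMax}, so a fixed $x$ shares a test with only $O(\Delta^2 n/m)=o(n)$ items. Hence the no-shared-test condition costs nothing.

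\emph{Case $\Delta=o(\log(n/k))$.} Here $\Delta k/m\to 0$, so the Poisson parameter $r=\Delta k/m$ of the number of defectives per test vanishes.
\begin{itemize}
\item I would use non-defectives. A test containing a given non-defective has at most $t-2$ other defectives with probability $1-O(r^{t-1})$ (for $t\geq 2$), and is otherwise positive.
\item For $t=1$ we are in the CGT setting and \cite[Lemma II.6]{aco_2019} applies directly, so I would simply reduce to it there.
\item For $t\geq2$, I would instead look at defectives. The probability that a defective's test contains another defective is $\approx r$, which is small. I would therefore target defectives all of whose tests are negative: a test containing $x$ has at most $t-1$ defectives including $x$ with probability $1-O(r^{t-1})$. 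Such a defective is disguised in the sense that removing it keeps every test negative.
\item Every non-defective $y$ whose tests each contain at most $t-2$ defectives is likewise disguised; this fails only with probability $O(\Delta r^{t-1})=o(1)$. So the number of disguised non-defectives is $n(1-o(1))$.
\item The expected number of disguised defectives is $k(1-O(r^{t-1}))^{\Delta}\geq k\exp(-O(\Delta r^{t-1}))$. Since $\Delta r\to0$ is not guaranteed, I would use that $\Delta r^{t-1}\leq \Delta r\cdot r^{t-2}$ together with $\Delta = o(\log(n/k))$. This gives a count of at least $k\exp(-o(\log(n/k)))$, which tends to infinity.
\end{itemize}

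\emph{Case $\Delta=\omega(\log(n/k))$.} Now $r\to\infty$, so almost all tests have at least $t+1$ defectives and are positive.
\begin{itemize}
\item A non-defective is disguised unless one of its tests has at most $t-1$ defectives. This happens with probability at most $\Delta\,\pr(\Po(r)\leq t-1)=\Delta e^{-r}\mathrm{poly}(r)$. Since $r=\Delta/(c\log(n/k))$, this is $o(1)$ in the relevant range.
\item A defective is disguised unless one of its tests has exactly $t$ defectives. This has probability at most $\Delta\pr(\Po(r)=t-1)$ per item, giving at least $k(1-o(1))$ disguised defectives as long as $\Delta e^{-r}r^{t-1}\to0$. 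That holds once $r\geq (1+\eta)\log\Delta$.
\item For intermediate $r$ (large but below $\log\Delta$), I would fall back on the first-moment bound: the expected number of disguised defectives is $k(1-e^{-r}r^{t-1}/(t-1)!)^{\Delta}\geq k\exp(-2\Delta e^{-r}r^{t-1})$. I would compare the exponent $\Delta e^{-r}r^{t-1}=c\log(n/k)\, r^{t}e^{-r}=o(\log(n/k))\leq o(\log k)$, using $\theta$ fixed.
\end{itemize}
In both cases I need $\omega(1)$ counts w.h.p. and not just in expectation. For this I would do a second-moment computation, decoupling via Lemma~\ref{Lemma_Elemma} and the event $\cE$, as in Lemma~\ref{Lemma_m0}, with correlations between distinct items only through $O(\Delta^2/m)$ shared tests.

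The main obstacle I expect is this concentration step, together with getting the exponents uniform across the whole ranges of $\Delta$. The multi-edge convention also needs care: a defective appearing twice in a test counts twice. This only affects $O(\Delta^2/m)$ tests per item, which is negligible.
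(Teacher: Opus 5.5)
Your swap-based argument can be made to work, and it takes a genuinely different route from the paper's. As written, though, several intermediate estimates are wrong and need repair.

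\textbf{Comparison of the two routes.} The paper never analyses individual items for this lemma. It shows that in each regime almost all tests share one outcome. When $\Delta=\omega(\log(n/k))$ it proves $\vm_{<t}=o(m)$, via Chernoff for the decoupled variables of Lemma~\ref{Lemma_Elemma}. When $\Delta=o(\log(n/k))$ it uses $\vm_{\geq t}\leq\vm_{\geq1}=o(m)$. It then counts: typical configurations realise at most $\binom{m}{o(m)}=\mathrm{e}^{o(k\log(n/k))}$ outcome vectors, while there are $\binom nk=\mathrm{e}^{(1+o(1))k\log(n/k)}$ configurations. Hence at most $n\binom{m}{o(m)}=o(\binom nk)$ configurations lie in outcome classes of size at most $n$, and $Z_k\geq n$ w.h.p. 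This pigeonhole argument needs concentration of a single test statistic, is uniform in $\Delta$, and is blind to multiplicities and overlaps. Your construction is essentially Proposition~\ref{Lemma_V_all} plus Corollary~\ref{Cor:Lower} pushed outside $\Delta=\Theta(\log n)$. It produces explicit competing solutions, but it needs first and second moments in every regime plus multiplicity bookkeeping, and that is where the sketch slips.

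\textbf{Problem 1: two false $o(1)$ claims.}
\begin{itemize}
\item ``Fails only with probability $O(\Delta r^{t-1})=o(1)$'' is false in general. For $t=2$ and $\Delta=\log(n/k)/\log\log n$ one has $\Delta r\to\infty$. You noticed this yourself for defectives.
\item ``$\Delta\mathrm{e}^{-r}\mathrm{poly}(r)=o(1)$'' likewise fails when $r\to\infty$ more slowly than $\log\log n$.
\end{itemize}
So the disguised non-defectives number only $n^{1-o(1)}$, not $n(1-o(1))$. This is harmless: the bound $n(1-p)^{\Delta}\geq n\mathrm{e}^{-2\Delta p}$ with $\Delta p=o(\log(n/k))$, which you already use for defectives, still gives $\omega(1)$.

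\textbf{Problem 2: the no-shared-test condition.} You do not need $x$ and $y$ to share no test. If each has multiplicity one in a common test, the swap leaves that test's defective count unchanged; this is why Corollary~\ref{Cor:Lower} swaps arbitrary pairs. It matters because your justification $O(\Delta^2n/m)=o(n)$ fails once $\Delta\gtrsim\sqrt m$. The hypothesis $\Delta=\omega(\log(n/k))$ puts no upper bound on $\Delta$.

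\textbf{Problem 3: multi-edges for large $\Delta$.} For the same reason, multi-edges are not negligible in general. An item has about $\Delta^2/(2m)$ repeated tests, so your condition ``at least $t+1$ defective edges'' no longer guarantees that removing $x$ keeps a test positive. You need a separate argument for large $\Delta$, for instance:
\begin{itemize}
\item For $\Delta\geq m^{1/3}$: each $\vY_a\sim\Bin(k\Delta,1/m)$ has mean $r\geq k m^{-2/3}$, which is polynomially large. By Chernoff and a union bound, w.h.p.\ $\min_a\vY_a\geq r/2> t+\max_{x,a}n_x(a)$. Then every defective--non-defective swap keeps all tests positive, and $Z_k\geq k(n-k)$.
\item For $\log(n/k)\ll\Delta\leq m^{1/3}$: in expectation only $O(km^{-1/3})$ defectives and $O(nm^{-1/3})$ non-defectives have a repeated test. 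This is negligible against your $k^{1-o(1)}$ and $n^{1-o(1)}$ counts.
\end{itemize}

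With these repairs your argument goes through. The concentration step should be done as in Proposition~\ref{Lemma_V_all}: condition on $\vm_t$ (respectively $\vm_{t-1}$) and use the exact hypergeometric second moment.
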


\begin{proof}
We first consider the case $\Delta = \omega(\log(n/k))$ and use a similar argument as in Lemma \ref{Lemma_m0} to study the number of negative tests $\bfm_{<t}$. As before, we introduce the
auxiliary quantity
$$\bfm_{< t}' = \sum_{i=1}^m \vecone\{\vX_i \leq t-1\}.$$
Then
$$\Erw[\bfm_{< t}' \mid \vGamma] = \sum_{i=1}^m \Pr\bc{\Bin(\vGamma_i,k/n) \leq t-1 \mid \vGamma}.$$
Moreover, for $\vGamma$ satisfying \eqref{eq_minmax_G},
$$\Erw\brk{\Bin(\vGamma_i,k/n) \mid \vGamma} = \vGamma_i k/n = \omega(1).$$
The Chernoff bound  now further implies that for $n$ large enough,
\begin{align}
    \Pr\bc{\Bin(\vGamma_i,k/n) \leq t-1 \mid \vGamma}
    \leq \exp\bc{-\frac{\vGamma_i k}{8n}}.
\end{align}
Summing over $i$, and still assuming that $\vGamma$ satisfies \eqref{eq_minmax_G},
$$\Erw[\bfm_{< t}' \mid \vGamma] = \sum_{i=1}^m \Pr\bc{\Bin(\vGamma_i,k/n) \leq t-1\mid \vGamma}
\leq m \exp\bc{-\tfrac{k}{8n}\min_i \vGamma_i}= o(m).$$
An application of the Chernoff bound shows that $\bfm_{< t}'$ is concentrated around its mean. Again, the transition from $\bfm_{< t}'$ to $\bfm_{<t}$ follows the same argument as \eqref{Eq:CloseExp} in Lemma \ref{Lemma_m0}, using that $\pr\bc{\cE\vert\vGamma}^{-1} = O\bc{\sqrt{\Delta k}} = O\bc{n^{1+\theta}}$ for any choice of $\Delta$. Finally, by Remark \ref{rmk:Gamma_omega}, $\vGamma$ satisfies \eqref{eq_minmax_G} with sufficiently high probability, and we obtain that
\[
\bfm_{<t} = o(m)
\]
with probability $1 - o(n^{-2})$.
Therefore w.h.p., almost all tests are positive. The rest of the proof now follows as in \cite[Lemma II.6]{aco_2019} by observing that w.h.p. over the choice of $\G$, asymptotically most configurations $\sigma \in \{0,1\}^n$ give rise to at most $o(m)$ negative tests. Going through all such `bad' configurations, there will be at most 
$n\binom{m}{o(m)}= o(\binom{n}{k})$ ones that share their test outcomes with at most $n-1$ other configurations. Consequently, $Z_k(\G,\hat\SIGMA)\geq n$ w.h.p. over the choice of $\G$ and $\SIGMA$.

The case $\Delta=o(\log(n/k))$ on the other hand is an immediate consequence of \cite[Lemma II.6]{aco_2019}, which states that $\bfm_{\geq 1} = o(m)$. Since $\bfm_{\geq t}\leq \bfm_{\geq 1}$, also $\bfm_{\geq t} = o(m)$. Interchanging the positive and negative tests in the previous counting argument gives the claim for $\Delta=o(\log(n/k))$.
\end{proof}

\subsection{Disguised Items in the Constant Column Design} \label{SSec:Disg}
Our arguments revolve around the notion of \emph{disguised items} - those items whose labels, given $\vec{G}, \hat{\SIGMA}$, can be changed without changing $\hat{\SIGMA}$. We now provide the corresponding definitions. 

\begin{definition}[Pivotal Tests] \label{Def:Piv}
Let $ a_j $ be a test and \( x_i \in \partial a_j \). We say that \( a_j \) is a \emph{pivotal test for item \( x_i \)} if

$$\sum_{y \in \partial a_j \setminus \{x_i\}} \SIGMA_y = t - 1.$$
Here, the notation $\partial a_j \setminus \{x_i\}$ refers to the removal of all occurrences of item $x_i$ from the multiset $\partial a_j$.

\end{definition}
This notion accounts for the fact that changing the label of $x_i$ would change the results $\hat{\SIGMA}_{j}$ of all its pivotal tests. 

\begin{definition}[Disguised Items] \label{Def:Disguised}
    We say that item $x_i$ is \emph{disguised} if  $x_i$ appears exactly once in each $a_j \in \partial x_i$ and
    there exists no $a_j \in \partial x_i$ which is pivotal for $x_i$. The set of all disguised items is denoted by $\vV^+$.
\end{definition}

\begin{remark}
For simplicity of analysis, in Definition \ref{Def:Disguised}, we disregard items that appear more than once within a test. Consequently, our notion of disguised items is a   \emph{lower bound} on the actual number of items whose labels could be changed without affecting the test results. 
\end{remark}

\subsection{Conditions Under Which Items Are Disguised} \label{SSec:DisguisedConditions}
 We continue with a study on the effect of $c$ and $d$ on the number of defective and non-defective disguised items. 
Recall that $\vV^+$ denotes the set of all disguised items, and further define
\begin{align}
\vV_0^+ = \{ x_i \in \vV^+ \mid \SIGMA_{x_i}=0 \} \quad \quad \textrm{and}\quad \quad \vV_1^+ = \vV^+ \setminus \vV_0^+ .
\end{align}
The following Proposition is similar in spirit to \cite[Proposition II.3]{aco_2019}.

\begin{proposition} 
\label{Lemma_V_all}
    Let $c,d = \Theta(1)$. Then, the following statements hold w.h.p.:
    \begin{enumerate}[\textup{(}i\textup{)}]
        \item If $k\bc{1-(d/c)^{t-1}\exp(-d/c)/(t-1)!}^{\Delta} \geq n^{\Omega(1)}$, then $\abs{\vV_1^+} = n^{\Omega(1)}$.
        \item If $k\bc{1-(d/c)^{t-1}\exp(-d/c)/(t-1)!}^{\Delta} =o(1)$, \hspace{1pt} then $\abs{\vV_1^+} = o(1)$.
        \item If $k\bc{1-(d/c)^{t-1}\exp(-d/c)/(t-1)!}^{\Delta} \geq n^{\Omega(1)}$, then $\abs{\vV_0^+} = n^{\Omega(1)}$. 
    \end{enumerate}
    \begin{enumerate}[\textup{(}i\textup{)}]\addtocounter{enumi}{3}
        \item Fix $d>0$ and let $c_2^*(d,\theta) = \inf\{c>0: c \geq c_2(d/c,\theta)\}$. 
        If $c<c_2^*(d,\theta)$, 
              then $\abs{\vV_0^+}, \abs{\vV_1^+} = n^{\Omega(1)}$. \label{Lemma_V++}
        \item Suppose that $c > c_{\mathrm{inf}}^{\mathrm{TGT}}(n,\theta,t)$ and $d = r^*(\theta) c$. Then $\abs{\vV_1^+} = o(1)$.
    \end{enumerate}
\end{proposition}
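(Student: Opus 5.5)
We will argue along the lines of \cite[Proposition II.3]{aco_2019}: first compute first moments of $\abs{\vV_0^+}$ and $\abs{\vV_1^+}$, then obtain concentration through a second moment (or Chebyshev) bound. Throughout we condition on $\vGamma$ satisfying \eqref{eq_minmax_G}, and on the counts $\vm_{t-1},\vm_t,\vm_{<t}$ lying in the ranges of Lemma~\ref{Lemma_m0}. This happens with probability $1-o(n^{-2})$.

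\textbf{Defective items.} For a defective $x$, its $\Delta$ tests are, up to $O(\Delta^2/m)=o(1)$ corrections from repeated inclusions, distinct. A test $a\in\partial x$ is pivotal for $x$ precisely when $a$ contains exactly $t$ defective slots, one of which is $x$'s. Since $x$'s slot is a uniformly chosen defective half-edge, each test of $x$ is pivotal with probability $t\vm_t/(k\Delta)$. A short computation with Lemma~\ref{Lemma_m0} gives
\begin{align*}
t\vm_t/(k\Delta)=(1+o(1))\frac{(d/c)^{t-1}}{(t-1)!}\exp(-d/c)=:p .
\end{align*}
Across the $\Delta$ tests of $x$ the events are asymptotically independent. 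This follows from the decoupling of Lemma~\ref{Lemma_Elemma}, with only a $(1+o(1))$ correction, because $\Delta=O(\log n)$. Hence $\Erw\abs{\vV_1^+}=(1+o(1))k(1-p)^\Delta$. This expectation is the quantity appearing in (i) and (ii), and (ii) then follows from Markov's inequality. For (i) we need a second moment: for two defectives $x\neq y$ the probability that they share a test is $O(\Delta^2/m)=o(1)$, and otherwise their tests are nearly independent. This gives $\Erw\abs{\vV_1^+}^2\le(1+o(1))(\Erw\abs{\vV_1^+})^2+\Erw\abs{\vV_1^+}$, and Chebyshev yields concentration whenever the mean is $n^{\Omega(1)}$.

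\textbf{Non-defective items.} For a healthy $x$, each test is pivotal when it contains exactly $t-1$ defectives. Because healthy half-edges dominate, the conditional probability for a random test slot is $(1+o(1))\,(t-1)$-Poisson mass, and this equals $p$ again. So $\Erw\abs{\vV_0^+}=(1+o(1))(n-k)(1-p)^\Delta\ge k(1-p)^\Delta$, which gives (iii) by the same second moment argument.

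\textbf{Parts (iv) and (v).} These reduce to computing exponents. With $\Delta=d\log(n/k)$ and $r=d/c$,
\begin{align*}
k(1-p)^\Delta = \exp\Big(\log n\,\big[\theta + (1-\theta)\, r c\log(1-\pr(\Po(r)=t-1))\big]+o(\log n)\Big).
\end{align*}
The bracket is positive exactly when $c<c_2(r,\theta)$. For (iv), $c<c_2^*(d,\theta)$ means $c<c_2(d/c,\theta)$ by definition of the infimum, so the exponent is $\Omega(\log n)$ and (i) and (iii) give both statements. For (v), take $d=r^*c$ with $c>c_{\mathrm{inf}}^{\mathrm{TGT}}\ge c_2(r^*,\theta)$. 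The bracket is then a negative constant, so $\Erw\abs{\vV_1^+}=n^{-\Omega(1)}$ and Markov's inequality finishes the proof.

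The main obstacle is to justify rigorously the near-independence of the pivotal events across an item's $\Delta$ tests. The variables $\vY_j$ are dependent, and conditioning on $x$'s own label changes the test degrees by one. We plan to handle this through Lemma~\ref{Lemma_Elemma}. Concretely, we fix $x$'s tests and replace the counts of the other items in those tests by independent binomials, conditioned on $\cE$. The cost of this conditioning is a factor $O(\sqrt{\Delta k})$ only in tail bounds, whereas the exact-probability computation is handled by the local limit theorem, which gives a $(1+o(1))$ ratio. The same device controls the pair correlations needed in the second moment.
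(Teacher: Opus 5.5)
Your architecture matches the paper's: first moments plus Chebyshev for (i) and (iii), Markov for (ii), and for (iv) and (v) the same exponent $\theta+(1-\theta)rc\log(1-\pr(\Po(r)=t-1))$ with $r=d/c$. The gap is in the second-moment step.

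You discard the event that two defectives $x\neq y$ share a test because its probability is $O(\Delta^2/m)=o(1)$. That is the wrong yardstick. Under your bound, this event contributes $k^2\cdot O(\Delta^2/m)=\Theta(k\log n)=\Theta(n^{\theta}\log n)$ to $\Erw\abs{\vV_1^+}^2$. It has to be compared with $(\Erw\abs{\vV_1^+})^2=n^{2\kappa}$, where $n^{\kappa}=k(1-p)^{\Delta}$. The hypothesis of (i) only guarantees some $\kappa>0$, and for $\kappa\le\theta/2$ the sharing term dominates. So $\Erw\abs{\vV_1^+}^2\le(1+o(1))(\Erw\abs{\vV_1^+})^2+\Erw\abs{\vV_1^+}$ does not follow. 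This is exactly the situation in (iv) when $c$ is just below $c_2^*(d,\theta)$. Healthy items fail the same way: the sharing term is $\Theta(n^{2-\theta}\log n)$ against $(\Erw\abs{\vV_0^+})^2\approx n^{2-2\theta+2\kappa}$.

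To repair this within your framework, show that on the event that $x,y$ share $s$ tests, both are disguised with probability at most $(1+o(1))(1-p)^{2\Delta-2s}$. In other words, keep the factors from the non-shared tests instead of bounding the event by its probability.

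The paper avoids both this issue and what you call the main obstacle by using exchangeability instead of Lemma~\ref{Lemma_Elemma} and a local limit theorem. Condition on $\vGamma$ and on which slots carry defective half-edges. Then the $\Delta$ half-edges of a fixed defective form a uniformly random $\Delta$-subset of the $k\Delta$ defective slots, and $t\vm_t$ of these slots lie in tests with $\vY_a=t$.

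Let $\vU_1$ be the number of defectives with no half-edge in such a test. This gives the exact formula $\Erw[\vU_1\mid\vGamma,\vm_t]=k\binom{k\Delta-t\vm_t}{\Delta}\binom{k\Delta}{\Delta}^{-1}$. The pair term is $k(k-1)\binom{k\Delta-t\vm_t}{2\Delta}\binom{k\Delta}{2\Delta}^{-1}$, which is at most $k^2$ times the square of the single-item ratio whether or not tests are shared. Repeated inclusions are removed afterwards via $\vU_1-\vR_1\le\abs{\vV_1^+}\le\vU_1$, with $\Erw\vR_1=O(\log n)$.

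For healthy items the same argument uses the $\sum_{i:\vY_i=t-1}(\vGamma_i-(t-1))$ healthy slots in tests with exactly $t-1$ defectives, together with $\Erw\vR_0=O(n^{1-\theta}\log n)$.

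A smaller point concerns (ii). There the relative error in $p$ must be $o(1/\Delta)$, so that $(1-p)^{\Delta}$ is correct up to a factor $1+o(1)$. The $1\pm n^{-\Omega(1)}$ precision of Lemma~\ref{Lemma_m0} provides this. A generic $(1+o(1))$, or an $\eul^{o(\log n)}$ error as in your exponent bookkeeping, would not.
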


\begin{proof}
    
    \textbf{Proof of (i):\\}  
Recall from Definition~\ref{Def:Disguised} that item $x_i$ is \emph{disguised} if it is included exactly once in each of its incident tests $a_j \in \partial x_i$ and none of them is pivotal for $x_i$. 
To estimate $|\vV_1^+|$, we first work without the multiplicity restriction. 
Let
\begin{align*}
    \vU_1 = \sum_{i=1}^n \mathds{1}\cbc{\SIGMA_i=1, \forall a \in \partial x_i: \vY_a \not=t} = \sum_{x \in \vV_1} \vecone\left\{ \sum_{a \in \partial x} \vecone \{\vY_a =t\} =0\right\}
\end{align*}
be the number of defective items that do not participate in any test with exactly $t$ defective items.
Since every disguised defective item is counted by the corresponding indicator in $\vU_1$, 
$\vU_1$ provides an upper bound on the number
of disguised defectives.
 
In the present test design, every item has fixed degree $\Delta=d \log (n/k)$. 
Given $\vGamma$ and $\vm_t$, for an item to be in $\vU_1$, it must choose its $\Delta$ edges from the other $k\Delta-t \vm_t$ which are not incident to a pivotal test. 
By Lemma~\ref{Lemma_m0}, with probability $1-o(n^{-7})$,
	\color{black}
	\begin{align}
		\Erw[\vU_1\mid\vGamma,\vm_t]&=k\binom{k\Delta-t \vm_t}\Delta\binom{k\Delta}\Delta^{-1} = k \frac{(k\Delta-t \vm_t)_{{\Delta}}}{(k\Delta)_{{\Delta}}} \nonumber \\ 
		&=\bc{1+n^{-\Omega(1)}}k(1-t \vm_t/k\Delta)^\Delta \nonumber \\ 
        &=\bc{1+n^{-\Omega(1)}}k\bc{1-\frac{(d/c)^{t-1}}{(t-1)!}\exp(-d/c)}^\Delta. \label{eq_exp_U1}
	\end{align}
	\color{black}
	Similarly, w.h.p.
	\begin{align*}
		\Erw[\vU_1^2\mid\vGamma,\vm_t]&=k\binom{k\Delta-t \vm_t}\Delta\binom{k\Delta}\Delta^{-1}+ k(k-1)\binom{k\Delta-t \vm_t}{2\Delta}\binom{k\Delta}{2\Delta}^{-1}\\
        &=\Erw[\vU_1\mid\vGamma,\vm_t]+\bc{1+n^{-\Omega(1)}}\Erw[\vU_1\mid\vGamma,\vm_t]^2.
	\end{align*}
In the present case, we assume that $k\bc{1-(d/c)^{t-1}\exp(-d/c)/(t-1)!}^{\Delta} \geq n^{\Omega(1)}$.	Therefore, by Chebyshev's inequality, \whp{}
	\begin{align}\label{eqLemma_V1plus_1}
		\vU_1&\in \brk{\bc{1\pm n^{-\Omega(1)} 
        }k\bc{1-\frac{1}{(t-1)!} \bcfr{d}{c}^{t-1}\exp(-d/c)}^\Delta}.
	\end{align}	
Thus, \eqref{eqLemma_V1plus_1} implies there exists $\kappa>0$ such that $\vU_1 \geq n^\kappa$ \whp\

    Let now $\vR_1$ be the number of defective items that are contained in at least one test with multiplicity greater than one such that $\vU_1 \geq \abs{\vV_1^+} \geq \vU_1-\vR_1$. 
    The probability that an item appears in a specific test at least twice is upper-bounded by $(\Delta/m)^2$. 
    Taking the union bound over all defective items and all tests yields
    $$\Erw[\vR_1]\leq k m \bc{\frac{\Delta}{m}}^2 = O(\log n).$$
By Markov’s inequality,
$$\Pr\bc{\vR_1 \geq \tfrac12 n^\kappa} \leq \frac{2\,\Erw[\vR_1]}{n^\kappa} = o(1). $$
Hence, w.h.p. we simultaneously have
$\vU_1 \geq n^\kappa$ and $\vR_1 \le \tfrac12 n^\kappa$, and therefore
$$\abs{\vV_1^+} \geq \vU_1 - \vR_1 \geq \tfrac12 n^\kappa =n^{\Omega(1)} \quad \text{\whp}$$

    \noindent \textbf{Proof of (ii):\\}
    Let $\vU_1$ be as in
    part (i) such that $\vU_1\geq\abs{\vV_1^+}$. If $k\bc{1-\tfrac{(d/c)^{t-1}}{(t-1)!} \exp(-d/c)}^\Delta=o(1)$, $\abs{\vV_1^+}=o(1)$ \whp~ follows from \eqref{eq_exp_U1} and Markov's inequality. 

\noindent    \textbf{Proof of (iii):\\}
    The proof of (iii) follows an almost identical argument to the proof of (i), but with the adaptation that we first study the tests with exactly $t-1$ defective items. Therefore, we relegate this proof to Appendix \ref{app:Condition3}.

\noindent \textbf{Proof of (iv):  \\}
Fix $d>0$. Then 
\begin{align*}
    k\bc{1-(d/c)^{t-1}\exp(-d/c)/(t-1)!}^{\Delta} = n^{\theta + (1-\theta)d\log\bc{(1 - \exp(-d/c) \frac{(d/c)^{t-1}}{(t-1)!} }}.
\end{align*}
We will show that for $c<c_2^*(d,\theta)$ the exponent is positive, whereby the claim follows from parts (i) and (iii). For this, observe that
\begin{align*}
    \theta + (1-\theta)d\log\bc{(1 - \exp(-d/c) \frac{(d/c)^{t-1}}{(t-1)!} } >0 \quad \Longleftrightarrow \quad c < c_2(d/c,\theta).
\end{align*}
By definition of $c_2^*(d,\theta)$, for $c<c_2^*(d,\theta)$, the right hand side is satisfied, so that the claim follows.

\noindent \textbf{Proof of (v):  \\} Suppose that $c > c_{\mathrm{inf}}^{\mathrm{TGT}}(n,\theta,t)$ and $d = r^*(\theta) c$. Then 
\[
k \bc{1 - \frac{\bc{\frac{d}{c}}^{t-1} \exp(-d/c)}{(t-1)!}}^{ d \log(n/k)} = o(1)
\]
is implied by $\theta + (1-\theta)r^*c\log(1-\pr\bc{\Po(r^*) = t-1}) <0$, which is satisfied since $c > c_{\mathrm{inf}}^{\mathrm{TGT}}$.
By part (ii), we conclude that $\abs{\vV_1^+} = o(1)$. \qedhere

\end{proof}

\section{Information-Theoretic Lower Bound} \label{Sec:Lower}
We first outline the derivation of \Thm~\ref{Thm_inf}(a), 
asserting that \whp{} the ground truth $\SIGMA$ cannot be uniquely identified when $m < (1 - \eps)c_{\mathrm{inf}}^{\mathrm{TGT}}k\log\frac{n}{k}$. 

\begin{proposition}[\Thm~\ref{Thm_inf}(a)]
\label{Prop_IT_lower}
If $m<(1-\eps)c_{\mathrm{inf}}^{\mathrm{TGT}}k\log\frac{n}{k}$, then there does not exist any algorithm that, given $\G,\hat\SIGMA,k$, outputs $\SIGMA$ with a non-vanishing probability.
\end{proposition}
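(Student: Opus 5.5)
By Corollary~\ref{Cor_Nishi}(1), it suffices to show that $Z_k(\G,\hat\SIGMA)=\omega(1)$ w.h.p. The plan is to exhibit many competing solutions by swapping the labels of disguised items. Write $m=ck\log(n/k)$ with $c<(1-\eps)c_{\mathrm{inf}}^{\mathrm{TGT}}$ and $\Delta=d\log(n/k)$. Lemma~\ref{Lemma_Delta_too_small_big} already gives $Z_k=\omega(1)$ whenever $d=o(1)$ or $d=\omega(1)$. So we may assume $d=\Theta(1)$ (passing to subsequences along which $d$ converges), and the task splits according to which of the two constraints defining $c_{\mathrm{inf}}^{\mathrm{TGT}}$ is violated. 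By Lemma~\ref{claim_opt_eq}, $c_{\mathrm{inf}}^{\mathrm{TGT}}=\inf_{d}\max\{c_1^*(d),c_2^*(d,\theta)\}$. Hence for our fixed $d$, either $c<c_2^*(d,\theta)$ or $c<c_1^*(d)$.

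\textbf{Case $c<c_2^*(d,\theta)$.} Proposition~\ref{Lemma_V_all}(iv) gives $|\vV_0^+|,|\vV_1^+|=n^{\Omega(1)}$ w.h.p. Take $x\in\vV_1^+$ and $y\in\vV_0^+$ with no common test. Flipping $x$ to $0$ and $y$ to $1$ preserves the Hamming weight $k$. Since neither item has a pivotal test and they share no test, every test count that changes stays on the same side of the threshold. Concretely, a test of $x$ with $Y_a\ge t+1$ drops to $\ge t$, and one with $Y_a\le t-1$ stays $\le t-1$; the argument for $y$ is symmetric. So the flipped configuration has the same $\hat\SIGMA$. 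Each item has only $O(\Delta^2 n/m)$ test-neighbours, far fewer than $n^{\Omega(1)}$, so there are $n^{\Omega(1)}$ disjoint such pairs. This yields $Z_k\ge n^{\Omega(1)}$.

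\textbf{Case $c<c_1^*(d)$.} Here $c<c_1(d/c)$, i.e.\ $m H(q)<(1-\eps')k\log(n/k)$ with $q=\pr(\Po(d/c)\le t-1)$. I would run the counting argument of \cite[Section IV]{aco_2019}. Lemma~\ref{Lemma_m0} shows that w.h.p.\ $\bfm_{<t}=(1+o(1))qm$. Hence the number of possible outcome vectors that can plausibly occur is at most $\binom{m}{(q\pm o(1))m}=\exp((1+o(1))mH(q))$, which is at most $\binom{n}{k}^{1-\eps''}$ since $\log\binom nk=(1+o(1))k\log(n/k)$. A pigeonhole/entropy argument then applies: the $\binom nk$ configurations are spread over so few typical outcomes that, w.h.p.\ over $\SIGMA$, the true outcome is shared by $\omega(1)$ configurations. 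Formally, the set of configurations lying in outcome classes of size at most $L$ has total mass at most $L\exp((1+o(1))mH(q))/\binom nk=o(1)$. Its complement carries the typicality event, which holds w.h.p.

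The main obstacle is the second case. There one must carefully define the typical outcome set, with the constraint that $\bfm_{<t}$ concentrates holding uniformly in $\SIGMA$ for a fixed design $\G$. The transfer from $\vX$ to $\vY$ through Lemma~\ref{Lemma_Elemma} and the factor $\pr(\cE\mid\vGamma)^{-1}$ must also be handled. The boundary situation $c\approx c_1^*(d)$ needs the $(1-\eps)$ slack; I would use it to absorb the $n^{-\Omega(1)}$ error terms.
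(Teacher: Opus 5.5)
Your proposal is correct and follows the paper's architecture. You reduce to $\Delta=d\log(n/k)$ via Lemma~\ref{Lemma_Delta_too_small_big}, then split through Lemma~\ref{claim_opt_eq} into $c\le(1-\eps)c_2^*(d,\theta)$ or $c\le(1-\eps)c_1^*(d)$. In the first case you swap a disguised defective with a disguised non-defective, as in Corollary~\ref{Cor:Lower}.

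Your restriction there to pairs with no common test is unnecessary. In a shared test the swap leaves the defective count unchanged, so every pair in $\vV_0^+\times\vV_1^+$ works. Your justification of the restriction is also loose. $\Delta^2n/m=\Theta(n^{1-\theta}\log n)$ is itself polynomial and can exceed $k\ge|\vV_1^+|$, so the "far fewer" claim needs the stronger bound $|\vV_0^+|\ge n^{1-\theta+\kappa}$ from the appendix, not just $n^{\Omega(1)}$.

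The genuine difference is in the $c_1^*$ branch. The paper's Lemma~\ref{Lem:Counting} works in the independent model:
\begin{itemize}
\item it applies Hoeffding to the log-likelihood $-\sum_i\log p_{\vZ_i\mid\vGamma_i}(\vZ_i)$;
\item it replaces each $H(\vec p_i)$ by a single entropy via Poisson total-variation bounds;
\item it bounds the typical set by $\exp(m(H(p)+\eps'))$;
\item it transfers back to $\hat\SIGMA$ at a cost of $\pr(\cE\mid\vGamma)^{-1}=O(\sqrt{\Delta k})$.
\end{itemize}
You instead count vectors in $\{0,1\}^m$ with $(1+o(1))qm$ zeros, using only the concentration of $\bfm_{<t}$ from Lemma~\ref{Lemma_m0}. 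You then conclude via $\pr(Z_k\le L)\le o(1)+L\exp(mH(q)+o(m))/\binom nk$.

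This route is shorter, and the two obstacles you flag do not arise. Your typical set does not depend on $\G$ or $\SIGMA$. Lemma~\ref{Lemma_m0} is already a joint w.h.p.\ statement over $(\G,\SIGMA)$ with the $\vX\to\vY$ transfer built in.

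What the paper's route buys is robustness. By concavity, $\sum_iH(\vec p_i)\le mH(\bar p)$, so the per-test entropy stays sharp for heterogeneous test degrees. Type counting relies on the near-regularity of the constant-column design.

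For the slack you need, consider $f(c)=cH(\pr(\Po(d/c)\le t-1))$. It is continuous, tends to $0$ as $c\to0$, and is $<1$ on $(0,c_1^*(d))$. Hence it is bounded away from $1$ on $(0,(1-\eps)c_1^*(d)]$, which gives $cH(q)\le1-\eps'$. Evaluating $q$ at the actual ratio $d/c$, as you do, is the right choice. The paper's $p$, taken at $d/c_1^*(d)$, needs exactly this adjustment.
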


The form \eqref{def_minf} of $c_{\mathrm{inf}}^{\mathrm{TGT}}$ as a maximum over two different expressions already hints at two different mechanisms being at work. In this sense, we establish the information-theoretic lower bound by two different considerations.

Section \ref{SSec:LowerCounting} employs a rather classical information-theoretic argument \cite{AJS_book}, but adapted to the constant-column design.
Section~\ref{SSec:LowerDisguised} on the other hand builds on ideas of Coja-Oghlan et al.~\cite{aco_2019}: 
We use Proposition \ref{Lemma_V_all} to show that if the number of tests $m$ is less than $c_{\mathrm{inf}}^{\mathrm{TGT}}k\log\frac{n}{k}$, both disguised defectives and disguised non-defectives 
occur abundantly \whp{} We will use these items to construct assignments that are consistent with the test results, but different from $\SIGMA$. 
The two parts together yield our information-theoretic lower bound,
as concluded in  Section~\ref{SSec:Proof_Prop_IT_lower}.

\subsection{Information-Theoretic Lower Bound for Constant-Column Design} \label{SSec:LowerCounting}

We begin by modifying a standard counting argument to our $\Delta$-regular test setting.
While the general strategy is well-known (see e.g. \cite{AJS_book}), the application to our test design appears to be new. 

\begin{lemma}\label{Lem:Counting}
     Let $\theta \in (0,1)$, $d>0$, $\G(n,m,\Delta)$ be the item-regular test design with $\Delta = d \log(n/k)$ and $c_1^*=c_1^*(d)=\inf\cbc{c >0: c\geq c_1(d/c)}$. Then for any $\eps >0$ it holds true that if
     \begin{align}
         m \leq \bc{1-\eps} c_1^*(d) k \log \frac{n}{k},
     \end{align}
    then every function $\cA_{\G}:\{0,1\}^m \to \{0,1\}^n$ has $\pr\bc{\cA_{\G}(\G, \hat \SIGMA) = \SIGMA} = o(1)$. 
\end{lemma}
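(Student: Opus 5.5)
The plan is a counting (entropy) argument adapted to the item-regular design. The point is that the outcome vector $\hat\SIGMA$ carries too little information to pin down $\SIGMA$. By Lemma~\ref{Lemma_m0}, w.h.p.\ over $\G$ and $\SIGMA$ the number of negative tests satisfies
\[
\bfm_{<t}\in\brk{\bc{1\pm n^{-\Omega(1)}}\,m\,q},\qquad q:=\pr\bc{\Po(d/c)\leq t-1},
\]
where $c=m/(k\log(n/k))$. Hence w.h.p.\ $\hat\SIGMA$ lies in the set $\mathcal{T}$ of binary vectors of length $m$ whose Hamming weight lies in that window. Its size is at most
\[
\abs{\mathcal{T}}\leq \exp\bc{m H(q)+o(m)}.
\]
I would use the standard binomial-coefficient bound $\binom{m}{w}\leq \exp(mH(w/m))$ together with continuity of $H$ and a factor $m$ for the number of admissible weights.

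Next comes the standard argument. Fix $\G$ and let $\cA_\G$ be any decoder. It succeeds on $\SIGMA$ only if $\SIGMA=\cA_\G(\hat\SIGMA_\G(\SIGMA))$. So the set of $\sigma$ of weight $k$ that are decoded correctly and whose outcome lies in $\mathcal{T}$ injects into $\mathcal{T}$. Since $\SIGMA$ is uniform over $\binom nk$ vectors, I would bound
\[
\pr\bc{\cA_\G(\hat\SIGMA)=\SIGMA}\leq \frac{\abs{\mathcal{T}}}{\binom nk}+\pr\bc{\hat\SIGMA\notin\mathcal{T}}.
\]
The second term is $o(1)$ by Lemma~\ref{Lemma_m0}, after averaging over $\G$ and using Markov to pass to most fixed $\G$. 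For the first term, $\log\binom nk=(1+o(1))k\log(n/k)$, so it suffices to show $mH(q)\leq(1-\eps')k\log(n/k)$, i.e.\ $cH(q)\leq 1-\eps'$.

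The key step is translating this into the hypothesis $c\leq(1-\eps)c_1^*(d)$. Note that $q$ depends on $c$ through $r=d/c$, and $c\geq c_1(d/c)$ is exactly $cH(\pr(\Po(d/c)\leq t-1))\geq1$. By definition of $c_1^*(d)$ as an infimum, every $c<c_1^*(d)$ satisfies $cH(q(d/c))<1$. I need a quantitative margin, and I would obtain it by monotonicity. As $c$ increases with $d$ fixed, $r=d/c$ decreases, and the product $c\cdot H(q(d/c))=d\cdot H(q(r))/r$. On a compact range $c\in[\delta,(1-\eps)c_1^*]$, continuity and strict inequality give a uniform margin $1-\eps'$. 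Very small $c$ is trivial, since then $cH\leq c\log 2$ is tiny.

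The main obstacle is precisely this margin: showing that $c\mapsto cH(q(d/c))$ stays bounded away from $1$ below $(1-\eps)c_1^*$. This needs care, because the set $\{c\geq c_1(d/c)\}$ might a priori not be an interval. The fallback is to use only the infimum definition, namely that the strict inequality holds for all $c<c_1^*$, together with compactness and continuity. One also has to handle the $n^{-\Omega(1)}$ fluctuations in $\bfm_{<t}$, which are absorbed into the $o(m)$ term, and multi-edges, which do not affect the counting since test outcomes are defined with multiplicities.
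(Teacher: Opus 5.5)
Your proof is correct. It uses the same information-theoretic pigeonhole as the paper, with lighter bookkeeping.

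\textbf{Different bookkeeping.} The paper passes to independent loads $\vX_j\sim\Bin(\vGamma_j,k/n)$ and pays the factor $O(\sqrt{\Delta k})$ from $\pr(\cE\mid\vGamma)$. It then builds a typical set for the product of non-identical Bernoulli outcomes, using Hoeffding's inequality on the log-likelihood. You instead read off the concentration of $\bfm_{<t}$ from Lemma~\ref{Lemma_m0}, which already contains that decoupling, and count the Hamming-weight window. Your injection bound $\pr(\text{success})\leq\abs{\mathcal{T}}/\binom nk+\pr(\hat\SIGMA\notin\mathcal{T})$ also finishes more cleanly than the paper's pigeonhole sentence. Since $\mathcal{T}$ does not depend on $\G$, you can simply average over $\G$; no Markov step is needed.

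\textbf{The margin.} This is the more substantive difference. The paper reduces without loss of generality to $m=(1-\eps)c_1^*(d)k\log(n/k)$. It centres its typical set at $H(p)$ with $p=\pr(\Po(d/c_1^*(d))\leq t-1)$ and concludes via $c_1^*(d)H(p)=1$. However, the mean load of a test is $\Delta k/m=d/c$ with $c=(1-\eps)c_1^*(d)$, not $d/c_1^*(d)$. So the entropy that actually governs $\hat\SIGMA$ is $H(\pr(\Po(d/c)\leq t-1))$. The inequality one genuinely needs is yours: $\sup_{0<c\leq(1-\eps)c_1^*(d)}cH(\pr(\Po(d/c)\leq t-1))<1$. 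Your argument (infimum, then continuity, then compactness) proves exactly this and covers every $m$ below the bound at once.

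\textbf{Two small corrections.}
\begin{enumerate}
\item Drop the appeal to monotonicity. For $t\geq2$ the map $c\mapsto cH(\pr(\Po(d/c)\leq t-1))$ tends to $0$ as $c\to\infty$ and is not monotone, so only the compactness argument is valid.
\item For $c$ close to $0$, bound $\abs{\mathcal{T}}$ by $2^m$ instead of invoking Lemma~\ref{Lemma_m0}. Then that lemma is only used for $c$ in a compact interval bounded away from $0$, where its estimates hold uniformly.
\end{enumerate}
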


\begin{proof}
W.l.o.g. assume that $c_1^*(d) < \infty$ and that $ m = \bc{1-\eps} c_1^*(d) k \log \frac{n}{k}$.
Let the random variables $\vY=(\vY_1, \ldots, \vY_m)$ and $\vX=(\vX_1,\ldots, \vX_m)$ be defined as in Lemma~\ref{Lemma_Elemma}. As the test results can be read off from the numbers of defective items per test, we can write $\hat\SIGMA =f(\vY)$ for an appropriate function $f$. Let $\SIGMA':=f(\vX)$ for the same function $f$. Thus Lemma~\ref{Lemma_Elemma} implies that $\hat\SIGMA \stackrel{d}{=} \SIGMA' \mid \cE$ given $\vGamma$. Hence, for any $A \subseteq \{0,1\}^m$,
\begin{align}
\pr(\hat\SIGMA \in A \mid \vGamma)
= \pr(\SIGMA' \in A \mid \vGamma, \cE)
= \frac{\pr(\SIGMA' \in A, \cE \mid \vGamma)}{\pr(\cE \mid \vGamma)}
\le \frac{\pr(\SIGMA' \in A \mid \vGamma)}{\pr(\cE \mid \vGamma)}.
\end{align}
Using $\pr(\cE \mid \vGamma) = \Omega(1/\sqrt{\Delta k})$, we obtain
\begin{align} \label{eq_approx_indep}
\pr(\hat\SIGMA \in A \mid \vGamma)
\le O(\sqrt{\Delta k}) \pr(\SIGMA' \in A \mid \vGamma).
\end{align}

Conditioned on $\vGamma$, the random variables $\vX_1, \ldots, \vX_m$ are independent, hence
\begin{align}
\pr(\SIGMA' = \hat\sigma \mid \vGamma)
= \prod_{i=1}^m \pr\bc{\vecone\{\vX_i < t\} = \hat\sigma_i \mid \vGamma_i}, \qquad \hat \sigma \in \{0,1\}^m.
\end{align}
Conditionally on $\vGamma$, let $\vZ=(\vZ_1,\ldots,\vZ_m)$ be a random vector with independent $\Be(\vec p_i)$ components, where
$$ \vec p_i = \pr\bc{\Bin(\vGamma_i, k/n) \le t-1 \mid \vGamma_i}.$$ 
Then
\begin{align}
\pr(\SIGMA' = \hat\sigma \mid \vGamma)
= \prod_{i=1}^m \pr\bc{\vec Z_i = \hat\sigma_i \mid \vGamma_i}.
\end{align}
Let $p_{\vZ_i\vert\vGamma_i}(\cdot) = \pr\bc{\vec Z_i = \cdot \mid \vGamma_i}$ denote the p.m.f. of $\vZ_i$ conditioned on $\vGamma_i$. Then for each outcome of $\vGamma$, $(-\log p_{\vZ_i\vert \vGamma_i}(\vZ_i))_{i \in [m]}$ is a sequence of independent random variables, where  
$-\log p_{\vZ_i\vert \vGamma_i}(\vZ_i)$ takes values in an interval of length $\vL_i = |\log(\vec p_i/(1 - \vec p_i))|$.
Hoeffding's inequality thus yields
\begin{align*}
 \pr\!\bc{\abs{\Erw\!\brk{\sum_{i=1}^m \log p_{\vZ_i\vert \vGamma_i}(\vZ_i)\Big\vert \vGamma} - \sum_{i=1}^m \log p_{\vZ_i\vert \vGamma_i}(\vZ_i) } > \eps'm \Big\vert \vGamma} \le 2 \exp\bc{- \frac{2(\eps')^2m^2}{\sum_{i=1}^m \vec L_i^2}}.
\end{align*}
Moreover, $\sum_{i=1}^m \vec L_i^2 = O(m)$ for each $\vGamma$ satisfying \eqref{eq_minmax_G}, so that 
\begin{align}\label{eq_log_conc}
 \pr\bc{\abs{\Erw\brk{\sum_{i=1}^m \log p_{\vZ_i\vert \vGamma_i}(\vZ_i)\Big\vert \vGamma} - \sum_{i=1}^m \log p_{\vZ_i\vert \vGamma_i}(\vZ_i) } > \eps'm \Big\vert \vGamma} \le  \exp\bc{- \Omega(m)}.
\end{align}
Next, set
$$p = \pr\bc{\Po(d/c_1^*(d)) \le t-1}.$$
Then, by the triangle inequality and mean value theorem,
\begin{align} \label{eq_approx_H}
    \abs{\Erw\brk{\sum_{i=1}^m -\log p_{\vZ_i\vert \vGamma_i}(\vZ_i) \Big\vert \vGamma} - m H(p) } & \leq \sum_{i=1}^m \abs{H(\vec p_i) -  H(p) } \nonumber \\
    & \leq  \sum_{i=1}^m \abs{H'(\vec \xi_i) \cdot \bc{\vec p_i - p}}, 
\end{align}
where $\vec \xi_i \in \bc{\vec p_i \wedge p, \vec p_i \vee p}$.
By \cite[Theorem 2.10]{Hofstad_2016}, for fixed $\Gamma$, the total variation distance between a binomial and Poisson variable is upper bounded by
$$\mathrm{d}_{\mathrm{TV}}\bc{\Bin(\Gamma, k/n), \Po(\Gamma k/n)} \le \Gamma (k/n)^2.$$
Again, it follows from \eqref{eq_approx_H} that for $\vGamma$ satisfying \eqref{eq_minmax_G}
\begin{align} \label{eq_approx_H1}
    \abs{\Erw\brk{\sum_{i=1}^m -\log p_{\vZ_i\vert \vGamma_i}(\vZ_i) \Big\vert \vGamma} - m H(p) } = O\bc{m \cdot \frac{\Delta n}{m}\cdot \bc{\frac{k}{n}}^2} = o(m).
\end{align}
Equations \eqref{eq_log_conc} and \eqref{eq_approx_H1} together imply that for $\vGamma$ satisfying \eqref{eq_minmax_G} and any $\eps'>0$,
\begin{align}\label{eq_log_conc2}
 \pr\bc{\abs{\sum_{i=1}^m -\log p_{\vZ_i\vert \vGamma_i}(\vZ_i) - mH(p)} > \eps'm \Big\vert \vGamma} \le  \exp\bc{- \Omega(m)}.
\end{align}

Finally, let $ \cA_{\eps'}(\vGamma) = \cbc{\hat\sigma \in \{0,1\}^m:\abs{\sum_{i=1}^m -\log p_{\vZ_i\vert \vGamma_i}(\hat\sigma) - mH(p)} \leq \eps'm}$.
As for all $\hat \sigma \in \cA_{\eps'}(\vGamma)$, $\pr\bc{\vZ = \hat \sigma\vert \vGamma} \geq \exp(-m(H(p)-\eps'))$, we obtain that, for all $\vGamma$,
\begin{align}
|\cA_{\eps'}(\vGamma)| \le \exp\bc{m(H(p)+\eps')}.
\end{align}

Substituting $m = (1-\eps)c_1^*(d)k\log(n/k)$ and using the definition of $c_1^*(d)$ and continuity, we have $c_1^*(d)H(p) = c_1^*(d)/c_1(d/c_1^*(d)) =  1$. Choosing $\eps' = \eps/(2c_1^*(d))$,
we obtain
\begin{align}
m(H(p)+\eps')
= (1-\eps)c_1^*(d)k\log(n/k)\bc{H(p)+\tfrac{\eps}{2c_1^*(d)}}
\le (1-\tfrac{\eps}{2}) k \log(n/k).
\end{align}
Hence for $\vGamma$ satisfying \eqref{eq_minmax_G}, \whp \ $\SIGMA'$ takes at most $\exp((1-\eps/2)k\log(n/k))$ distinct values. Thanks to \eqref{eq_approx_indep}, this transfers to $\hat\SIGMA$.

Finally, fix $\vGamma$ that satisfies \eqref{eq_minmax_G}, which is independent of $\SIGMA$. By Lemma~\ref{SmallBinomial}, the number of potential values of the ground truth $\SIGMA$ is $\binom{n}{k} = \exp(k\log(n/k)(1+o(1)))$. By the pigeonhole principle, \whp\, there exists a subset of $\{0,1\}^n$ of size at least $\exp\bc{(\eps/2) k\log(n/k)(1-o(1))}$ that yields the same test results.
Therefore, as any inference rule $\cA_{\G}$ can only be correct on at most one of these values,
$$\pr(\cA_{\G}(\G,\hat\SIGMA)=\SIGMA)
\le \exp(-(\eps/2)k\log(n/k)(1-o(1))) + o(1) = o(1). \qedhere$$
\end{proof}

\subsection{Disguised Items} \label{SSec:LowerDisguised}

For the second part of the lower bound, we inspect the threshold upon which the item-regular pooling scheme $\G(n,m,\Delta)$ displays disguised items. 
An argument of this form also appears in {\cite[Corollary IV.2]{aco_2019}}:

\begin{corollary} \label{Cor:Lower}
     Let $\theta \in (0,1)$, $d>0$, $\G(n,m,\Delta)$ be the item-regular test design with $\Delta = d \log(n/k)$ and $c_2^*=c_2^*(d, \theta)=\inf\cbc{c>0: c\geq c_2(d/c,\theta)}$. Then for any $\eps >0$ it holds true that if
     \begin{align}
         m \leq \bc{1-\eps} c_2^*(d) k \log \frac{n}{k},
     \end{align}
    every function $\cA_{\G}:\{0,1\}^m \to \{0,1\}^n$ has $\pr\bc{\cA_{\G}(\G, \hat \SIGMA) = \SIGMA} = o(1)$.
\end{corollary}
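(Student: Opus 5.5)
Our plan is to combine Proposition~\ref{Lemma_V_all}(iv) with a swap construction and then apply Corollary~\ref{Cor_Nishi}(1). Fix $d>0$ and $\eps>0$, and write $m=ck\log(n/k)$ with $c\le(1-\eps)c_2^*(d,\theta)<c_2^*(d,\theta)$. By Proposition~\ref{Lemma_V_all}(iv), w.h.p.\ $\abs{\vV_0^+},\abs{\vV_1^+}\ge n^{\kappa}$ for some $\kappa>0$. Given a disguised defective $x\in\vV_1^+$ and a disguised non-defective $y\in\vV_0^+$, let $\sigma'$ be $\SIGMA$ with the labels of $x$ and $y$ exchanged. Then $\sigma'$ again has Hamming weight $k$. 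The goal is to show that $\sigma'$ produces the same test results as $\SIGMA$ on $\G$ for many such pairs, so that $Z_k(\G,\hat\SIGMA)=\omega(1)$ w.h.p. Corollary~\ref{Cor_Nishi}(1) then gives success probability $o(1)$ for every $\cA_\G$.

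For the consistency check, fix a test $a$. If $a$ contains neither $x$ nor $y$, nothing changes. Suppose $a\ni x$ but $a\not\ni y$. Since $x$ is disguised, $a$ is not pivotal for $x$, so the other defectives in $a$ number $s\neq t-1$. Thus $\mathbb 1\{s+1\ge t\}=\mathbb 1\{s\ge t\}$, and removing $x$ leaves the outcome unchanged. Symmetrically, if $a\ni y$ but $a\not\ni x$, then $y$ appearing once and not pivotal means adding $y$ does not alter the outcome. If $a$ contains both $x$ and $y$, each exactly once, the defective count is unchanged. Hence the swap is valid as long as each of the two single flips is individually harmless. One subtlety is that disguisedness is defined relative to $\SIGMA$, so after flipping $x$ the pivotality status of $y$ in a shared test might change. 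If $a$ contains both, the net count is preserved, so this case is fine. Each single-element case uses only the original counts, so every case is covered. The number of competing solutions is therefore at least $\abs{\vV_0^+}\cdot\abs{\vV_1^+}\ge n^{2\kappa}=\omega(1)$.

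The main obstacle is essentially bookkeeping, since the real work is already done in Proposition~\ref{Lemma_V_all}(iv). The point to verify carefully is that $c<c_2^*(d,\theta)$ yields the positive exponent $\theta+(1-\theta)d\log(1-\pr(\Po(d/c)=t-1))>0$. This follows from the definition of $c_2^*$ as an infimum, which is exactly how part (iv) was argued, so we simply invoke it. We must also handle the multiplicities allowed by the with-replacement design. Definition~\ref{Def:Disguised} already restricts to items appearing once in each of their tests, so the case analysis above is legitimate.
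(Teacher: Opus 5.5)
Your proposal is correct and follows the paper's proof essentially step for step. You invoke Proposition~\ref{Lemma_V_all}(iv) to get $|\vV_0^+|,|\vV_1^+| = n^{\Omega(1)}$ w.h.p., swap one disguised defective with one disguised non-defective to obtain $|\vV_0^+|\cdot|\vV_1^+|$ distinct weight-$k$ assignments consistent with $\hat\SIGMA$, and conclude via Corollary~\ref{Cor_Nishi}(1); your explicit case check (tests containing only $x$, only $y$, or both) verifies that the swap preserves the test results, a step the paper only asserts.
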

\begin{proof}
Suppose that $m$ is as in the statement of the corollary, such that $c$ from \eqref{Eq:Param} satisfies $c < c_2(d/c,\theta)$. By Proposition~\ref{Lemma_V_all}(\ref{Lemma_V++}), w.h.p., $|\vV_0^+|, |\vV_1^+| = n^{\Omega(1)}$. The existence of large sets $\vV_0^+$ and $\vV_1^+$ guarantees that, starting from the ground truth $\SIGMA$, we can select one item from each set and swap their labels to obtain a competing solution $\SIGMA'$ that produces the same test results $\hat{\SIGMA}$ under the pooling scheme $\vec{G}$: For any $x_i \in \vV_0^+$ and $x_j \in \vV_1^+$, the assignment $\SIGMA'$ defined by interchanging the labels of $x_i$ and $x_j$ (i.e., setting $\SIGMA'_{x_i} = 1$,  $\SIGMA'_{x_j} = 0$ and $\SIGMA'_{x_{\ell}} = \SIGMA_{x_{\ell}}$ for $\ell \notin\{i,j\}$) has Hamming weight $k$, and yields the same test outcomes as $\SIGMA$.  
Hence, 
$$Z_k(\G,\hat\SIGMA) \geq |\vV_0^+ \times \vV_1^+| = n^{\Omega(1)} \quad \whp \qedhere$$ 
\end{proof}

\subsection{Proof of Proposition~\ref{Prop_IT_lower}} \label{SSec:Proof_Prop_IT_lower}
Let $\G = \G(n,m,\Delta)$ be the constant-column test design with item-degree $\Delta$. By Lemma~\ref{Lemma_Delta_too_small_big}, it is sufficient to restrict to $\Delta$ of the form $\Delta = d \log\frac n k$ for $d>0$.  For fixed $d$, Lemma~\ref{Lem:Counting} and Corollary~\ref{Cor:Lower} imply that no inference procedure recovers the ground truth correctly with non-vanishing probability if $c < \max\{c_1^*(d), c_2^*(d,\theta)\}$. Thus, for $c<\inf_d \max\{c_1^*(d), c_2^*(d,\theta)\}$ there is no item-regular pooling scheme on $m=ck\log(n/k)$ tests that recovers the ground truth with non-vanishing probability. By Claim \ref{claim_opt_eq}, this expression equals $ c_{\mathrm{inf}}^{\mathrm{TGT}}$.  
\qed
\section{Information-Theoretic Upper Bound} \label{Sec:Upper}

We now address the achievability result from Theorem \ref{Thm_inf}(b), the derivation of the information-theoretic upper bound. 
Our objective is to prove that for a certain choice of $\Delta$, $Z_k(\G,\hat\SIGMA) = 1$ \whp, indicating that $\SIGMA$ is the sole assignment compatible with the observed test outcomes.

\begin{proposition}\label{prop_upper_main}
Fix $0<\theta<1$, $t \in \mathbb{N}^+$ and $\eps>0$ and recall the unique optimiser $r^*$ of \eqref{def_minf}. Assume that either $t=2$ or that $t \geq 3$ and Conjecture~\ref{Conj} holds. Then for any number of tests $m=c k \log(n/k)$ with $c\geq (1+\eps)\cinf(n,\theta,t)$, there exists an algorithm on the constant column pooling scheme $\G(n,m,\Delta)$ with $\Delta = r^*c \log(n/k)$ that, given $\G,\hat\SIGMA, k=n^{\theta}$, outputs $\SIGMA$ with high probability. 
\end{proposition}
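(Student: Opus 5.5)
By Corollary~\ref{Cor_Nishi}(2), it suffices to show that $Z_k(\G,\hat\SIGMA)=1$ \whp\ for $\Delta=r^*c\log(n/k)$ and $c\ge(1+\eps)\cinf$. The algorithm is then exhaustive search: output the unique weight-$k$ vector consistent with $\hat\SIGMA$. A competing solution $\sigma\neq\SIGMA$ of weight $k$ is determined by the sets $A\subseteq\vV_1$ (defectives switched off) and $B\subseteq\vV_0$ (non-defectives switched on), with $|A|=|B|=\ell\ge1$. I would split according to the overlap $\ell$. A small threshold $\ell\le \delta k$ is handled by an expansion/rigidity argument. A large range $\ell\ge \delta k$ is handled by a first moment bound that is uniform in the fraction $\alpha=1-\ell/k$ of shared defectives.

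\textbf{Large $\ell$ (first moment).} Let $Z_\ell$ count consistent $\sigma$ at overlap $k-\ell$. Using the decoupling of Lemma~\ref{Lemma_Elemma} (paying the $O(\sqrt{\Delta k})$ factor from \eqref{eqEprob}, which is negligible), each test independently sees $\Po$-type counts: common defectives $\vB\sim\Po(\alpha r^*)$, defectives only in $\SIGMA$ and only in $\sigma$ $\vX_1,\vX_2\sim\Po((1-\alpha)r^*)$. Consistency requires $\vB+\vX_1\le t-1$ iff $\vB+\vX_2\le t-1$. The probability of this per test, conditioned on the observed fraction $q(r^*)$ of negative tests, gives exactly the bracket in Conjecture~\ref{Conj}. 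Combining with $\binom{k}{\ell}\binom{n-k}{\ell}\approx\exp((1-\alpha)k\log(n/k))$ yields
\[\tfrac{1}{k\log(n/k)}\log\Erw[Z_\ell]\le \phi(\alpha)+o(1),\]
with $\phi$ the function in Conjecture~\ref{Conj}. For $t=2$ I would verify directly that $\phi$ is maximised at $\alpha=0$, with a short explicit computation of $p(r^*,\alpha)$. For $t\ge3$ this is the conjecture itself. At $\alpha=0$ one has $p=q^2$, and $\phi(0)=1-c\,H(q)\cdot(\ldots)<0$ precisely because $c>c_1(r^*)$, using $c\ge(1+\eps)\cinf\ge(1+\eps)c_1(r^*)$. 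Hence $\phi(\alpha)\le\phi(0)<0$ on $[0,1-\delta]$, and a union bound over $\ell$ kills this range. Some care is needed near $\alpha\to1$ (small $\ell$), where the leading-order exponent degenerates. That is why the cutoff at $\delta k$ is used.

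\textbf{Small $\ell$ (expansion).} Following \cite[Section III]{aco_2019}, I would show \whp\ that every defective $x$ has $\Omega(\Delta)$ pivotal tests. This uses Proposition~\ref{Lemma_V_all}(v) together with a Chernoff strengthening, since $c>c_2(r^*,\theta)$ gives slack in the exponent, so $\vV_1^+=\emptyset$ even with a constant fraction of pivotal tests required. I would also show that the bipartite graph restricted to small sets $A\subseteq\vV_1$ expands: pivotal tests of different defectives in $A$ are mostly distinct, and any non-defective can cover only $o(\Delta)$ pivotal tests of $A$ collectively unless it is adjacent to many of them, which is unlikely for $|A|\le\delta k$. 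Switching off $A$ makes its pivotal tests negative. To restore them, $B$ must hit each such test, so $|B|\Delta\ge\Omega(|A|\Delta)$ edges must land in a specific set of $O(|A|\Delta)$ tests. A union bound over $B$, in the spirit of the lottery-free estimates of \cite{aco_2019}, shows that this is impossible.

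\textbf{Main obstacle.} I expect the main obstacle to be this small-overlap step. Switching on $B$ can also make previously negative tests with $t-1$ defectives positive, which must be compensated by $A$. Because of the threshold, a test can stay unchanged while both $A$ and $B$ hit it. Adapting the expansion counting to track simultaneously the tests with $t-1$ defectives and those with exactly $t$ defectives is where I would spend the effort, and I would first try bounding the number of tests touched by both $A$ and $B$ by $o(\ell\Delta)$ via a union bound.
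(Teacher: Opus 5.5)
Your proposal is correct and follows the paper's architecture: Corollary~\ref{Cor_Nishi}(2), a split by overlap, a first moment driven by Conjecture~\ref{Conj} (Lemma~\ref{Lem:T2} for $t=2$) away from full overlap, and the rigidity event $\cR$ of Proposition~\ref{Lemma_rigid} with a conditional first moment (Lemma~\ref{Lemma_big_overlap}) near full overlap.

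\textbf{The cutoff.} The substantive difference is where you hand over to rigidity. The paper does so only for $s=k-\ell\le k/\log n$ swaps; you do so already at $s\le\delta k$, and your choice is the more robust one.
\begin{itemize}
\item Since $p(r^*,1)=q(r^*)$, one has $\phi(1)=0>\phi(0)=1-cH(q)$, and $\phi(\alpha)\to0$ as $\alpha\to1$. So ``$\phi$ is maximised at $\alpha=0$'' fails near $\alpha=1$, and also on $[0,1-\delta]$ for small $\delta$. Do not try to prove it for $t=2$.
\item What you need is only $\sup_{[0,1-\delta]}\phi<0$. This follows from the computation in Lemma~\ref{Lem:T2}, which gives $\phi\le0$ at $c=1/H(q)$. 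Add to it that $\phi$ is affine in $c$, and that by Jensen the bracket is at most $\log\bigl(1-2(q-p(r^*,\alpha))\bigr)<0$ for $\alpha<1$.
\item With the paper's cutoff $1-1/\log n$, by contrast, the first moment must be controlled exactly where $\phi\to0$.
\end{itemize}
The price of your cutoff is that rigidity must handle linearly many swaps, and it does. Consider the bound $\binom{k}{s}\binom{n-k}{s}\bigl(1-\eul^{-r^*s/k+o(1)}\bigr)^{s\delta_{\cR}\Delta/t}$ from the proof of Lemma~\ref{Lemma_big_overlap}, where $\delta_{\cR}$ is the constant of Proposition~\ref{Lemma_rigid}. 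Its last factor is at most $\exp\bigl(-s(\delta_{\cR}r^*c/t)\log(n/k)\log(k/(r^*s))\bigr)$. This beats $\binom{k}{s}\binom{n-k}{s}$ and the factor $O((\Delta k)^{3/2})$ uniformly for $1\le s\le\beta_0k$, for a small enough constant $\beta_0$. Choose $\delta\le\beta_0$; the first moment then covers $[0,1-\beta_0]$ with a uniform negative exponent.

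\textbf{Your ``main obstacle'' is not one.} You only need an upper bound on the probability that $\sigma$ is consistent. So you may discard every constraint except one: each test with exactly $t$ defectives of $\SIGMA$ that meets $A$ must contain an item of $B$.
\begin{itemize}
\item On $\cR$ there are at least $|A|\delta_{\cR}\Delta/t$ such tests. Each contains at most $t$ items of $A$, so no distinctness or expansion claim is needed.
\item In the independent balls-into-bins model, each such test contains a $B$-item independently with probability at most $1-(1-|B|/(n-k))^{\vGamma_{\max}}$.
\end{itemize}
Tests with $t-1$ defectives, and tests hit by both $A$ and $B$, never need to be tracked.
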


Let us now outline the philosophy behind the proof of Proposition~\ref{prop_upper_main}. 
We define the \emph{overlap} 
\begin{align}
    \langle \sigma, \tau\rangle = \sum_{i=1}^n \vecone\cbc{\sigma_{x_i} = \tau_{x_i} =1} 
\end{align}
between two assignments $\sigma, \tau \in \{0,1\}^n$.
Setting
\begin{align} \label{Eq:Zkl}
Z_{k,\ell}(\mathbf{G},\hat{\SIGMA}) = \abs{\{ \sigma \in S_k (\mathbf{G},\hat{\SIGMA}) : \langle\SIGMA, \sigma\rangle = \ell \}}, \qquad  \ell \in \{0, \dots, k\},  
\end{align}
this yields the decomposition
\begin{align*}
    Z_k(\mathbf{G},\hat{\SIGMA}) = \sum_{\ell = 0}^{k} Z_{k,\ell}(\mathbf{G},\hat{\SIGMA}).
\end{align*}
Our proof of Proposition~\ref{prop_upper_main} is structured in two parts, which are similar in spirit to \cite[Propositions III.1 and III.2]{aco_2019}. 
\begin{lemma}[No Small Overlap \whp] \label{Lem:IntTc}
Fix $0<\theta<1$, $t \in \mathbb{N}^+$ and $\eps>0$ and recall the unique optimiser $r^*$ of \eqref{def_minf}. Suppose that $m=c k \log(n/k)$ with $c\geq (1+\eps)\cinf(n,\theta,t)$ and consider the constant column pooling scheme $\G(n,m,\Delta)$ with $\Delta = r^*c \log(n/k)$. Then:
\begin{enumerate}[(a)]
    \item If $t=2$, w.h.p.\ $Z_{k,\ell}(\mathbf{G},\hat{\SIGMA})=0$ for all $0 \leq \ell \leq (1- 1/\log n)k$.
    \item If $t \ge 3$ and Conjecture~\ref{Conj} holds,  w.h.p.\ $Z_{k,\ell}(\mathbf{G},\hat{\SIGMA})=0$ for all $0 \leq \ell \leq (1- 1/\log n)k$.
\end{enumerate}
\end{lemma}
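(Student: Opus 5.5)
We will use a first moment argument. It suffices to show that
\[
\sum_{\ell\le (1-1/\log n)k}\Erw\brk{Z_{k,\ell}(\G,\hat\SIGMA)}=o(1),
\]
and then conclude by Markov's inequality. Fix an overlap $\ell=\alpha k$ and a competing assignment $\sigma$ with $\langle\SIGMA,\sigma\rangle=\ell$. By symmetry,
\[
\Erw Z_{k,\ell}=\binom{k}{\ell}\binom{n-k}{k-\ell}\,\pr(\sigma\text{ consistent with }\hat\SIGMA).
\]
Since $\ell\le(1-1/\log n)k$, the combinatorial prefactor is
\[
\exp\bc{(1+o(1))(1-\alpha)k\log(n/k)}.
\]

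\textbf{Test-by-test structure.} In each test, split the edges into four classes: items in both supports (``shared'', a $\Po(\alpha r)$-like count $\vB$), items in $\SIGMA$ only (count $\vX_1$), items in $\sigma$ only (count $\vX_2$), and the rest. The outcomes of $\SIGMA$ and $\sigma$ agree in a test iff either both $\vB+\vX_1\le t-1$ and $\vB+\vX_2\le t-1$, i.e. $\vB+\max(\vX_1,\vX_2)\le t-1$, or both counts are $\ge t$. With $r=r^*$ the single-test agreement probability is therefore
\[
p+(1-2q+p),\qquad q=\pr(\vE^*\le t-1),\quad p=p(r^*,\alpha).
\]
To handle dependencies we proceed as in Lemma~\ref{Lemma_Elemma}. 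Condition on $\vGamma$ and replace the joint edge counts by independent multinomial/Poisson counts, conditioned on the event that the totals equal $k\Delta$ for each class. This costs only a polynomial factor $O((\Delta k)^{3/2})$, which is $e^{o(k\log(n/k))}$.

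\textbf{Large deviations bound.} Since $\SIGMA$ is given and $\hat\SIGMA$ is typical, the number of negative tests is $\approx qm$ by Lemma~\ref{Lemma_m0}. Hence the probability that $\sigma$ reproduces exactly this pattern is
\[
\exp\bc{m\brk{q\log\frac pq+(1-q)\log\frac{1-2q+p}{1-q}}+o(m)}.
\]
This is a KL-type computation: the conditional probability, given the $\SIGMA$-status of a test, that $\sigma$ agrees. Combining with the prefactor and $m=ck\log(n/k)$ gives
\[
\frac{1}{k\log(n/k)}\log\Erw Z_{k,\alpha k}\le \phi_c(\alpha)+o(1),
\]
where
\[
\phi_c(\alpha)=1-\alpha+c\bc{q\log\frac pq+(1-q)\log\frac{1-2q+p}{1-q}},
\]
which is exactly the function in Conjecture~\ref{Conj}.

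\textbf{Optimisation over $\alpha$.} At $\alpha=0$ we have $\vB=0$ and $p=\pr(\max(\vX_1,\vX_2)\le t-1)=q^2$. Then
\[
\phi_c(0)=1+c\bc{q\log q+(1-q)\log(1-q)}=1-cH(q),
\]
which is $<-\delta$ for $c\ge(1+\eps)c_{\mathrm{inf}}^{\mathrm{TGT}}\ge(1+\eps)c_1(r^*)$. For $t\ge3$, Conjecture~\ref{Conj} gives $\phi_c(\alpha)\le\phi_c(0)<0$ uniformly in $\alpha\in[0,1]$. For $t=2$, this step must be proved by hand, which I expect to be the main obstacle. Here $p(\alpha)=e^{-(2-\alpha)r}(1+r+(1-\alpha)r)\cdot$(correction) has an explicit form, and I would show concavity, or at least that $\phi_c'(\alpha)\le 0$, on $[0,1]$. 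The derivative of the bracket in $\alpha$ is bounded by $\frac{q}{p}-\frac{1-q}{1-2q+p}$ times $p'(\alpha)$, and I would compare it against the constant slope $-1$ using the explicit Poisson formulas and $r^*\in[r_1,r_2]\subset[1.67,1.84]$.

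\textbf{The endpoint $\alpha\to1$.} Near $\alpha=1$ the exponent degenerates, since both terms vanish as $p\to q$. The expansion there is where $c\ge c_2$ enters. Writing $1-\alpha=\beta$, the bracket behaves like $-\beta r\pr(\Po(r)=t-1)\cdot(\ldots)$, and a naive estimate is too lossy. So I would treat $\ell\ge(1-\eta)k$ separately, with a sharper per-swapped-item first moment. In that computation each swapped defective needs a non-pivotal neighbourhood, costing $(1-\pr(\Po(r^*)=t-1))^{\Delta}$ per item, and the per-item gain $\theta$ versus loss $(1-\theta)r^*c\log(1-\pr(\Po(r^*)=t-1))$ is negative precisely because $c>c_2(r^*)$. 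This covers all $\beta\ge 1/\log n$. Finally, a union bound over the at most $k$ values of $\ell$ completes the proof.
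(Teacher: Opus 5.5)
Your first-moment set-up matches the paper's: symmetry, the balls-into-bins decoupling at cost $O((\Delta k)^{3/2})$, the per-test agreement probabilities $p$ and $1-2q+p$, and conditioning on the $\approx qm$ negative tests of $\SIGMA$. Your prefactor $\exp((1+o(1))(1-\alpha)k\log(n/k))$ is also correct for $1-\alpha\ge1/\log n$. The gaps are in the optimisation over $\alpha$.

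\textbf{The case $t=2$.} This is the unconditional part~(a), and you give no argument for it. Moreover, the route you sketch cannot work. Since $p(1)=q$, we have $\phi_c(1)=0>1-cH(q)=\phi_c(0)$, so $\phi_c$ is not non-increasing on $[0,1]$. For the same reason, ``$\phi_c\le\phi_c(0)$ on $[0,1]$'', i.e.\ Conjecture~\ref{Conj} read literally, cannot hold; this clashes with your own remark that $\phi_c$ degenerates at $\alpha=1$.

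The missing idea is the reduction in Lemma~\ref{Lem:T2}. Because $H(q)=-q\log q-(1-q)\log(1-q)$,
\[
H(q)\,\phi_{1/H(q)}(\alpha)=q\log\frac{p(\alpha)}{q^{2-\alpha}}+(1-q)\log\frac{1-2q+p(\alpha)}{(1-q)^{2-\alpha}},
\]
so it suffices to show $p(\alpha)\le q^{2-\alpha}$ and $1-2q+p(\alpha)\le(1-q)^{2-\alpha}$. The second follows from the first: $F_\alpha(x)=2x-1+(1-x)^{2-\alpha}-x^{2-\alpha}$ is concave on $[0,1/2]$ with $F_\alpha(0)=F_\alpha(1/2)=0$, and $q\le1/2$ because $r^*\ge r_1$.

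For $t=2$, set $\beta=2-\alpha$. Then $p(\alpha)=\eul^{-\beta r}\bigl[(1+(\beta-1)r)^2+(2-\beta)r\bigr]$ and $q^{2-\alpha}=\eul^{-\beta r}(1+r)^\beta$. So the first inequality becomes $h(\beta,r)=(1+r)^\beta-(1+(\beta-1)r)^2-(2-\beta)r\ge0$ for $\beta\in[1,2]$ and $r\in[1,2]$, a range containing $r^*$. This holds because $h(1,r)=h(2,r)=0$, $\partial_\beta^3h>0$, and $\partial_\beta h$ is positive at $\beta=1$ and negative at $\beta=2$. Hence $h(\cdot,r)$ rises and then falls, and stays non-negative.

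\textbf{The regime $\alpha\to1$.} You are right that this regime needs care; the paper's write-up, with maximisation at $\alpha=0$ and an additive $O(k)$, glosses over it. Your uniform $o(m)$ error is also too crude there. But your fix is wrong, for two reasons.
\begin{itemize}
\item A swapped-out defective need not have a non-pivotal neighbourhood. Its pivotal tests only have to contain a swapped-in item, which for $1-\alpha$ of constant order happens with constant probability per test.
\item The combinatorial gain per swapped pair in $\binom{k}{\ell}\binom{n-k}{k-\ell}$ is $(1+o(1))(1-\theta)\log n$ on this range, not $\theta\log n$. The quantity $k(1-\pr(\Po(r^*)=t-1))^{\Delta}$ you compare against is essentially $\Erw|\vV_1^+|$ (cf.~\eqref{eq_exp_U1}). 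The paper uses it, via the event $\cR$ of Proposition~\ref{Lemma_rigid}, only for $\ell>(1-1/\log n)k$.
\end{itemize}
In fact $c\ge c_2$ plays no role in this lemma.

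What is needed instead is a bound linear in $1-\alpha$. By Jensen, the bracket is at most $\log\bigl(1-2(q-p(\alpha))\bigr)$. Also $q-p(\alpha)=\pr(\vB^*+\vX_1\le t-1<\vB^*+\vX_2)\ge\kappa(1-\alpha)$ for some $\kappa>0$. Hence, for $c\ge(1+\eps)/H(q)$ and some $\eta>0$,
\[
\phi_c(\alpha)=\phi_{1/H(q)}(\alpha)+\Bigl(c-\frac{1}{H(q)}\Bigr)\Bigl[q\log\frac{p(\alpha)}{q}+(1-q)\log\frac{1-2q+p(\alpha)}{1-q}\Bigr]\le-\eta(1-\alpha).
\]
The actual error terms are $O(\log n)+O(mn^{-\Omega(1)})=o(k)$, while $(1-\alpha)k\log(n/k)\ge(1-\theta)k$. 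This gives $\Erw Z_{k,\ell}\le\exp\bigl(-\Omega((1-\alpha)k\log(n/k))\bigr)$ uniformly for $\ell\le(1-1/\log n)k$, and a union bound over $\ell$ finishes.

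For $t\ge3$, the same reduction works if you take the termwise inequality $p(\alpha)\le q^{2-\alpha}$ as the hypothesis. That is what Appendix~\ref{App:CondUp} actually tests numerically.
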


\begin{proposition}[No Large Overlap \whp] \label{Prop:LrgU}
Fix $0<\theta<1$, $t \in \mathbb{N}^+$ and $\eps>0$ and recall the unique optimiser $r^*$ of \eqref{def_minf}. Suppose that $m=c k \log(n/k)$ with $c\geq (1+\eps)\cinf(n,\theta,t)$ and consider the constant pooling scheme $\G(n,m,\Delta)$ with $\Delta = r^*c \log(n/k)$. Then w.h.p. $Z_{k,\ell}(\mathbf{G},\hat{\SIGMA})=0$ for all $(1- 1/\log{n})k\leq \ell<k$.
\end{proposition}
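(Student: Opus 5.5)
The plan follows the expansion argument of \cite[Proposition III.2]{aco_2019}. Any competing solution $\sigma \in S_k(\G,\hat\SIGMA)$ with $\langle \SIGMA,\sigma\rangle = \ell \geq (1-1/\log n)k$ is obtained from $\SIGMA$ by switching a set $A\subseteq \vV_1$ of $s=k-\ell \le k/\log n$ defectives to $0$ and a set $B$ of $s$ non-defectives to $1$. I will show that w.h.p.\ no such pair $(A,B)$ with $1\le s\le k/\log n$ preserves all test results.

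\textbf{Step 1: structural properties of $\G$.} I first prove the following. Since $c>(1+\eps)\cinf$ and $d=r^*c$, we have
\[
\theta+(1-\theta)r^*c\log\bc{1-\pr\bc{\Po(r^*)=t-1}}\le -\delta<0 .
\]
A union bound, together with Lemma~\ref{Lemma_Elemma} and the counts from Lemma~\ref{Lemma_m0}, then gives that w.h.p.\ every defective $x$ lies in at least $\zeta\Delta$ tests that are pivotal for it, i.e.\ contain exactly $t-1$ other defectives, for some $\zeta=\zeta(\eps)>0$. This is the lottery-phenomenon step. It is the quantitative strengthening of Proposition~\ref{Lemma_V_all}(v): the expected number of defectives with fewer than $\zeta\Delta$ pivotal tests is $k\exp\bc{-\Delta\,\KL{\zeta}{p^*}}$ with $p^*=\pr(\Po(r^*)=t-1)$, and this tends to $0$ for $\zeta$ small by continuity. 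Second, I need an expansion property. W.h.p., for every set $A$ of at most $k/\log n$ defectives, the number of tests containing two or more items of $A$ (or an item of $A$ twice) is at most $\zeta\Delta|A|/4$. I will prove this by a first moment bound over $A$: the expected number of such tests is of order $\Delta^2 s^2/m$, and $\binom{k}{s}$ times the tail is small because $s\Delta/m\le \Delta/(c\log n\cdot\log(n/k))\to 0$.

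\textbf{Step 2: consequences for a swap.} Fix a valid swap $(A,B)$. By Step 1, a set of at least $(3/4)\zeta\Delta s$ tests are pivotal for exactly one element $x\in A$ and contain no other element of $A$. In each such test the true count is exactly $t$, and removing $x$ drops it to $t-1$. For the result $\hat\SIGMA_a=1$ to be preserved, $B$ must add at least one defective into each of these tests. So $B$ must cover at least $(3/4)\zeta\Delta s$ specified tests, which are pivotal tests of $A$, using $s\Delta$ edges. Symmetrically, every test containing an element of $B$ with exactly $t-1$ defectives must receive a removal from $A$. A cheaper route is to use only the first constraint. Each non-defective $y$ covers $Q_y$ of the pivotal tests of $A$, and we need $\sum_{y\in B}Q_y\ge (3/4)\zeta\Delta s$.

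\textbf{Step 3: first moment over $(A,B)$.} I will bound the expected number of valid pairs by
\[
\sum_{s\ge1}\binom{k}{s}\binom{n}{s}\,\pr\bc{\textstyle\sum_{y\in B}Q_y\ge \tfrac34\zeta\Delta s}.
\]
The pivotal tests of $A$ number at most $s\Delta$ out of $m$. Each edge of $y$ hits one of them with probability at most $s\Delta/m$, so the sum is dominated by $\Bin(s\Delta,s\Delta/m)$, with mean $s^2\Delta^2/m$. Its probability of exceeding $(3/4)\zeta\Delta s$ is at most $\bc{e s\Delta/(\tfrac34\zeta m)}^{\frac34\zeta\Delta s}$. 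Since $s\Delta/m\le k^{-1}\cdot O(\log n)\cdot s\le n^{-\theta+o(1)}s$ and $s\le k/\log n$, this factor is at most $\exp\bc{-\Omega(\Delta s\log\log n)}$ in the worst case $s=k/\log n$. For small $s$ it is at most $n^{-\Omega(\zeta d\theta)\, s}$.

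\textbf{Main obstacle.} The combinatorial factor is $\binom{k}{s}\binom{n}{s}\le \exp\bc{s\log(n e^2k/s^2)}=\exp(O(s\log n))$. Against this, the probability factor is $\exp(-\Omega(\Delta s\log(m/(s\Delta))))$. For small $s$ this gives $\exp(-\Omega(\zeta d\,\theta s\log^2 n/\log n))$, and the comparison is delicate: $\Delta=\Theta(\log n)$, so $\Delta\log(m/s\Delta)=\Theta(\log^2 n)$ when $s\le n^{\theta/2}$, which comfortably beats $s\log n$. For $s$ near $k/\log n$ we have $\log(m/s\Delta)=\Theta(\log\log n)$, so the exponent is $\Omega(s\log n\log\log n)$, which still dominates. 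I expect the care to lie in making $\zeta$ and the constants uniform in $s$, and in handling multi-edges and the conditioning on $\vGamma$ and $\cE$, which costs the $O(\sqrt{\Delta k})$ factor from \eqref{eqEprob}. That factor is negligible after the union bound. Summing over $s$ gives $o(1)$, and hence $Z_{k,\ell}=0$ for all $\ell$ in the stated range w.h.p.
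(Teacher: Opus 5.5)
Your proposal is correct. Its skeleton matches the paper's: rigidity (your Step 1 is Proposition~\ref{Lemma_rigid}), then a first moment over swaps $(A,B)$ that uses only the necessary condition that $B$ hits every test with exactly $t$ defectives meeting $A$.

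\textbf{Where your route differs.} You bound the covering probability differently from the paper. The paper passes to the balls-into-bins model conditioned on $\cT$, at a cost of $O((\Delta k)^{3/2})$. It then bounds each of the at least $(k-\ell)\delta\Delta/t$ such tests separately by $1-(1-\frac{k-\ell}{n-k})^{\vGamma_{\max}}\approx r^*s/k$. You instead dominate $B$'s hits by $\Bin(s\Delta,s\Delta/m)$. Both routes give a factor $\exp(-\Omega(s\Delta\log(k/s)))$. Your domination is valid for a reason you should state explicitly. The set $P_A$ of tests with exactly $t$ defectives meeting $A$, and the rigidity event, are functions of $\SIGMA$ and the defectives' test choices only. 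In $\G(n,m,\Delta)$ the items of $B$ choose their $s\Delta$ tests independently and uniformly. So conditioning on the defectives leaves $B$'s edges i.i.d.\ uniform, and Step 3 needs no conditioning on $\vGamma$ or $\cE$ at all. This is a genuine simplification over the paper's route.

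\textbf{Step 2 needs fixing.} The expansion property is unnecessary, and as written your count of $(3/4)\zeta\Delta s$ isolated pivotal tests is wrong for $t\ge 2$. A test with exactly $t$ defectives can be pivotal for up to $t$ elements of $A$. So discarding at most $\zeta\Delta s/4$ multi-$A$ tests can remove up to $t\zeta\Delta s/4$ pivotal incidences, which is vacuous once $t\ge 4$. You do not need isolation. Every test in $P_A$ drops below $t$ when $A$ is switched off, however many $A$-items it contains. Since each such test contains at most $t$ elements of $A$, rigidity gives $|P_A|\ge \zeta\Delta s/t$ directly. This is exactly the count the paper uses.

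\textbf{Arithmetic.} Use $\Delta/m=(1+o(1))\,r^*/k$ rather than $O(\log n)/k$. With the extra $\log n$, the case $s\approx k/\log n$ gives no decay at all. For small $s$ the probability factor is $n^{-\Omega(\zeta d\theta s\log n)}$, not $n^{-\Omega(\zeta d\theta)s}$; the latter would not beat the combinatorial factor. Your final paragraph has the right comparison: $\Omega(s\Delta\log(m/(s\Delta)))=\Omega(s\log n\log\log n)$ against $O(s\log n)$, uniformly for $1\le s\le k/\log n$. That comparison closes the argument.
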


\begin{proof}[Proof of Theorem \ref{Thm_inf}(b)]
    Combining Lemma~\ref{Lem:IntTc} and Proposition~\ref{Prop:LrgU} allows us to account for all possible overlaps and thus we can take the maximum over the two results to complete the proof of Theorem \ref{Thm_inf}(b).
\end{proof}

 In Section~\ref{SSec:UpperSmall}, we prove Lemma~\ref{Lem:IntTc} by a moment calculation to establish that, \whp, there are no competing solutions exhibiting low overlap with $\SIGMA$. Subsequently, in Section~\ref{SSec:UpperLarge}, we prove Proposition~\ref{Prop:LrgU} by an expansion argument to exclude, \whp, the existence of competing solutions with high overlap. 
The strategy in \cite[Section III]{aco_2019} for classical group testing serves as a direct inspiration for our approach. Building upon this, we extend the methodology to the threshold group testing setting, introducing several important modifications. These include new expressions involving total variation distance between the probability distributions that allow us to handle more general test outcomes.

\subsection{Small Overlap - Truncated First Moment} \label{SSec:UpperSmall}

In this section, we prove Lemma~\ref{Lem:IntTc}.
We begin by studying the first moment of $Z_{k, \ell}(\G,\hat\SIGMA)$, first defined in \eqref{Eq:Zkl}, for the regime when the overlap $\ell$ is relatively small. We first establish the following upper bound, that will be analysed later.

\begin{lemma}\label{Lem:first_bound_small_conditioned}
Given $\vGamma$, let $(\vB_i(\ell))_{i \in [m]}$ be a sequence of independent random variables with $\vB_i(\ell) \sim \Bin(\vGamma_i, \ell/n)$ and let $(\vM^{(z)}_i(\ell))_{i \in [m], z \in \{0, \ldots, t-1\}}$ be a sequence of independent random variables with 
\begin{align*}
\vM^{(z)}_i(\ell) &= \bc{\vM^{(z),1}_i(\ell), \vM^{(z),2}_i(\ell), \vM^{(z),3}_i(\ell)} \\ &\sim \Mult\bc{\vGamma_i-z;\frac{k-\ell}{n-\ell}, \frac{k-\ell}{n-\ell}, \frac{n-2k+\ell}{n-\ell}}.
\end{align*}
Finally, let 
$\cP = \{i \in [m] : \vB_i(k) \geq t\}$ and $\cN=[m]\setminus \cP$.
For $i \in [m]$, let $q_i := \pr\bc{\vB_i(k)\leq t-1 \vert \vGamma}$, and let
\begin{align}\label{def_pil}
    p_{i,\ell,\vGamma} &:= \sum_{z=0}^{t-1} \Pr\bc{\vB_i(\ell) = z \vert \vGamma} \pr\bc{\vM^{(z),1}_i \leq t-1-z, \vM^{(z),2}_i \leq t-1-z \vert\vGamma}.
\end{align}
Let the events $\cM, \cM'$ be defined as $\cM:=\cbc{\vm_{\geq t}\in \brk{\bc{1 \pm n^{-\Omega(1)}} m\pr\bc{\Po(d/c)\geq t}}}$ and $\cM':=\cbc{\abs{\cP}\in \brk{\bc{1 \pm n^{-\Omega(1)}}m\pr\bc{\Po(d/c)\geq t}}}$. 
 Then for any $0 \leq \ell \leq (1- 1/\log{n})k$.
 \begin{align}\label{first_bound_small_cond}
   &  \Erw\brk{\vecone_{\cM}Z_{k,\ell}(\mathbf{G},\hat{\SIGMA})\mid \vGamma} \nonumber \\
   \leq \quad &  O\bc{\bc{\Delta k}^{3/2}} \binom{k}{\ell} \binom{n-k}{k-\ell} \Erw\brk{\vecone_{\cM'}\prod_{i \in \cN} \bc{\frac{p_{i,\ell,\vGamma}}{q_i}} \prod_{i \in \cP} \bc{\frac{1 - 2q_i + p_{i,\ell,\vGamma}}{1 - q_i}}\vert \vGamma}.
 \end{align}
\end{lemma}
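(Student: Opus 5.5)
The plan is a first-moment computation. By linearity, $Z_{k,\ell}$ is a sum over pairs consisting of a set $S_1\subseteq V_1$ of size $\ell$ (the retained defectives) and a set $S_2\subseteq V_0$ of size $k-\ell$ (the new defectives). Hence
\begin{align*}
\Erw\brk{\vecone_{\cM}Z_{k,\ell}\mid\vGamma}=\binom{k}{\ell}\binom{n-k}{k-\ell}\,\pr\bc{\cM,\ \sigma\in S_k(\G,\hat\SIGMA)\mid\vGamma}
\end{align*}
for one fixed $\sigma$ with $\langle\SIGMA,\sigma\rangle=\ell$. The problem therefore reduces to bounding a single probability.

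For each test $a_i$, record the number of edge endpoints it receives from each of four item classes: common defectives ($\SIGMA=\sigma=1$), $\SIGMA$-only defectives, $\sigma$-only defectives, and the remaining items. Consistency means that, in every test, $\SIGMA$ counts at least $t$ exactly when $\sigma$ does. I will control the joint law of these four-type counts conditionally on $\vGamma$ by the decoupling of Lemma~\ref{Lemma_Elemma}, generalised to several colours. Take independent multinomial vectors per test with probabilities $(\ell/n,(k-\ell)/n,(k-\ell)/n,(n-2k+\ell)/n)$. Conditioning them on the three column-sum events (total counts $\ell\Delta$, $(k-\ell)\Delta$, $(k-\ell)\Delta$) reproduces the true law exactly. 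By the multivariate local limit theorem, the probability of these events is $\Omega((\Delta k)^{-3/2})$, and that is the source of the prefactor $O((\Delta k)^{3/2})$. The independent vectors can be written as $\vB_i(\ell)$ for the common class, followed, given $\vB_i(\ell)=z$, by a multinomial split of the remaining $\vGamma_i-z$ slots. The small discrepancy between the multinomial cell probabilities $\frac{k-\ell}{n}$ and $\frac{k-\ell}{n-\ell}$ gets absorbed (or we choose the coupling in the order stated in the lemma); I would just check that the resulting factor is $1+o(1)$ per test over $m$ tests, or define things so that it is exact.

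In the independent model, I then compute the probability that $\sigma$ is consistent. Write $\cP$ for the set of tests positive under $\SIGMA$. A test $i\in\cN$ is consistent iff $z+\vM^{(z),1}_i\le t-1$ and $z+\vM^{(z),2}_i\le t-1$, which has probability $p_{i,\ell,\vGamma}$. A test $i\in\cP$ is consistent iff both counts are at least $t$, which has probability $1-2q_i+p_{i,\ell,\vGamma}$ by inclusion–exclusion. To get the product form in the statement, I condition on the $\SIGMA$-outcome pattern. Since $\vB_i(k)$ is exactly the $\SIGMA$-count, the probability factorises as $\Erw[\prod_i \pr(\text{consistent}\mid \text{outcome}_i)]$, giving ratios $p_{i,\ell,\vGamma}/q_i$ on $\cN$ and $(1-2q_i+p_{i,\ell,\vGamma})/(1-q_i)$ on $\cP$, with the expectation taken over the independent outcome pattern. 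The event $\cM$ on the true $\vY$ translates into $\cM'$ on the independent pattern. Dropping the conditioning events in the numerator, via $\pr(A\cap\cE)\le\pr(A)$, gives the inequality.

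The main obstacle is the decoupling step: justifying that the multi-colour analogue of Lemma~\ref{Lemma_Elemma} holds and that the probability of the joint sum event is $\Omega((\Delta k)^{-3/2})$ uniformly over $\ell\le(1-1/\log n)k$. The restriction on $\ell$ keeps $(k-\ell)\Delta\to\infty$, so each marginal local CLT applies. I would handle it by the exchangeability argument of \cite[Lemma B.2]{aco_2019}: conditionally on $\vGamma$, the assignment of colours to edge slots is uniform among colourings with the prescribed colour totals, and the independent model conditioned on those totals is also uniform.
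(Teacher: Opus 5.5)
Your proposal is correct and follows essentially the same route as the paper. The paper likewise reduces by symmetry to a single pair $(\SIGMA,\sigma)$, uses the balls-into-bins model with i.i.d.\ four-type labels conditioned on the type totals (probability $\Omega((\Delta k)^{-3/2})$ by Stirling), removes that conditioning via Bayes, factorises over tests after conditioning on the $\SIGMA$-outcome pattern, and handles the positive tests by inclusion--exclusion. The one point you hedge on is not an issue: conditioning the four-type multinomial on $z$ common-type balls gives exactly $\Mult\bc{\vGamma_i-z;\frac{k-\ell}{n-\ell},\frac{k-\ell}{n-\ell},\frac{n-2k+\ell}{n-\ell}}$, so $p_{i,\ell,\vGamma}$ is the exact joint-negative probability, and no per-test $1+o(1)$ correction is needed (such a correction would not be affordable over $m$ tests anyway).
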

\begin{proof}
The aim of the lemma is to characterise assignments $\sigma$ that have Hamming weight $k$ and overlap exactly $\ell$ with $\SIGMA$. Even though $\SIGMA$ is random and $\sigma$ depends on $\SIGMA$, due to symmetry we may assume that $\SIGMA_{x_i}=\vecone\{i\leq k\}=\vecone_{[k]}$ and $\sigma_{x_i}=\vecone\{i\leq\ell\}+\vecone\{k<i\leq 2k-\ell\}$. Then for this choice of $\sigma$,
\begin{align}\label{eq:exp_sym}
   &  \Erw\brk{\vecone_{\cM} Z_{k,\ell}(\mathbf{G},\hat{\SIGMA})\mid \vGamma}  = \binom{k}{\ell} \binom{n-k}{k-\ell} \pr\bc{\cM, \forall a \in [m]: \hat\sigma_a = (\hat\vecone_{[k]})_a \vert \vGamma}.
 \end{align}
	We will use a balls and bins argument, where we think of each test $a_i$ as a bin of capacity $\vGamma_i$, and of each clone $(x_i,h)$, $h\in[\Delta]$, of an item as a ball labelled $(\SIGMA_{x_i},\sigma_{x_i})\in\{0,1\}^2$.
	Creating the graph $\G$ is then equal in distribution to matching the $\Delta n$ balls uniformly into the bins given their capacities.
	For $i\in[m]$ and $j\in[\vGamma_i]$, we let $\vA_{i,j}=(\vA_{i,j,1},\vA_{i,j,2})\in\{0,1\}^2$ be the label of the $j$th ball that ends up in bin number $i$. Observe that we have $\ell \Delta$ balls with label $(1,1)$, $(k-\ell)\Delta$ balls with label $(1,0)$, $(k-\ell)\Delta$ balls with label $(0,1)$ and $(n-2k+\ell)\Delta$ balls with label $(0,0)$. Moreover,
    \begin{align}\label{eq:balls_bins1}
        &\pr\bc{\cM, \forall a \in [m]: \hat\sigma_a = (\hat\vecone_{[k]})_a \vert \vGamma} \nonumber \\
        =& \pr\bc{\cM, \forall i \in [m]: \vecone\cbc{\sum_{i=1}^{\vGamma_i}\vA_{i,j,1} \leq t-1} = \vecone\cbc{\sum_{i=1}^{\vGamma_i}\vA_{i,j,2} \leq t-1} \vert \vGamma}.
    \end{align}

    To facilitate the analysis of the experiment, we introduce a new set of i.i.d. random variables $(\vA_{i,j}')_{(i,j) \in \bigcup_{a=1}^m \{a\} \times [\vGamma_a]}=((\vA_{i,j,1}',\vA_{i,j,2}'))_{(i,j) \in \bigcup_{a=1}^m \{a\} \times [\vGamma_a]}$ with distribution
	\begin{align*}
	\pr\!\bc{\vA_{i,j}'=(1,1)}&=\ell/n,&
	\pr\!\bc{\vA_{i,j}'=(0,1)}&=\pr\!\bc{\vA_{i,j}'=(1,0)}=(k-\ell)/n,\\
	\pr\!\bc{\vA_{i,j}'=(0,0)}&=(n-2k+\ell)/n
	\end{align*}
	for all $i,j$. 
Moreover, let $\cT$ be the event that 
    \begin{align}  \sum_{i=1}^m\sum_{j=1}^{\vGamma_i}\vecone\cbc{\vA_{i,j}'=(1,1)}&=\ell\Delta, \qquad \sum_{i=1}^m\sum_{j=1}^{\vGamma_i}\vecone\cbc{\vA_{i,j}'=(0,0)}=(n-2k+\ell)\Delta, \nonumber \\ \sum_{i=1}^m\sum_{j=1}^{\vGamma_i}\vecone\cbc{\vA_{i,j}'=(1,0)}&= \sum_{i=1}^m\sum_{j=1}^{\vGamma_i}\vecone\cbc{\vA_{i,j}'=(0,1)}=(k-\ell)\Delta. \label{eq_def_U} \end{align}
$\cT$ describes the event that the total counts of each ball type in the independent model match the original counts, 
and given $\cT$, the vector $\vA'=(\vA_{i,j}')_{i,j}$ is distributed identically to $\vA=(\vA_{i,j})_{i,j}$ given $\vGamma$. For $u \in \{1,2\}$ let $(\hat\vA')^{(u)} \in \{0,1\}^m$ be the vector defined by the `test results' of the $u$th coordinate in the independent model. That is, for $i \in [m]$, set $(\hat\vA')^{(u)}_i = \vecone\cbc{\sum_{i=1}^{\vGamma_i}\vA_{i,j,u}' \leq t-1}$. Moreover, let $\cP = \cbc{i \in [m]: (\hat\vA')^{(1)}_i =1}$, $\cM':=\cbc{\abs{\cP}\in \brk{\bc{1 \pm n^{-\Omega(1)}}m\pr\bc{\Po(d/c)\geq t}}}$ and $\cN = [m]\setminus \cP$. Then
\begin{align}\label{eq:balls_bins2}
    &\pr\bc{\cM, \forall i \in [m]: \vecone\cbc{\sum_{i=1}^{\vGamma_i}\vA_{i,j,1} \leq t-1} = \vecone\cbc{\sum_{i=1}^{\vGamma_i}\vA_{i,j,2} \leq t-1} \vert \vGamma} \nonumber \\
    =&  \pr\bc{\cM', \forall i \in [m]: \vecone\cbc{\sum_{i=1}^{\vGamma_i}\vA_{i,j,1}' \leq t-1} = \vecone\cbc{\sum_{i=1}^{\vGamma_i}\vA_{i,j,2}' \leq t-1} \vert \vGamma, \cT}.
\end{align}
Since $\cT$ is the event that the sums of independent indicator variables precisely match their expected values, its probability is estimated using a Local Central Limit Theorem (LCLT). 
An application of Stirling's formula, which underpins the LCLT in this context, yields
	\begin{align}\label{eqLLT8}
	\pr\bc{\cT\vert \vGamma}=\Omega((\Delta k)^{-3/2}).
	\end{align}
    Now set
    $$\cS :=\cbc{\forall i \in [m]: \vecone\cbc{\sum_{i=1}^{\vGamma_i}\vA_{i,j,1}' \leq t-1} = \vecone\cbc{\sum_{i=1}^{\vGamma_i}\vA_{i,j,2}' \leq t-1}}.$$
Thus, by Bayes' Theorem and \eqref{eqLLT8},
\begin{align}\label{eq:balls_bins3}
    \pr\bc{\cS \cap \cM' \vert \vGamma, \cT} = \frac{\pr\bc{\cS \cap \cT \cap \cM' \vert \vGamma}}{\pr\bc{\cT \vert \vGamma}} \leq O((\Delta k)^{3/2}) \pr\bc{\cS \cap \cM' \vert \vGamma},
\end{align}
and it remains to upper bound $\pr\bc{\cS \cap \cM' \vert \vGamma}$. 
Observe that in the auxiliary model, without conditioning on $\cT$, the random variables $\vA'_{i,j}$ are independent.
Thus, the events $\cS_i$ (the outcome of the test $i$) are independent across all $i \in [m]$ when conditioned on the degree sequence $\vGamma$.
Then
\begin{align} \label{eq:balls_bins4}
& \pr\bc{\cS \cap \cM'\vert \vGamma} = \Erw\brk{\pr\bc{\cS\cap \cM' \vert \vGamma, (\hat\vA')^{(1)}}\vert \vGamma} = \Erw\brk{\mathds{1}_{\cM'}\pr\bc{\cS\vert \vGamma, (\hat\vA')^{(1)}}\vert \vGamma}  \nonumber \\
 &= \Erw\brk{\mathds{1}_{\cM'}\prod_{i \in \cN} \pr\bc{ (\hat\vA')^{(2)}_i = 0 \vert \vGamma, (\hat\vA')^{(1)}_i =0}\prod_{i \in \cP} \pr\bc{(\hat\vA')^{(2)}_i = 1 \vert \vGamma, (\hat\vA')^{(1)}_i =1}  \Big \vert \vGamma} \nonumber \\
 &= \Erw\brk{\mathds{1}_{\cM'}\prod_{i \in \cN} \frac{\pr\bc{(\hat\vA')^{(1)}_i = (\hat\vA')^{(2)}_i = 0 \vert \vGamma}}{\pr\bc{(\hat\vA')^{(1)}_i =0 \vert\vGamma}}\prod_{i \in \cP}\frac{\pr\bc{(\hat\vA')^{(1)}_i = (\hat\vA')^{(2)}_i = 1 \vert \vGamma}}{\pr\bc{(\hat\vA')^{(1)}_i =1 \vert\vGamma}}  \Big \vert \vGamma}.
\end{align}
As each $\vA_{i,j,1} \sim \Be(k/n)$, 
\begin{align}\label{eq:balls_bins100}
     \pr\bc{(\hat\vA')^{(1)}_i =0 \vert\vGamma} = \pr\bc{\Bin\bc{\vGamma_i, \frac{k}{n}}\leq t-1 \vert \vGamma} = q_i.
\end{align}
Now observe that for any bin $i \in [m]$, the vector of ball counts of the four types 
has a Multinomial distribution with $\vGamma_i$ trials and probabilities $\ell/n, (k-\ell)/n, (k-\ell)/n, (n-2k+\ell)/n$, respectively. Conditionally on the number of $(1,1)$-balls being $z$, the numbers of the remaining balls have a $\Mult\bc{\vGamma_i-z;\frac{k-\ell}{n-\ell}, \frac{k-\ell}{n-\ell}, \frac{n-2k+\ell}{n-\ell}}$ distribution. Finally, for $z \in \{0,\ldots,t-1\}$, let 
\begin{align*}
\vM^{(z)}_i(\ell) &= \bc{\vM^{(z),1}_i(\ell), \vM^{(z),2}_i(\ell), \vM^{(z),3}_i(\ell)} \\ &\sim \Mult\bc{\vGamma_i-z;\frac{k-\ell}{n-\ell}, \frac{k-\ell}{n-\ell}, \frac{n-2k+\ell}{n-\ell}}.
\end{align*} Conditioning on the number of $(1,1)$ balls, we obtain
\begin{align}\label{eq:balls_bins5}
&\pr\bc{(\hat\vA')^{(1)}_i = (\hat\vA')^{(2)}_i = 0 \vert \vGamma} \nonumber \\
=& \sum_{z=0}^{t-1} \pr\bc{\vB_i(\ell) = z\vert\vGamma} \pr\bc{\vM^{(z),1}_i \leq t-1-z, \vM^{(z),2}_i \leq t-1-z \vert\vGamma} =  p_{i,\ell,\vGamma}.   
\end{align}
On the other hand, by the inclusion-exclusion principle 
\begin{align}\label{eq_balls_bins6}
   \pr\bc{(\hat\vA')^{(1)}_i = (\hat\vA')^{(2)}_i=1 \vert\vGamma} = 1 - 2q_i + p_{i,\ell,\vGamma}.
\end{align}
Finally, combining equations \eqref{eq:exp_sym}, \eqref{eq:balls_bins1}, \eqref{eq:balls_bins2}, \eqref{eq:balls_bins3}-\eqref{eq_balls_bins6} yields the claim.
\end{proof}

\begin{lemma} \label{Prob_Upper}
Fix $0 \leq \ell \leq (1- 1/\log{n})k$, let $\alpha = \ell/k$.
Recall $q_i$ and $p_{i,\ell,\vGamma}$ from Lemma \ref{Lem:first_bound_small_conditioned}. Set
    $$ 
    \vE^* \sim \Po(d/c), \quad
    \vB^* \sim \Po(\alpha d/c), \quad 
    \vX_1 \sim \Po((1-\alpha)d/c), \quad
    \vX_2 \sim \Po((1-\alpha)d/c),
    $$
    where all random variables $\vE^*, \vB^*, \vX_1, \vX_2$ are independent. Moreover, let
    $$\vM^* \sim \max\{\vX_1, \vX_2\}.$$
Then, with probability $1-o(n^{-2})$, for all $i \in [m]$,
\begin{align*}
 q_i &=  \Pr(\vE^*\leq t-1) + O\bc{\sqrt{k/n}\log n};  \\
 p_{i,\ell,\vGamma} &= \Pr(\vB^*+\vM^*\leq t-1)   +O\bc{\sqrt{k/n}\log n}.
\end{align*}
\end{lemma}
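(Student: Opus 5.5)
We work on the event of Lemma~\ref{Lemma_GammaMinMax}, which has probability $1-o(n^{-2})$. On it, every test degree satisfies $\vGamma_i = \frac{\Delta n}{m}\bc{1+O\bc{\sqrt{m/(\Delta n)}\log n}}$. Since $\Delta n/m = (d/c)\, n/k$, the Poisson parameter $\lambda_i := \vGamma_i k/n$ obeys
\begin{align*}
\lambda_i = \frac dc + O\bc{\sqrt{\tfrac{k}{n}}\log n}
\end{align*}
uniformly in $i\in[m]$. The statement is then purely deterministic in $\vGamma$, so it suffices to compare each finite-$\vGamma_i$ probability with its Poisson analogue, uniformly over $i$, in two steps.

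\textbf{Step 1: binomial/multinomial to Poisson.} For $q_i=\pr\bc{\Bin(\vGamma_i,k/n)\le t-1}$ we invoke the total variation bound \cite[Theorem 2.10]{Hofstad_2016} already used in Lemma~\ref{Lem:Counting}. It gives $|q_i-\pr(\Po(\lambda_i)\le t-1)|\le \vGamma_i(k/n)^2 = O(k/n)$. For $p_{i,\ell,\vGamma}$ we note that, by construction in the proof of Lemma~\ref{Lem:first_bound_small_conditioned}, $p_{i,\ell,\vGamma}$ is exactly $\pr(\vN_{11}+\max\{\vN_{11}+\vN_{10},\vN_{11}+\vN_{01}\}-\vN_{11}\le t-1)$, i.e.\ the probability that both $\vN_{11}+\vN_{10}\le t-1$ and $\vN_{11}+\vN_{01}\le t-1$. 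Here $(\vN_{11},\vN_{10},\vN_{01},\vN_{00})\sim\Mult(\vGamma_i;\ell/n,(k-\ell)/n,(k-\ell)/n,\cdot)$. This event depends only on the first three counts. These are at total variation distance $O(\vGamma_i (k/n)^2)=O(k/n)$ from independent Poissons with means $\vGamma_i\ell/n$, $\vGamma_i(k-\ell)/n$, $\vGamma_i(k-\ell)/n$, by the standard multinomial Poisson approximation (Le Cam / Poissonisation of the trial count, coupling each trial to a Poisson number of successes). Hence $p_{i,\ell,\vGamma}=\pr(\vB_i'+\max\{\vX_{1,i}',\vX_{2,i}'\}\le t-1)+O(k/n)$, with $\vB_i'\sim\Po(\alpha\lambda_i)$ and $\vX_{j,i}'\sim\Po((1-\alpha)\lambda_i)$ independent.

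\textbf{Step 2: perturbing the Poisson parameter.} It remains to replace $\lambda_i$ by $d/c$. The maps $\lambda\mapsto \pr(\Po(\lambda)\le t-1)$ and $\lambda\mapsto\pr(\Po(\alpha\lambda)+\max\{\Po((1-\alpha)\lambda),\Po((1-\alpha)\lambda)\}\le t-1)$ are finite sums of terms $e^{-\lambda}\,\mathrm{poly}(\lambda)$ with coefficients bounded uniformly in $\alpha\in[0,1]$, since only the finitely many values at most $t-1$ matter. Their derivatives are therefore bounded uniformly on a compact neighbourhood of $d/c$ and uniformly in $\alpha$. Alternatively, a monotone coupling shows that $\Po(\lambda)$ and $\Po(\lambda')$ differ in total variation by at most $|\lambda-\lambda'|$, applied to each of the three independent coordinates. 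Either way the error is $O(|\lambda_i-d/c|)=O(\sqrt{k/n}\log n)$. Adding the $O(k/n)$ error from Step 1, which is dominated, gives both claims uniformly in $i$ and in $\ell\le(1-1/\log n)k$.

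The only point needing care is uniformity: the constants must not depend on $i$, $\ell$ or $\alpha$. This holds because all Poisson parameters lie in $[0,2d/c]$ and the events involve only counts up to $t-1$. I expect the multinomial-to-independent-Poisson approximation in Step 1 to be the main (mild) obstacle, since the three counts are dependent through the fixed trial number $\vGamma_i$. If a clean citation is unavailable, I would handle it by an explicit computation instead. The point masses $\pr(\vN_{11}=a,\vN_{10}=b,\vN_{01}=c')$ for $a,b,c'\le t-1$ equal $\frac{(\vGamma_i)_{a+b+c'}}{a!b!c'!}(\ell/n)^a((k-\ell)/n)^{b+c'}(1-(2k-\ell)/n)^{\vGamma_i-a-b-c'}$. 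Each matches the Poisson point mass up to a factor $1+O(k/n)$. Summing these finitely many terms controls the event.
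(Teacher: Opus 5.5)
Your proof is correct and follows essentially the same route as the paper: Lemma~\ref{Lemma_GammaMinMax} gives $\vGamma_i k/n = d/c + O(\sqrt{k/n}\log n)$, and the errors are controlled by the binomial--Poisson total variation bound $\vGamma_i (k/n)^2$ together with the parameter-shift bound $|\lambda-\lambda'|$ between Poisson laws. The only difference is that you Poissonise the three relevant multinomial counts jointly, with error at most $\vGamma_i((2k-\ell)/n)^2 = O(k/n)$, whereas the paper conditions successively on $\vB_i(\ell)=z$ and $\vM^{(z),1}_i = x$ and approximates each resulting binomial separately. Your joint version is slightly cleaner, since it avoids tracking the $z$- and $x$-dependent trial counts and the conditional success probability $(k-\ell)/(n-k)$.
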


\begin{proof}
Firstly, recall the following upper bound between the total variation distance between a binomial and Poisson variable with identical means (see \cite[Theorem 2.10]{Hofstad_2016}):
$$ \mathrm{d}_{\mathrm{TV}} \bc{\Bin(n,p), \Po(np)} \leq np^2.$$
We would like to study the total variation distance between $\Bin\bc{\vGamma_i, \frac{k}{n}}$ and $\vE^*$, thus using the triangle inequality 
\begin{align}
    \mathrm{d}_{\mathrm{TV}}&\bc{ \Bin \bc{\vGamma_i, \frac{k}{n}} \Big \vert \vGamma_i, \vE^*} \nonumber \\
    &\leq \mathrm{d}_{\mathrm{TV}}\bc{ \Bin \bc{\vGamma_i, \frac{k}{n}} \Big \vert \vGamma_i, \Po\bc{\frac{\vGamma_i k}{n}} \Big \vert \vGamma_i} + \mathrm{d}_{\mathrm{TV}}\bc{ \Po\bc{\frac{\vGamma_i k}{n}}\Big \vert \vGamma_i, \vE^*} \nonumber \\
    &\leq \frac{\vGamma_i k^2}{n^2} +\abs{\frac{\vGamma_i k}{n} - \frac{d}{c}}.\label{Eq:DTV_Shifted}
\end{align} 
Applying Lemma \ref{Lemma_GammaMinMax} and total variation distance from \eqref{Eq:DTV_Shifted}, with probability at least $1-o(n^{-2})$, 
    \begin{align*}
q_i = \pr_{\vGamma}\bc{\Bin\bc{\vGamma_i, \frac{k}{n}}\leq t-1}   = \Pr\bc{\vE^* \leq t-1} + O\bc{\sqrt{k/n}\log n}.
\end{align*}

To evaluate $p_{i,\ell,\vGamma}$, we will first notice that the first coordinate of $\vM^{(z)}_i(\ell)$ can be considered as a binomial variable
$$\vM_i^{(z),1} \sim \Bin\bc{\vGamma_i-z, \frac{k-\ell}{n-\ell}}.$$
Conditional on the first coordinate count $\vM_i^{(z),1}=x,$ the second coordinate satisfies
$$\bc{\vM_i^{(z),2} \mid \vM_i^{(z),1}=x} \sim \Bin\bc{ \vGamma_i-z-x, \frac{k-\ell}{n-k}}.$$

The exact joint probability of yielding a negative test can then be written as 

\begin{align*}
p_{i,\ell,\vGamma}
&=
\sum_{z=0}^{t-1}
\Pr(\vB_i(\ell)=z \mid \vGamma)
\sum_{x=0}^{t-1-z}
\Pr\bc{\vM^{(z),1}_i = x\vert \vGamma}
\\
&\qquad\qquad \qquad\qquad \qquad\qquad\quad\times
\Pr\bc{
\Bin\bc{\vGamma_i - z - x, \tfrac{k-\ell}{n-k}\vert \vGamma}
\le t-1-z\vert \vGamma}.
\end{align*}

We now apply a Poisson approximation to these distributions. Using Lemma \ref{Lemma_GammaMinMax} together with the total variation bound in \eqref{Eq:DTV_Shifted}, we approximate the relevant binomial variables by their Poisson limits. Let $\vB^\ast \sim \Po(\alpha d/c)$ and define
$$ \vM^\ast = \max\left\{ \vX_1, \vX_2 \right\} ,$$
where $\vX_1, \vX_2 \sim \Po((1-\alpha )d/c)$ are independent.

Proceeding analogously to the derivation of \eqref{Eq:DTV_Shifted}, we control the approximation error by the corresponding total variation distances, bounding the distance of the exact conditional sum by the sum of the distances to the independent Poisson limits as


\begin{align}\label{eq:poisson_approx_2}
    p_{i,\ell,\vGamma}
    &= \sum_{z=0}^{t-1} \Pr(\vB_i (\ell) = z\mid \vGamma)
    \sum_{x=0}^{t-1-z} \Pr\bc{\vM_i^{(z),1} = x \mid \vGamma}
    \nonumber\\
    &\qquad \qquad \qquad \qquad \qquad \qquad \qquad \times
    \Pr\bc{\vM_i^{(z),2} \le t-1-z \mid \vM_i^{(z),1}=x,\vGamma}
    \nonumber\\
    &= \sum_{z=0}^{t-1} \Pr(\vB^*=z)
    \sum_{x=0}^{t-1-z} \Pr\bc{\vX_1=x}
    \Pr\bc{\vX_2 \le t-1-z}
    \nonumber\\
    &\qquad \qquad
    + O(\mathrm{d}_{\mathrm{TV}}(\vB_i(\ell)\mid\vGamma,\vB^\ast))
    + O(\mathrm{d}_{\mathrm{TV}}(\vM_i^{(z),1}\mid\vGamma,\vX_1))
    \nonumber\\
    &\qquad \qquad
    + O(\mathrm{d}_{\mathrm{TV}}(\vM_i^{(z),2}\mid\vGamma,\vX_2))
    \nonumber\\
    &= \sum_{z=0}^{t-1} \Pr(\vB^*=z)
    \Pr\bc{\vX_1 \le t-1-z}
    \Pr\bc{\vX_2 \le t-1-z}
    + O\bc{\sqrt{k/n}\log n}
    \nonumber\\
    &= \Pr(\vB^\ast+\vM^\ast \le t-1)
    + O\bc{\sqrt{k/n}\log n}.
\end{align}

\end{proof}

This next lemma provides a bound on the binomial coefficients in the small overlap regime:
\begin{lemma}\label{Lem:BinC}
    For $\alpha \in [0, 1- 1/\log{n}]$,
    \[ \binom{k}{\alpha k}\binom{n-k}{(1-\alpha )k} = \mathrm{e}^{(1-\alpha )(1-\theta)k\log n +O(k) } .\]
\end{lemma}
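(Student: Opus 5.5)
We bound the two binomial coefficients separately with the standard estimates $(a/b)^b\le\binom{a}{b}\le(\mathrm{e}a/b)^b$, valid for $1\le b\le a$ (with $\binom{a}{0}=1$), and then use $k=n^{\theta}$, so that $\log(n/k)=(1-\theta)\log n+O(1)$ and $\log k=\theta\log n+O(1)$. Throughout, $\alpha k$ is understood as the integer $\ell$, so rounding only affects the $O(k)$ error term.

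For the first factor, $1\le\binom{k}{\alpha k}\le 2^k=\mathrm{e}^{O(k)}$ holds trivially for every $\alpha\in[0,1]$. This factor therefore contributes only to the $O(k)$ error and needs no case analysis in $\alpha$.

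For the second factor, set $b=(1-\alpha)k$ and note that $b\ge k/\log n\ge 1$ for $\alpha\le 1-1/\log n$. The upper bound gives
\[
\log\binom{n-k}{b}\le b\log\frac{\mathrm{e}(n-k)}{b}\le b\log\frac{n}{k}+b\log\frac{1}{1-\alpha}+b.
\]
The lower bound gives
\[
\log\binom{n-k}{b}\ge b\log\frac{n-k}{b}\ge b\log\frac nk+b\log(1-k/n).
\]
Here $b\log(1-k/n)=O(bk/n)=o(k)$ and $b\le k$. The only term needing attention is $b\log\frac{1}{1-\alpha}=k(1-\alpha)\log\frac1{1-\alpha}$. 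Since $x\mapsto x\log(1/x)$ is bounded by $1/\mathrm{e}$ on $(0,1]$, this term is at most $k/\mathrm{e}=O(k)$, uniformly in $\alpha$.

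Combining the two factors, $\log$ of the product equals $(1-\alpha)k\log(n/k)+O(k)=(1-\alpha)(1-\theta)k\log n+O(k)$, which is the claim. The only step requiring any care is the uniformity of the error in $\alpha$ near the endpoint $1-1/\log n$. This is exactly where the bound $x\log(1/x)\le 1/\mathrm{e}$ is used, together with $b\ge1$ to justify the binomial estimates. The restriction $\alpha\le 1-1/\log n$ is not otherwise needed, since the identity also holds up to $O(k)$ for larger $\alpha<1$.
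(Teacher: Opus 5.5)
Your proof is correct and follows the same route as the paper. You bound $\binom{k}{\alpha k}$ by $2^k$ and estimate $\binom{n-k}{(1-\alpha)k}$ by a standard small-index binomial estimate, using the elementary bounds $(a/b)^b\le\binom{a}{b}\le(\mathrm{e}a/b)^b$ where the paper invokes the Stirling asymptotic of Theorem~\ref{SmallBinomial}; this makes explicit the uniform control $(1-\alpha)k\log\frac{1}{1-\alpha}\le k/\mathrm{e}$ that the paper's displayed computation leaves implicit.
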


\begin{proof} 
Observe that $\binom{k}{\alpha k}\leq 2^k = \eul^{O(k)}$ for any $\alpha \in [0, 1- 1/\log{n}]$. 
Moreover, as $(1-\alpha)k = o(n-k)$ for all  $\alpha \in [0, 1- 1/\log{n}]$, Lemma~\ref{SmallBinomial} further yields that 
    \begin{align*}
\binom{k}{\alpha k}\binom{n-k}{(1-\alpha )k} &= \exp\bc{ (1-\alpha )k\log n - (1-\alpha ) k ((1-\theta)\log n +O(k) } \\
        &= \mathrm{e}^{(1-\alpha )(1-\theta)k\log n +O(k)}.
\end{align*} 
\end{proof}

Before presenting the proof of Lemma \ref{Lem:IntTc}, we first establish Conjecture \ref{Conj} for $t=2$, which will be needed later.

\begin{lemma}\label{Lem:T2}
Conjecture \ref{Conj} holds for $t=2$.
\end{lemma}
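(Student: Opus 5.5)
The plan is to make everything explicit for $t=2$ and then study a normalised version of the objective. For $t=2$ write $r=r^*$, $s=(1-\alpha)r$ and $b=\alpha r$. Then $q=q(r)=e^{-r}(1+r)$. Also $\pr(\vB^*=0)=e^{-b}$, $\pr(\vB^*=1)=be^{-b}$, $\pr(\vM^*\le 1)=e^{-2s}(1+s)^2$ and $\pr(\vM^*=0)=e^{-2s}$. Hence
\begin{align*}
p(\alpha):=p(r,\alpha)=e^{-(2-\alpha)r}\bc{(1+(1-\alpha)r)^2+\alpha r},
\end{align*}
which satisfies $p(0)=q^2$ and $p(1)=q$. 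Denote the function in Conjecture~\ref{Conj} by $F(\alpha)=1-\alpha+c\,g(\alpha)$, where $g(\alpha)=q\log(p/q)+(1-q)\log\bc{(1-2q+p)/(1-q)}$.

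First I would record the two endpoint values. We have $F(0)=1-cH(q)$, and $F(0)<0$ because $c>c_{\mathrm{inf}}^{\mathrm{TGT}}\ge c_1(r^*)=1/H(q)$. We also have $F(1)=0$. So the literal claim on all of $[0,1]$ cannot hold, since $F(1)>F(0)$. Instead I read the conjecture in the form it is used in Lemma~\ref{Lem:IntTc}, where the first-moment exponent scales like $(1-\alpha)k\log(n/k)$. The statement I would prove is that the normalised function $G(\alpha):=F(\alpha)/(1-\alpha)$ on $[0,1)$ is maximised at $\alpha=0$, and that $G$ is negative on the whole interval.

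Second, I would identify the limit $G(1^-)=-F'(1)=1-c\,g'(1)$. Differentiating $g$ at $\alpha=1$ gives $g'(1)=p'(1)\bc{q/p(1)+(1-q)/(1-q)}=p'(1)/q+p'(1)$. A direct computation with $p(\alpha)=e^{-(2-\alpha)r}\bc{(1+(1-\alpha)r)^2+\alpha r}$ should show $g'(1)=-r\log$-type terms linked to $c_2$; more precisely $1-c\,g'(1)<0$ should be equivalent to $c>c_2(r^*,\theta)$ up to the $\theta/(1-\theta)$ normalisation. I would check this identification carefully, because it ties the $\alpha\to1$ behaviour to the disguised-item constraint of Proposition~\ref{Lemma_V_all}(v), just as $G(0)<0$ is tied to $c_1$. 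Since $c>\max\{c_1(r^*),c_2(r^*,\theta)\}$, both endpoint values of $G$ are then negative.

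Third, and this is the main obstacle, I have to show $G(\alpha)\le G(0)$ for all $\alpha\in[0,1)$; it suffices to show $G$ is nonincreasing on $[0,1)$. Because $G$ is linear in $c$, $G(\alpha)-G(0)=\frac{\alpha}{1-\alpha}\cdot\frac{c\,\Phi(\alpha)}{1}$ with $\Phi(\alpha):=\frac{g(\alpha)-(1-\alpha)g(0)}{\alpha}$, so I need $\Phi\le 0$ on $(0,1)$, a statement in which $c$ drops out. This is equivalent to $g(\alpha)\le(1-\alpha)g(0)$, i.e.\ $g$ lies below its chord from $(0,g(0))$ to $(1,0)$. I would first try to prove that $g$ is convex in $\alpha$ on $[0,1]$, which gives this chord bound immediately. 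The approach is to write $p(\alpha)=e^{-2r}e^{\alpha r}\,\pi(\alpha)$ with $\pi$ a quadratic polynomial in $\alpha$, and compute $g''$ explicitly as a rational function of $\alpha$, $r$ and $e^{\alpha r}$. Its sign then needs to be checked for the single relevant value $r=r^*\in[r_1,r_2]\subset[1.67,1.84]$ (Lemmas~\ref{lem_a0}, \ref{lem_a}, \ref{lem_b} and Table~\ref{tab:values}).

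If global convexity fails, my fallback is to verify $\Phi\le0$ directly by elementary bounds on the explicit expression, uniformly for $r\in[1.67,1.84]$. I would split $[0,1]$ into a few subintervals, use Taylor expansion around $\alpha=1$ where the inequality is tight to first order, and use monotonicity estimates elsewhere. Combined with $G(0)<0$, this gives $F(\alpha)\le(1-\alpha)F(0)$, which is the uniform negativity that the proof of Lemma~\ref{Lem:IntTc}(a) needs.
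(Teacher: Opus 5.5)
\textbf{There is a genuine gap: the core inequality is left unproven, and your main route to it fails.}

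Your diagnosis of the statement is correct. Since $F(0)=1-cH(q)<0=F(1)$, the conjecture cannot hold literally on $[0,1]$. The chord bound you target, $g(\alpha)\le(1-\alpha)g(0)$ (equivalently $F(\alpha)\le(1-\alpha)F(0)$), is precisely what the paper's proof establishes: its reduction to $1-\alpha+g(\alpha)/H(q)\le 0$ is the same inequality.

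But this chord bound is the entire content of the lemma, and your plan does not deliver it. Your primary route, convexity of $g$ on $[0,1]$, is false. At $\alpha=0$ one has $(\log p)''=r^2(1-2r^2)/(1+r)^4<0$. Numerically, $g''(0)\approx-0.25$ at $r=r_1$ and $g''(0)\approx-0.18$ at $r=r_2$. So $g$ is strictly concave near $\alpha=0$ and convex only near $\alpha=1$.

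Your fallback (subdividing $[0,1]$ with Taylor and monotonicity estimates uniformly in $r$) is a programme rather than an argument. Its premise is also off: the bound is not tight to first order at $\alpha=1$, since $g'(1)=2r^2e^{-r}>1>\log 2\ge H(q)$, the chord slope.

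The idea you are missing is to split the chord bound termwise, so that the logarithms and then the exponentials disappear. It suffices to show $q^{2-\alpha}\ge p(\alpha)$ and $(1-q)^{2-\alpha}\ge 1-2q+p(\alpha)$. Taking logarithms, weighting by $q$ and $1-q$ and adding gives $g(\alpha)\le(1-\alpha)\bigl(q\log q+(1-q)\log(1-q)\bigr)=(1-\alpha)g(0)$.

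The second inequality follows from the first. With $\beta=2-\alpha$, the function $x\mapsto 2x-1+(1-x)^\beta-x^\beta$ is concave on $[0,1/2]$ and vanishes at $0$ and $1/2$. Moreover $q\le 1/2$, since $r^*\ge r_1$.

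In the first inequality the factor $e^{-\beta r}$ cancels. It becomes $h(\beta,r)=(1+r)^\beta-(1+(\beta-1)r)^2-(2-\beta)r\ge 0$ for $\beta\in[1,2]$. Here $h$ vanishes at $\beta=1,2$, and $\partial_\beta^3 h=(1+r)^\beta\log^3(1+r)>0$. Also $\partial_\beta h(1,r)>0>\partial_\beta h(2,r)$ for $r\in[1,2]\supset[r_1,r_2]$. Together these force $h\ge 0$.

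Separately, your Step 2 is unnecessary: once $G\le G(0)<0$, the bound $G(1^-)<0$ is automatic. It is also wrong as written. Since $q/p(1)=1$, you get $g'(1)=2p'(1)=2r^2e^{-r}$, not $p'(1)/q+p'(1)$. Then $1-c\,g'(1)<0$ reads $c>e^{r}/(2r^2)$. This does not involve $\theta$, so it cannot be equivalent to $c>c_2(r^*,\theta)$.
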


\begin{proof}
By assumption,
\begin{equation}
    c \geq (1+\eps)c_{\mathrm{inf}}^{\mathrm{TGT}} \geq (1+\eps)\frac{1}{H(\Pr(\Po(r^*)\leq t-1))}.
\end{equation}
Recall the abbreviation $q = q(r^*) = \pr\bc{\Po(r^*)\leq t-1}$. To establish Conjecture \ref{Conj}, it is sufficient to show that

\begin{align*}
     &1-\alpha + \frac{1}{H(q)} \bc{q \log\bcfr{p(\alpha)}{q} + (1-q) \log\bcfr{1-2q+p(\alpha)}{1-q}} \\
    &= 2- \alpha - \frac{q \log p(\alpha)+  (1-q) \log\bc{1-2q+p(\alpha)}}{q \log q +(1-q) \log (1-q)} \leq 0.
\end{align*}

We can split this into two sufficient conditions, mainly
\begin{align*}
    (2-\alpha) q \log q &\geq q \log p(\alpha), \\
    (2-\alpha) (1-q) \log (1-q) &\geq (1-q)\log\bc{1-2q+p(\alpha)}.
\end{align*}
Further simplification gives
\begin{align}
    q^{2-\alpha} &\geq p(\alpha), \label{cond1a} \\
    (1-q)^{2-\alpha} &\geq 1-2q+p(\alpha). \label{cond1b}
\end{align}
By Lemma~\ref{lem_b}, $r_1 < r_2$, which implies that $r^*\geq r_1$, so that $q \leq \pr\bc{\Po(r_1)\leq t-1} = 1/2$. 
Let
\begin{align}
    F_{\alpha}(x)=2x-1+(1-x)^{2-\alpha}-x^{2-\alpha}.
\end{align}
Then $F_{\alpha}''(x)=(2-\alpha)(1-\alpha)\bc{\frac{1}{(1-x)^{\alpha}} - \frac{1}{x^{\alpha}}}$. This implies that $F_{\alpha}$ is concave on $[0,1/2]$ with $F_{\alpha}(0) = F_{\alpha}(1/2)=0$ for all $\alpha \in [0,1]$. We conclude that for all $\alpha \in [0,1], x \in [0,1/2]$, 
\begin{align}
    F_{\alpha}(x) = 2x-1+(1-x)^{2-\alpha}-x^{2-\alpha} \geq 0.
\end{align}
This in turn implies that 
$$ (1-q)^{2-\alpha} -\bc{ 1-2q+p(\alpha)} \geq q^{2-\alpha} - p(\alpha),$$
and thus that \eqref{cond1b} dominates \eqref{cond1a}. We therefore only need to demonstrate the latter. 

Setting $\beta=2-\alpha$, observe that 
\eqref{cond1b} reduces to
\begin{align}\label{eq:Conj2}
    h(\beta,r) := (1+r)^\beta - (1 + (\beta-1)r)^2 - (2-\beta)r \ge 0
\end{align}
for all $r \in [1,2]$ and $\beta \in [1,2]$. We will consider $h$ as a function of $\beta$ for fixed $r$. 
First observe that
\begin{align*}
h(1,r) =& (1+r)^1 - \left[ (1 + 0 \cdot r)^2 + (2-1)r \right] = (1+r) - (1 + r) = 0, \\
h(2,r) =& (1+r)^2 - \left[ (1 + (2-1)r)^2 + (2-2)r \right] = (1+r)^2 - (1+r)^2 = 0
\end{align*}
and that 
\begin{align*}
\partial_{\beta} h(\beta,r) &= (1+r)^\beta \log(1+r) - \bc{ r + 2r^2(\beta-1) }, \\
\partial_{\beta}^2 h(\beta,r) &= (1+r)^\beta (\log(1+r))^2 - 2r^2, \\
\partial_{\beta}^3 h(\beta,r) &= (1+r)^\beta (\log(1+r))^3.
\end{align*}
From this, since $r>0$, we conclude that for all $\beta \in [1, 2]$
$$\partial_{\beta}^3 h(\beta, r) > 0.$$
This implies that for all $r \in [1,2]$, the second derivative $\partial_{\beta}^2 h(\beta,r)$ is a strictly increasing function in $\beta$. A strictly increasing function can cross the zero-axis at most once. Thus, $\partial_{\beta}^2 h(\beta, r)$ has at most one zero in $[1, 2]$, which in turn implies that $\partial_{\beta} h(\beta, r)$ has at most two zeros in $[1,2]$.

 Furthermore, we can evaluate $\partial_{\beta}h(1,r)$ and $\partial_{\beta} h(2,r)$ for $r\in [1,2]$
\begin{align*}
\partial_{\beta} h(1,r) =& (1+r) \log(1+r) - r>0; \text{ and} \\
\partial_{\beta} h(2,r)=&(1+r)^2 \log(1+r) -(r+2r^2)<0.    
\end{align*}
This implies that $h(\beta,r)$ is positive both in a neighbourhood of $1$ and of $2$.
Now for contradiction suppose that $m:=\min_{\beta \in [1,2]}h(\beta,r) <0$. Since $\beta \mapsto h(\beta,r)$ is continuous, there exists a local maximum $n_1$ of $h(\beta,r)$ in $(1,m)$ and a local maximum $n_1$ of $h(\beta,r)$ in $(m,2)$.  Then $\partial_{\beta} h(m, r)=\partial_{\beta} h(n_1, r)=\partial_{\beta} h(n_2, r)=0$ and we have found three distinct zeros of $\partial_{\beta}h(1,r)$, contradiction. Therefore $m=0$ for any $r$.
\end{proof}

\begin{proof}[Proof of Lemma \ref{Lem:IntTc}]
Let 
\begin{align*} 
q &=\Pr(\Po(r^*)\leq t-1) \qquad \text{and} \\
 p(\alpha) &=\Pr(\Po(\alpha r^*)+\max\{\Po_1((1-\alpha)r^*), \Po_2((1-\alpha)r^*)\}\leq t-1).
\end{align*}
Let the events $\cM, \cM'$ be defined as $\cM:=\cbc{\vm_{\geq t}\in \brk{\bc{1 \pm n^{-\Omega(1)}} m\pr\bc{\Po(d/c)\geq t}}}$ and $\cM':=\cbc{\abs{\cP}\in \brk{\bc{1 \pm n^{-\Omega(1)}}m\pr\bc{\Po(d/c)\geq t}}}$. Then $\cM, \cM'$ are \whp\ events, where the proof for $\cM'$ is analogous to the proof for $\cM$ given in  Lemma~\ref{Lemma_m0}. We evaluate the expectation of $\mathds{1}_{\cM} Z_{k,\ell}(\mathbf{G},\hat{\SIGMA})$ given $\vGamma$ using Lemmas \ref{Lem:first_bound_small_conditioned} -- \ref{Lem:BinC} for the region pertaining to $\alpha \in [0,1-1/\log{n}]$ 
\begin{align}
    \Erw&\brk{\mathds{1}_{\cM} Z_{k,\ell}(\mathbf{G},\hat{\SIGMA}) \mid\vGamma} \nonumber \\
    &\leq O\bc{\bc{\Delta k}^{3/2}} \binom{k}{\ell} \binom{n-k}{k-\ell} \Erw\brk{\mathds{1}_{\cM'} \prod_{i \in \cN} \bc{\frac{p_{i,\ell,\vGamma}}{q_i}} \prod_{i \in \cP} \bc{\frac{1 - 2q_i + p_{i,\ell,\vGamma}}{1 - q_i}}\mid \vGamma} \nonumber \\
         &\leq   \exp{\bc{ (1-\alpha )(1-\theta)k\log n +O(k)}}\nonumber \\ 
         &\quad\times\exp\bc{  m\bc{ q \log\bcfr{p(\alpha)}{q} + (1-q) \log \bcfr{1-2 q + p(\alpha)}{1-q}}
          + O\bc{n^{-\Omega(1)}m}}. \label{Eq:FMM}
         \end{align}     
By Conjecture~\ref{Conj}, for all $c>c_{\mathrm{inf}}^{\mathrm{TGT}}$, the function
\begin{align*}
     g(\alpha) := &1-\alpha + c \bc{q \log\bcfr{p(\alpha)}{q} + (1-q) \log\bcfr{1-2q+p(\alpha)}{1-q}}
\end{align*}
is maximised in $\alpha =0$, where $g(0) = 1-cH(q)$. As seen in \eqref{Eq:Cond_R}, under our assumptions, $m/(k \log(n/k)) \geq (1+\eps)/H(q)$. Plugging these observations into \eqref{Eq:FMM}, we obtain
\begin{align}
    \Erw&\brk{\mathds{1}_{\cM} Z_{k,\ell}(\mathbf{G},\hat{\SIGMA}) \mid\vGamma} \leq \exp\bc{- \eps m + O(k)}.
\end{align}

Therefore, by Markov's inequality and a union bound over $\ell \leq k-k/\log n$, $Z_{k,\ell}(\mathbf{G},\hat{\SIGMA})=0$ w.h.p. for any $m> (1+\eps)c_{\mathrm{inf}}^{\mathrm{TGT}}k \log(n/k)$.
\end{proof}

Further discussion on Conjecture \ref{Conj} is given in Appendix \ref{App:CondUp}. 

\subsection{Large Overlap} \label{SSec:UpperLarge}

When the overlap $\ell$ comes too close to $k$, the first moment of $Z_{k,\ell}(\G,\SIGMA)$ becomes too large. To overcome this, we require a more careful analysis that guarantees every defective participates in enough informative tests, similar to the analysis conducted in \cite{aco_2019}.

\paragraph{Outline of the proof.} The proof of Proposition~\ref{Prop:LrgU} is organised into two main parts. Firstly, in Proposition~\ref{Lemma_rigid}, we define 
$\cR$  as the event that every defective item is included in a sufficient number of pivotal tests and show that for $m>(1+\eps)c_{\mathrm{inf}}^{\mathrm{TGT}}k \log(n/k)$, $\cR$ occurs \whp.  
Subsequently, in Lemma~\ref{Lemma_big_overlap}, we analyse the expectation of $Z_{k,\ell}(\G,\hat{\SIGMA})$ conditioned on the event $\cR$, which addresses the lottery phenomenon that renders the unconditional first moment path unwalkable.  
Finally, in the proof of Proposition~\ref{Prop:LrgU}, we combine these results to show that the conditional expectation is of order $o(1)$ for any $m>(1+\eps)c_{\mathrm{inf}}^{\mathrm{TGT}}k \log(n/k)$. 

Recall that $n_y(a)$ denotes the number of times item $y$ appears in test $a$.

\begin{proposition}\label{Lemma_rigid}
    For any $\eps>0$, there exists $\delta=\delta(\eps)>0$ such that for all $m>(1+\eps)c_{\mathrm{inf}}^{\mathrm{TGT}}k \log(n/k)$, the following is true. Let $\cR$ be the event that, for every $x_i$ with $\SIGMA_{x_i}=1$, there are at least $\delta \Delta$ tests $a \in \partial x_i$ such that $\sum_{y \in \partial a} n_y(a) \sigma_y =t$. Then  $\pr(\cR)=1-o(1)$.
\end{proposition}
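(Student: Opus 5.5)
Fix a defective item $x$ and bound the probability that fewer than $\delta\Delta$ of its tests are pivotal; then take a union bound over the $k$ defectives. Note that for a defective $x$ appearing once in $a$, the condition $\sum_{y\in\partial a} n_y(a)\SIGMA_y = t$ is exactly pivotality of $a$ for $x$. By Lemma~\ref{Lemma_m0}, with probability $1-o(n^{-7})$ we have $\vm_t = (1+n^{-\Omega(1)})\, m\,\pr(\Po(d/c)=t)$. Hence the number of defective edge-clones lying in tests with exactly $t$ defectives is $t\vm_t$, a fraction
\[
\frac{t\vm_t}{k\Delta} = (1+o(1))\,\pr\bc{\Po(r^*)=t-1}=:\pi
\]
of all $k\Delta$ defective clones. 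Here we use $d/c=r^*$ and $m/(k\Delta) = 1/(r^*\log(n/k))\cdot \log(n/k)$, so $tm\pr(\Po(r^*)=t)/(k\Delta) = \pr(\Po(r^*)=t-1)$.

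Conditionally on $\vGamma$ and on the multiset of defective counts $(\vY_j)_j$, the assignment of the $k\Delta$ defective clones to the defective slots is a uniformly random matching. Therefore the number $\vec P_x$ of clones of $x$ landing in exactly-$t$ tests is hypergeometric: $\Delta$ draws from a population of size $k\Delta$ containing $t\vm_t$ marked elements. This is the same computation as for $\vU_1$ in Proposition~\ref{Lemma_V_all}(i), now asking for at most $\delta\Delta$ successes instead of zero. Hypergeometric tails are dominated by binomial ones, so
\[
\pr(\vec P_x\le \delta\Delta)\le \binom{\Delta}{\delta\Delta}(1-\pi+o(1))^{\Delta(1-\delta)} \le \exp\bc{\Delta\bc{H(\delta)+(1-\delta)\log(1-\pi)+o(1)}}.
\]
Since $\Delta = r^* c\log(n/k) = r^*c\frac{1-\theta}{\theta}\log k\,(1+o(1))$, the union bound over $k$ defectives gives
\[
k\cdot \pr(\vec P_x\le\delta\Delta)\le \exp\bc{\log(n/k)\bc{\tfrac{\theta}{1-\theta} + r^*c\,(H(\delta)+(1-\delta)\log(1-\pi))+o(1)}}.
\]

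The key input is that $c\ge(1+\eps)\cinf\ge(1+\eps)c_2(r^*,\theta)$. This gives
\[
\tfrac{\theta}{1-\theta}+r^*c\log(1-\pi)\le -\eps\,\tfrac{\theta}{1-\theta}<0,
\]
exactly as in Proposition~\ref{Lemma_V_all}(v). Choosing $\delta=\delta(\eps)$ small enough that $r^*c\,(H(\delta)-\delta\log(1-\pi))<\tfrac{\eps\theta}{2(1-\theta)}$ makes the total exponent at most $-\tfrac{\eps\theta}{2(1-\theta)}\log(n/k)$, so the union bound is $o(1)$. This uniformity in $c$ needs some care: for large $c$ the term $r^*cH(\delta)$ grows. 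However, the margin $-r^*c\log(1-\pi)-\frac{\theta}{1-\theta}$ also grows linearly in $c$, so taking $\delta$ with $H(\delta)-\delta\log(1-\pi)\le \frac{\eps}{2(1+\eps)}(-\log(1-\pi))$ works for all $c\ge(1+\eps)\cinf$ simultaneously.

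Two issues remain, both handled with the paper's existing tools. The first is multiplicities. An item appearing twice in a test could make that test count differently, but, as in the proof of Proposition~\ref{Lemma_V_all}(i), w.h.p. only $O(\log n)$ defectives have any repeated test, and each has at most $O(1)$ such tests w.h.p. Discarding these tests costs $o(\Delta)$ and can be absorbed by replacing $\delta$ with $\delta/2$; for a defective appearing twice in $a$, the sum counts it with multiplicity, which can only matter in those few tests. The second is the conditioning, which I expect to be the main technical point. The hypergeometric description requires conditioning on $(\vY_j)$; the concentration of $\vm_t$ from Lemma~\ref{Lemma_m0} must hold with probability $1-o(1)$ and uniformly, which it does. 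One must also check that, given $(\vY_j)$, the positions of a fixed defective's clones are a uniform $\Delta$-subset of defective slots. This follows from the exchangeability in the balls-and-bins representation used in Lemma~\ref{Lem:first_bound_small_conditioned}.
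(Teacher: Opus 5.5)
Your proposal is correct and follows essentially the paper's route: bound, for a single defective, the chance that fewer than $\delta\Delta$ of its tests lie among the exactly-$t$ tests, using that these carry a fraction $t\vm_t/(k\Delta)\approx\pr\bc{\Po(r^*)=t-1}$ of all defective clones; take a union bound over the $k$ defectives; and close with $c\geq(1+\eps)c_2(r^*,\theta)$ and $\delta$ small. The differences are technical. You use the exact hypergeometric law given $(\vY_j)_j$ and a subset union bound, where the paper uses sequential exposure with domination by $\Bin(\Delta,p_{\min}(\vm_t))$ and the KL--Chernoff bound. Your multiplicative condition $H(\delta)-\delta\log(1-\pi)\le\frac{\eps}{2(1+\eps)}\bc{-\log(1-\pi)}$ makes the choice of $\delta$ uniform in $c$, and is more careful than the paper's additive $\eps/2$ slack, which is too weak when $-\log(1-\pi)$ is small.
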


\begin{proof}
Given $\delta >0$, let $\vec T$ be the number of defective items that participate in less than $\delta \Delta$ (distinct) tests with $t$ defective items, i.e.,
	$$ \vec T = \abs{\left\{x\in \vV_1 : \sum_{a\in\partial x}\vecone\cbc{\sum_{y \in \partial a}n_y(a)\SIGMA_y =t}<\delta\Delta\right\}}.$$
The claim follows from showing that for $m>(1+\eps)\minf$, $\delta=\delta(\eps)$ sufficiently small, $\Erw[\vec T] = o(1)$.

For any $x \in \vV_1$, let $\vD_x$ be the number of distinct tests with quantitative result $t$ in which $x$ participates (cf.~discussion in  Section \ref{SSec:repeated}). Recall that $\vm_t$ denotes the number of tests with quantitative result $t$. The total number of `defective' half-edges matching to tests with result $t$ is exactly $t\vm_t$, while the total number of defective half-edges is $k \Delta$. 

We expose the $\Delta$ half-edges of $x$ sequentially: 
Let $\cF_{i-1}$ be the $\sigma$-algebra generated by pairing the first $i-1$ half-edges of $x$. At stage $i$, at most $i-1$ distinct tests with quantitative result $t$ have been connected to $x$. Each such test has exactly $t$ half-edges matched to defective items, and at most $i-1$ half-edges belonging to these tests have already been used. Therefore, by the uniform choice of the matching, conditionally on $\cF_{i-1}, \vm_t$ and $x \in V_1$
\begin{align*}
 \pr&\bc{\text{the $i$-th half-edge of $x$ connects to a new test with result $t$}\mid \cF_{i-1}, \vm_t, x \in V_1}\\ &\geq\frac{t\vm_t-t(i-1)}{k\Delta} \geq \frac{t\vm_t-t\Delta}{k\Delta} =:p_{\min}(\vm_t).
\end{align*}

By a standard coupling argument, this uniform lower bound on the conditional success probability implies that $\vD_x$ stochastically dominates a binomial random variable
$$\vD_x \succeq \Bin(\Delta, p_{\min}(\vm_t)) := \vB.$$
Consequently, for any $\delta > 0$:
$$\pr(\vD_x < \delta\Delta \mid \vm_t, \SIGMA) \leq \pr(\Bin(\Delta, p_{\min}(\vm_t)) < \delta\Delta \mid \vm_t, \SIGMA).$$
By linearity of expectation,
\begin{align}\label{eqLemma_V1delta_2}
	\Erw\brk{\vec T \mid \vm_t, \SIGMA} \leq \sum_{x \in \vV_1} \pr\bc{\vB < \delta \Delta \mid \vm_t, \SIGMA} = k \pr\bc{\vB < \delta \Delta \mid \vm_t}.
\end{align}
Applying the Chernoff bound Theorem \ref{lem_chernoff_2}, we obtain for $\delta < p_{\min}(\vm_t)$ that
\begin{align}\label{eqLemma_V1delta_3}
	\pr\bc{\vB < \delta\Delta \mid \vm_t} \leq \exp\bc{-\Delta \KL{\delta}{p_{\min}(\vm_t)}}.
\end{align}
By Lemma \ref{Lemma_m0}, for $\vGamma$ satisfying \eqref{eq_minmax_G}, using the fact that our choice of $\Delta$ and $m$ gives $r^* m= k\Delta$ and $d/c = r^*$, 
we obtain
	\begin{align}\label{eqLemma_V1delta_1}
	\pr\bc{\vm_t\in\left.\brk{\bc{1\pm n^{-\Omega(1)}} \frac{k\Delta}{t!} (r^*)^{t-1} \eul^{-r^*}}\right|\vGamma} &=1-o(n^{-7}),
	\end{align}
so that for $\vGamma$ satisfying \eqref{eq_minmax_G}, with probability $1-o(n^{-7})$,
    \begin{equation}\label{eq_pmin}
    p_{\min}(\vm_t)\in \brk{\bc{1\pm n^{-\Omega(1)}} \frac{1}{(t-1)!} (r^*)^{t-1} \eul^{-r^*}}.
    \end{equation}
Thus, on the event that $\vGamma$ satisfies \eqref{eq_minmax_G}, for $\delta < \frac{1}{(t-1)!} (r^*)^{t-1} \eul^{-r^*}$, \eqref{eqLemma_V1delta_2}, \eqref{eqLemma_V1delta_3} and \eqref{eq_pmin} imply that
\begin{align}\label{eqLemma_V1delta_4}
    \Erw\brk{\vec T \mid \vm_t} \leq \exp\bc{\theta \log n\bc{1 - \frac{1-\theta}{\theta}r^*c \KL{\delta}{\pr\bc{\Po(r)=t-1}+o(1) }}}. 
\end{align}
By assumption, 
\begin{equation} \label{Eq:Cond_R}
    c \geq (1+\eps)\max\{c_1(r^*), c_2(r^*, \theta)\} \geq (1+\eps)\frac{1}{-\frac{1-\theta}{\theta}r^*\log\bc{1-\pr\bc{\Po(r^*)=t-1}}}.
\end{equation}
Thus,
\begin{align}\label{eqLemma_V1delta_5}
    \Erw\brk{\vec T \mid \vm_t} \leq \exp\bc{\theta \log n\bc{1 - \bc{1+\eps} \frac{\KL{\delta}{\pr\bc{\Po(r^*)=t-1}+o(1) }}{-\log\bc{1-\pr\bc{\Po(r^*)=t-1}}}}}. 
\end{align}
Now, since
\begin{align*}
   & \KL{\delta}{\pr\bc{\Po(r^*)=t-1}} \\
   = & -H(\delta) - \delta\log \pr\bc{\Po(r^*)=t-1} - (1-\delta)\log\bc{1-\pr\bc{\Po(r^*)=t-1}},
\end{align*}
we may choose $\delta > 0$ small enough such that  $\KL{\delta}{\pr\bc{\Po(r)=t-1}+o(1) }$ $\geq -\log\bc{1-\pr\bc{\Po(r)=t-1}} - \eps/2  + o(1)$. Thus, on the event that $\vGamma$ satisfies \eqref{eq_minmax_G},
\begin{align}\label{eqLemma_V1delta_6}
    \Erw\brk{\vec T \mid \vm_t} = o(1).
\end{align}
Finally, Lemma~\ref{Lemma_GammaMinMax} implies that $\Erw[\vec T] = o(1)$. 
\end{proof}

Proposition~\ref{Lemma_rigid} implies that \whp, the ground truth~$\SIGMA$ is \emph{locally rigid}.  
To see this, consider flipping a single item $x_i$ that has $\SIGMA_{x_i}=1$ to non-defective. On the event~$\cR$, this change alters the outcomes of at least $\delta\Delta$ tests that previously contained exactly $t$ defectives. 
To ``undo'' the $\delta\Delta$ altered outcomes, while keeping the Hamming weight of the label vector fixed, one would need to promote at least one non-defective item inside each such test to defective. However, since in our design, the tests typically have small intersection, likely many such compensating flips will be necessary and trigger further changes in order to keep the test results fixed. As a result, on $\cR$, there will likely be no assignments that have both large overlap with~$\SIGMA$ and produce the same test results.

The next step is therefore to bound the number of alternative assignments~$\hat{\SIGMA}$ at large overlap~$\ell$ that are consistent with the observed outcomes, conditioned on~$\cR$ and~$\vGamma$, following the strategy of \cite[Lemma III.4]{aco_2019}:

\begin{lemma} \label{Lemma_big_overlap}
    Suppose that $(1- 1/\log{n})k\leq \ell<k$. 
    Then with probability $1-o(n^{-2})$, 
    \begin{align}\label{eq_big_overlap}
        \Erw&\brk{Z_{k,\ell}(\mathbf{G},\hat{\SIGMA}) \mid \vGamma, \cR} = n^{-\omega(1)}.
    \end{align}
\end{lemma}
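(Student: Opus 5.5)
Following \cite[Lemma III.4]{aco_2019}, I would bound the conditional first moment by a union bound over the local structure of a competing solution. Write $\ell=k-s$ with $1\le s\le k/\log n$. A competing assignment $\sigma\in S_k(\G,\hat\SIGMA)$ with $\langle\SIGMA,\sigma\rangle=\ell$ is determined by a set $A\subseteq \vV_1$ of $s$ defectives turned off and a set $B\subseteq \vV_0$ of $s$ non-defectives turned on. The number of such pairs is at most $\binom{k}{s}\binom{n-k}{s}\le \exp\bc{s\log(ek/s)+s\log(en/s)}=\exp\bc{O(s\log n)}$. The task is to show that, given $\vGamma$ and $\cR$, a fixed pair $(A,B)$ is consistent with $\hat\SIGMA$ with probability at most $\exp(-\Omega(s\Delta\log\log n))$, or at least $n^{-\omega(1)}\cdot\exp(-s\cdot C\log n)$ for every constant $C$. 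Summing over $s$ then gives $n^{-\omega(1)}$.

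The key step uses $\cR$. Every $x\in A$ has at least $\delta\Delta$ tests containing exactly $t$ defectives. Removing $x$ turns each such test negative. To keep $\hat\SIGMA$ unchanged, each of these tests must therefore also contain an element of $B$ (a test losing several items of $A$ needs correspondingly many items of $B$). Because $|A|=s=o(k)$ and tests have small pairwise overlaps, at least $(1-o(1))s\delta\Delta$ distinct tests must each receive an edge from $B$. This holds up to an error controlled by the number of tests containing two items of $A$, whose expected count is $O(s^2\Delta^2/m)$, negligible when $s\le k/\log n$. The items of $B$, however, are non-defectives whose $s\Delta$ edges are placed essentially uniformly among the $m$ tests. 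I would bound the probability that these edges cover a prescribed set of $\ge \delta s\Delta/2$ distinct tests. Each such edge has to land in a set of only $O(s\Delta)$ tests out of $m=ck\log(n/k)$, so the probability is at most
\begin{align*}
\binom{s\Delta}{\delta s\Delta/2}\bcfr{O(s\Delta)}{m}^{\delta s\Delta/2}\le \exp\bc{-\tfrac{\delta}{2}s\Delta\log\bcfr{k}{O(s\Delta)}}.
\end{align*}
Since $s\Delta\le k\Delta/\log n=O(k)$, I still need $\log(k/(s\Delta))$ to be large. This is the main obstacle: near $s=k/\log n$ the ratio $k/(s\Delta)$ is only constant.

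To handle that range, I would exploit the fact that $B$ must be chosen out of $n-k$ non-defectives using the specific target tests, rather than counting $B$ freely. The approach is an exposure argument in the style of the paper's balls-and-bins model under Lemma~\ref{Lemma_Elemma}, carried out via the decoupling with $\cE$ or $\cT$ at polynomial cost $O((\Delta k)^{3/2})$. Fix $A$; it determines at least $\delta s\Delta$ forced tests. Then count the ways to choose $B$ so that it hits all of them. Each forced test contains about $\Gamma\approx \Delta n/m$ non-defectives, so a hit from a given $y\in B$ has probability about $s\Delta\cdot\Delta/m$ per edge. Requiring $\delta s\Delta$ hits spread over $s$ items forces each typical item of $B$ to have $\Omega(\Delta)$ edges into a set of $O(s\Delta)$ tests. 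For a given $y$ this has probability $\exp(-\Omega(\Delta\log(m/(s\Delta))))$, and $\log(m/(s\Delta))\ge \log(c\log(n/k)/\Delta\cdot k/s)\to\infty$ only when $s=o(k)$. Combining this with the factor $\binom{n-k}{s}\le \exp(s\log n)$ and $\Delta=\Theta(\log n)$, the exponent per item becomes $\log n-\Omega(\Delta\log(k/s))$, which is $-\omega(\log n)$ because $k/s\ge\log n$. I expect this step to require care.

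The bound $\exp(-s\,\omega(\log n))$ per $s$, summed over $1\le s\le k/\log n$ and multiplied by the polynomial decoupling loss and the factor $\pr(\cR)^{-1}=1+o(1)$, yields~\eqref{eq_big_overlap}. The probability $1-o(n^{-2})$ comes from restricting to $\vGamma$ satisfying~\eqref{eq_minmax_G} via Lemma~\ref{Lemma_GammaMinMax}, and from the concentration of $\vm_t$ in Lemma~\ref{Lemma_m0}. Repeated inclusions are handled as in Section~\ref{SSec:repeated}: they affect only $O(\log n)$ items in expectation and can be absorbed.
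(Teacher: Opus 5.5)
Your argument is essentially the paper's: you take a first moment over the $\binom{k}{s}\binom{n-k}{s}=\exp(O(s\log n))$ swaps with $s=k-\ell$, and $\cR$ supplies at least $s\delta\Delta/t$ tests with exactly $t$ defectives that contain an item of $A$ (this count is deterministic, since each such test holds at most $t$ items of $A$, so your $(1-o(1))$ overlap estimate is not needed); each of these tests must receive a switched-on item, giving probability roughly $(r^*s/k)^{s\delta\Delta/t}$, the same estimate the paper obtains from $\bigl(1-(1-\frac{k-\ell}{n-k})^{\vGamma_{\max}}\bigr)^{(k-\ell)\delta\Delta/t}$ in the independent balls-into-bins model. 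The ``main obstacle'' you flag is only an arithmetic slip, because $m=ck\log(n/k)$ and $\Delta=r^*c\log(n/k)$ give $m/(s\Delta)=k/(r^*s)\ge \log n/r^*$; your first display therefore already yields $\exp(-\Omega(s\Delta\log\log n))=\exp(-\omega(s\log n))$, and the second, heuristic ``typical item'' exposure argument can be dropped.
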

\begin{proof}
   As in the proof of Lemma~\ref{Lem:first_bound_small_conditioned}, due to symmetry, we may assume that $\SIGMA_{x_i}=\vecone\{i\leq k\}=\vecone_{[k]}$ and $\sigma_{x_i}=\vecone\{i\leq\ell\}+\vecone\{k<i\leq 2k-\ell\}$. Then for this choice of $\sigma$, as in \eqref{eq:exp_sym},
\begin{align}\label{eq:exp_sym2}
   &  \Erw\brk{Z_{k,\ell}(\mathbf{G},\hat{\SIGMA})\mid \vGamma,\cR}  = \binom{k}{\ell} \binom{n-k}{k-\ell} \pr\bc{\forall a \in [m]: \hat\sigma_a = (\hat\vecone_{[k]})_a \vert \vGamma,\cR}.
 \end{align}

Recall the ball in bins argument used in the proof of Lemma \ref{Lem:first_bound_small_conditioned} using the random variables $(\vA_{i,j})_{(i,j) \in \bigcup_{a=1}^m \{a\} \times [\vGamma_a]}=((\vA_{i,j,1},\vA_{i,j,2}))_{(i,j) \in \bigcup_{a=1}^m \{a\} \times [\vGamma_a]}$, as well as the  i.i.d. random variables $(\vA_{i,j}')_{(i,j) \in \bigcup_{a=1}^m \{a\} \times [\vGamma_a]}=((\vA_{i,j,1}',\vA_{i,j,2}'))_{(i,j) \in \bigcup_{a=1}^m \{a\} \times [\vGamma_a]}$. 

Let
\begin{align*}
    \cM_{t}' &= \cbc{i \in [m]: \sum_{j \in [\vGamma_i]}\vA_{i,j,1}' = t,\sum_{j \in [\vGamma_i]}\vA_{i,j,1}'\vA_{i,j,2}' \leq t-1}
\end{align*}
and define the corresponding set $ \cM_t$ with respect to the random variables $(\vA_{i,j})$.
Then in the balls-into-bins experiment, the event $\cR$ has the following consequences: Matching the $n$ $\{0,1\}^2$-labelled balls to bins (i.e. tests) creates the random variables $\vA_{i,j}$. On the event $\cR$, every ball that has a $1$ in its first coordinate appears in at least $\delta\Delta$ (different) tests that have exactly $t$ balls with a $1$ in their first coordinate. This in particular applies to the $k-\ell$ $(1,0)$ balls. Since at most $t$ such balls can appear jointly in each of these tests, their total number is lower bounded by $(k-\ell)\delta\Delta/t$. On the other hand, each such test has at least one $(1,0)$ ball. Thus,
for each $\ell$,
\begin{align*}
    \cR \subseteq \cbc{|\cM_t| \geq (k-\ell)\frac{\delta}{t}\Delta}.
\end{align*}
Proposition \ref{Lemma_rigid} implies that $\cbc{|\cM_t| \geq (k-\ell)\frac{\delta}{t}\Delta}$ is a w.h.p. event, and thus that
\begin{align} 
  & \binom{k}{\ell} \binom{n-k}{k-\ell} \pr\bc{\forall a \in [m]: \hat\sigma_a = (\hat\vecone_{[k]})_a \vert \vGamma, |\cM_t| \geq (k-\ell)\frac{\delta}{t}\Delta } = n^{-\omega(1)} \label{eq_suff_cond} \\ \Longrightarrow &
  \binom{k}{\ell} \binom{n-k}{k-\ell}  \pr\bc{\forall a \in [m]: \hat\sigma_a = (\hat\vecone_{[k]})_a \vert \vGamma,\cR}  = n^{-\omega(1)}. \nonumber 
\end{align}
To establish (\ref{eq_big_overlap}), it is thus sufficient to show (\ref{eq_suff_cond}). 

We next make the step to the independent model. Recall the events $\cS$ and $\cT$ from the proof of Lemma~\ref{Lem:first_bound_small_conditioned}. 
Because $\vA'=(\vA_{i,j}')_{i,j}$ given $\cT, \vGamma$ is distributed as $\vA=(\vA_{i,j})_{i,j}$ given $\vGamma$, and the event $|\cM_t| \geq (k-\ell)\frac{\delta}{t}\Delta$ can be read off from the ball allocation, 
\begin{align}
     \pr\!\bc{\forall a \in [m]: \hat\sigma_a = (\hat\vecone_{[k]})_a \vert \vGamma, |\cM_t| \geq (k-\ell)\frac{\delta}{t}\Delta } =  \pr\!\bc{\cS \Big\vert \vGamma, \cT, |\cM_t'| \geq (k-\ell)\frac{\delta}{t}\Delta }\!.
\end{align}
Moreover, using the conditional distributional identity once more, for $\vGamma$ satisfying \eqref{eq_minmax_G}, also
	    $\pr\bc{\abs{\cM_t'}\geq\delta\Delta(k-\ell)\vert \vGamma, \cT}= 1-o(1)$. Together with the estimates
 $\pr\bc{\abs{\cM_t'}\geq\delta\Delta(k-\ell)\vert \vGamma} \leq 1$ and (\ref{eqLLT8}), Bayes' formula thus implies that for the corresponding $\vGamma$,
	\begin{align*}
	    \pr\bc{\cT | \vGamma, \abs{\cM_t'}\geq\delta\Delta(k-\ell)}=\frac{\pr\bc{\abs{\cM_t'}\geq\delta\Delta(k-\ell)\vert \vGamma, \cT}   \pr\bc{\cT | \vGamma}}{\pr\bc{\abs{\cM_t'}\geq\delta\Delta(k-\ell)\vert \vGamma} } = \Omega((\Delta k)^{-3/2}).
	\end{align*}
	\color{black}
Once more, Bayes' Theorem implies that
\begin{align}
     \pr\bc{\cS \Big\vert \vGamma, \cT, |\cM_t'| \geq (k-\ell)\frac{\delta}{t}\Delta } \leq  O((\Delta k)^{3/2})  \pr\bc{\cS \Big\vert \vGamma, |\cM_t'| \geq (k-\ell)\frac{\delta}{t}\Delta }.
\end{align}
In order to bound $ \pr\bc{\cS \Big\vert \vGamma, |\cM_t'| \geq (k-\ell)\frac{\delta}{t}\Delta }$, let
\begin{align*}
    \cS'' = \cbc{\forall i \in \cM_{t}':\sum_{j \in [\vGamma_i]}\vA_{i,j,2}' \geq t}.
\end{align*}
Then 
\begin{align}
    \cS = \cbc{\forall i \in [m]: \vecone\cbc{\sum_{j \in [\vGamma_i]}\vA_{i,j,1}' \geq t} = \vecone\cbc{\sum_{j \in [\vGamma_i]}\vA_{i,j,2}' \geq t}} \subseteq 
    \cS'',
\end{align}
so that $ \pr\bc{\cS \Big\vert \vGamma, |\cM_t'| \geq (k-\ell)\frac{\delta}{t}\Delta } \leq \pr\bc{\cS'' \Big\vert \vGamma, |\cM_t'| \geq (k-\ell)\frac{\delta}{t}\Delta }$. Since we work in the independent model, given $\cM_{t}'$ and $\vGamma$, 
\begin{align*}
   \pr\!\bc{\cS''\mid \vGamma, \cM_t'}&=\!\!\!\prod_{i\in \cM_t'}\!\! \pr\!\bc{\sum_{j \in [\vGamma_i]}\vA_{i,j,2}' \geq t \Big \vert  \sum_{j \in [\vGamma_i]}\vA_{i,j,1}' = t, \!\!\sum_{j \in [\vGamma_i]}\vA_{i,j,1}'\vA_{i,j,2}' \leq t-1, \vGamma}\!. 
\end{align*}
In each of the probabilities, we condition on the event that the first coordinate of $\vA_{i,j}'$ is one for exactly $t$ $j \in [\vGamma_i]$, out of which at least one has a zero second coordinate. Thus, if the sum over the second coordinates is at least $t$, then \emph{at least one} of the other $\vGamma_i-t$ random variables must have a $1$ in its second coordinate. These considerations lead to the upper bound
	\begin{align*}
	\pr\bc{\cS''\mid \vGamma, \cM_t'}&=\prod_{i\in \cM_t'} \bc{1-\bc{1-\frac{k-\ell}{n-k}}^{\vGamma_{i} -t}}\leq
	\bc{1-\bc{1-\frac{k-\ell}{n-k}}^{\vGamma_{\max}}}^{\abs{\cM_t'}}.
	\end{align*}
    The tower property now implies that
	\begin{align*}
	 \pr\bc{\cS'' \Big\vert \vGamma, |\cM_t'| \geq (k-\ell)\frac{\delta}{t}\Delta } &= \Erw\brk{\Erw\brk{\vecone \cS''\vert\vGamma,\cM_t'}\vert \vGamma, |\cM_t'| \geq (k-\ell)\frac{\delta}{t}\Delta}\\ 
     &\leq \bc{1-\bc{1-\frac{k-\ell}{n-k}}^{\vGamma_{\max}}}^{(k-\ell)\delta \Delta/t}.
	\end{align*}    
    Thus,
    \begin{align} 
  & \binom{k}{\ell} \binom{n-k}{k-\ell} \pr\bc{\forall a \in [m]: \hat\sigma_a = (\hat\vecone_{[k]})_a \vert \vGamma, |\cM_t| \geq (k-\ell)\frac{\delta}{t}\Delta } \nonumber \\ 
  \leq &
  \binom{k}{\ell} \binom{n-k}{k-\ell} \bc{1-\bc{1-\frac{k-\ell}{n-k}}^{\vGamma_{\max}}}^{(k-\ell)\delta \Delta/t}. \label{eq_almost_final_bound}
\end{align}
Thus, it remains to show that (\ref{eq_almost_final_bound}) is $n^{-\omega(1)}$. For $\vGamma$ satisfying \eqref{eq_minmax_G}, using a logarithm expansion and the definition of $\Delta$,
\begin{align*}
    \bc{1-\frac{k-\ell}{n-k}}^{\vGamma_{\max}}\!\!\!\! &= \exp\!\bc{\frac{\Delta n}{m}\log\bc{1-\frac{k-\ell}{n-k}} \!+\! O\!\bc{\sqrt{\frac{\Delta n}{m}}\log n \cdot \log\bc{1-\frac{k-\ell}{n-k}}\!}\!} \\
    &=\! \exp\bc{-r^*\bc{1-\frac{\ell}{k}} + O\bc{\sqrt{\frac{k}{n}}}}.
\end{align*}
Thus,
\begin{align*}
     &\bc{1-\bc{1-\frac{k-\ell}{n-k}}^{\vGamma_{\max}}}^{(k-\ell)\delta \Delta/t} \\ &\qquad\qquad\qquad\leq \exp\bc{\frac{k\delta \Delta}{t} \log\bc{1- \exp\bc{-r^*\bc{1-\frac{\ell}{k}} + O\bc{\sqrt{\frac{k}{n}}}}}}.
\end{align*}
Theorem \ref{SmallBinomial} further yields that
\begin{align*}
    \binom{k}{\ell}\binom{n-k}{k-\ell} = \binom{k}{k-\ell}\binom{n-k}{k-\ell} = (1+o(1))\cdot \frac{1}{2\pi(k-\ell)} \cdot \bc{\frac{\eul^2 k(n-k)}{(k-\ell)^2}}^{k-\ell}. 
\end{align*}
Putting everything together, we obtain that
\begin{align*}
    \binom{k}{\ell} \binom{n-k}{k-\ell}& \bc{1-\bc{1-\frac{k-\ell}{n-k}}^{\vGamma_{\max}}}^{(k-\ell)\delta \Delta/t} \\
    \leq \frac{1+o(1)}{2\pi(k-\ell)} &\cdot \exp\bc{(k-\ell)\log\bc{\frac{\eul^2 k(n-k)}{(k-\ell)^2}}} \\ &\cdot \exp\bc{ \frac{k\delta \Delta}{t} \log\bc{1- \exp\bc{-r^*\bc{1-\frac{\ell}{k}} + O\bc{\sqrt{\frac{k}{n}}}}}}.
\end{align*}
Observe that $(k-\ell)\log\bc{\frac{\eul^2 k(n-k)}{(k-\ell)^2}} = O(k)$ in the present regime of $\ell$. On the other hand, the second summand in the exponent is negative and satisfies
\begin{align}
    \frac{k\delta \Delta}{t} \log\bc{1- \exp\bc{-r^*\bc{1-\frac{\ell}{k}} + O\bc{\sqrt{\frac{k}{n}}}}} = \Omega\bc{k \log^2 n}
\end{align}
as $\exp\bc{-r^*\bc{1-\frac{\ell}{k}}} = 1 - \Omega(\ell/k) = 1 - \Omega(1/k)$. Putting everything together, we obtain for $\vGamma$ satisfying \eqref{eq_minmax_G}
\begin{align} 
  \binom{k}{\ell} \binom{n-k}{k-\ell}  \pr\bc{\forall a \in [m]: \hat\sigma_a = (\hat\vecone_{[k]})_a \vert \vGamma,\cR}  =  \exp\bc{-\Omega\bc{k\log^2n}} = n^{-\omega(1)}. \nonumber 
\end{align}
	\qedhere
\end{proof}

We have now established that the event $\cR$ occurs \whp{} for the specific assumption on $m$ in Lemma \ref{Lemma_rigid} (see \eqref{Eq:Cond_R}). Further, Lemma \ref{Lemma_big_overlap} has provided an upper bound for the expected number of competing solutions when $\cR$ occurs for large overlap. Thus, the final step is to prove the sum of these expectations is $o(1)$ regardless of the choice of $c$.

\begin{proof}[Proof of Proposition \ref{Prop:LrgU}]
By Lemma~\ref{Lemma_big_overlap}, for all $k-k/\log n \leq \ell <k$ and $\vGamma$ satisfying \eqref{eq_minmax_G}
\begin{align*}
     \Erw&\brk{Z_{k,\ell}(\mathbf{G},\hat{\SIGMA}) \mid \vGamma, \cR} = n^{-\omega(1)}.
\end{align*}
On the other hand, for $k-k/\log n \leq \ell <k$, by Markov's inequality and Lemma~\ref{Lemma_big_overlap},
\begin{align*}
    & \pr\bc{ \exists k-k/\log n \leq \ell <k: Z_{k,\ell}(\mathbf{G},\hat{\SIGMA}) \geq 1} \\
     = &\pr\bc{\sum_{\ell =  k-k/\log n}^{k-1} Z_{k,\ell}(\mathbf{G},\hat{\SIGMA}) \geq 1} \\
    \leq &\sum_{\Gamma \text{ satisfying \eqref{eq_minmax_G}}} \pr\bc{\sum_{\ell =  k-k/\log n}^{k-1}  Z_{k,\ell}(\mathbf{G},\hat{\SIGMA}) \geq 1 \vert \vGamma=\Gamma, \cR} \pr\bc{\vGamma=\Gamma, \cR} + o(1) \\
    \leq & \sum_{\Gamma \text{ satisfying \eqref{eq_minmax_G}}} \sum_{\ell =  k-k/\log n}^{k-1} \Erw\brk{Z_{k,\ell}(\mathbf{G},\hat{\SIGMA}) \vert \vGamma=\Gamma, \cR} \pr\bc{\vGamma=\Gamma, \cR} + o(1) = o(1).
\end{align*}
\end{proof}

\subsection{Proof of Proposition~\ref{prop_upper_main}} \label{SSec:ProofUpperMain}

Lemma \ref{Lem:IntTc} and Proposition \ref{Prop:LrgU} under the assumption of Conjecture \ref{Conj} readily imply that $Z_k(\G,\hat\SIGMA)=1$ \whp\ if $m>(1+\eps)c_{\mathrm{inf}}^{\mathrm{TGT}}k\log(n/k)$. Hence, \Cor~\ref{Cor_Nishi}(2) shows that there exists an inference algorithm that given $\G,\hat\SIGMA$ \emph{and $k$} outputs $\SIGMA$ \whp.

\section{Linear Number of Defectives} \label{Sec:LinearK}
 In Section \ref{SSec:LinFixRow}, we prove Theorem \ref{Thm:LinMain} on the linear regime of TGT with i.i.d. defectives.  Finally, in Section \ref{ssec:ach_linear}, we present a simple, `genuinely TGT' pooling scheme, and corresponding efficient inference algorithm, that requires $O(n\log n)$ tests.

\subsection{More than Linear Number of Tests Required: Proof of Theorem \ref{Thm:LinMain}}\label{SSec:LinFixRow}
Recall that we assume that each item is defective independently with probability $\alpha\in(0,1)$ and we work under the following assumption on the test design:
\begin{assumption}\label{ass_item_degrees}
    Assume that $\vec G$ is a test design on $n$ items such that the degree of each item is at most $\log^4 n$, and the degree of each test is at most $M$.
\end{assumption}

Throughout this section, we follow the strategy used to derive the information-theoretic lower bound for noisy linear CGT in Hintze et al. \cite{hintze2024noisy}. This yields a $\log n$-improvement over an application of the naive counting bound, and allows us to consider reasonably general pooling schemes.

 Hintze et al. \cite{hintze2024noisy} do not have a corresponding assumption on the test degrees, as they can remove items of large degree from any test design $G$ while making the inference problem for the remaining items only easier. This makes use of the fact that in CGT, all tests that are adjacent a defective item do not contain information about their remaining items in the first place. On the other hand, non-defective items can always safely be removed. In our more general setting however, tests adjacent to defective items actually do contain information about their remaining items.

\subsubsection{Optimal Estimators}
Once again, our aim is to show that \emph{any} estimator fails \whp{} unless supplied with a sufficient number of tests, which can first be reduced to considering the Maximum A Posteriori (MAP) estimator.
\begin{definition}[MAP Estimator]\label{def:map}
For any test design $G$ and any observed outcomes $\hat{\sigma}\in \cbc{0,1}^m$,
the \emph{MAP estimator} $\tilde{\sigma}_{{\rm MAP}}(G, \hat{\sigma}) = \tilde{\sigma}_{{\rm MAP}}^{G,\hat{\sigma}}$ of $\SIGMA$ is defined as
\begin{align}\label{eq:map-argmax}
\tilde{\sigma}_{{\rm MAP}}^{G,\hat{\sigma}}
= \arg \max_{\tilde{\sigma}\in \cbc{0,1}^n} \Pr \bc{\SIGMA = \tilde{\sigma} \mid G,\hat{\sigma}},
\end{align}
where, in the event of a non-unique $\arg \max$, we select any $\tilde{\sigma}$ containing the greatest number of zeros.
\end{definition}
Indeed, the MAP estimator is the optimal estimator for exact recovery:
\begin{fact}[{\cite[Section 3.2]{Abbe17}}]\label{lem:map-best}
For any test design $G$ and any estimator $\cA:G\times\cbc{0,1}^m\to\cbc{0,1}^n$, the estimator $f$ performs no better than the {\rm MAP} estimator, i.e.,
\begin{align*}
\Pr\bc{\SIGMA = \tilde{\sigma}_{{\rm MAP}}^{G,\hat{\SIGMA}}}
\geq
\Pr\bc{\SIGMA = \cA(G,\hat{\SIGMA}) }.
\end{align*}
\end{fact}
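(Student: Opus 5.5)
This is the standard optimality of the MAP rule for $0$--$1$ loss. The plan is to condition on the design and the outcomes, and then compare the two estimators pointwise.

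First, I reduce to a fixed design $G$ (or condition on $\G=G$ if the design is randomised and independent of $\SIGMA$). Since $\hat\SIGMA$ takes finitely many values $\hat\sigma\in\{0,1\}^m$, the law of total probability gives
\begin{align*}
\Pr\bc{\SIGMA=\cA(G,\hat\SIGMA)}=\sum_{\hat\sigma}\Pr\bc{\hat\SIGMA=\hat\sigma}\,\Pr\bc{\SIGMA=\cA(G,\hat\sigma)\mid G,\hat\SIGMA=\hat\sigma}.
\end{align*}
The same decomposition holds with $\cA(G,\hat\sigma)$ replaced by $\tilde\sigma_{\rm MAP}^{G,\hat\sigma}$. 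Here I use that $\cA$ is a deterministic function of $(G,\hat\sigma)$, so conditionally on $\hat\SIGMA=\hat\sigma$ its output is a fixed vector $\tau$.

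Second, I compare summand by summand. For each $\hat\sigma$ with positive probability, the definition in \eqref{eq:map-argmax} gives
\[
\Pr\bc{\SIGMA=\tau\mid G,\hat\sigma}\le\Pr\bc{\SIGMA=\tilde\sigma_{\rm MAP}^{G,\hat\sigma}\mid G,\hat\sigma}\quad\text{for every }\tau\in\{0,1\}^n.
\]
The tie-breaking rule does not matter, since every maximiser attains the same maximal value. Multiplying by $\Pr(\hat\SIGMA=\hat\sigma)$ and summing yields the claimed inequality. If $\cA$ is randomised, I condition additionally on its internal randomness, which is independent of $\SIGMA$ given $(G,\hat\SIGMA)$; the bound then holds for each realisation and hence on average.

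There is no real obstacle here. The only points to check are that the argmax exists, which holds because $\{0,1\}^n$ is finite, and that the conditioning is well defined, which holds because outcomes with zero probability contribute nothing to the sum.
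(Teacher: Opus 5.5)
Your proof is correct and is essentially the standard argument the paper relies on (the paper cites Abbe rather than proving it). You decompose over the observed outcomes and note that, for each fixed outcome, the MAP rule maximises the conditional probability of exact recovery regardless of tie-breaking; your remarks on randomised estimators and zero-probability outcomes are sound additions.
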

Thus, in the following, it is sufficient to consider the performance of the MAP estimator. We will compare it to a (non-practical) algorithm that has oracle-access to additional information about the ground truth, which we call the {\em genie} estimator:

\begin{definition}[Genie estimator]\label{def:genie}
    Given a test design $G$ and the observed results $\hat{\sigma}$, with $\SIGMA_{-i}$ the ground truth for every other item except $x_i$, the \emph{genie-based estimator} (or just \emph{genie estimator}) is given by
    \begin{align}\label{eq:genie}
        \tilde{\sigma}_{{\rm gen}}^{G,\hat{\sigma}}(i)=\tilde{\sigma}_{{\rm gen}}^{G,\hat{\sigma},\SIGMA_{-i}}(i) = \arg \max_{s\in \{0,1\}} \Pr\bc{\SIGMA(i) = s \mid G,\hat{\sigma},\SIGMA_{-i}}.
    \end{align}
    If both $0$ and $1$ maximise $\Pr\bc{\SIGMA(i) = s \mid G,\hat{\sigma},\SIGMA_{-i}}$, then we set $\tilde{\sigma}_{{\rm gen}}^{\vec G,\hat{\sigma}}(i)=0$.
\end{definition}

Indeed, given that the genie estimator has access to additional information, it is not surprising that is performs even better than the MAP estimator:
\begin{lemma}[{\cite[Lemma 4.2]{hintze2024noisy}}] \label{lem_bests}
 For every test design $G$,
    \begin{align*}
        \pr\bc{\SIGMA = \tilde{\sigma}_{{\rm gen}}^{G,\hat{\SIGMA}} } \geq \pr\bc{\SIGMA = \tilde{\sigma}_{{\rm MAP}}^{G,\hat{\SIGMA}} }.
    \end{align*}
\end{lemma}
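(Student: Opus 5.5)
The plan is to compare the two estimators on the same probability space. The genie estimator decides each coordinate $i$ separately, using $G$, $\hat{\SIGMA}$ and the true labels $\SIGMA_{-i}$ of all other items. I will show that on the event that the MAP estimator is correct, the genie estimator is also correct in every coordinate. The inequality then follows by monotonicity of probability.

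Fix $G$ and an observed $\hat\sigma$, and suppose $\SIGMA=\tilde\sigma:=\tilde{\sigma}_{{\rm MAP}}^{G,\hat{\sigma}}$. Fix a coordinate $i$. Given $G,\hat\sigma$ and $\SIGMA_{-i}=\tilde\sigma_{-i}$, the only candidates are $\tilde\sigma$ and $\tilde\sigma^{(i)}$, where $\tilde\sigma^{(i)}$ is $\tilde\sigma$ with coordinate $i$ flipped. The first step is to write
\begin{align*}
\Pr\bc{\SIGMA(i)=s\mid G,\hat\sigma,\SIGMA_{-i}=\tilde\sigma_{-i}} = \frac{\Pr\bc{\SIGMA=(s,\tilde\sigma_{-i})\mid G,\hat\sigma}}{\Pr\bc{\SIGMA_{-i}=\tilde\sigma_{-i}\mid G,\hat\sigma}}.
\end{align*}
The denominator does not depend on $s$, so the genie compares $\Pr(\SIGMA=\tilde\sigma\mid G,\hat\sigma)$ with $\Pr(\SIGMA=\tilde\sigma^{(i)}\mid G,\hat\sigma)$. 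By the defining argmax property of the MAP estimator, the first is at least the second. Hence the genie selects $\tilde\sigma(i)$, except possibly in the case of a tie.

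The only delicate step is the tie-breaking convention. In a tie, the genie outputs $0$. Suppose $\tilde\sigma(i)=1$ and the two posteriors tie. Then $\tilde\sigma^{(i)}$ is also a maximiser of the posterior, and it has strictly more zeros than $\tilde\sigma$. This contradicts the rule in Definition~\ref{def:map} that among maximisers the MAP picks one with the greatest number of zeros. So a tie forces $\tilde\sigma(i)=0$, and the genie again agrees with $\tilde\sigma(i)$. This holds for every $i$, so $\{\SIGMA=\tilde\sigma_{\rm MAP}\}\subseteq\{\SIGMA=\tilde\sigma_{\rm gen}\}$ pointwise. Taking probabilities gives the claim.
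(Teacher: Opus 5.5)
Your proof is correct. It shows the pointwise inclusion $\{\SIGMA=\tilde\sigma_{\rm MAP}^{G,\hat\SIGMA}\}\subseteq\{\SIGMA=\tilde\sigma_{\rm gen}^{G,\hat\SIGMA}\}$ by comparing the posterior of $\tilde\sigma$ with that of its single-coordinate flip, and it settles ties through the matching conventions: the MAP prefers more zeros and the genie defaults to $0$. This is essentially the argument of the cited source, since the paper gives no proof of its own and simply imports \cite[Lemma 4.2]{hintze2024noisy}. The one point you leave implicit also holds: the denominator $\Pr(\SIGMA_{-i}=\tilde\sigma_{-i}\mid G,\hat\sigma)$ is positive because the MAP maximiser has strictly positive posterior mass.
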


We next characterise the genie estimator in our setting of TGT without noise. For this, recall the notion of a \emph{pivotal test} from Definition~\ref{Def:Piv}. For $i \in [n]$, let $g_i = g_i(G,\SIGMA)$ denote the number of pivotal tests involving item $x_i$. If the labels of all items except of $x_i$ are known, and $x_i$ does not participate in any pivotal test, then its label can be $0$ or $1$ without changing the test outcomes. In this case, the genie estimator cannot do better than guess the most prevalent status. On the other hand, if $x_i$ has a pivotal test, then the status of $i$ determines the outcome of that pivotal test and thus can be read off from its outcome. Let $g_i^+(\SIGMA)$ and $g_i^-(\SIGMA)$ be the number of positive and negative pivotal tests for item $i$, respectively. Observing that non-defective items cannot participate in positive pivotal tests, and defective items cannot participate in negative pivotal tests, we have just argued the following: 

\begin{claim} \label{lem:geniethr} 
    For any test design $G$ and $i \in [n]$, the genie estimator equals
    \begin{align*}
        \tilde{\sigma}_{{\rm gen}}^{G,\hat{\SIGMA}}(i) = \begin{cases}
            0 & \quad \textup{if $g^{+}_i(\SIGMA)=0$ and $g^{-}_i(\SIGMA)>0$ or if  $g^{+}_i(\SIGMA)=g^{-}_i(\SIGMA)=0$ and $\alpha \leq \frac{1}{2}$},\\
            1 & \quad \textup{if $g^{-}_i(\SIGMA)=0$ and $g^{+}_i(\SIGMA)>0$ or if  $g^{+}_i(\SIGMA)=g^{-}_i(\SIGMA)=0$ and  $\alpha > \frac{1}{2}$}.
        \end{cases}
    \end{align*}
\end{claim}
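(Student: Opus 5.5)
The plan is to compute the posterior $\Pr\bc{\SIGMA(i)=s \mid G,\hat\SIGMA,\SIGMA_{-i}}$ directly from Bayes' rule. Under the i.i.d. prior, $\SIGMA(i)$ is independent of $\SIGMA_{-i}$. Moreover, given $G$ and $\SIGMA_{-i}$, the outcome vector is a deterministic function of $\SIGMA(i)$; write it as $\hat\sigma^{(s)}$ for $s\in\{0,1\}$. Hence
\begin{align*}
\Pr\bc{\SIGMA(i)=s \mid G,\hat\SIGMA,\SIGMA_{-i}} \propto \Pr\bc{\SIGMA(i)=s}\,\vecone\cbc{\hat\sigma^{(s)}=\hat\SIGMA} = \alpha^{s}(1-\alpha)^{1-s}\,\vecone\cbc{\hat\sigma^{(s)}=\hat\SIGMA}.
\end{align*}
The claim then reduces to determining, for each $s$, whether $\hat\sigma^{(s)}=\hat\SIGMA$.

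The key step is to compare $\hat\sigma^{(0)}$ and $\hat\sigma^{(1)}$ test by test. Since the design has no multi-edges for these purposes, $x_i$ contributes at most one to each test $a\in\partial x_i$. The outcome of $a$ under $s$ is therefore $\vecone\{S_a+s\ge t\}$, where $S_a=\sum_{y\in\partial a\setminus\{x_i\}}\SIGMA_y$. This outcome depends on $s$ exactly when $S_a=t-1$, that is, when $a$ is pivotal for $x_i$ in the sense of Definition~\ref{Def:Piv}.

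This gives three cases. If $x_i$ has no pivotal test, then $\hat\sigma^{(0)}=\hat\sigma^{(1)}=\hat\SIGMA$, so the posterior equals the prior and the argmax is $1$ if and only if $\alpha>1/2$; the tie at $\alpha=1/2$ is broken to $0$ as in Definition~\ref{def:genie}. If $x_i$ has a pivotal test, then exactly one value $s$ is consistent, namely the true label, and that posterior has probability one. A pivotal test of a defective item has $S_a+1=t$ and is therefore positive, while a pivotal test of a non-defective item has $S_a=t-1<t$ and is negative. Thus a defective has $g_i^-=0<g_i^+$, a non-defective has $g_i^+=0<g_i^-$, and in both cases the estimator returns the true label, which matches the claimed case distinction.

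No step here is a real obstacle. The only points requiring care are the convention for multi-edges, where I would note that the same pivotal criterion holds whenever $x_i$ appears once per test, and checking that the listed cases are exhaustive. Exhaustiveness follows because $g_i^+>0$ and $g_i^->0$ cannot occur simultaneously, as shown above.
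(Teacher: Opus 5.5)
Your proposal is correct and follows essentially the same route as the paper. The paper justifies the claim with the informal paragraph just before it: with no pivotal test the label of $x_i$ does not affect $\hat{\SIGMA}$, so the genie falls back on the prior, while a pivotal test reveals the label, and defectives (respectively non-defectives) only have positive (respectively negative) pivotal tests. Your explicit Bayes computation, with posterior proportional to $\alpha^{s}(1-\alpha)^{1-s}$ times a consistency indicator, makes that paragraph rigorous, and your restriction to designs without multi-edges matches the setting of Section~\ref{Sec:LinearK}.
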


Claim~\ref{lem:geniethr} immediately implies the following claim:
\begin{claim} \label{lem:genie_problow}
Fix any test design $G$. If $\alpha\leq \tfrac12$, then
    \begin{align}
         \mathds{1}\cbc{ \tilde{\sigma}_{\mathrm{gen}}^{G, \hat{\SIGMA}}(i) \neq \SIGMA(i)} \geq \mathds{1}\cbc{\vec g_i^+(\SIGMA) =0, \SIGMA(i)=1}.
    \end{align}
 If $\alpha > \tfrac12$, then
    \begin{align}
         \mathds{1}\cbc{ \tilde{\sigma}_{\mathrm{gen}}^{G, \hat{\SIGMA}}(i) \neq \SIGMA(i)} \geq \mathds{1}\cbc{\vec g_i^-(\SIGMA) =0, \SIGMA(i)=0}.
    \end{align} 
\end{claim}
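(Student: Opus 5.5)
The claim follows from Claim~\ref{lem:geniethr} by a case distinction on the value of $\alpha$ and on the label $\SIGMA(i)$. Since an indicator is always non-negative, it suffices to show one implication in each case: whenever the event on the right-hand side occurs, the genie estimator errs at item $i$.

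\textbf{Case $\alpha \leq \tfrac12$.} Suppose $\vec g_i^+(\SIGMA)=0$ and $\SIGMA(i)=1$.

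First we argue that also $\vec g_i^-(\SIGMA)=0$. Let $a$ be a test containing $x_i$ that is pivotal for $x_i$. Then exactly $t-1$ of the other items in $a$ are defective. Since $x_i$ is itself defective, $a$ contains $t$ defectives and its outcome is positive. Hence a defective item has no negative pivotal tests, so $\vec g_i^-(\SIGMA)=0$.

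We are therefore in the situation $g_i^+=g_i^-=0$ with $\alpha\le \tfrac12$. By Claim~\ref{lem:geniethr}, the genie estimator outputs $0$. This differs from $\SIGMA(i)=1$, so the indicator on the left-hand side equals $1$, as required.

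\textbf{Case $\alpha > \tfrac12$.} Suppose $\vec g_i^-(\SIGMA)=0$ and $\SIGMA(i)=0$.

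Symmetrically, any pivotal test $a$ of a non-defective $x_i$ contains exactly $t-1$ defectives, so its outcome is negative. Hence $\vec g_i^+(\SIGMA)=0$.

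Again $g_i^+=g_i^-=0$, and now $\alpha>\tfrac12$, so by Claim~\ref{lem:geniethr} the estimator outputs $1\neq \SIGMA(i)$.

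\textbf{Remarks on the argument.} The only point that needs care is the observation, already made before Claim~\ref{lem:geniethr}, that defective items have no negative pivotal tests and non-defective items have no positive ones. This is what turns the one-sided condition in the claim into the two-sided condition $g_i^+=g_i^-=0$.

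Multiplicities in tests cause no difficulty in the linear setting. The designs considered there have no multi-edges, so the pivotal-test count in Definition~\ref{Def:Piv} applies directly.

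The tie-breaking rule in Definition~\ref{def:genie} is consistent with the case $\alpha\le\tfrac12$. At $\alpha=\tfrac12$ the posterior is tied, and the rule outputs $0$, which matches Claim~\ref{lem:geniethr}.
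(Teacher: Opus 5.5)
Your proposal is correct and follows the same route as the paper, which simply says the claim is an immediate consequence of Claim~\ref{lem:geniethr}. You spell out the step the paper leaves implicit: $\SIGMA(i)=1$ forces $\vec g_i^-(\SIGMA)=0$, and $\SIGMA(i)=0$ forces $\vec g_i^+(\SIGMA)=0$. In either case there are no pivotal tests, so the genie falls back on the prior-majority guess and errs.
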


\subsubsection{Modifying the Test Design}
Observe that any test that has degree less than $t$ in $G$ is uninformative, as its outcome will always be $0$, irrespectively of $\SIGMA$.
This last observation motivates the following definition:
\begin{definition}[Modified test design]\label{def:G_eta}
    Given a test design $G$,
    we construct the \emph{modified test design} $G'$ by removing all tests in $L = \cbc{a \in F \mid \abs{\partial_{G} a} \leq t-1}$. We denote the vector of test results in $G'$ by $\hat{\SIGMA}'$.
\end{definition}

It is clear that the modified test set-up is not harder than the original one:

\begin{claim}
\label{lem_mod_easier} Using Lemma~\ref{lem_bests},
    \begin{align}\label{eq_mod_easier}
        \pr\bc{\SIGMA =  \tilde{\sigma}_{\mathrm{gen}}^{G', \hat{\SIGMA}'}} =  \pr\bc{\SIGMA =  \tilde{\sigma}_{\mathrm{gen}}^{G, \hat{\SIGMA}}} \geq  \pr\bc{\SIGMA =  \tilde{\sigma}_{\mathrm{MAP}}^{G,\hat{\SIGMA}}}. 
    \end{align}
\end{claim}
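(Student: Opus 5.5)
The plan is to prove the equality in \eqref{eq_mod_easier} by showing that removing the tests in $L$ changes neither the information available to the genie nor its decision rule. The inequality will then be read off from Lemma~\ref{lem_bests}, applied to the original design $G$.

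\textbf{Deleted tests carry no information.} For every $a \in L$ we have $\abs{\partial_G a} \le t-1$. Since the design has no multi-edges, each item appears in $a$ at most once, so
\[
\sum_{y \in \partial a} \SIGMA_y \le t-1 \qquad\text{for every } \SIGMA \in \{0,1\}^n .
\]
Hence $\hat\SIGMA_a = 0$ deterministically, and $\hat\SIGMA$ is a deterministic function of $\hat\SIGMA'$ (pad with zeros on $L$), and conversely. It follows that $\sigma(G,\hat\SIGMA,\SIGMA_{-i})$ and $\sigma(G',\hat\SIGMA',\SIGMA_{-i})$ coincide. Therefore
\[
\pr\bc{\SIGMA(i)=s \mid G,\hat\SIGMA,\SIGMA_{-i}} = \pr\bc{\SIGMA(i)=s \mid G',\hat\SIGMA',\SIGMA_{-i}} \qquad\text{for } s\in\{0,1\} .
\]

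\textbf{The genie makes the same choice coordinate by coordinate.} The conditional probabilities entering \eqref{eq:genie} agree, and the tie-breaking rule (output $0$) is the same in both designs. So for every $i$,
\[
\tilde\sigma_{\mathrm{gen}}^{G,\hat\SIGMA}(i) = \tilde\sigma_{\mathrm{gen}}^{G',\hat\SIGMA'}(i).
\]
This gives the equality of the success probabilities.

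As a consistency check, the same conclusion follows from Claim~\ref{lem:geniethr}. A test $a\in L$ can never be pivotal for any $x_i \in \partial a$, because
\[
\sum_{y\in\partial a\setminus\{x_i\}}\SIGMA_y \le t-2 .
\]
So $g_i^\pm(\SIGMA)$ is the same in $G$ and $G'$.

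\textbf{The inequality.} This is exactly Lemma~\ref{lem_bests} applied to the design $G$.

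\textbf{Main obstacle.} The only delicate point is to state cleanly that removing deterministic observations does not change the conditional law given $\SIGMA_{-i}$. Everything else is bookkeeping.
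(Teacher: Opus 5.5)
Your proof is correct and follows the same route as the paper, which only remarks that tests of degree at most $t-1$ are always negative, so deleting them cannot change the genie's posterior or output, and then cites Lemma~\ref{lem_bests} applied to $G$ for the inequality. Your version spells out the steps the paper leaves implicit, in particular that the no-multi-edge assumption is what makes these tests deterministically negative.
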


\subsubsection{Proof of Theorem \ref{Thm:LinMain}}
In this section, $G$ denotes an arbitrary test design satisfying Assumption \ref{ass_item_degrees} and $G'$ its modification as in Definition~\ref{def:G_eta}.  
Restricting both item and test degrees will allow us to find a large set of items that do not share tests, which in turn will enable us to treat their genie estimates as independent. 

Following \cite{hintze2024noisy}, we call a set $D \subseteq [n]$ of items \emph{distant} in $G'$ if the second neighbourhoods of no two items in $D$ in the bipartite graph $G'$ intersect.
The degree restrictions that all items have degree at most $\log^4{n}$ and all tests have degree at most $M$ ensure that following a greedy approach of selecting an item $i$, and afterwards removing all items in the fourth neighbourhood of $i$, we remove a maximum of $M^2 \log^8 n$ items. By following this process, then it is possible to construct distant sets of size at least $n/(M^2\log^{9}n)$ in $G'$.

For a given distant set $D \subseteq [n]$, let 
\begin{align}
  \vD_{\mathrm{err}} = \cbc{i \in D:  \tilde{\sigma}_{\mathrm{gen}}^{G', \hat{\SIGMA}'} (i) \not= \SIGMA(i)}
\end{align}
be the (random) set of misclassified items in $D$. The following lemma now relates correctness of the genie estimator to the expected number of misclassified items in any given fixed distant set:

\begin{lemma}[{\cite[Lemma 4.7]{hintze2024noisy}}]\label{lem:genie_correct_upper_bound}
    For any set of distant items $D \subseteq [n]$,
        $$\Pr \bc{\SIGMA = \tilde{\sigma}_{\mathrm{gen}}^{G', \hat{\SIGMA}'}\mid \SIGMA}             \leq \bc{\Erw\brk{\vD_{\mathrm{err}}\mid \SIGMA}}^{-1}.$$
\end{lemma}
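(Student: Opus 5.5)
Since $\Pr(\SIGMA = \tilde{\sigma}_{\mathrm{gen}}^{G', \hat{\SIGMA}'}\mid \SIGMA)\leq \Pr(\vD_{\mathrm{err}}=\emptyset\mid\SIGMA)$, the plan is to show that the misclassification events of distinct items of $D$ are conditionally independent given $\SIGMA$. A product bound then finishes the proof. I interpret the right-hand side as $+\infty$ when $\Erw[\abs{\vD_{\mathrm{err}}}\mid\SIGMA]=0$, in which case there is nothing to prove, so I assume this expectation is positive.

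\textbf{Step 1 (locality of the genie).} Fix $i\in D$. The genie estimate $\tilde{\sigma}_{\mathrm{gen}}^{G',\hat{\SIGMA}'}(i)$ depends only on the counts $g_i^{+}(\SIGMA),g_i^-(\SIGMA)$ and on $\alpha$, by Claim~\ref{lem:geniethr}. These counts are determined by the outcomes of the tests $a\in\partial_{G'}i$ and by the labels of the items in $\partial_{G'}a\setminus\{i\}$. Hence the indicator $\vec I_i:=\vecone\{\tilde{\sigma}_{\mathrm{gen}}(i)\neq\SIGMA(i)\}$ is a function of the pair consisting of $\SIGMA$ restricted to the second neighbourhood of $i$ in $G'$ and whatever randomness enters the outcomes of the tests in $\partial_{G'}i$.

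\textbf{Step 2 (conditional independence).} Because $D$ is distant, the second neighbourhoods of distinct $i,j\in D$ are disjoint. In particular, the test sets $\partial_{G'}i$ and $\partial_{G'}j$ are disjoint. Conditionally on $\SIGMA$, the only remaining randomness in each $\vec I_i$ comes from its own tests. The families of tests attached to different items of $D$ are disjoint, and each test outcome depends only on its own items, so the variables $(\vec I_i)_{i\in D}$ are conditionally independent given $\SIGMA$. In the present noiseless model, $\vec I_i$ is in fact a deterministic function of $(G',\SIGMA)$, and independence is trivial. Write $p_i:=\Pr(\vec I_i=1\mid\SIGMA)$, so that $\Erw[\abs{\vD_{\mathrm{err}}}\mid\SIGMA]=\sum_{i\in D}p_i$.

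\textbf{Step 3 (product bound).} Using the independence from Step 2, I compute
\begin{align*}
\Pr\bc{\SIGMA = \tilde{\sigma}_{\mathrm{gen}}^{G', \hat{\SIGMA}'}\mid \SIGMA}\leq\Pr\bc{\vD_{\mathrm{err}}=\emptyset\mid\SIGMA}=\prod_{i\in D}(1-p_i)\leq \exp\Big(-\sum_{i\in D}p_i\Big).
\end{align*}
Since $\eul^{-x}\leq 1/(1+x)\leq 1/x$ for $x>0$, this is at most $\bc{\Erw[\abs{\vD_{\mathrm{err}}}\mid \SIGMA]}^{-1}$, which is the claim. In the deterministic case the bound reduces directly to $\vecone\{\vD_{\mathrm{err}}=\emptyset\}\leq 1/\abs{\vD_{\mathrm{err}}}$, which holds because $\abs{\vD_{\mathrm{err}}}\geq1$ forces the left side to be $0$.

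The only step needing care is Step 1: one has to check that the genie's decision for $i$ reads nothing outside the second neighbourhood of $i$. This holds because pivotality of a test $a\in\partial i$ involves only $\SIGMA$ on $\partial a\setminus\{i\}$. Removing the low-degree tests when passing from $G$ to $G'$ does not affect this, since the removed tests carry no information anyway.
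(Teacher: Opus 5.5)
Your argument is correct and is essentially the one the paper defers to from Hintze et al. For the fixed design $G'$, the error indicators of the items in a distant set $D$ are conditionally independent given $\SIGMA$, so $\Pr\bc{\vD_{\mathrm{err}}=\emptyset\mid\SIGMA}=\prod_{i\in D}(1-p_i)$, and this is at most the reciprocal of $\Erw\brk{|\vD_{\mathrm{err}}|\mid\SIGMA}$. Your observation that in this noiseless setting everything is deterministic given $\SIGMA$, so the claim collapses to $\vecone\{\vD_{\mathrm{err}}=\emptyset\}\leq 1/|\vD_{\mathrm{err}}|$ (with $1/0=\infty$), is exactly why the noisy-case lemma transfers here with no extra work.
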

The proof of \cite[Lemma 4.7]{hintze2024noisy} also applies to our set-up, which is why we omit it here. Finally, recall the lower bound on the error probabilities from Claim \ref{lem:genie_problow}. The next main lemma, which is in the same spirit as \cite[Lemma 4.9]{hintze2024noisy}, bounds these error probabilities 
from below:

\begin{lemma}\label{lem:main_lower}
For any 
$\delta''>0$, there is an integer $n_0=n_0 (\delta'')$
with the following property. For all $n \geq n_0$ and every $G$ satisfying Assumption \ref{ass_item_degrees}, there exists a set of distant items $D \subseteq [n]$ such that
\begin{enumerate}[a)]
\item if $\alpha\leq \tfrac12$, then
    \begin{align}\label{eq:main_lower00}
        \pr\bc{\sum_{i \in D}\mathds{1}\cbc{\vec g_i^+(\SIGMA) =0, \SIGMA(i)=1} > \frac{1}{2\delta''}} \geq 1-4\delta'';
    \end{align}
\item if $\alpha > \tfrac12$, then 
    \begin{align}\label{eq:main_lower01}
        \pr\bc{\sum_{i \in D}\mathds{1}\cbc{\vec g_i^-(\SIGMA) =0, \SIGMA(i)=0} > \frac{1}{2\delta''}} \geq 1-4\delta''.
    \end{align}
\end{enumerate}
\end{lemma}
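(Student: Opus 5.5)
Treat case a) ($\alpha\le\tfrac12$); case b) is symmetric, exchanging defectives with non-defectives and positive with negative pivotal tests. Start by fixing $G$ satisfying Assumption~\ref{ass_item_degrees}, pass to $G'$, and use the greedy construction above to obtain a distant set $D_0$ with $|D_0|\ge n/(M^2\log^9 n)$. For $i\in D_0$ let $\vec I_i=\mathds{1}\{\vec g_i^+(\SIGMA)=0,\SIGMA(i)=1\}$. The event $\{\vec g_i^+=0\}$ depends only on the labels of items in the second neighbourhood of $i$. Since second neighbourhoods of distinct items of $D_0$ are disjoint, the $\vec I_i$, $i\in D_0$, are independent Bernoulli variables under the i.i.d. prior. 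So it suffices to lower bound $\Erw\sum_{i\in D}\vec I_i$ and then apply Chebyshev; for independent indicators $\Var\le\Erw$, so an expectation of at least $1/\delta''$ (or much more) gives the claimed probability $1-4\delta''$ once $n\ge n_0$.

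The core step is a lower bound on $p_i:=\pr(\vec I_i=1)$. Condition on $\SIGMA(i)=1$. A test $a\ni i$ with $|\partial a|\le M$ is a positive pivotal test only if exactly $t-1$ of the other at most $M-1$ items are defective. We need all $\deg(i)$ tests to fail this. Here I would use a weakening of the FKG-type argument, as the discussion suggests. Pick one item $y_a\in\partial a\setminus\{i\}$ per test (possible since $|\partial a|\ge t\ge2$ in $G'$). Also note that for each test a sufficient condition for non-pivotality is that all other items are non-defective, since $0\neq t-1$ when $t\ge2$. This event is decreasing, and the intersection over the $\le\log^4 n$ tests involves at most $M\log^4 n$ items. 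That gives $p_i\ge\alpha(1-\alpha)^{M\log^4 n}$, which is too small. So instead I split tests by degree. Call $i$ \emph{light} if it has few tests. Since $|F|\le Cn\log n$ and each test has degree at most $M$, the total number of edges is at most $MCn\log n$. Hence at least half the items have degree at most $2MC\log n$, and I would restrict $D$ to distant light items by running the greedy procedure among them, still giving $|D|\ge n/(2M^2\log^9 n)$. For a light item the bound becomes $p_i\ge\alpha(1-\alpha)^{2M^2C\log n}=n^{-\kappa}$ with $\kappa=2M^2C\log(1/(1-\alpha))$.

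Then $\Erw\sum_{i\in D}\vec I_i\ge n^{1-\kappa}/(2M^2\log^9 n)\to\infty$, provided $\kappa<1$, i.e. $C<1/(2M^2\log(1/(1-\alpha)))$. This is consistent with the theorem, where $C$ may depend on $M$ and $\alpha$. With $\mu:=\Erw\sum\vec I_i\to\infty$, Chebyshev gives $\pr(\sum\vec I_i\le\mu/2)\le 4/\mu$, which is below $4\delta''$ and with $\mu/2>1/(2\delta'')$ for $n\ge n_0(\delta'')$.

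The main obstacle is the lower bound on $p_i$. The events ``no positive pivotal test'' are neither increasing nor decreasing, which is exactly the FKG failure the paper mentions. Replacing them with the monotone sufficient event ``all co-members non-defective'' is crude. It forces the light-item restriction and a smallness condition on $C$. If a better constant were wanted, I would instead try a sharper local bound per test, $1-\pr(\Bin(|\partial a|-1,\alpha)=t-1)$, combined with a product bound justified by disjointness of test neighbourhoods. That product is not exact when tests share items, so the crude monotone version is the one I would commit to.
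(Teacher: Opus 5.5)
Your proof is correct. Its skeleton matches the paper's: a distant set giving independent indicators, a monotone sufficient event for being disguised, an averaging step that keeps the relevant degrees of order $\log n$, then Chebyshev. The two middle steps, however, are carried out differently.

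The paper uses the weaker sufficient event $\{\vX_a\le t-2 \text{ for all } a\ni i\}$, where $\vX_a$ counts the other defectives in $a$. Since tests through $i$ may share items, it needs FKG to obtain $\alpha\prod_{a\ni i}\pr(\vX_a\le t-2)$. It then invokes Claim~\ref{claim_distant} to choose a distant set on which every $L_i$ is within a factor $1+\delta$ of the average $\frac1n\sum_j L_j$, and bounds that average through $c_{\mathrm{na}}$ and $m\le Cn\log n$.

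You instead take the event that all co-members of $i$ are non-defective. Its probability is exactly $(1-\alpha)$ raised to the number of distinct co-members, so overlapping tests only help and no correlation inequality is needed at all. You also replace Claim~\ref{claim_distant} by Markov's inequality on the degree sequence: at least $n/2$ items have degree at most $2MC\log n$, and you run the greedy construction among those. This is the same averaging idea, applied to degrees rather than to $L_i$.

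Your version is shorter and self-contained. The paper's admits a larger $C$, via $\pr(\Bin(\Gamma-1,\alpha)\le t-2)$ instead of $(1-\alpha)^{\Gamma-1}$ and the factor $1+\delta$ instead of $2$. That gain is irrelevant for Theorem~\ref{Thm:LinMain}, which only asserts the existence of some $C=C(\alpha,M)$. Your closing worry about a sharper per-test product is exactly what the paper's FKG step settles. It replaces the non-monotone event $\{\vX_a\neq t-1\}$ by the decreasing event $\{\vX_a\le t-2\}$, for which the product bound holds despite shared items.

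Two small points. First, case b) is not a literal label swap. Switching to ``all co-members defective'' would fail on tests of size exactly $t$, since that makes the test pivotal for a non-defective item. What works, as in the paper, is to keep the same event: a negative pivotal test for a non-defective $i$ requires $t-1\ge 1$ other defectives. This gives $p_i\ge(1-\alpha)^{1+2M^2C\log n}$ and the same smallness condition on $C$.

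Second, you are right to invoke $m\le Cn\log n$. The lemma as displayed leaves this hypothesis implicit, but the paper's proof uses it too, and it cannot be dropped. With item degrees of order $\log^4 n$, the expected number of disguised items can tend to zero.
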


\begin{proof}
 First, assume that $\alpha \leq \tfrac12$ and that $D$ is an arbitrary distant set in $G'$. For $i \in D$, let us define the indicator of the event that $i$ is both defective and disguised by
 $$\vA_i := \mathds{1}\cbc{\vec g_i^+(\SIGMA) =0, \SIGMA(i)=1},$$
 as well as $\vA :=\sum_{i \in D} \vA_i$. 
Since $D$ is a distant set, it follows that $\vA_i$ and $\vA_j$ are independent for $i,j\in D$ with $i\neq j$, i.e., 
    \begin{equation}
    \label{cond-independence-Ai}
    \Erw\brk{\vA_i\vA_j}=\Erw\brk{\vA_i}\Erw\brk{\vA_j}.
    \end{equation}

We first bound $\Erw\brk{\vA_i} =  \pr\bc{\vec g_i^+(\SIGMA) =0, \SIGMA(i)=1}$ by a loose bound that allows application of the FKG inequality from below. First observe that if $|\partial_{G'}i|=0$, then $i$ appears in no tests in $G'$, so that, trivially,
    \begin{align}
        \log\Erw\brk{\vA_i} = \log \alpha =: L_i.
    \end{align}
In the following, we thus assume that $|\partial_{G'}i|>0$. In this case, for $a \in  \partial_{G'}i$, let 
\begin{align}
    \vX_a := \abs{\cbc{j \in \partial_{G'}a \setminus \{i\}: \SIGMA(j)=1}}.
\end{align}
To simplify the argument we bound $\Erw\brk{\vA_i}$ by considering only the scenario where every neighbouring test of $i$ is negative. Specifically, an item $i$ is disguised if for every test $a \in \partial i$, the number of other defective items $\vX_a$ satisfies $\vX_a + \SIGMA(i) < t$. 

Given $\SIGMA(i)=1$, this requires $\vX_a \leq t-2$ for all $a \in \partial i$. Thus,
 \begin{align*}
  \Erw\brk{\vA_i}  &\geq   \pr\bc{ \bigcap_{a \in \partial i} \cbc{\vX_a \leq t-2} \cap \cbc{  \SIGMA(i) = 1}} = \alpha \cdot \pr\bc{ \bigcap_{a \in \partial i} \cbc{\vX_a \leq t-2}}.
 \end{align*}
By construction, the events $\{\vX_a \leq t-2\}$ are decreasing, as then the only randomness is in the defective statuses, which are independent random variables. Applying the FKG inequality, we obtain
 \begin{align}
      \Erw\brk{\vA_i} \geq \alpha \prod_{a \in \partial i} \pr\bc{\vX_a \leq t-2}.
 \end{align}
Because we have enforced the condition that $|\partial i| \leq \log^4 n$
and $|\partial a| \leq M$ for all $i \in D$, the number of terms in the product is at most $\log^4 n$, and each individual probability is a positive constant independent of $n$ (specifically, $\pr\bc{\vX_a \leq t-2}\geq \pr(\Bin(|\partial a|-1, \alpha) \le t-2) \ge \pr(\Bin(M-1, \alpha) \le t-2) > 0$). Consequently, \invisible{there exists a constant $L$ such that,} for all $i \in D$,
 \begin{align}\label{eq_lower_Ai}
      \log \Erw\brk{\vA_i} \geq \log \alpha +  \sum_{a \in \partial i} \log \pr\bc{\vX_a \leq t-2} =: L_i.
 \end{align}
 
 We thus have a lower bound $L_i$ for any $i\in D$ which depends on the degrees of the neighbours of $i$ in $G'$. By the same case distinction into three cases corresponding to the neighbourhood of $i$ in $G'$, we can define $L_i$ for any $i \in [n].$ The next key observation is that we can choose a distant set $D$, in which no pairs share a test, such that for each $i \in D$, its lower bound is almost at least as big as the average over the lower bounds over all $j \in [n]$. As this construction works exactly as the corresponding construction in \cite{hintze2024noisy} as we have stricter degree bounds, we omit the details here
 (note that both sides in \eqref{Li-bound} are non-positive):

\begin{claim}[{\cite[Claim D.1]{hintze2024noisy}}] \label{claim_distant}
For any $\delta >0$, a distant set $D$ of size $\lfloor n/\log^{9}n \rfloor$ and $n_0=n_0(\delta)$ exist such that, for all $n \geq n_0$ and $i \in D$,
\begin{align}
\label{Li-bound}
  L_i  \geq (1+\delta) \frac{\sum_{j \in [n]} L_j }{n}.
\end{align}
\end{claim}

Let now $D$ be a specific distant set as in Claim~\ref{claim_distant}. From \eqref{eq_lower_Ai}, we obtain
\begin{align}
  \log \Erw\brk{\vA_i}   &\geq (1+\delta) \frac{\sum_{j \in [n]} L_j }{n} \nonumber \\
  & = (1+\delta)\log \alpha + (1+\delta)\frac{1}{n} \sum_{a \in F_{G'}} |\partial_{G'}a| \log \pr\bc{\vX_a \leq t-2}, \nonumber
\end{align}
where we recall that $F_{G'}$ is the collection of tests in $G'$. 
We proceed by bounding this sum. Let 
\begin{align}
    c_{\mathrm{na}} = c_{\mathrm{na}}(\alpha,M) := \min_{t \leq \Gamma \leq M} \Gamma \log \pr\bc{\Bin(\Gamma-1, \alpha) \leq t-2}.
\end{align}
Since the minimum is taken over a set where the probabilities are bounded away from 0 and 1, we obtain $c_{\mathrm{na}}\in(-\infty,0).$ Note that it is here where we use that no multi-edges are present. 
We use this to bound
\begin{align}
  \log \Erw\brk{\vA_i}   &\geq (1+\delta)\log \alpha + (1+\delta)\frac{1}{n} \sum_{a \in F_{G'}} c_{\mathrm{na}} \geq (1+\delta)\log \alpha + (1+\delta)c_{\mathrm{na}} m, 
\end{align}
where we recall that $m$ denotes the number of tests.

We are now ready to complete the proof of Lemma \ref{lem:main_lower}.
Now assume that the modified test design has at most $Cn\log n$ tests for a constant $C=C(\alpha,M)>0$ that we will choose small enough later, we get that
\begin{align}
\label{exp-Ai-LB}
   &\log \Erw\brk{\vA_i}   \geq   (1+\delta)\log \alpha + (1+\delta)\frac{c_{\mathrm{na}} Cn\log n}{n}.
\end{align}

Fix $\delta''>0$. We first argue that to prove the lemma, it is sufficient to show that $\Erw\brk{\vA}$ tends to infinity as $n \to \infty$: Indeed, using the independence in \eqref{cond-independence-Ai}, which is due to the fact that $D$ is a distant set, we get that 
    \begin{equation}
        \Var\bc{\vA}=\sum_{i\in D} \Var\bc{\vA_i}.
    \end{equation}  Further, $\Var\bc{\vA_i}\leq \Erw\brk{\vA_i}$ since $\vA_i\in \{0,1\}$. Therefore, an application of Chebyshev's inequality along the same lines as in the derivation of \cite[(D.26)]{hintze2024noisy} yields that, for every $y>0$,  \begin{align}
    \label{eq:main_lower1}
        \Pr\bc{\abs{\vA-\Erw\brk{\vA}} \geq y \Erw\brk{\vA}} &\leq \frac{\Var\bc{\vA}}{y^2\Erw\brk{\vA}^2}    = \frac{\sum_{i\in D} \Var\bc{\vA_i}}{y^2\Erw\brk{\vA}^2}
        \leq         \frac{1}{y^2\Erw\brk{\vA}}.
    \end{align}
    If $1/\Erw\brk{\vA} \leq \delta''$, then \eqref{eq:main_lower1} with the choice $y=\tfrac12$ implies that
    \[\Pr\bc{\vA \leq \frac{1}{2\delta''}}
        \leq \Pr\bc{\vA \leq \frac{1}{2}\Erw\brk{\vA}}
        \leq \frac{4}{\Erw\brk{\vA}}
        \leq 4\delta''.\]
Thus, to prove \eqref{eq:main_lower00}, it is enough to show $1/\Erw\brk{\vA} \leq \delta''$ for $n$ large enough, which we will now do.
 
For this, we use \eqref{exp-Ai-LB} to obtain\
\begin{align}
    \Erw\brk{\vA} \geq \frac{n}{M^2 \log^{9}n}\exp\bc{ (1+\delta)\brk{\log \alpha + c_{\mathrm{na}} C\log n}} \to \infty,
\end{align}
if $\delta$ and $C$ are chosen small enough. This completes the proof of Lemma \ref{lem:main_lower} a). 

The proof of part b) follows by an almost identical argument, since the events $\{\vX_a \leq t-2\}$ are also disguising for a non-defective item. Applying the FKG-inequality therefore yields the upper bound
 \begin{align}
      \Erw\brk{\vA_i} \geq (1-\alpha) \prod_{a \in \partial i} \pr\bc{\vX_a \leq t-2}.
 \end{align}
The remainder of the proof proceeds analogously, with only minor modifications arising from this new bound.

\end{proof}

\begin{proof}[Proof of Theorem \ref{Thm:LinMain}]
Fix $\eps>0$. By  Fact~\ref{lem:map-best} and Claim~\ref{lem_mod_easier}, to prove Theorem \ref{Thm:LinMain}, it is sufficient to show that $ \pr\bc{\SIGMA =  \tilde{\sigma}_{\mathrm{gen}}^{G', \hat{\SIGMA}'}}<\eps$.
      To begin, let us combine Lemma \ref{lem:genie_correct_upper_bound} and Claim \ref{lem:genie_problow} to form a bound on the probability that the genie estimator is correct, for any set of distant items $D \subseteq [n]$ and when $\alpha \leq \tfrac12$,
    \begin{align}
    \Pr \bc{\SIGMA = \tilde{\sigma}_{\mathrm{gen}}^{G', \hat{\SIGMA}'}\mid \SIGMA}
        \leq \biggl(\sum_{i \in D} \mathds{1}\cbc{\vec g_i^+(\SIGMA) =0, \SIGMA(i)=1}\biggr)^{-1} = \mathbf U_D.\end{align}
    Lemma \ref{lem:main_lower} states that for any $\delta''>0$, we can choose $D$ so that $\Pr\bc{\mathbf U_D \leq 2 \delta''} \geq 1 - 4 \delta''$.
    Since trivially $\Pr \bc{\SIGMA = \tilde{\sigma}_{\mathrm{gen}}^{G'}\mid \SIGMA} \leq 1$,
    \begin{align}
    \Pr \bc{\SIGMA = \tilde{\sigma}_{\mathrm{gen}}^{G'}}
       &= \Erw\brk{\Pr \bc{\SIGMA = \tilde{\sigma}_{\mathrm{gen}}^{G'}\mid \SIGMA}}
        \leq \Erw\brk{\min\cbc{1, \mathbf U_D}}
    \\ &\leq \Erw \brk{\min\cbc{1, \mathbf U_D}\mid \mathbf U_D \leq 2 \delta''} + \Pr\bc{\mathbf U_D > 2\delta''}
        = 2\delta'' + 4\delta'', \nonumber
    \end{align}
        which yields Theorem \ref{Thm:LinMain} with $\delta'' = \eps/12$. The case $\alpha >1/2$ is analogous.
\end{proof}

\subsection{Achievability} \label{ssec:ach_linear}

In this section, we show that in the linear regime with i.i.d. prior, there exist a simple randomised test design in $O(n\log n)$ tests and an efficient algorithm that guarantee \whp{} exact recovery of $\SIGMA$.

\begin{theorem}\label{thm:linear_upper}
For any $0<\alpha <1$ and $\eps>0$, there is $n_0=n_0(\alpha,\eps)$ such that for every $n>n_0$ there exist a  randomised test
design $\vec G$ comprising 
\begin{align}
    m \leq (1+\eps) \frac{1}{t\alpha^{t-1}}n\log n
\end{align}
tests and a polynomial time algorithm TDD that given $\vec G$ and the test results $\hat{\SIGMA}_{\vec G}$ outputs $\SIGMA$ w.h.p.
\end{theorem}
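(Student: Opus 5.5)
The plan is to take tests of size exactly $t$ and use the most naive decoder: every item that appears in a positive test is defective, and every other item is non-defective.

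\textbf{Design.} Let $\vec G$ consist of $m=\lceil (1+\eps)\,n\log n/(t\alpha^{t-1})\rceil$ tests. Each test independently chooses a set of $t$ \emph{distinct} items uniformly at random, so there are no multi-edges. Such a test is positive iff all $t$ of its items are defective.

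\textbf{Algorithm (TDD).} First, declare every item that lies in some positive test to be defective; this is always correct, by the observation just made. Second, declare every remaining item non-defective. This runs in time $O(mt)$, which is polynomial.

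\textbf{Where errors can occur.} The output differs from $\SIGMA$ iff some defective item $x_i$ lies in no positive test. A positive test containing a defective $x_i$ is exactly a (positive) pivotal test for $x_i$ in the sense of Definition~\ref{Def:Piv}. So it suffices to show that w.h.p.\ every defective lies in at least one test whose other $t-1$ members are all defective.

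\textbf{Probability estimate.} Condition on $\SIGMA$ and let $\vec k=\sum_i\SIGMA_i$. By the Chernoff bound, $\vec k\in[(1\pm n^{-1/3})\alpha n]$ with probability $1-o(1)$. Fix a defective $x_i$. Given $\SIGMA$, the tests are independent, and a single test is a positive test containing $x_i$ with probability
\[
\frac{t}{n}\cdot\frac{(\vec k-1)_{t-1}}{(n-1)_{t-1}}=(1+o(1))\frac{t\alpha^{t-1}}{n},
\]
where the factor $t/n$ accounts for the test containing $x_i$, and the second factor is the probability that the other $t-1$ items, sampled without replacement, are all defective. On the concentration event, $t$ is constant and $\vec k/n\to\alpha$, so this is uniform over $i$. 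Hence
\[
\pr\bc{x_i\text{ lies in no positive test}\mid\SIGMA}=\bc{1-(1+o(1))\tfrac{t\alpha^{t-1}}{n}}^{m}\le\exp\bc{-(1+\eps)(1+o(1))\log n}=n^{-1-\eps/2}
\]
for $n$ large. A union bound over the at most $n$ defectives gives total failure probability $n^{-\eps/2}+o(1)=o(1)$, and we choose $n_0(\alpha,\eps)$ accordingly.

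\textbf{Main obstacle.} The only delicate point is the $(1+o(1))$ factor coming from sampling without replacement and from the random number of defectives. This is why we need the concentration of $\vec k$ and the conditioning on $\SIGMA$ before using independence across tests. The slack $\eps$ in $m$ absorbs these errors.
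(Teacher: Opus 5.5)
Your proposal is correct and follows essentially the same route as the paper. Both use tests of size exactly $t$ and the TDD decoder that marks every item of a positive test as defective, and both finish with a coupon-collector/union-bound estimate: condition on a concentrated number of defectives, use the per-test success probability $\binom{k-1}{t-1}/\binom{n}{t}=(1+o(1))\,t\alpha^{t-1}/n$, and invoke independence of the tests. One cosmetic fix: take $m=\lfloor\cdot\rfloor$ rather than $\lceil\cdot\rceil$, so that $m$ does not exceed the stated bound; the lost unit is absorbed into the $1+o(1)$ factor.
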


We proceed to describe the randomised test design $\vec G$ and the TDD algorithm from Theorem \ref{thm:linear_upper}. Both are based on 
the observation that in TGT, the only way to obtain \textit{definite} information is from positive tests of size exactly $t$. Each such positive test will allow us to classify exactly $t$ items as defective. Thus, let $\vec G$ denote the random test design where each of the $m$ tests has degree $t$ and chooses its $t$ items independently from the other tests and uniformly without replacement. Given this test design $\vec G$, the TDD algorithm is then defined as follows:

\begin{algorithm}
\KwIn{ $\vec G, \hat \SIGMA$}
\KwOut{ An estimate $\tilde \sigma$ of $\SIGMA$}
Set $\tilde \sigma = (0) \in \cbc{0,1}^{n}$; \\ 
    \For{$j = 1, \ldots, m$}{
    \If{$\hat{\sigma_j}=1$}{\For{all $x_i \in \partial a_j$}{Set $\tilde \sigma_i = 1$;}}}
Return $\tilde{\sigma}$.\\
\caption{Threshold Definitive Defectives (TDD).}
\label{Algo_Naive}
\end{algorithm}

\begin{proof}[Proof of Theorem \ref{thm:linear_upper}]
Fix $\eps>0$ and let $m = \lfloor (1+\eps) \frac{1}{t\alpha^{t-1}}n\log n\rfloor$ be as in the statement of the theorem. The proof is based on a simple coupon-collection argument. Condition on the event $\{|\vec V_1|=k\}$, where we assume that $k \in [\alpha n - \sqrt{n}\log n, \alpha n + \sqrt{n} \log n]$. For each defective item and each fixed test, the probability that the test chooses this item along with $t-1$ other defective items is $\binom{k-1}{t-1}/\binom{n}{t}$. Thus, 
\begin{align}
 \pr\bc{\text{For all } a \in \partial_{\vec G}x_i: \hat\SIGMA_{a}=0 \Big\vert \SIGMA_i=1, \sum_{j=1}^k\SIGMA_j=k} =  \bc{1 - \frac{\binom{k-1}{t-1}}{\binom{n}{t}}}^m.
\end{align}
For $k \in [\alpha n - \sqrt{n}\log n, \alpha n + \sqrt{n} \log n]$, the probability that there is a defective item that is not revealed as defective by any of the $m$ tests is at most
\begin{align}
   n \exp\bc{-m \frac{t\alpha^{t-1}}{n}\bc{1 + o(1)}} =   \exp\bc{-(1+\eps)\bc{1 + o(1)}\log n - \log n} = o(1).
\end{align}
This holds for any number of defectives in the range $[\alpha n - \sqrt{n}\log n, \alpha n + \sqrt{n} \log n]$. However, by the Chernoff bound, the probability that the number of defective items is not in that range is $o(1)$. 

Finally, if all defective items are included in a test with $t-1$ other defective items, then the TDD algorithm will classify them correctly as defective (and by design, no false positives can be introduced). As TDD is a polynomial-time algorithm, this concludes the proof. 
\end{proof}

\begin{funding} \phantom{a}
\InsertBoxR{0}{\includegraphics[scale=0.1]{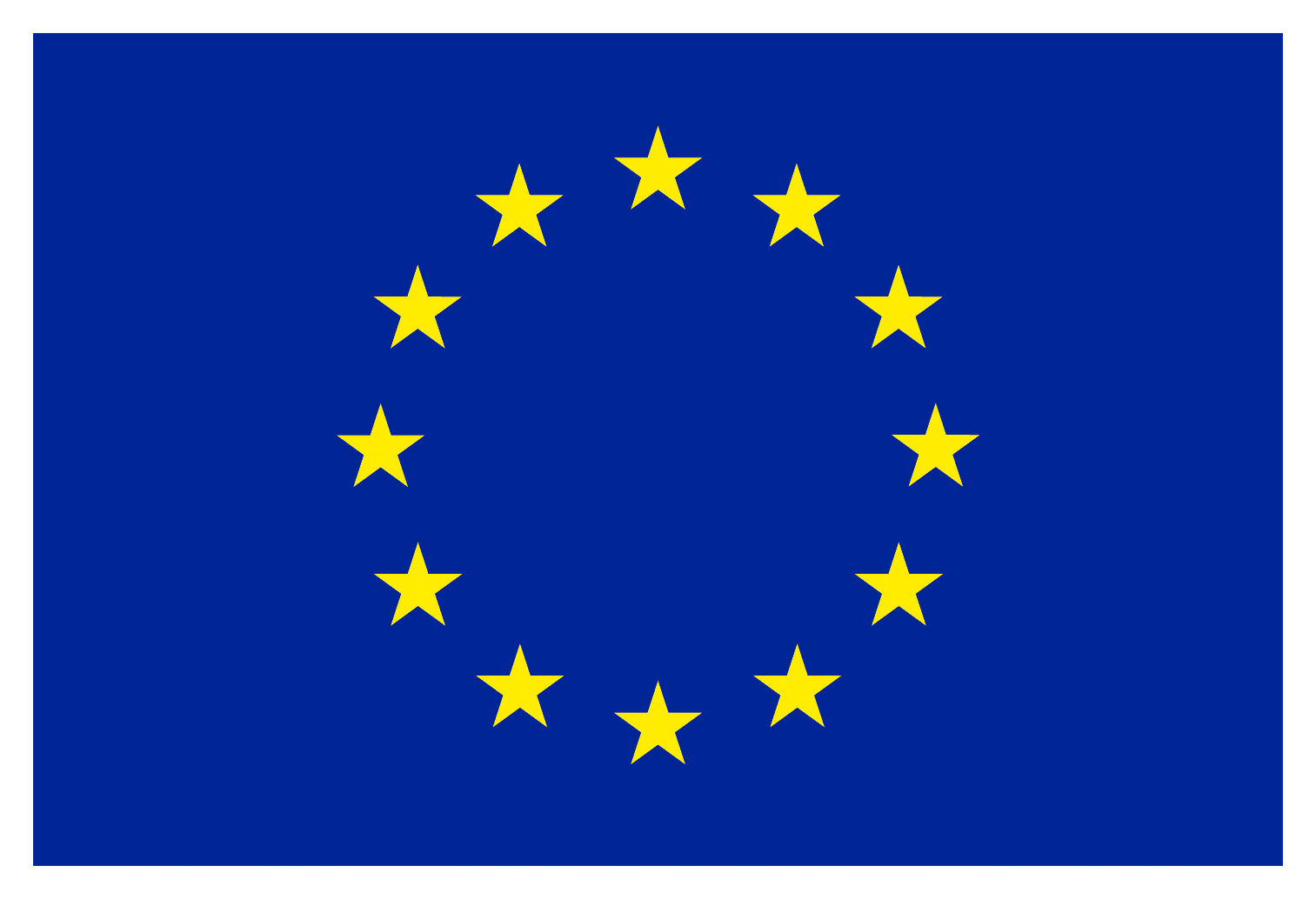}}
\noindent The first and second authors are supported in part by the NWO Gravitation project NETWORKS under grant no. 024.002.003.

\end{funding}

\bibliographystyle{imsart-nameyear} 
\bibliography{bibliography_j}       


\begin{appendix}
    \section{Concentration Inequalities} \label{Appendix_inequalities}

\begin{theorem}[Chernoff bound for the binomial distribution {\cite[Theorem 2.1]{janson2011random}}] 
\label{lem_chernoff}
Let $\vX \sim \Bin \bc{n, p}$. Then for any $\eps>0$, 
\begin{align*}
    & \Pr\bc{ \vX \geq (1 + \eps) \Erw \brk{\vX}} \leq \exp\bc{-\frac{\eps^2}{2 + 2\eps/3} \Erw\brk{\vX}} \qquad \text{and}  \\
    & \Pr\bc{\vX \leq (1 - \eps) \Erw\brk{\vX}} \leq \exp\bc{-\frac{\eps^2}{2} \Erw\brk{\vX}}.
\end{align*}
\end{theorem}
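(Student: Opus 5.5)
The statement is the standard Chernoff bound for the binomial distribution, so I will derive it by the exponential-moment method. Write $\mu=\Erw[\vX]=np$. For the upper tail, fix $\lambda>0$ and apply Markov's inequality to $\eul^{\lambda\vX}$:
\[
\Pr\bc{\vX\geq(1+\eps)\mu}\leq \eul^{-\lambda(1+\eps)\mu}\,\Erw\brk{\eul^{\lambda\vX}}.
\]
By independence of the Bernoulli summands, $\Erw[\eul^{\lambda\vX}]=(1+p(\eul^\lambda-1))^n$. Using $1+x\leq \eul^x$ this is at most $\exp(\mu(\eul^\lambda-1))$. Choosing $\lambda=\log(1+\eps)$ then gives
\[
\Pr\bc{\vX\geq(1+\eps)\mu}\leq \exp\bc{-\mu\,\varphi(\eps)},\qquad \varphi(x):=(1+x)\log(1+x)-x.
\]
The lower tail is handled symmetrically. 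Apply Markov to $\eul^{-\lambda\vX}$, bound the moment generating function by $\exp(\mu(\eul^{-\lambda}-1))$, and choose $\lambda=-\log(1-\eps)$ (for $\eps<1$; for $\eps\geq1$ the event is $\{\vX\leq0\}$ or empty, and is handled by $(1-p)^n\leq \eul^{-\mu}$). This yields the bound $\exp(-\mu\varphi(-\eps))$.

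It then remains to prove two elementary inequalities:
\[
\varphi(x)\geq \frac{x^2}{2(1+x/3)}\quad (x\geq0), \qquad \varphi(-x)\geq \frac{x^2}{2}\quad (0\leq x<1).
\]
The second is straightforward. Set $g(x)=\varphi(-x)-x^2/2$. Then $g(0)=g'(0)=0$, and since $\varphi''(y)=1/(1+y)$ we get $g''(x)=1/(1-x)-1\geq0$, so $g\geq0$.

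The first inequality is where I expect the only real work. Let $f(x)=\varphi(x)-x^2/(2+2x/3)$. Then $f(0)=0$ and $f'(0)=0$. I would show $f''\geq0$, or, if that proves messy, compare derivatives $(1+x)^{-1}$ against the second derivative of the rational function. A cleaner route is to note that $\varphi'(x)=\log(1+x)$ and to use the known bound $\log(1+x)\geq 2x/(2+x)$ after differentiating once and clearing denominators. The result is a polynomial inequality in $x\geq0$ that can be checked by expanding.

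Combining these inequalities with the two tail bounds gives exactly the stated estimates. Alternatively, one can simply cite \cite[Theorem 2.1]{janson2011random}.
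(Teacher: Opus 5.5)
Your proposal is correct and follows essentially the same route as the source the paper relies on: the paper gives no proof of its own and only cites \cite[Theorem 2.1]{janson2011random}, whose proof is this exponential-moment argument via $\varphi(x)=(1+x)\log(1+x)-x$ together with the two elementary lower bounds on $\varphi$. Your ``cleaner route'' for the first inequality does close: using $\log(1+x)\geq 2x/(2+x)$, the claim $f'\geq 0$ reduces to $4(3+x)^2\geq 3(x+2)(x+6)$, which is equivalent to $x^2\geq 0$.
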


\begin{theorem}[Chernoff bound for the binomial distribution alternative \cite{janson2011random}] \label{lem_chernoff_2}
Let $\vX \sim \Bin \bc{n, p}$. Then
\begin{align}\label{eqChernoff1}
    \Pr\bc{X \geq {qN}} &\leq \exp \bc{-N\KL{q}{p}} \quad \text{for $p<q<1$,} \\
    \Pr\bc{X \leq {qN}} &\leq \exp \bc{-N\KL{q}{p}} \quad \text{for $0<q<p$.}\label{eqChernoff2}
\end{align}
\end{theorem}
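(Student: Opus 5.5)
The plan is the standard Cramér--Chernoff method, applied to each tail separately. Write $\vX=\sum_{j=1}^N \vX_j$ with $\vX_j$ i.i.d.\ $\Be(p)$; here $N$ denotes the number of trials, and I read the statement's $\vX\sim\Bin(n,p)$ with $n=N$. I treat the upper tail \eqref{eqChernoff1} first. For any $\lambda>0$, Markov's inequality applied to $\e^{\lambda \vX}$ gives
\begin{align*}
\Pr\bc{\vX\geq qN}\leq \e^{-\lambda qN}\,\Erw\brk{\e^{\lambda\vX}} = \exp\bc{-N\bc{\lambda q-\log\bc{1-p+p\e^{\lambda}}}},
\end{align*}
where the equality uses independence to factorise the moment generating function.

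The next step is to optimise over $\lambda>0$. Setting the derivative of $\lambda q-\log(1-p+p\e^\lambda)$ to zero gives $\e^{\lambda^*}=\frac{q(1-p)}{p(1-q)}$. This choice satisfies $\lambda^*>0$ precisely because $p<q<1$, which is exactly the hypothesis of \eqref{eqChernoff1}. Substituting back gives $1-p+p\e^{\lambda^*}=\frac{1-p}{1-q}$. The exponent then becomes
\begin{align*}
q\log\frac{q(1-p)}{p(1-q)}-\log\frac{1-p}{1-q}=q\log\frac qp+(1-q)\log\frac{1-q}{1-p}=\KL{q}{p}.
\end{align*}
This is the routine algebra, and it yields \eqref{eqChernoff1}.

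For the lower tail \eqref{eqChernoff2}, I will avoid redoing the computation and use symmetry instead. The variable $N-\vX$ is $\Bin(N,1-p)$, and $\{\vX\leq qN\}=\{N-\vX\geq (1-q)N\}$. Since $0<q<p$, we have $1-p<1-q<1$, so the upper-tail bound applies to $N-\vX$ with parameters $(1-q,1-p)$. This gives $\exp(-N\,\KL{1-q}{1-p})$, and $\KL{1-q}{1-p}=\KL{q}{p}$ because the divergence is symmetric under swapping the labels $0$ and $1$.

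There is no genuine obstacle in this argument. The only care needed is checking that the optimiser $\lambda^*$ has the correct sign, which is where the strict inequalities on $q$ enter. The boundary value $q=1$ is excluded in the statement, but it would also follow by a limiting argument.
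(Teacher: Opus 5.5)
Your proof is correct. The paper gives no proof of its own and simply cites Janson--\L{}uczak--Ruci\'nski, whose argument is this same exponential Markov (Cram\'er--Chernoff) bound with the optimised parameter $\mathrm{e}^{\lambda^*}=q(1-p)/(p(1-q))$, and whose lower tail likewise comes from passing to $N-X\sim\mathrm{Bin}(N,1-p)$; your reading of the statement's $n$ as $N$ is right, and the implicit assumption $p\in(0,1)$ needed for $\lambda^*$ to be finite matches the paper's convention that $D_{\mathrm{KL}}$ is only defined on $(0,1)$.
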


\begin{theorem}[Chernoff bound for the hypergeometric distribution {\cite[Theorem 2.1]{janson2011random}}] \label{lem_chernoff_hyp}
    Let $\vX$ have the hypergeometric distribution with parameters $N, n$ and $m$. Then Theorem \ref{lem_chernoff} holds with $\Erw[\vX] = mn/N$.
\end{theorem}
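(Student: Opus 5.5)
This is the Chernoff bound for the hypergeometric distribution, which transfers the binomial bounds of Theorem~\ref{lem_chernoff} with mean $mn/N$. I would argue by Hoeffding's classical comparison between sampling without and with replacement, rather than by any direct tail computation.

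\textbf{Setup.} Let $\vX$ count the marked items in a uniform sample of $n$ draws without replacement from an urn of $N$ items, $m$ of them marked. Let $\vY\sim\Bin(n,m/N)$ count the marked items when the $n$ draws are made with replacement. Both variables have mean $\mu = mn/N$.

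\textbf{Key step.} The main step is Hoeffding's convex-order inequality: $\Erw[f(\vX)]\le \Erw[f(\vY)]$ for every continuous convex $f$. I would prove it by the standard coupling.
\begin{enumerate}
\item Realise the with-replacement sample as a sequence of $n$ draws.
\item Let $\cF$ be the $\sigma$-algebra of the set of distinct items drawn, padded uniformly at random to exactly $n$ distinct items.
\item Check by exchangeability that the padded set is uniform among $n$-subsets, so its marked count is distributed as $\vX$.
\item Check that $\Erw[\vY\mid \cF]=\vX$, which holds by symmetry among the chosen distinct items.
\end{enumerate}
Conditional Jensen then gives $\Erw f(\vX)\le \Erw f(\vY)$.

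\textbf{Tail bounds.} Applying the inequality to $f(x)=\e^{\lambda x}$ for $\lambda\in\RR$ gives $\Erw \e^{\lambda \vX}\le \Erw \e^{\lambda\vY}=(1-p+p\e^\lambda)^n$ with $p=m/N$. The proof of Theorem~\ref{lem_chernoff} in \cite{janson2011random} uses only this moment generating function bound followed by Markov's inequality and optimisation over $\lambda$. The same computation therefore yields both the upper and lower tails for $\vX$, with $\Erw[\vX]=\mu$ in place of $\Erw[\vY]$.

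The main obstacle is the coupling verification that $\Erw[\vY\mid\cF]=\vX$. The padding must be chosen uniformly, and one must confirm that, conditionally on the selected set, each of the $n$ with-replacement draws is marginally uniform on that set. I would verify this by exchangeability of the draws given the set of distinct values.
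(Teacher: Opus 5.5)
Your proposal is correct and takes essentially the standard route. The paper gives no proof of its own and defers to \cite{janson2011random}, whose argument is Hoeffding's convex-order comparison $\Erw f(\vX)\le \Erw f(\vY)$ with $\vY\sim\Bin(n,m/N)$, applied to $x\mapsto \e^{\lambda x}$ for both signs of $\lambda$ and followed by the same moment-generating-function and Markov computation as in the binomial case. The only difference is that you also prove Hoeffding's comparison via the padding coupling, which is valid; note that each draw is uniform on the padded set $S$ because the joint law of the draws and the padding is invariant under permutations of the population that fix $S$, not because the order of the draws is exchangeable.
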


\begin{theorem}[Chernoff bound for the Poisson-binomial distribution {\cite[Theorem 2.8]{janson2011random}}] 
\label{lem_chernoff_pb}
If $\vX_i \sim \Be(p_i), i=1, \dots, n$ are independent and $\vX = \sum_{i=1}^n \vX_i$ then the Chernoff bound for the binomial distribution holds. 
\end{theorem}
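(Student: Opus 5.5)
Since the statement is the Chernoff bound for Poisson-binomial sums, I would reduce it to the standard exponential moment method, exactly as in the binomial proof of Theorem~\ref{lem_chernoff}. Let $\vX=\sum_{i=1}^n \vX_i$ with independent $\vX_i\sim\Be(p_i)$, and put $\mu=\Erw[\vX]=\sum_i p_i$ and $\bar p=\mu/n$. For $\lambda>0$, Markov's inequality applied to $\e^{\lambda \vX}$ together with independence gives
\begin{align*}
\Pr\bc{\vX\geq (1+\eps)\mu}\leq \e^{-\lambda(1+\eps)\mu}\prod_{i=1}^n\bc{1+p_i(\e^{\lambda}-1)}.
\end{align*}
The key step is to compare the product with the binomial moment generating function. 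The map $p\mapsto\log(1+p(\e^\lambda-1))$ is concave, so Jensen (or AM--GM) yields $\prod_i(1+p_i(\e^\lambda-1))\leq(1+\bar p(\e^\lambda-1))^n=\Erw[\e^{\lambda \vY}]$ with $\vY\sim\Bin(n,\bar p)$. We also have the cruder bound $\prod_i(1+p_i(\e^\lambda-1))\le \exp(\mu(\e^\lambda-1))$, which is the Poisson bound.

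With this comparison, every exponential-moment bound valid for $\Bin(n,\bar p)$ transfers verbatim to $\vX$, because both variables have the same mean $\mu$. Optimising over $\lambda$ then gives $\Pr(\vX\ge (1+\eps)\mu)\le\exp(-n\,\KL{(1+\eps)\bar p}{\bar p})$. From there I would follow the usual route in \cite[Theorem 2.1]{janson2011random}: bound the KL divergence below by $\mu\varphi(\eps)$ with $\varphi(x)=(1+x)\log(1+x)-x$, and then use the elementary inequality $\varphi(x)\ge x^2/(2+2x/3)$. The same chain appears with the Poisson bound directly.

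For the lower tail I would take $\lambda<0$ and repeat the argument, using $\varphi(-x)\ge x^2/2$ for $x\in[0,1]$. This yields $\exp(-\eps^2\mu/2)$, and the KL form \eqref{eqChernoff2} follows as well.

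I do not expect a serious obstacle here. The only point needing care is the Jensen comparison: it must be applied to the log-MGF for each fixed $\lambda$ separately, before optimising, so that the subsequent optimisation over $\lambda$ is identical to the binomial case. Everything after that point is the standard calculus of $\varphi$.
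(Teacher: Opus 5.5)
Your proposal is correct and is essentially the standard argument behind the result the paper quotes from \cite[Theorem 2.8]{janson2011random}; the paper itself gives no proof beyond this citation. That argument bounds $\Erw[\e^{\lambda\vX}]=\prod_i(1+p_i(\e^\lambda-1))$ by the moment generating function of $\Bin(n,\mu/n)$ via concavity (AM--GM) for each fixed $\lambda$, and then reruns the binomial Chernoff computation, which yields the bounds of Theorem~\ref{lem_chernoff} with $\Erw[\vX]=\sum_i p_i$.
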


\begin{theorem}[Binomial coefficient $\tbinom{n}{k}$ for small $k$ {\cite[Section 5.1 - Case 5]{spencer2014asymptopia}}]\label{SmallBinomial}
Assume $k = o(n)$ then  
\[ \binom{n}{k} \sim \bcfr{n\mathrm{e}}{k}^k \bc{2 \pi k}^{-1/2} \mathrm{e}^{- \frac{k^2}{2n} (1+o(1))}. \]
\end{theorem}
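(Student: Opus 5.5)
The plan is to write the binomial coefficient as a product and estimate each factor separately:
\[
\binom{n}{k}=\frac{(n)_k}{k!}=\frac{n^k}{k!}\prod_{i=0}^{k-1}\Bigl(1-\frac{i}{n}\Bigr).
\]
For the factorial, Stirling's formula $k!\sim\sqrt{2\pi k}\,(k/\eul)^k$ gives
\[
\frac{n^k}{k!}\sim\Bigl(\frac{n\eul}{k}\Bigr)^k(2\pi k)^{-1/2}
\]
when $k\to\infty$. This accounts for the first two factors in the claim. If $k$ stays bounded, I would replace $(2\pi k)^{-1/2}$ by the exact value of $k^k\eul^{-k}/k!$; equivalently, I would restrict the statement to $k\to\infty$, which is the only case used in the paper ($k=n^\theta$ or $k-\ell\to\infty$).

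The main step is the falling-factorial correction. Since $k=o(n)$, we have $0\le i/n\le k/n\le 1/2$ for $n$ large. On that range I would use the elementary bounds $-x-x^2\le\log(1-x)\le -x$. Summing over $i$ gives
\[
-\frac{k(k-1)}{2n}-\sum_{i<k}\frac{i^2}{n^2}\;\le\;\sum_{i<k}\log\Bigl(1-\frac in\Bigr)\;\le\;-\frac{k(k-1)}{2n}.
\]
The quadratic error is $\sum_{i<k} i^2/n^2\le k^3/(3n^2)=\frac{k^2}{2n}\cdot O(k/n)=o(k^2/n)$. In addition, $k(k-1)/(2n)=\frac{k^2}{2n}(1-1/k)$. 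Together these show the logarithm of the product equals $-\frac{k^2}{2n}(1+o(1))$ whenever $k\to\infty$.

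The only delicate point, and the main obstacle I expect, is bookkeeping the $o(1)$ terms so that the final statement is a genuine asymptotic equivalence. When $k^2/n\to\infty$, the relative error $o(1)$ inside the exponent must stay inside the exponent; it cannot be pulled out as a $(1+o(1))$ multiplicative factor. This is exactly how the theorem is phrased, so nothing is lost there. When $k^2/n=O(1)$, the absolute errors $k/(2n)$ and $k^3/n^2$ both tend to $0$, so the correction factor is $\eul^{-k^2/(2n)}(1+o(1))$ and can be absorbed into $\sim$. Combining the Stirling estimate with the product estimate and multiplying then yields the stated formula.
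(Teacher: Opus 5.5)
Your argument is correct and is the standard derivation behind this estimate. The paper gives no proof of its own and only cites Spencer's \emph{Asymptopia}, which proceeds the same way: the falling-factorial product, Stirling for $k!$, and a second-order bound on $\log(1-x)$. You are also right that $k\to\infty$ is needed, since for $k=1$ the two sides differ by the factor $\mathrm{e}/\sqrt{2\pi}$.

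One correction to your side remark about usage: Lemma~\ref{Lemma_big_overlap} does apply the formula with $k-\ell$ as small as $1$. Nothing breaks there, because only the crude bound $\binom{N}{j}\le (N\mathrm{e}/j)^j$, valid for all $j\ge 1$, is actually needed.
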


\begin{theorem}[Binomial coefficient $\tbinom{n}{k}$ for $k$ linear in $n$ {\cite[Section 5.3]{spencer2014asymptopia}}]\label{LinearBinomial}
Assume $k \sim cn$ where $0<c<1$, then  
\[ \binom{n}{k} = \mathrm{e}^{n \bc{H(c)+o(1)}}, \]
 where $H$ is the entropy function \[ H(c) = -c \log c - (1-c) \log(1-c). \]
\end{theorem}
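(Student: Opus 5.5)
This statement is the standard asymptotic for $\binom{n}{k}$ when $k$ grows linearly in $n$, quoted from \cite{spencer2014asymptopia}. I would prove it directly from Stirling's formula in the form $\log N! = N\log N - N + O(\log N)$, valid for all $N\ge 1$ (and trivially adjusted for $N=0$). Write $c_n = k/n$, so that $c_n \to c \in (0,1)$; in particular both $k$ and $n-k$ tend to infinity linearly in $n$.

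The first step is to expand
\[
\log\binom{n}{k} = \log n! - \log k! - \log (n-k)!.
\]
Apply Stirling to each of the three factorials. The linear terms $-n + k + (n-k)$ cancel exactly, and the three error terms together contribute $O(\log n)$. The main terms are
\[
n\log n - k\log k - (n-k)\log(n-k) = -k\log\frac{k}{n} - (n-k)\log\frac{n-k}{n} = n H(c_n),
\]
using $k = c_n n$ and $n - k = (1-c_n)n$. Hence $\log\binom{n}{k} = nH(c_n) + O(\log n)$.

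The second step replaces $c_n$ by $c$. The entropy $H$ is continuous on $[0,1]$, so $H(c_n) \to H(c)$, i.e.\ $H(c_n) = H(c) + o(1)$. Combined with $O(\log n)/n = o(1)$, this gives $\log\binom{n}{k} = n(H(c)+o(1))$, which is the claim.

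There is no real obstacle here. The only point needing care is reading the hypothesis $k\sim cn$ correctly: it means $k/n\to c$, and then the error from replacing $c_n$ by $c$ is only $o(n)$ in the exponent, not $O(\log n)$. This is acceptable, since the statement only asserts an $o(1)$ correction inside the exponent, and so the crude Stirling bound suffices.
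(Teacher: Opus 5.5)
Your proof is correct: Stirling gives $\log\binom{n}{k} = nH(k/n) + O(\log n)$, and continuity of $H$ turns this into $n(H(c)+o(1))$. The paper does not prove this statement itself; it cites Spencer's \emph{Asymptopia}, where the estimate is obtained from Stirling's formula in essentially the way you describe.
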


\begin{lemma}[Coupon collector: w.h.p. form {\cite[Theorem 5.13]{mitzenmacher2017probability}}] \label{Lem:CC} \phantom{a}
\begin{enumerate}[(1)]
  \item For any fixed $\varepsilon>0$, after $(1+\varepsilon)n\log n$ draws all coupons are collected \whp
  \item For any fixed $\varepsilon>0$, after $(1-\varepsilon)n\log n$ draws at least one coupon remains missing \whp
\end{enumerate}
\end{lemma}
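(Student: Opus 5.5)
The statement at the very end is the coupon-collector lemma (Lemma~\ref{Lem:CC}): $n$ equally likely coupons are drawn independently with replacement. The plan is a first-moment argument for part (1) and a second-moment argument for part (2). In both parts, $\vX_j$ denotes the indicator that coupon $j$ is still missing after $N$ draws, and $\vX=\sum_{j=1}^n \vX_j$ is the number of missing coupons.

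\textbf{Part (1).} For a fixed coupon, $\pr(\vX_j=1)=(1-1/n)^N\le \exp(-N/n)$. With $N=\lceil(1+\varepsilon)n\log n\rceil$, a union bound (Markov's inequality on $\vX$) gives
\[
\pr(\vX\ge 1)\le n\exp(-(1+\varepsilon)\log n)=n^{-\varepsilon}=o(1),
\]
so all coupons are collected \whp

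\textbf{Part (2).} Take $N=\lfloor(1-\varepsilon)n\log n\rfloor$ and apply Chebyshev's inequality to $\vX$.

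\emph{Mean.} Since $\log(1-1/n)\ge -1/n-1/n^2$ for large $n$,
\[
\Erw[\vX]=n(1-1/n)^N\ge n\exp\bigl(-N/n-N/n^2\bigr)=(1+o(1))\,n^{\varepsilon}\to\infty .
\]

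\emph{Second moment.} For $i\ne j$, $\Erw[\vX_i\vX_j]=(1-2/n)^N\le (1-1/n)^{2N}$, because $1-2/n\le(1-1/n)^2$. The indicators are therefore negatively correlated. This gives
\[
\Var(\vX)\le \sum_{j}\Var(\vX_j)\le \Erw[\vX].
\]

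\emph{Conclusion.} Chebyshev's inequality yields
\[
\pr(\vX=0)\le \frac{\Var(\vX)}{\Erw[\vX]^2}\le \frac{1}{\Erw[\vX]}=O(n^{-\varepsilon})=o(1),
\]
so at least one coupon is still missing \whp

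\textbf{Main obstacle.} Essentially the only delicate point is the lower bound on $(1-1/n)^N$ in part (2). It has to be kept sharp enough that $\Erw[\vX]$ really diverges like $n^{\varepsilon}$, which requires the second-order term of $\log(1-1/n)$. The negative-correlation bound for the pairs $i\ne j$ then disposes of the covariance terms in one line.
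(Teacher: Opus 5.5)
Your proof is correct. The union bound gives $\pr(\vX\geq 1)\leq n^{-\varepsilon}$ for part (1). In part (2), the inequality $(1-2/n)^N\leq(1-1/n)^{2N}$ makes every covariance nonpositive, so Chebyshev yields $\pr(\vX=0)\leq 1/\Erw[\vX]=O(n^{-\varepsilon})$.

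The paper gives no proof of its own; it cites Theorem~5.13 of Mitzenmacher and Upfal, which (as I recall that book) takes a different route. There one proves the sharper statement that, for every fixed $c$, the probability that some coupon is still missing after $n\log n+cn$ draws tends to $1-\exp(-\eul^{-c})$. The argument is Poissonisation: independent $\Po(\log n+c)$ coupon counts, concentration of the Poisson total number of draws, and monotonicity of the event that all coupons are collected in the number of draws. The w.h.p.\ form stated here then follows by monotonicity in the number of draws and letting $c\to\pm\infty$.

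Your first/second-moment route is shorter, self-contained and quantitative. It gives explicit $O(n^{-\varepsilon})$ error bounds and stays valid for $\varepsilon=\varepsilon_n$ as long as $\varepsilon_n\log n\to\infty$. With $N=n\log n+cn$ (resp.\ $n\log n-cn$) it bounds the probability that a coupon is missing (resp.\ that none is) by $(1+o(1))\eul^{-c}$, so it already locates the $\Theta(n)$ window. What the Poisson approach adds is the exact Gumbel-type limit inside that window, which is more than this lemma needs.

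Incidentally, the paper only ever uses an analogue of part (1), in the proof of Theorem~\ref{thm:linear_upper}, and there it re-derives it by the same union-bound computation you use.
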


Let $E$ be a finite set, and consider the configuration space $\{0,1\}^E$, whose elements $\omega$ are binary-valued configurations indexed by $E$. We equip this space with the natural coordinatewise partial order: for $\omega,\omega' \in \{0,1\}^E$,
we write
\[
\omega \le \omega' \quad \text{if and only if} \quad \omega_e \le \omega'_e \text{ for all } e \in E.
\]

This order allows us to formalise the notion of monotonicity for events and functions in $\{0,1\}^E$.

\begin{definition}[Increasing events and random variables]
An event $\cA \subseteq \{0,1\}^E$ is called \emph{increasing} if it is upward-closed, that is,
for all $\omega, \omega' \in {0,1}^E$ with $\omega \le \omega'$, the inclusion $\omega \in \cA$ implies $\omega' \in \cA$.

Similarly, a function (or random variable) $\vf : \{0,1\}^E \to \mathbb{R}$ is called \emph{increasing} if for all $\omega, \omega' \in {0,1}^E$ with $\omega \le \omega'$, one has $\vf(\omega) \le \vf(\omega')$.
\end{definition}

We next introduce the probabilistic setting. Let $\Pr_p$ denote the product measure on $\{0,1\}^E$ under which the coordinates $(\vX_e : e \in E)$ are independent Bernoulli random variables with parameter $p \in [0,1]$.

\begin{theorem}[FKG inequality {\cite[Section~2.2]{GrimmettPercolation}}] \label{FKG}
Under the measure $\Pr_p$. If $\cA, \cB \subseteq \{0,1\}^E$ are increasing events, then
\[
\Pr_p(\cA \cap \cB) \ge \Pr_p(\cA)\,\Pr_p(\cB).
\]

Equivalently, for any increasing random variables
$\vf, \vg : \{0,1\}^E \to \mathbb{R}$,
\[
\Erw_p[\vf \vg] \ge \Erw_p[\vf]\,\Erw_p[\vg].
\]
\end{theorem}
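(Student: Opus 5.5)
We plan to prove the functional form $\Erw_p[\vf\vg]\ge \Erw_p[\vf]\,\Erw_p[\vg]$ for increasing $\vf,\vg$. The event form then follows by taking $\vf=\vecone_{\cA}$ and $\vg=\vecone_{\cB}$: the indicator of an upward-closed set is an increasing function, and $\vecone_{\cA}\vecone_{\cB}=\vecone_{\cA\cap\cB}$. The proof of the functional form is Harris' classical induction on $|E|$, using only that $\Pr_p$ is a product measure.

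\emph{Base case $|E|=1$.} Write $\omega\in\{0,1\}$. We use the symmetrisation identity
\begin{align*}
\Erw_p[\vf\vg]-\Erw_p[\vf]\Erw_p[\vg]=\tfrac12\sum_{\omega,\omega'}\Pr_p(\omega)\Pr_p(\omega')\bc{\vf(\omega)-\vf(\omega')}\bc{\vg(\omega)-\vg(\omega')}.
\end{align*}
Since $\{0,1\}$ is totally ordered and $\vf,\vg$ are both increasing, the two differences always have the same sign. Hence every summand is non-negative. Explicitly, the right-hand side equals $p(1-p)(\vf(1)-\vf(0))(\vg(1)-\vg(0))\ge 0$.

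\emph{Inductive step.} Suppose the claim holds for all ground sets of size less than $|E|\ge 2$. Fix $e\in E$, set $E'=E\setminus\{e\}$, and write $\omega=(\omega',\omega_e)$.

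First, for each fixed $\omega_e\in\{0,1\}$, the maps $\omega'\mapsto \vf(\omega',\omega_e)$ and $\omega'\mapsto\vg(\omega',\omega_e)$ are increasing on $\{0,1\}^{E'}$. Under $\Pr_p$ the coordinate $\omega_e$ is independent of $\omega'$, so $\omega'$ has the product law $\Pr_p$ on $\{0,1\}^{E'}$ conditionally on $\omega_e$. The induction hypothesis therefore gives
\begin{align*}
\Erw_p[\vf\vg\mid\omega_e]\ge \Erw_p[\vf\mid\omega_e]\,\Erw_p[\vg\mid\omega_e].
\end{align*}

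Second, we define $\vf_1(\omega_e):=\Erw_p[\vf\mid\omega_e]=\sum_{\omega'}\Pr_p(\omega')\vf(\omega',\omega_e)$, and $\vg_1$ analogously. Because the weights $\Pr_p(\omega')$ do not depend on $\omega_e$ (again by independence) and $\vf(\omega',1)\ge\vf(\omega',0)$ pointwise, $\vf_1$ is increasing in $\omega_e$; likewise $\vg_1$.

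Finally, taking expectations over $\omega_e$ and applying the base case to $\vf_1,\vg_1$ gives
\begin{align*}
\Erw_p[\vf\vg]\ge\Erw_p[\vf_1\vg_1]\ge\Erw_p[\vf_1]\Erw_p[\vg_1]=\Erw_p[\vf]\Erw_p[\vg].
\end{align*}

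The only delicate point is the inductive step, specifically the monotonicity of the conditional expectations $\vf_1,\vg_1$. This is exactly where the product structure of $\Pr_p$ is used, and it is why the inequality would fail for general correlated measures without a lattice condition such as Holley's. Since $E$ is finite, no measurability or integrability issues arise. The argument works verbatim with coordinate-dependent parameters $p_e$, which is the form needed in the proof of Lemma~\ref{lem:main_lower}. There the decreasing events $\{\vX_a\le t-2\}$ are handled by applying the theorem to their complements' indicators after flipping the order (equivalently, to $-\vf,-\vg$), since two decreasing functions are also positively correlated.
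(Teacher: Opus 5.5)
Your proof is correct and follows the same route as the source the paper cites (Grimmett, Section~2.2): Harris' induction on $|E|$, with the symmetrisation identity for a single coordinate and conditioning on one coordinate for the inductive step; the paper itself gives no proof and only invokes the result. A minor aside: in Lemma~\ref{lem:main_lower} the items are i.i.d.\ $\Be(\alpha)$, so the uniform-parameter version already suffices there, and your coordinate-dependent extension is not actually needed.
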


\section{Proof of Lemma~\ref{Lemma_V_all}(iii)}
\label{app:Condition3}
To estimate $|\vV_0^+|$, again we first work without the multiplicity restriction. 
Let
\begin{align*}
    \vU_0 = \sum_{i=1}^n \mathds{1}\cbc{\SIGMA_i=0, \forall a \in \partial x_i: \vY_a \not=t-1} = \sum_{x \in \vV_0} \vecone\left\{ \sum_{a \in \partial x} \vecone \{\vY_a =t-1\} =0\right\}
\end{align*}
be the number of non-defective items that do not participate in any test with exactly $t-1$ defective items. 
Since every disguised non-defective item is counted by the corresponding indicator in $\vU_0$,  
$\vU_0$ provides an upper bound on the number
of disguised non-defectives.
 
       Let $\vGamma^{t-1} = (\vGamma_1^{t-1}, \dots, \vGamma_{\vm_{t-1}}^{t-1})$ denote the test degrees of those tests whose quantitative outcome equals $t-1$. 
    The w.h.p. bounds from Lemma~\ref{Lemma_GammaMinMax} also apply to $\vGamma^{t-1}$. Together with the bound from Lemma \ref{Lemma_m0} on $\vm_{t-1}$, we obtain
	\color{black}
	\begin{align*}
		\Erw[\vU_0\mid\vGamma^{t-1},\vm_{t-1}]&=(n-k)\binom{(n-k)\Delta - \sum_{i=1}^{\vm_{t-1}}\bc{\vGamma_i^{t-1} \!\!- (t-1)}}\Delta\binom{(n-k)\Delta}\Delta^{-1} \\ &= (n-k) \frac{\bc{(n-k)\Delta - \sum_{i=1}^{\vm_{t-1}}\bc{\vGamma_i^{t-1} - (t-1)}}_{\Delta}}{((n-k)\Delta)_{\Delta}}\\ &= \bc{1+n^{-\Omega(1)}} (n-k) \bc{\frac{(n-k)\Delta - \sum_{i=1}^{\vm_{t-1}}\bc{\vGamma_i^{t-1}\!\! - (t-1)}}{(n-k) \Delta}}^\Delta\\
		&=\bc{1+n^{-\Omega(1)}}(n-k)\bc{1-\frac{((1+o(1))(\Delta n/m) \vm_{t-1}}{n\Delta}}^\Delta \\
        &=\bc{1+n^{-\Omega(1)}}(n-k)\bc{1-\frac{1}{(t-1)!} \bcfr{d}{c}^{t-1}\exp(-d/c)}^\Delta.
	\end{align*}
	Similarly, we obtain
	\begin{align*}
		\Erw&[\vU_0^2\mid\vGamma^{t-1},\vm_{t-1}]\\&=(n-k)\binom{(n-k)\Delta - \sum_{i=1}^{\vm_{t-1}}\bc{\vGamma_i^{t-1} - (t-1)}}\Delta\binom{(n-k)\Delta}\Delta^{-1} \\ &\,\,\,\,\,\,+ (n-k)(n-k-1)\binom{(n-k)\Delta - \sum_{i=1}^{\vm_{t-1}}\bc{\vGamma_i^{t-1} - (t-1)}}{2\Delta}\binom{(n-k)\Delta}{2\Delta}^{-1}\\
        &=\Erw[\vU_0\mid\vGamma^{t-1},\vm_{t-1}]+\bc{1+n^{-\Omega(1)}}\Erw[\vU_0\mid\vGamma^{t-1},\vm_{t-1}]^2.
	\end{align*}
	In the present case, we assume that $k\bc{1-(d/c)^{t-1}\exp(-d/c)/(t-1)!}^{\Delta} \geq n^{\Omega(1)}$. Therefore, Chebyshev's inequality shows that \whp{}
	\begin{align}\label{eqLemma_V0plus_1}
		\vU_0&\in\brk{\bc{1\pm n^{-\Omega(1)}}n\bc{1-\frac{1}{(t-1)!} \bcfr{d}{c}^{t-1}\exp(-d/c)}^\Delta}.
	\end{align}	
  Let now $\vR_0$ be the number of non-defective items that are contained in at least one test with multiplicity greater than one so that $\vU_0 \geq \abs{\vV_0^+} \geq \vU_0-\vR_0$. 
 Let $\kappa>0$ be such that $\vU_0 \geq n^{1-\theta} \cdot n^\kappa$ \whp\.~ It is possible to find such $\kappa$ thanks to \eqref{eqLemma_V0plus_1} and $k\bc{1-(d/c)^{t-1}\exp(-d/c)/(t-1)!}^{\Delta} \geq n^{\Omega(1)}$. 
By Markov’s inequality,
as in case (i), 
$$\Pr \bc{\vR_0 \geq \tfrac12 n^{1-\theta}n^\kappa} \leq \frac{2\,\Erw[\vR_0]}{n^{1-\theta} n^\kappa} = o(1).$$
Hence, \whp, simultaneously
$\vU_0 \geq n^{1-\theta} n^\kappa$ and $\vR_0 \le \tfrac12 n^{1-\theta} n^\kappa$, and therefore
$$\abs{\vV_0^+} \geq \vU_0 - \vR_0 \geq \tfrac12 n^{1-\theta}n^\kappa = n^{\Omega(1)} \quad \text{\whp.}$$

\section{Justification for the Absence of High-Degree Items}
\begin{proposition}\label{prop:bounded_item_degrees}
Let $M,C>0$ be fixed. If $\vec G$ is a randomised test design on $m \leq Cn\log n$ tests where each test has degree at most $M$ and chooses its items uniformly without replacement, then \whp, each item participates in at most $\log^4 n$ tests.
\end{proposition}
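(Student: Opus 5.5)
The plan is to fix an item $x_i$, bound the tail of its degree by a Chernoff estimate, and finish with a union bound over the $n$ items. Since tests choose their items independently of one another, the indicators $\vecone\{x_i \in \partial a_j\}$, $j \in [m]$, are independent Bernoulli variables. A test of degree $\Gamma_j \le M$ that chooses its $\Gamma_j$ items uniformly without replacement contains $x_i$ with probability exactly $\Gamma_j/n \le M/n$. So the degree $\vec d_i := |\partial x_i|$ is a Poisson-binomial variable, and it is stochastically dominated by $\Bin(m, M/n)$.

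The dominating binomial has mean
\[
\mu \le m M/n \le CM\log n .
\]
We now apply the Chernoff bound of Theorem~\ref{lem_chernoff} (or Theorem~\ref{lem_chernoff_pb} directly for the Poisson-binomial variable) with deviation $1+\eps = \log^4 n/\mu \ge \log^3 n/(CM)$. For $\eps \ge 1$ the upper-tail exponent satisfies $\frac{\eps^2}{2+2\eps/3}\mu \ge \frac{\eps\mu}{3}$. This gives
\[
\pr\bc{\vec d_i > \log^4 n} \le \exp\bc{-\tfrac{1}{3}\bc{\log^4 n - \mu}} \le \exp\bc{-\tfrac14 \log^4 n}
\]
for $n$ large. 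Alternatively, one can use the elementary bound
\[
\pr\bc{\Bin(m,p)\ge s} \le \binom{m}{s}p^s \le \bcfr{\eul m p}{s}^s = \bcfr{\eul C M \log n}{\log^4 n}^{\log^4 n},
\]
which is $n^{-\omega(1)}$.

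A union bound over the $n$ items then gives
\[
\pr\bc{\exists i : \vec d_i > \log^4 n} \le n\exp\bc{-\Omega(\log^4 n)} = o(1),
\]
which proves the claim.

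The only point needing care is the dependence between the events $\{x_i\in\partial a_j\}$. For a fixed item these events are independent across tests, because the tests are sampled independently. Across different items they are dependent, but the union bound does not need independence there. If the test degrees are themselves random, I would first condition on them: the bound $\Gamma_j \le M$ holds deterministically, so the domination by $\Bin(m,M/n)$ survives the conditioning. I do not expect any step to be a genuine obstacle, since the threshold $\log^4 n$ exceeds the mean by a polylogarithmic factor.
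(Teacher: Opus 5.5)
Your proposal is correct and follows the paper's proof essentially verbatim: the paper also writes the degree of item $i$ as a sum of independent $\Be(\Gamma_j/n)$ indicators with mean at most $CM\log n$, applies the Poisson-binomial Chernoff bound (Theorem~\ref{lem_chernoff_pb}) to get $\exp(-\Omega(\log^4 n))$, and finishes with a union bound over the $n$ items. Your explicit exponent estimate (valid, since $\frac{\eps^2}{2+2\eps/3}\geq \frac{3\eps}{8}\geq \frac{\eps}{3}$ for $\eps\geq 1$), the domination by $\Bin(m,M/n)$, and the remark on conditioning on the test degrees only fill in details the paper leaves implicit.
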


\begin{proof}
Denote the degree of test $j \in [m]$ in $\vec G$ by $\Gamma_j$. For each test $j \in [m]$, let $\vX_{ij}$ be the indicator that item $i \in [n]$ is included in test $j$. Then
$$\vX_i := \sum_{j=1}^m \vX_{ij}$$
counts the number of tests in which item $i$ participates, where $\vX_{ij} \sim \Be(\Gamma_j /n)$. 
By linearity of expectation,
$$\mathbb{E}[\vX_i] = \sum_{j=1}^m \frac{\Gamma_j}{n} \leq m\frac{M}{n} = CM \log n.$$
Applying the Chernoff bound for Poisson–binomial variables (Theorem \ref{lem_chernoff_pb}), we find
$$\Pr\bc{\vX_i > \log^4 n} \leq \exp\bc{-\Omega(\log^4 n)}.$$
Finally, taking a union bound over all $n$ items yields
$$\Pr\bc{\max_i \vX_i > \log^4 n} \leq n \exp\bc{-\Omega(\log^4 n)} = o(1).$$
Thus, with high probability, every item participates in at most $\log^4 n$ tests.
\end{proof}

\section{Exploration of Conjecture \ref{Conj}}\label{App:CondUp}

In the proof of Lemma \ref{Lem:T2}, we reduced Conjecture \ref{Conj} to the following condition: For all $\alpha \in [0,1]$
$$q(r^*)^{2-\alpha} \geq p(\alpha,r^*).$$

For each integer $t \in \{2, \dots, 1000\}$, a numerical study was conducted over the domain $r \in [t-1, t]$ and $\alpha \in [0,1]$. In each case, $500$ equally spaced sample points were selected in both intervals to generate values of $r$ and $\alpha$. The results of this study confirm that condition \eqref{cond1a} holds throughout the specified range.

Although the complete proof can be done for the base case $t=2$ (as demonstrated by Lemma \ref{Lem:T2}), generalising this approach to higher values of $t$ becomes substantially more difficult. The key idea in the proof is that once you set $\beta =2-\alpha$ and divide through by the common factor $\e^{-\beta x}$, then it reduces the right side to a polynomial. For all $t\geq2$, we can guarantee that the $(2t-1)$-th derivative is strictly greater than zero, since the polynomial component is completely annihilated. However, there is no longer an easy chain argument mapping the roots of the lower-order derivatives as there is for $t=2$. The intermediate derivatives for higher $t$ begin to exhibit complex behaviours, breaking the straightforward, cascading application of Rolle's theorem. 

For example for $t=3$, the function that we would like to study becomes

\begin{align*}
h(\beta,r) =& \left(1 + r + \frac{r^2}{2}\right)^\beta \\ &- \bc{\!\! \left(1 + (\beta-1)r + \frac{(\beta-1)^2 r^2}{2}\right)^2 + (2-\beta)r(1 + (\beta-1)r)^2 \!\! \!\!\!\!+ \frac{(2-\beta)^2 r^2}{2} }.
\end{align*}
Figure~\ref{fig:struggle} illustrates the behaviour of the function and its first five derivatives, highlighting why an analytical argument is not straightforward.

\begin{figure}[h!]
    \centering
    \includegraphics[scale=0.45]{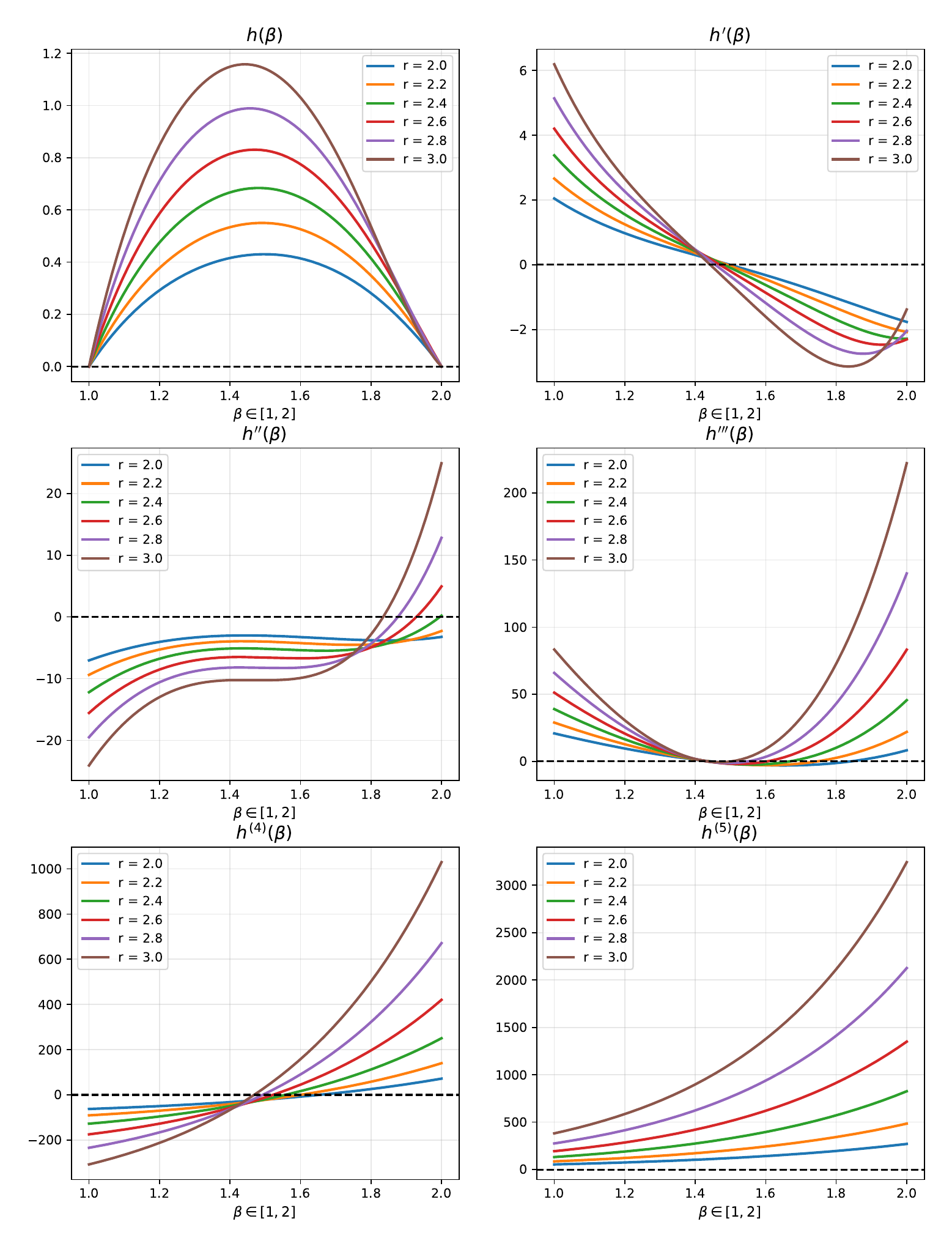}
    \caption{The difference function $h(\beta)$ and its first five derivatives for the $t=3$ case. The plots illustrate the behaviour of the function across the interval $\beta \in [1,2]$ for several selected values of $r \in [2,3]$.}
    \label{fig:struggle}
\end{figure}

\end{appendix}

\end{document}